\definecolor{vertFonce}	{rgb}{0,0.5,0}
\definecolor{numLignes}	{rgb}{0.17,0.57,0.7}	
\definecolor{gris}		{rgb}{0.5,0.5,0.5}
\definecolor{grisFonce}	{rgb}{0.2,0.2,0.2}
\definecolor{orange}	{rgb}{1,0.65,0.31}		
\definecolor{orangeFonce}{rgb}{1,0.4,0}
\definecolor{bleuFonce}	{rgb}{0,0,0.4}
\definecolor{rougeFonce}{rgb}{0.3,0,0}
\definecolor{rougeWord}	{rgb}{0.5,0,0}
\definecolor{vertClair}	{rgb}{0.8,1,0.8}
\definecolor{rougeClair}{rgb}{1,0.5,0.5}
\definecolor{violet}	{rgb}{0.5,0,0.5}
\theoremstyle{plain}
\newtheorem{thm}{Theorem}[section]
\newtheorem{lem}[thm]{Lemma}
\newtheorem{cor}[thm]{Corollary}
\newtheorem{prop}[thm]{Proposition}
\theoremstyle{definition}
\newtheorem{remark}{Remark}[subsection]
\newenvironment{system*}{%
	\equation\nonumber\left\{\ \begin{aligned}
}{%
	\end{aligned} \right. \endequation%
}
\newcommand		{\N}		{\mathbb N}			
\newcommand		{\RR}		{\mathbb R}			
\newcommand		{\R}		{\RR}
\newcommand		{\Rd}		{\R^d}
\newcommand		{\Rdd}		{\R^{2d}}
\newcommand		{\hd}		{h^d}
\newcommand		{\CC}		{\mathbb C}			
\renewcommand	{\SS}		{\mathds S}			
\newcommand		{\cB}		{\mathcal B}		
\newcommand		{\cC}		{\mathcal C}
\newcommand		{\cD}		{\mathcal D}
\newcommand		{\cH}		{\mathcal H}		
\renewcommand	{\L}		{\mathcal L}		
\newcommand		{\cM}		{\mathcal M}		
\newcommand		{\cW}		{\mathcal W}		
\newcommand		{\scE}		{\mathscr E}
\newcommand		\sfA		{\mathsf A}
\newcommand		\sfB		{\mathsf B}
\newcommand		\sfL		{\mathsf L}			
\newcommand		\sfT		{\mathsf T}			
\newcommand		{\lt}			{\left}				%
\newcommand		{\rt}			{\right}			%
\renewcommand	{\(}			{\lt(}
\renewcommand	{\)}			{\rt)}
\newcommand		{\bangle}[1]	{\lt\langle #1\rt\rangle}
\newcommand		{\weight}[1]	{\bangle{#1}}	
\newcommand		{\scalar}[2]	{\bangle{#1,#2}}
\newcommand		{\com}[1]		{\lt[{#1}\rt]}		
\newcommand		{\n}[1]			{\lt|{#1}\rt|}
\newcommand		{\nrm}[1]		{\lt\|{#1}\rt\|}
\newcommand		{\snrm}[1]		{\lVert #1\rVert}
\newcommand		{\Nrm}[2]		{\nrm{#1}_{#2}}
\newcommand		{\sNrm}[2]		{\snrm{#1}_{#2}}
\newcommand		{\indic}	{\mathds{1}}		
\renewcommand		{\d}		{\mathop{}\!\mathrm{d}}		
\newcommand			{\Dx}		{\nabla_x}
\newcommand			{\Dv}		{\nabla_\xi}
\newcommand			{\id}		{\mathbf{1}}		
\DeclareMathOperator{\tr}		{Tr}				
\DeclareMathOperator{\ran}		{ran}
\DeclareMathOperator{\dist}		{dist}
\newcommand		{\Tr}[1]		{\tr\!\(#1 \)}		
\newcommand		{\Dist}[1]		{\dist\!\(#1 \)}
\newcommand		{\intd}			{\int_{\Rd}}
\newcommand		{\intdd}		{\int_{\R^{2d}}}
\newcommand		{\iintd}		{\iint_{\R^{2d}}}
\newcommand		{\ii}			{\mathrm{i}}	
\newcommand		{\loc}			{\mathrm{loc}}
\newcommand		{\eps}			{\varepsilon}
\newcommand		{\Eps}			{\mathcal{E}}
\newcommand		{\Inprod}[2]	{\Braket{#1 | #2}}
\newcommand		{\op}		{{\boldsymbol{\rho}}}	
\newcommand		{\opb}		{{\boldsymbol{b}}}
\newcommand		{\opgam}	{{\boldsymbol{\gamma}}}	
\newcommand		{\opc}		{{\boldsymbol{c}}}
\newcommand		{\tildop}		{\,\tilde{\!\op}}	
\newcommand		{\opp}		{\boldsymbol{p}}
\newcommand		{\opProj}	{\mathbf{P}}
\newcommand		{\opz}		{\mathbf{z}}
\newcommand		{\Dh}		{\boldsymbol{\nabla}}	
\newcommand		{\Dhx}[1]	{\Dh_{\!x} #1}			
\newcommand		{\Dhv}[1]	{\Dh_{\!\xi} #1}		
\newcommand{\om}	{\omega_d} 
\newcommand{\vr}{\varrho}
\newcommand{\D}{\mathsf D}
\newcommand{\sL}{\mathsf L}
\newcommand{\sM}{\mathsf M}
\newcommand{\sC}{\mathsf C}
\newcommand{\opHF}	{\opgam_{\textnormal{H}}}
\newcommand{\rhoHF}	{\varrho_{\textnormal{H}}}
\newcommand{\rhoTF}	{\varrho_{\textnormal{TF}}}
\newcommand{\fTF}	{f_{\textnormal{TF}}}
\newcommand{\CHS}	{\cC_{\textnormal{HS}}}
\newcommand{\CTr}	{\cC_{\textnormal{Tr}}}
\newcommand{\CLT}	{C_{\textnormal{LT}}}
\newcommand{\opq}	{{\boldsymbol{q}}}
\title[\textsc{Commutator Estimates and Weyl's Law with Singular Potentials}]{\Large Commutator Estimates and Quantitative Local Weyl's Law for Schr\"odinger Operators with Non-Smooth Potentials}
\author[\textsc{E.~C\'ardenas}]{\large\textsc{Esteban C\'ardenas}}
\address{\vspace{-10pt}Department of Mathematics, University of Texas at Austin, Austin TX, 78712, USA}
\email{eacardenas@utexas.edu}
\author[\textsc{L.~Lafleche}]{\vspace{-10pt}\large\textsc{Laurent Lafleche}}
\address{\vspace{-10pt}Unit\'e de Math\'ematiques pures et appliqu\'ees, \\\ \'Ecole Normale Supérieure de Lyon, 69364 Lyon, France}
\email{laurent.lafleche@ens-lyon.fr}
\subjclass[2020]{35J10 $\cdot$ 81S30 $\cdot$ 47B47 (47B15, 47B10, 35P30).}
\keywords{Commutators, Trace inequalities, semiclassical limit, Weyl law, Hartree minimizers}
\begin{document}

\begin{abstract}
	We analyze semiclassical Schr\"odinger operators with potentials of class $C^{1,1/2}$ and establish commutator estimates for the associated projection operators in Schatten norms. These are then applied to prove quantitative versions of the local and phase space Weyl laws in $L^p$ spaces. We study both non-interacting, and interacting particle systems. In particular, we are able to treat the case of the minimizers of the Hartree energy in the case of repulsive singular pair interactions such as the Coulomb potential.
\end{abstract}

\begingroup
\def\uppercasenonmath#1{} 
\let\MakeUppercase\relax 
\maketitle
\endgroup


\renewcommand{\contentsname}{\centerline{Table of Contents}}
\setcounter{tocdepth}{1}	
\tableofcontents


\section{Introduction}

	In this article, we study several asymptotic results related to the spectral projection
	\begin{equation*}
		\opgam := \indic_{H\leq 0}
	\end{equation*}
	on the eigenspace of the negative eigenvalues of the Schr\"odinger operator on $L^2(\Rd)$
	\begin{equation}\label{eq:H}
		H = -\hbar^2 \Delta + V(x)
	\end{equation}
	where $V :\Rd \rightarrow \R$ is a function playing the role of the potential energy, identified with an operator of multiplication by $V$. Here, $\indic_{H \leq 0} = \indic_{(-\infty, 0]}(H)$ is defined via functional calculus. We analyze the so-called \textit{semiclassical limit} in which the Planck constant 
	\begin{equation*}
		h = 2\pi\hbar
	\end{equation*}
	converges to $0$. In particular, we are interested in the case when the potential $V$ is not smooth but of class $C^{1,\alpha}$. Such potentials naturally arise in effective one-body problems of particle systems interacting via mean-field forces, which are induced by singular pair potentials, such as the Coulomb potential.
	
\subsection{Weyl's laws}
	
	A fundamental result in the semiclassical limit of Schr\"odinger operators is Weyl's law, which states that as $\hbar \to 0$
	\begin{equation}\label{eq:Weyl_law}
		N := \Tr{\indic_{H\leq 0}} = h^{-d} \int_{\n{\xi}^2 + V(x) \leq 0} \d x \d \xi
		+ o(h^{-d}) \, , 
	\end{equation}
	see e.g.~\cite[Theorem~A.2.1]{shubin_pseudodifferential_2001}. In particular, much activity has been devoted to understanding the rate of convergence in terms of $\hbar$, see e.g.~\cite{hormander_asymptotic_1979}, and~\cite{robert_comportement_1981}, where the optimal rate was obtained for smooth potentials. For non-smooth potentials, it was recently proved in~\cite{mikkelsen_sharp_2023} that for potentials in the class $C^{1,\alpha}$ for $\alpha \geq \frac{1}{2}$ and $d \geq 3$, it holds for all $\hbar \in (0,1)$ that
	\begin{equation}\label{eq:Mikkelsen}
		\n{N\hd - \int_{\n{\xi}^2 + V(x) \leq 0} \d x \d p} \leq 
		\cC _0 \,\hbar
	\end{equation}
	for a constant $\cC_0 > 0 $ depending only on $V$. The rate of convergence in~\eqref{eq:Mikkelsen} is optimal, and its implications are crucial in the rest of the article. Indeed, in our setting it implies the validity of the \textit{local eigenvalue estimate}
	\begin{equation}\label{eq:LEE}
		\hd \tr \indic_{[a,b]}(H) \leq \cC_1 \(\n{b - a} + \hbar\)
	\end{equation}
	for all appropriate values of $a < b$ around zero. Here again $\cC_1 > 0$ is a distinguished constant, independent of $\hbar$.
	
	While Weyl's law describes the total number of negative eigenvalues of the Schr\"odinger operator $H = -\hbar^2 \Delta + V$, one may also wonder about the average space distribution of its eigenfunctions. To be more precise, if $\op$ is a density operator, i.e. a non-negative trace-class operator on $L^2(\Rd)$, one can introduce its position density\footnote{The position density can be more generally defined in the weak sense by the formula $\intd \vr_\op\,\varphi = \hd \Tr{\op\,\varphi}$ for any $\varphi\in C^\infty_c(\Rd)$.}
	\begin{equation}\label{eq:density}
		\vr_\op(x) := \hd \op(x,x)\, , 
	\end{equation} 
	where $\op(x,y)$ denotes the integral kernel of the operator $\op$, and study its limit. When $\op = \opgam$, it is known under general assumptions that the following limit is true, and is known in the literature as the \textit{local Weyl law} (see e.g. \cite{frank_weyls_2023})
	\begin{equation}\label{eq:Weyl_local}
		\lim_{\hbar \to 0^+} \vr_\opgam =\om \,V_-^{d/2} ,
	\end{equation}
	where $\omega_d = \frac{\pi^{d/2}}{\Gamma(1+d/2)}$ is the volume of the unit ball in $\Rd$. Again, for smooth potentials, one can obtain rates of convergence in weak norms, see e.g.~\cite{chazarain_spectre_1980, ivrii_semiclassical_1992, sobolev_quasi-classical_1995, deleporte_universality_2024}.
	
	These results can be seen as rigorous examples of the quantum-classical correspondence. More generally, one can consider the localization in phase space of the above limit, which is sometimes called convergence of states~\cite{fournais_semi-classical_2018}. For this purpose, one can introduce the Wigner transform of the density operator $\op$, which is the function of the phase space defined by
	\begin{equation}\label{eq:Wigner}
		f_\op(x,\xi) = \intd e^{-i\,y\cdot\xi/\hbar} \,\op(x+\tfrac{y}{2},x-\tfrac{y}{2})\d y
	\end{equation}
	for $(x, \xi)\in \Rd \times \Rd$, where $\xi$ represents the momentum. This is the inverse operation of the Weyl quantization, which to a real-valued phase space density function $f$ associates a self-adjoint operator $\op_f$ on $L^2(\Rd)$ with integral kernel
	\begin{equation}\label{eq:Weyl}
		\op_f(x ,y) = \intd e^{- 2 i \pi \(x - y\) \cdot \xi}\, f \big(\tfrac{x + y}{2}, h\xi\big) \d \xi \, . 
	\end{equation}
	The \textit{phase space Weyl law} then tells that
	\begin{equation}\label{eq:Weyl_phase_space}
		\lim_{\hbar \to 0^+} f_\opgam = \indic_{\n{\xi}^2 + V(x) \leq 0}\, .
	\end{equation}
	This limit has been studied in the physics literature \cite{balazs_quantum_1973, dean_universal_2015, dean_wigner_2018}.
	
	One may observe that each limit is in a sense stronger than the previous one due to the following relations
	\begin{equation*}
		\intd f_\op \d \xi = \vr_\op \quad \text{ and } \quad \intd \vr_\op = \hd \Tr{\op},
	\end{equation*}
	for any sufficiently nice density operator $\op$, and the fact that $\vr_f = \omega_d\,V_-^{d/2}$, where, in the classical case, the position density of a phase space distribution $f : \Rdd\to\R$ is defined by
	\begin{equation}\label{eq:density_classical}
		\vr_f(x) = \intd f(x,\xi) \d\xi\,.
	\end{equation}
	One of the questions that we are interested in this article is the convergence rate of the limits~\eqref{eq:Weyl_local} and~\eqref{eq:Weyl_phase_space}.
	
\subsection{Commutator estimates and semiclassical regularity}\label{sec:intro_commutator}

	It is not surprising that in order to understand the rate at which the limits~\eqref{eq:Weyl_local} and \eqref{eq:Weyl_phase_space} hold, it is useful to understand the regularity properties of the functions $\vr_\op$ and $f_\op$ that are \textit{uniform in $\hbar$}. In connection with this question is the relatively recent need for commutator estimates of the form
	\begin{equation}\label{eq:comms}
		\hd \Tr{\n{\com{x,\op}}^p} \leq C \,\hbar 
		\quad
		\text{ and }
		\quad 
		\hd \Tr{\n{\com{\hbar\nabla, \op}}^p} \leq C\,\hbar
	\end{equation}
	for $1 \leq p < \infty$, with $C$ independent of $\hbar$. To the authors' knowledge, such estimates appeared first for $p=1$, as requirements on the initial data in the quantitative derivation of the Hartree--Fock equation from dynamics of fermions~\cite{benedikter_mean-field_2014}. Here, $\op$ is a rank $N$ orthogonal projection on $L^2(\Rd)$, corresponding to the reduced one-particle density matrix of a Slater determinant in semiclassical scaling $\hbar = N^{-1/d}$. Since then, these estimates have appeared multiple times in the literature. For instance, in~\cite{benedikter_mean-field_2014-1} the authors extended their results to fermions with relativistic dispersion, whereas in~\cite{benedikter_mean-field_2016-1} they extended them to the dynamics of mixed quasi-free states. Similar variants have been used in~\cite{fresta_effective_2023} to study extended system of fermions, and in~\cite{cardenas_effective_2023} to study the dynamics of Bose--Fermi mixtures. The estimates for $p=2$, on the other hand, were employed in~\cite{marcantoni_dynamics_2024} to study the dynamics of fermions with non-zero pairing, and also in~\cite{cardenas_norm_2024} to understand the strong convergence of ground states of $N$-body problems. In~\cite{lafleche_quantum_2023}, it is proved that they also give a bound on the self-distance in the quantum Wasserstein pseudo-metrics introduced in~\cite{golse_mean_2016, golse_schrodinger_2017}, which allows to understand the rate of convergence in the mean-field and semiclassical limits in these two works, as well as in other works using these pseudo-metrics such as \cite{lafleche_propagation_2019, cardenas_effective_2023, chong_semiclassical_2024}.
	
	As mentioned above, the estimates~\eqref{eq:comms} are usually imposed on the initial datum. In practice, verifying such estimates for Slater determinants is not an easy task. The first example was given in~\cite{benedikter_mean-field_2014} for the free Fermi gas on the torus. Later, the authors of~\cite{fournais_optimal_2020} proved their validity in the case $p = 1$ for Schr\"odinger operators $\opgam = \indic_{H \leq 0}$, with $V$ is smooth, see also~\cite{deleporte_universality_2024} for similar estimates when $p=2$. The techniques used in~\cite{fournais_optimal_2020} rely on the use of pseudo-differential operator techniques and, in particular, knowledge of the optimal Weyl law~\eqref{eq:Mikkelsen}. Additionally, for the harmonic oscillator~\cite{benedikter_effective_2022, lafleche_optimal_2024} the estimates were verified directly using creation and annihilation operator calculus. While we also rely on the validity of~\eqref{eq:Mikkelsen}, our proof does not use additional pseudo-differential operator techniques. In particular, with our methods we are able to study the validity of~\eqref{eq:comms} for the case $V \in C^{1,\alpha}_{\rm loc}$ with $\alpha \geq \frac{1}{2}$, and for the ground state of an interacting Fermi system in the Hartree approximation with singular interactions.
	
	The link between commutators estimates and uniform-in-$\hbar$ regularity follows from the correspondence principle, which tells us that the quantum analogue of the gradients in the phase space should be
	\begin{equation}\label{eq:quantum_gradients}
		\Dhx\op := \com{\nabla,\op} \quad \text{ and } \quad \Dhv\op := \frac{1}{i\hbar} \com{x,\op}
	\end{equation}
	and indeed, these correspond to taking the gradient of the Wigner transform in the sense that one verifies that
	\begin{equation*}
		\Dx f_\op = f_{\Dhx \op} \quad \text{ and } \quad \Dv f_\op = f_{\Dhv \op}\,.
	\end{equation*}
	In particular, $\nabla \vr_\op = \vr_{\, \Dhx\op}$, and the commutator estimates~\eqref{eq:comms} with $p=1$ imply uniform-in-$\hbar$ $W^{1,1}$ regularity for the position density $\vr_\op$, that is $\Nrm{\nabla \vr_\op}{L^1}$ is bounded independently of $\hbar$. In terms of phase space regularity, one can first notice that the analogue of the phase space integral of a density is given by $\hd \,\tr$, which justifies the use of scaled Schatten norms for $p\geq 1$,
	\begin{equation}\label{eq:Schatten}
		\Nrm{\op}{\L^p} := \(\hd\Tr{\n{\op}^p}\)^{1/p},
	\end{equation}
	with $\Nrm{\op}{\L^\infty}$ denoting the operator norm. The Wigner transform is then an isomorphism from $\L^2$ to $L^2(\Rdd)$, that is $\Nrm{f_\op}{L^2} = \Nrm{\op}{\L^2}$. We infer from this that the commutator estimates~\eqref{eq:comms} with $p=2$ imply bounds of the form $\Nrm{\nabla f_\op}{L^2} \leq C / \sqrt{\hbar}$. The fact that $f_\op$ is not in $H^1$ uniformly in $\hbar$ should not come as a surprise when $\op = \indic_{H \leq 0}$, since the phase space Weyl law \eqref{eq:Weyl_phase_space} states that $f_\op$ converges to a characteristic function of the phase space. It is however proved in \cite{lafleche_optimal_2024} that the commutator estimates~\eqref{eq:comms} with $p=1$ implies that $f_\op \in H^s(\Rdd)$ uniformly in $\hbar$ for any $s< 1/2$, and that this regularity order is optimal.
	
\subsection{Interacting particles}

	In this article, we consider the \textit{Hartree} energy functional (sometimes also called reduced Hartree--Fock) defined for each density operator $\op$ on $L^2(\Rd)$ by
	\begin{equation}\label{eq:Hartree_enregy}
		\scE_\op := \hd \Tr{\(-\hbar^2\Delta + U(x)\)\op} + \frac{1}{2} \iintd K(x-y)\, \vr_\op(x) \,\vr_\op(y) \d x \d y \, , 
	\end{equation}
	where $U$ is a trapping potential and $K$ a repulsive singular pair interaction potential. Here, we look at a grand canonical ensemble of fermions. That is, the functional is minimized over all density operators verifying $ 0 \leq \op \leq \id$ without fixing the trace, where $\id$ denotes the identity operator. We follow the convention that the chemical potential $\mu>0$ is included in $U(x)$. The operator bound is a manifestation of the \textit{Pauli Exclusion Principle}: no more than two fermions can occupy the same quantum state.

	We are interested in both the regularity properties, as well as the \textit{quantitative} semiclassical limit of the minimizers of the Hartree functional, here and in the sequel denoted by $\opHF$. Recently, the Weyl law~\eqref{eq:Weyl_law} and its local version~\eqref{eq:Weyl_local} for the minimizers $\opHF$ have been studied in \cite{nguyen_weyl_2024}. In this context, the limit of the densities $\vr_{\opHF}$ is determined via Thomas--Fermi theory, and solves the fixed point equation 
	\begin{equation}\label{eq:rhoTF}
		\rhoTF =\om \(U + K*\rhoTF\)_-^{d/2} . 
	\end{equation}
	In particular, it has been proven that the Weyl law and its local version hold
	\begin{equation}\label{eq:local}
		\lim_{\hbar\to 0^+} \intd \vr_{\opHF}(x) \d x = \intd \rhoTF(x) \d x 
		\quad
		\text{ and }
		\quad 
		\lim_{\hbar \to 0^+} \vr_{\opHF} = \rhoTF
	\end{equation}
	where the last limit holds weakly in $L^1(\Rd)\cap L^{1 + d/2}(\Rd)$. In one of our main results, we give a quantitative $L^p$-version of the local Weyl law, based on the regularity that is inherited from the commutator estimates satisfied by $\opHF$.

	For $N$-particle systems, establishing the limit~\eqref{eq:local} as $N \rightarrow \infty$ of the position density of the ground state $\Psi_N$ goes back to~\cite{lieb_thomasfermi_1977}, who analyzed Coulomb systems. The problem has been recently revisited in a modern perspective by~\cite{fournais_semi-classical_2018}, where both the local Weyl law and the phase space Weyl law are established for a wide range of potentials. These convergence results were extended in~\cite{lewin_semi-classical_2019} to states at positive temperature, in~\cite{fournais_semi-classical_2020} to systems with magnetic fields, and in~\cite{girardot_semiclassical_2021} to anyons. At zero temperature, the convergence of states was improved from weak to strong convergence in~\cite{cardenas_norm_2024}. See also the approach via Gamma-convergence in~\cite{gottschling_convergence_2018}.

\subsection{Main results}
 
\subsubsection{Commutator estimates and Weyl's law}

	We consider Schr\"odinger operators with $C^{1,\alpha}$ potentials which grow at infinity. 
	Our first main result on commutator estimates is the following theorem. Here and in the sequel $\nabla^2$ denotes the Hessian.
	\begin{thm}\label{thm:commutator:linear}
		For $\hbar\in(0,1)$, let $\opgam = \indic_{H \leq 0}$ where $H = -\hbar^2 \Delta + V$ in $d = 3$. Assume that for some $\beta \geq 0$
		\begin{equation*}\tag{H1}\label{hyp:V}
			V\in C^{1,1/2}_\loc(\R^3) \, , \qquad \lim_{\n{x} \to \infty}V(x) = \infty \, , \qquad e^{-\beta \n{x}} \,\nabla V \in L^\infty(\R^3)
		\end{equation*}
		uniformly in $\hbar$. Then, for any $p\in [1,\infty)$, there exists a constant $C>0$ independent of $\hbar$ such that
		\begin{align*}
			\hd \Tr{\n{\com{x,\opgam}}^p} \leq C\, \hbar
			\quad
			\text{ and }
			\quad 
			\hd \Tr{\n{\com{\hbar\nabla,\opgam}}^p} \leq
			\begin{cases}
				C\, \hbar &\text{ if } p \geq 2
				\\
				C\,\hbar \n{\ln\hbar}^{2 - p}&\text{ if } p\leq 2\,.
			\end{cases}
		\end{align*}
		Additionally, if $e^{-\beta\n{x}}\,\nabla^2 V\in L^\infty(\R^3)$ uniformly in $\hbar$, then the $\n{\ln \hbar}$ factor can be removed. That is, for any $1 \leq p < \infty$ there exists a constant $C>0$ independent of $\hbar$ such that
		\begin{align*}
			\hd \Tr{\n{\com{\hbar\nabla,\opgam}}^p} &\leq C \,\hbar \, . 
		\end{align*}
	\end{thm}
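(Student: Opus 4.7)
\medskip

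The natural strategy is to reduce the problem to a commutator with a smoothed spectral cutoff, apply a Helffer--Sj\"ostrand type functional calculus, and then use the local eigenvalue estimate~\eqref{eq:LEE} to control the resulting resolvent integrals in Schatten norms. Two algebraic identities drive the analysis:
\begin{equation*}
	\com{x_j, H} = 2\hbar^2\,\partial_j \qquad \text{and} \qquad \com{\hbar\,\partial_j, H} = \hbar\,\partial_j V.
\end{equation*}
The former is independent of $V$ and perfectly smooth, whereas the latter inherits only the $C^{1/2}$-regularity of $\nabla V$; this discrepancy is precisely what will produce the logarithmic losses for $\com{\hbar\nabla, \opgam}$ and their disappearance under the stronger regularity $\nabla^2 V \in L^\infty$.

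\medskip

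\emph{Step 1 (Smoothing).} Pick a cutoff $\chi \in C^\infty(\R; [0,1])$ supported in $(-\infty, \hbar]$, equal to $1$ on $(-\infty, -\hbar]$, with $\Nrm{\chi^{(k)}}{L^\infty} \leq C_k\,\hbar^{-k}$, and decompose $\opgam = \chi(H) + r(H)$ where $r := \indic_{(-\infty, 0]} - \chi$ is spectrally supported in $[-\hbar, \hbar]$. The local eigenvalue estimate~\eqref{eq:LEE} gives at once $\hd\,\tr|r(H)|^p \leq 2\,\cC_1\,\hbar$, and a short argument based on the Cauchy--Schwarz inequality together with the energy bound $\Nrm{\hbar\nabla u}{L^2}^2 \leq \bangle{u, (H + \Nrm{V_-}{L^\infty})u}$ (to control the unbounded operator $A$ on the spectral range of $r(H)$) transfers the same rate to $\com{A, r(H)}$. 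It then remains to treat $\com{A, \chi(H)}$.

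\medskip

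\emph{Step 2 (HS formula and the two cases).} Choose an almost-analytic extension $\tilde\chi$ of order $N$, i.e.\ $|\bar\partial \tilde\chi(z)| \leq C_N\,|\t{Im}\, z|^N\,\Nrm{\chi^{(N+1)}}{L^\infty}$, and write
\begin{equation*}
	\com{A, \chi(H)} = \frac{1}{\pi}\int_\C \bar\partial\tilde\chi(z)\,(z - H)^{-1}\,\com{A, H}\,(z - H)^{-1}\,\d A(z).
\end{equation*}
A Schatten estimate on the integrand is obtained from~\eqref{eq:LEE} via a dyadic decomposition of the spectrum around $\t{Re}\, z$: the LEE controls the number of eigenvalues in a shell of width $|\t{Im}\, z|$, while each resolvent contributes a factor $|\t{Im}\, z|^{-1}$. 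For $A = x$, substituting $\com{x, H} = 2\hbar^2\,\nabla$ and absorbing one factor of $\hbar\nabla$ into a resolvent via the energy estimate yields an overall $\hbar$ factor and the clean $C\hbar$ bound. For $A = \hbar\nabla$, only an $L^\infty$-bound on $\nabla V$ is available on the support of $\tilde\chi$; this limits the admissible order $N$ in the extension and, after integrating the dyadic sum in $|\t{Im}\, z|$ from $\hbar$ up to a constant, produces the factor $|\ln\hbar|^{2 - p}$ for $p \leq 2$. Under the stronger assumption $\nabla^2 V \in L^\infty$, one extra derivative of $\nabla V$ becomes available: the HS order $N$ can be increased by one, converting the logarithmic divergence into a power gain and eliminating the log.

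\medskip

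\emph{Main obstacle.} The most delicate technical point is the uniform-in-$\hbar$ Schatten estimate for $(z - H)^{-1}\,\nabla V\,(z - H)^{-1}$ in a regime where $\nabla V$ has low pointwise regularity and may grow exponentially at infinity (by hypothesis only $e^{-\beta |x|}\nabla V \in L^\infty$). This forces a spatial localization of $\opgam$, established through Agmon-type exponential decay estimates of eigenfunctions outside the classically allowed region $\{V \leq 0\}$; once $\nabla V$ is effectively restricted to a bounded ball, it can be treated as a bounded multiplication operator and the dyadic LEE-based bound of Step~2 applies.
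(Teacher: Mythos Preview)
Your approach via smoothing plus Helffer--Sj\"ostrand is genuinely different from the paper's and is closer in spirit to the pseudo-differential route of Fournais--Mikkelsen, which the paper explicitly sets out to avoid. The paper works directly with the eigenbasis: writing $\opgam = \sum_{\lambda_j \leq 0} \opProj_j$, it uses the identity $\opProj_j\, A\, \opProj_k = (\lambda_k - \lambda_j)^{-1}\, \opProj_j\, [A,H]\, \opProj_k$ to convert the commutator into a sum weighted by inverse eigenvalue gaps, and then controls these sums by the singular resolvent estimate $h^d \Tr{\indic_{H\leq 0}(\hbar - H)^{-2}} \leq C\,\hbar^{-1}$, itself derived from the LEE by a Riemann-sum comparison. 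No almost-analytic extension enters; the $|\ln\hbar|$ for $p=1$ arises from the companion bound $h^d \Tr{\indic_{H\leq 0}(\hbar - H)^{-1}} \leq C\,|\ln\hbar|$, and its removal under $\nabla^2 V \in L^\infty$ is achieved not by raising any extension order but by a fractional-commutator trick with $(H+\mu)^{1/2}$ that trades one full commutator with $H$ for a commutator with its square root, needing only two derivatives of $V$. Your route would not require discrete spectrum and fits the standard semiclassical toolbox; the paper's route is more elementary (no contour integrals, no extension choices) and makes the role of eigenvalue separation completely explicit.

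One point in your sketch is misstated: you write that the low regularity of $\nabla V$ ``limits the admissible order $N$ in the extension'', but $N$ governs only derivatives of $\chi$, which you chose smooth. The actual mechanism in the HS approach is iterating the resolvent identity
\begin{equation*}
	(z-H)^{-1}\,[A,H]\,(z-H)^{-1} = (z-H)^{-2}\,[A,H] + (z-H)^{-2}\,[[A,H],H]\,(z-H)^{-1},
\end{equation*}
which for $A = \hbar\nabla$ makes $[[A,H],H]$ appear and hence requires $\nabla^2 V$. With only $\nabla V \in L^\infty$ you cannot iterate, and the resulting $\d|\t{Im}\,z|/|\t{Im}\,z|$ integral over $[\hbar, O(1)]$ leaves the logarithm. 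This is the same double-commutator mechanism the paper exploits in its first trace-class bound, just packaged differently; your explanation should be rephrased accordingly.
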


	\begin{remark}
		It follows from the proof of Theorem~\ref{thm:commutator:linear} that we can give explicit values of the constants $C>0$ in terms of $V$ and the constant $\cC_1>0$ from the local eigenvalue estimate~\eqref{eq:LEE}. 
	\end{remark}

	\begin{remark}
		All the bounds that contain no $\n{\ln\hbar}$ in our theorem, such as the Hilbert--Schmidt bounds (the case $p =2$) are optimal in terms of $\hbar$. Indeed, this can be checked in the case of the harmonic oscillator $V(x) = \n{x}^2 - \mu $ (see e.g.~\cite{benedikter_effective_2022, lafleche_optimal_2024}).
	\end{remark}
	
	\begin{remark}
		Bounds on commutators with $x$ allow more generally to obtain bounds on commutators with operators of multiplication by Lipschitz continuous functions. Indeed, by~\cite[Inequality~(14)]{potapov_operator-lipschitz_2011}, for any $p\in(1,\infty)$, there exists $c_p > 0$ such that for any $u \in W^{1,\infty}(\Rd)$
		\begin{equation*}
			\Nrm{\com{u(x),\opgam}}{\L^p} \leq c_p \Nrm{\nabla u}{L^\infty} \Nrm{\com{x,\opgam}}{\L^p} .
		\end{equation*}
		Moreover, if $u$ is locally Lipschitz but growing at infinity, then one still obtains a similar bound using the Agmon inequality (see Proposition~\ref{prop:Agmon}).
	\end{remark}
 
	In terms of the quantum gradients defined in Equation~\eqref{eq:quantum_gradients}, when $e^{-\beta \n{x}}\,\nabla^2 V \in L^\infty$, the estimates in Theorem~\ref{thm:commutator:linear} can be written for any $p\in[1,\infty]$
	\begin{equation*}
		\Nrm{\Dhv \opgam}{\L^p} \leq \frac{C}{\hbar^{1/p'}} \quad \text{ and } \quad \Nrm{\Dhx \opgam}{\L^p} \leq \frac{C}{\hbar^{1/p'}} \, ,
	\end{equation*}
	where $p' = \frac{p}{p-1}$, or even more compactly, defining $\Dh\op = (\Dhx\op,\Dhv\op)$ so that $\n{\Dh\op}^2 = \n{\Dhx\op}^2 + \n{\Dhv\op}^2$,
	\begin{equation}\label{eq:bound_Dh}
		\Nrm{\Dh\opgam}{\L^p} \leq \frac{C}{\hbar^{1/p'}} \, .
	\end{equation}
	The constant $C$ is independent of $\hbar$ and can be taken independent of $p$. Recall Definition~\eqref{eq:density}. Since $\nabla\vr_\opgam = \vr_{\Dhx\opgam}$, the case $p=1$ implies the following $W^{1,1}$ estimates
	\begin{cor}[Regularity of the density]
		\label{cor:reg:density}
		Assuming $V$ satisfies ~\eqref{hyp:V}, there exists a constant $C>0$ independent of $\hbar$ so that
		\begin{equation*}
			\Nrm{\nabla \vr_{\opgam}}{L^1} \leq C \n{\ln \hbar}. 
		\end{equation*}
		Additionally, if $e^{- \beta \n{x}}\, \nabla^2 V \in L^\infty (\R^3)$, the logarithm can be removed. That is 
		\begin{equation*}
			\Nrm{\nabla \vr_{\opgam}}{L^1} \leq C\,. 
		\end{equation*}
	\end{cor}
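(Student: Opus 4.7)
The plan is to derive the corollary directly from Theorem~\ref{thm:commutator:linear} with $p = 1$, using the two elementary links anticipated in Section~\ref{sec:intro_commutator}. First, I would invoke the identity
\begin{equation*}
	\nabla \vr_{\opgam} = \vr_{\Dhx\opgam},
\end{equation*}
which follows from differentiating the kernel relation $\vr_\op(x) = \hd\,\op(x,x)$: since $\nabla_x[\op(x,x)] = [(\nabla_x + \nabla_y)\op(x,y)]_{y=x}$, the right-hand side is precisely the diagonal of the kernel of the commutator $\com{\nabla,\opgam} = \Dhx\opgam$. This reduces the corollary to bounding $\Nrm{\vr_{\Dhx\opgam}}{L^1}$ componentwise.

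Next, I would establish the general inequality
\begin{equation*}
	\Nrm{\vr_A}{L^1} \leq \Nrm{A}{\L^1},
\end{equation*}
valid for every trace-class operator $A$ on $L^2(\R^3)$. The proof is a short polar-decomposition argument: writing $A = U\n{A}$ and diagonalising $\n{A} = \sum_k \lambda_k \ket{\psi_k}\bra{\psi_k}$, the diagonal of the integral kernel reads $A(x,x) = \sum_k \lambda_k (U\psi_k)(x) \overline{\psi_k(x)}$; two successive Cauchy--Schwarz applications (first in $k$, then in $x$) yield $\int \n{A(x,x)}\d x \leq \sum_k \lambda_k = \Tr{\n{A}}$, and multiplying by $\hd$ gives the stated bound.

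Combining these two ingredients componentwise, and using the rescaling $\Dhx\opgam = \hbar^{-1}\com{\hbar\nabla,\opgam}$, produces
\begin{equation*}
	\Nrm{\nabla \vr_{\opgam}}{L^1} \leq \Nrm{\Dhx\opgam}{\L^1} = \hbar^{-1}\Nrm{\com{\hbar\nabla,\opgam}}{\L^1}.
\end{equation*}
Plugging in the bound from Theorem~\ref{thm:commutator:linear} with $p=1$, namely $\Nrm{\com{\hbar\nabla,\opgam}}{\L^1} \leq C\,\hbar\,\n{\ln\hbar}$ under hypothesis~\eqref{hyp:V}, yields the first estimate; under the additional condition $e^{-\beta\n{x}}\,\nabla^2 V \in L^\infty$, the improved bound $\Nrm{\com{\hbar\nabla,\opgam}}{\L^1} \leq C\,\hbar$ from the same theorem removes the logarithm.

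No serious obstacle is expected: the substantive analytical content is already contained in Theorem~\ref{thm:commutator:linear}, and the only care needed is to handle the vector-valued nature of $\com{\nabla,\opgam}$ by summing the scalar density-versus-trace inequality over the three components, absorbing the harmless dimensional factor into $C$.
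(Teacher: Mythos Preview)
Your proposal is correct and follows essentially the same approach as the paper: the paper presents the corollary as an immediate consequence of Theorem~\ref{thm:commutator:linear} with $p=1$ via the identity $\nabla\vr_\opgam = \vr_{\Dhx\opgam}$ (stated just before the corollary) together with the inequality $\Nrm{\nabla\vr_\opgam}{L^1} \leq \Nrm{\Dhx\opgam}{\L^1}$ (recorded later as~\eqref{eq:rho_W11}). Your polar-decomposition argument for $\Nrm{\vr_A}{L^1} \leq \Nrm{A}{\L^1}$ simply supplies the detail the paper leaves implicit.
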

	As written in Section~\ref{sec:intro_commutator}, the divergence as $\hbar\to 0$ of the right-hand side of Inequality~\eqref{eq:bound_Dh} is coherent with the fact that nonzero characteristic functions are not smooth. As reviewed e.g. in~\cite{sickel_regularity_2021}, the correct spaces to measure the regularity of characteristic functions in $L^p$ with $p\in(1,\infty)$ are the Besov spaces $B^{1/p}_{p,\infty}$, which can be seen as a $L^p$ generalization of H\"older spaces of order $1/p$. More precisely, if the regularity order $s\in(0,1)$, the spaces $B^s_{p,\infty}(\Rdd)$ are Banach spaces with respect to the norm $\Nrm{f}{B^s_{p,\infty}} = \Nrm{f}{L^p} + \Nrm{f}{\dot B^s_{p,\infty}}$, where one defines for $f : \Rdd\to \CC$ the homogeneous seminorm
	\begin{equation*}
		\Nrm{f}{\dot B^s_{p,\infty}} := \sup_{z\in\Rdd\setminus\{0\}} \frac{\Nrm{f(\cdot-z)-f}{L^p}}{\n{z}^s}\, .
	\end{equation*}
	They verify for any $\epsilon >0$, $W^{s,p} \subset B^s_{p,\infty} \subset W^{s-\epsilon ,p}$, where $W^{s,p}$ are Sobolev spaces. Analogous semi-norms can be defined in the quantum setting. The phase space shift of an operator being given for $z_0 = (x_0,\xi_0)\in\Rdd$ by
	\begin{equation*}
		\sfT_{z_0} \op = \tau_{z_0}\, \op\, \tau_{-z_0} \quad \text{ where } \quad \tau_{z_0}\varphi(x) = e^{i\,\xi_0\cdot x/\hbar} \,\varphi(x-x_0)\, ,
	\end{equation*}
	one can define as in~\cite{lafleche_quantum_2024} the quantum homogeneous Besov seminorms
	\begin{equation*}
		\Nrm{\op}{\dot{\cB}^s_{p,\infty}} = \sup_{z\in\Rdd} \frac{\Nrm{\sfT_z\op - \op}{\L^p}}{\n{z}^s}.
	\end{equation*}
	They satisfy in particular for $p=2$, $\Nrm{f_\op}{\dot{B}^s_{2,\infty}(\Rdd)} = \Nrm{\op}{\dot{\cB}^s_{2,\infty}}$. As stated in the next theorem, one then recovers that these norms are bounded uniformly in $\hbar$, and the situation is even slightly better in the quantum case, where a regularization at scale $\hbar$ can be observed.

	\begin{thm}\label{thm:comm:Besov}
		Let $\hbar \in (0,1)$ and $V$ satisfy~\eqref{hyp:V} and assume $e^{-\beta\n{x}}\,\nabla^2 V\in L^\infty(\R^3)$. Then, there exists a constant $C$ independent of $\hbar$ such that for all $p\in[1,\infty]$ and $z\in\R^6$,
		\begin{equation*}
			\Nrm{\sfT_z\opgam-\opgam}{\L^p} \leq C \min\!\(\frac{\n{z}}{\hbar^{1/p'}}, \n{z}^{1/p},1\).
		\end{equation*}
		In particular, $\Nrm{\opgam}{\dot{\cB}^{1/p}_{p,\infty}}$ is bounded uniformly in $\hbar$ for any $p\in[1,\infty]$, and for $p=2$, one obtains that $f_\opgam \in B^{1/2}_{2,\infty}(\R^6)$ and so in $H^s(\R^6)$ with $s<1/2$ uniformly in $\hbar$. As in the previous theorems, if $V$ is only in $C^{1,1/2}_\loc(\R^3)$, then the estimates remain true up to the multiplication by $\n{\ln \hbar}^{2/p-1}$ when $p\leq 2$.
	\end{thm}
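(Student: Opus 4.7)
The plan is to establish each of the three bounds $\n{z}/\hbar^{1/p'}$, $\n{z}^{1/p}$, and $1$ appearing inside the minimum separately, then combine them pointwise. The Lipschitz-type estimate arises from differentiating along the phase space translation orbit: the family $t\mapsto\sfT_{tz}$ is a strongly continuous one-parameter group of conjugations on operators (the Weyl phase in the composition of the $\tau_z$'s cancels in the adjoint action), with infinitesimal generator $\op\mapsto[\mathcal{A}_z,\op]$ where $\mathcal{A}_z:=i\xi_0\cdot x/\hbar-x_0\cdot\nabla$. The fundamental theorem of calculus then yields
\begin{equation*}
\sfT_z\opgam-\opgam=\int_0^1\sfT_{tz}[\mathcal{A}_z,\opgam]\,\d t\,,
\end{equation*}
and since each $\sfT_{tz}$ preserves Schatten norms and $[\mathcal{A}_z,\opgam]=-x_0\cdot\Dhx\opgam-\xi_0\cdot\Dhv\opgam$, the triangle inequality gives $\Nrm{\sfT_z\opgam-\opgam}{\L^p}\leq\n{z}\,\Nrm{\Dh\opgam}{\L^p}$, whence Theorem~\ref{thm:commutator:linear} under $e^{-\beta\n{x}}\nabla^2V\in L^\infty$ bounds the right-hand side by $C\n{z}\,\hbar^{-1/p'}$.

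The uniform bound is immediate: the triangle inequality, unitary invariance, and the identity $\Nrm{\opgam}{\L^p}^p=\hd\,\Tr{\opgam}$ combined with the Weyl law~\eqref{eq:Mikkelsen} give $\Nrm{\sfT_z\opgam-\opgam}{\L^p}\leq 2\Nrm{\opgam}{\L^p}\leq C$. The most interesting piece is the Besov bound $\n{z}^{1/p}$, where the projection structure plays a decisive role. Setting $D:=\sfT_z\opgam-\opgam$, I note that $\sfT_z\opgam=\tau_z\opgam\tau_z^*$ is again an orthogonal projection because $\tau_z$ is unitary, so $D$ is self-adjoint with operator norm at most $1$. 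By functional calculus, $\n{D}^p\leq\n{D}$ as positive operators for every $p\geq 1$, hence
\begin{equation*}
\Nrm{D}{\L^p}^p=\hd\,\Tr{\n{D}^p}\leq\hd\,\Tr{\n{D}}=\Nrm{D}{\L^1}\leq C\n{z}
\end{equation*}
by the Lipschitz bound at $p=1$ (for which $1/p'=0$, so there is no $\hbar$-dependence), and taking $p$th roots produces $\Nrm{D}{\L^p}\leq C\n{z}^{1/p}$. I expect this $p=1$ to $p$ upgrade via the projection inequality to be the cleverest step of the proof, though not the most technical.

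Combining the three bounds via the minimum yields the main estimate. The ratio $\Nrm{D}{\L^p}/\n{z}^{1/p}$ is controlled by $\min(\n{z}^{1-1/p}/\hbar^{1/p'},1,\n{z}^{-1/p})$, whose branches match at $\n{z}=\hbar$ and $\n{z}=1$, so $\Nrm{\opgam}{\dot{\cB}^{1/p}_{p,\infty}}$ is uniformly bounded. For $p=2$, the Wigner transform is a unitary isomorphism $\L^2\to L^2(\R^6)$ intertwining phase space translations, so $\Nrm{f_\opgam}{\dot B^{1/2}_{2,\infty}}=\Nrm{\opgam}{\dot{\cB}^{1/2}_{2,\infty}}$ is uniformly bounded, and the standard embedding $B^{1/2}_{2,\infty}\hookrightarrow H^s$ for $s<1/2$ completes that claim. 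Finally, in the $C^{1,1/2}_\loc$ case without Hessian regularity, Theorem~\ref{thm:commutator:linear} attaches a factor $\n{\ln\hbar}^{2/p-1}$ to $\Nrm{\Dh\opgam}{\L^p}$ for $p\leq 2$; this propagates directly through the Lipschitz bound, and carrying the log through the $p=1$ step of the projection argument produces the stated factor in the other bounds as well.
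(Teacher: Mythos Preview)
Your proof of the main estimate (under the Hessian hypothesis) is correct and takes a different route for the middle bound $\n{z}^{1/p}$ than the paper. The paper establishes the $p=2$ case directly via Proposition~\ref{prop:linear:comm_HS}, taking $A=\tau_z$ and $\lambda=\n{z}$ in the general Hilbert--Schmidt estimate~\eqref{eq:comm_HS} to obtain $\Nrm{\sfT_z\opgam-\opgam}{\L^2}^2 \leq C\n{z}$, and then interpolates via H\"older between $p=1$, $p=2$, and $p=\infty$. Your operator inequality $\n{D}^p\leq \n{D}$ (valid because $D$ is a difference of orthogonal projections and hence has spectrum in $[-1,1]$) is a cleaner shortcut: it upgrades the $p=1$ Lipschitz bound to all $p$ without re-invoking the eigenvalue machinery. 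What the paper's approach buys, however, is a sharper logarithmic factor in the low-regularity case.

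Your final sentence contains a gap. Carrying the $p=1$ factor $\n{\ln\hbar}$ through $\Nrm{D}{\L^p}^p\leq\Nrm{D}{\L^1}$ and taking $p$th roots produces $\n{\ln\hbar}^{1/p}$, not the claimed $\n{\ln\hbar}^{2/p-1}$; since $1/p>2/p-1$ for every $p>1$, your bound is strictly weaker, and in particular it leaves a spurious $\n{\ln\hbar}^{1/2}$ at $p=2$ where the theorem asserts no logarithm. The paper avoids this because its direct $p=2$ argument requires only $\nabla V$ locally bounded and carries no logarithm; H\"older interpolation of the log-free $p=2$ bound against the $\n{\ln\hbar}$-laden $p=1$ bound then produces exactly $\n{\ln\hbar}^{2/p-1}$ for $p\in[1,2]$ and no logarithm for $p\geq 2$. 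To recover the sharp exponent in the $C^{1,1/2}$ case you would need that separate log-free $p=2$ anchor.
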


	\begin{remark}
		Taking $z = (0,\hbar\,\xi)$ in the above theorem and using the fact that $\tau_z$ is unitary gives the following commutator estimate. For any $\xi\in\R^3$,
		\begin{equation*}
			\hd \Tr{\n{\com{e^{i\,\xi\cdot x},\opgam}}^p} \leq C \min(\n{\xi}^{p-1}\!, 1) \n{\xi} \hbar \, .
		\end{equation*}
	\end{remark}

	Now, let us state our results concerning the quantitative convergence of the densities in the semiclassical limit. Recall the definitions of the position densities $\vr_\opgam$ in \eqref{eq:density} and $\vr_f$ in~\eqref{eq:density_classical}, the Wigner function $f_\opgam$ in \eqref{eq:Wigner} and the Weyl quantization $\op_f$ in~\eqref{eq:Weyl}. We have the following theorem.

	\begin{thm}[Linear quantitative convergence]\label{thm:weyl:linear}
		Let $\hbar\in(0,1)$, $d=3$ and $\opgam = \indic_{H \leq 0}$ with $H= -\hbar^2\Delta + V$ and $f = \indic_{\n{\xi}^2+V\leq 0}$. Assuming $V$ satisfies~\eqref{hyp:V}, there is a constant $C>0$ independent of $\hbar$ such that
		\begin{align*}
			\Nrm{\vr_\opgam - \vr_f}{L^2} &\leq C\, \Big(1 + \Nrm{\vr_\opgam}{\dot{B}^{1/2}_{2,\infty}}^{2/3}\Big)\, \hbar^{1/3}
			\\
			\Nrm{\vr_\opgam - \vr_f}{L^1} &\leq C \(1 + \Nrm{\nabla \vr_\opgam}{L^1}^{1/2}\) \hbar^{1/2}
			\\
			\Nrm{f_\opgam - f}{L^2} &\leq C
			\(1 + \Nrm{\nabla \vr_\opgam}{L^1}^{1/2} \)
			\hbar^{1/4}
			\\
			\sNrm{\opgam - \op_f}{\L^1} &\leq C \(1 + \Nrm{\Dh \opgam}{\L^1}\) \hbar^{1/2} \, .
		\end{align*}
		Additionally, if $V \in W^{2,1} (\Omega)$ for an open set $\Omega$ containing $\{V < 0 \}$, then there exists $C>0$ such that 
		\begin{equation*}
			\Nrm{f_\opgam - f}{L^2} \leq C \,\hbar^{1/4}.
		\end{equation*} 
		Here, $\L^1$ denotes the scaled trace norm defined in Equation~\eqref{eq:Schatten} and the values of the constants $C$ can be made explicit (see Proposition~\ref{prop:energy2:linear}).
	\end{thm}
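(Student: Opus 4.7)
The plan is to derive all four estimates from three inputs: the sharp Weyl law~\eqref{eq:Mikkelsen} and its local consequence~\eqref{eq:LEE}, the commutator bounds of Theorem~\ref{thm:commutator:linear}, and the Besov regularity of Theorem~\ref{thm:comm:Besov}. Each bound reduces to an optimization over a mollification scale $\eps>0$ (or $\eta>0$ on the energy axis).

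For the two position-density bounds, I would pick a nonnegative $\chi\in C^\infty_c(\R^3)$ with $\int\chi=1$, set $\chi_\eps := \eps^{-3}\chi(\cdot/\eps)$, and split
\begin{equation*}
\vr_\opgam - \vr_f = (\vr_\opgam - \chi_\eps*\vr_\opgam) + \chi_\eps*(\vr_\opgam - \vr_f) + (\chi_\eps*\vr_f - \vr_f).
\end{equation*}
The outer pieces are $O(\eps)$ in $L^1$ by Corollary~\ref{cor:reg:density} and the $W^{1,1}$ regularity of $\vr_f = \omega_3\, V_-^{3/2}$ inherited from $V\in C^{1,1/2}_\loc$ growing at infinity, and $O(\eps^{1/2})$ in $L^2$ via Theorem~\ref{thm:comm:Besov}. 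The middle piece evaluated at $x$ is $\hd\Tr(\opgam\,\chi_\eps(x-\cdot)) - (\chi_\eps*\vr_f)(x)$, the Weyl law tested against the position-only function $y\mapsto\chi_\eps(x-y)$ whose BV norm is $O(\eps^{-1})$; the local eigenvalue estimate~\eqref{eq:LEE} combined with a coarea-type decomposition in energy produces a pointwise discrepancy $O(\hbar/\eps)$ on the compact effective support of $\vr_\opgam$ and $\vr_f$. Optimizing $\eps\, G + \hbar/\eps$ with $G = \Nrm{\nabla\vr_\opgam}{L^1}$ yields the $\hbar^{1/2}(1+G^{1/2})$ rate in $L^1$; the analogous balance $\eps^{1/2}B + \hbar/\eps$ with $B = \Nrm{\vr_\opgam}{\dot B^{1/2}_{2,\infty}}$ yields $\hbar^{1/3}(1+B^{2/3})$ in $L^2$.

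For $\Nrm{f_\opgam - f}{L^2}$, I would use the Plancherel-type identity $\Nrm{f_\opgam}{L^2} = \Nrm{\opgam}{\L^2}$ together with $\opgam^2=\opgam$ to expand
\begin{equation*}
\Nrm{f_\opgam - f}{L^2}^2 = h^d N + \textstyle\int f\,\d x\,\d\xi - 2\hd\Tr(\opgam\,\op_f) = 2\hd\Tr(\opgam\,\op_{1-f}) + O(\hbar),
\end{equation*}
where the remainder uses~\eqref{eq:Mikkelsen}. The leakage trace $\hd\Tr(\opgam\,\op_{1-f})$ measures the mass of $\opgam$ in the forbidden region $\{|\xi|^2 + V > 0\}$; exploiting $H_+\opgam = 0$ together with an operator Markov-type bound $\op_{1-f}\leq\eta^{-1}\op_{(|\xi|^2+V)_+ + \eta}$ modulo a commutator error $O(\hbar/\eta)$ handled by Theorem~\ref{thm:commutator:linear}, I would get $\hd\Tr(\opgam\,\op_{1-f}) = O(\eta + \hbar/\eta) = O(\sqrt\hbar)$ at $\eta\sim\sqrt\hbar$; the prefactor $(1+G)$ then comes from the dependence of this semiclassical error on the $L^1$ density bound. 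Under $V\in W^{2,1}(\Omega)$, a second-order Weyl expansion of $H\opgam$ removes this prefactor. For $\Nrm{\opgam - \op_f}{\L^1}$, the bound $\Nrm{\Dh\opgam}{\L^1}\leq C$ (Theorem~\ref{thm:commutator:linear} at $p=1$) encodes scale-$\hbar$ phase-space Lipschitz regularity; mollifying both $\opgam$ and $\op_f$ against a phase-space Gaussian of scale $\sqrt\hbar$ and comparing gives the $\hbar^{1/2}$ rate.

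The hardest step is the leakage bound $\hd\Tr(\opgam\,\op_{1-f}) \lesssim \sqrt\hbar$: the symbol $1-f$ is discontinuous across the classical energy shell $\{|\xi|^2+V=0\}$, so its Weyl quantization is not a standard pseudodifferential operator and commutes badly with $\opgam$. The key is to bypass exact symbolic calculus by combining the identity $H_+\opgam=0$ with the Markov-type operator inequality above, which trades the rough symbol $1-f$ for the smoother $(|\xi|^2+V)_+$ whose gradient is $C^{1/2}$ near the shell thanks to $V\in C^{1,1/2}_\loc$; the remaining commutator estimates are then exactly those provided by Theorem~\ref{thm:commutator:linear}.
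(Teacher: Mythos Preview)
Your proposal has a genuine gap in the position-density bounds. The middle piece $\chi_\eps*(\vr_\opgam-\vr_f)(x) = h^d\Tr\!\big(\opgam\,\chi_\eps(x-\cdot)\big) - (\chi_\eps*\vr_f)(x)$ is a \emph{local} Weyl law tested against a position observable, and this is precisely what the theorem is establishing. The inputs you invoke do not deliver it: Inequality~\eqref{eq:Mikkelsen} controls only the unweighted trace $h^d\Tr(\indic_{H\leq 0})$, and the local eigenvalue estimate~\eqref{eq:LEE} concerns spectral windows $\indic_{[a,b]}(H)$, not multiplication by a bump in $x$. A coarea decomposition in energy does not convert spectral localisation into spatial localisation, so there is no mechanism producing your claimed $O(\hbar/\eps)$ pointwise bound. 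The argument is circular at this step.

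For the Wigner $L^2$ bound, the proposed operator Markov inequality $\op_{1-f}\leq\eta^{-1}\op_{(|\xi|^2+V)_+ + \eta}$ is not available: Weyl quantization is not order-preserving, and the discontinuous symbol $1-f$ has no uniform semiclassical calculus, so the ``modulo a commutator error $O(\hbar/\eta)$'' is unjustified. The identity $H_+\opgam=0$ alone does not control $\Tr(\opgam\,\op_{1-f})$ because $\op_{(|\xi|^2+V)_+}$ and $H_+$ differ by more than a commutator.

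The paper avoids both issues by a variational route you do not mention. One smooths with anisotropic coherent states $G_\eps$ and compares energies via a quantitative bathtub principle (Lemma~\ref{prop:minmax}) and the convexity identity of Lemma~\ref{lem:energy_convexity}: the single quantity $\scE_{\tilde f}-\scE_f+\scE_{\tilde\opgam}-\scE_\opgam$ simultaneously dominates $\|\vr_{\tilde\opgam}^{p/2}-\vr_f^{p/2}\|_{L^2}^2$ and the shell leakage $Q(\tilde f_\opgam)$, and is itself $O(h^2/\eps+\eps)$ by direct computation (Lemma~\ref{lem:energies}). No pointwise local Weyl law is ever invoked. The Wigner $L^2$ bound then comes from the exact identity $\|f-\tilde f_\opgam\|_{L^1} = h^d\Tr\!\big(|\tildop_f-\opgam|^2+\tildop_f(\id-\tildop_f)\big)\geq \|\tilde f-f_\opgam\|_{L^2}^2$, which replaces your Markov argument entirely.
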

	
	\begin{remark}
		The gradients $\Nrm{\Dh \opgam}{\L^1}$ and $\Nrm{\nabla \vr_{\opgam}}{L^1}$ grow at most logarithmically in $\hbar$ thanks to Theorem~\ref{thm:commutator:linear}. Note also that the Besov norm $\dot B_{2,\infty}^{1/2}$ can be controlled via interpolation with $\Nrm{\vr_{\opgam}}{L^\infty}$ and $\Nrm{\nabla \vr_{\opgam}}{L^1}$, see e.g. Remark~\ref{remark:besov}. In particular, if $e^{-\beta\n{x}}\,\nabla^2 V\in L^\infty(\R^3)$ uniformly in $\hbar$, then under the hypotheses of Theorem~\ref{thm:weyl:linear}, one obtains the existence of a constant $C$ independent of $\hbar$ such that
		\begin{equation*}
			\Nrm{\vr_\opgam - \vr_f}{L^2} \leq C\, \hbar^{1/3} \qquad \Nrm{\vr_\opgam - \vr_f}{L^1} \leq C \, \hbar^{1/2}
		\end{equation*}
		and
		\begin{equation*}
			\Nrm{f_\opgam - f}{L^2} \leq C \, 
			\hbar^{1/4} \qquad \sNrm{\opgam - \op_f}{\L^1} \leq C \, \hbar^{1/2} \, .
		\end{equation*}
	\end{remark}
	
	\begin{remark}
		In general, it is not expected that the trace-class norm of an operator $\op$ controls the $L^1$-norm of its Wigner transform $f_\op$. However, denoting $\widehat f$ the Fourier transform of a function $f : \Rdd \to \CC$, we have Groenewold's formula~\cite{groenewold_principles_1946}
		\begin{equation*}
			\widehat f_\op(\xi,\eta) = \hd \Tr{e^{-2i\pi\(x\cdot\xi + \eta\cdot\opp\)}\op}
		\end{equation*}
		with $\opp = -i\hbar\nabla$. It implies the bound in Fourier space
		\begin{equation}
			\Nrm{\widehat f_\opgam - \widehat f}{L^\infty} \leq C \(1 + \Nrm{\Dh \opgam}{\L^1}\) \hbar^{1/2} \, . 
		\end{equation}
	\end{remark}
	
	\begin{remark}
		Our bounds imply convergence in Wasserstein distance. Indeed, if $W_p(\vr_\opgam, \vr_f)$ denotes the Wasserstein distance of order $p$ between $\vr_\opgam$ and $\vr_f$, then for any $p\in [1,\infty)$ it holds (see e.g.~\cite[Proposition~7.10]{villani_topics_2003})
		\begin{equation}\label{eq:Wp_vs_L1}
			W_p(\vr_\opgam, \vr_f) \leq 2 \Nrm{\n{x}^p \(\vr_\opgam-\vr_f\)}{L^1}^{1/p}.
		\end{equation}
		Taking $R>0$ sufficiently large so that the ball $B_R$ of radius $R$ contains the support of $\vr_f$ and such that $\Nrm{\n{x}^p \vr_\opgam}{L^1(B_R^c)} \leq \sqrt{\hbar}$ by Agmon's estimates (Proposition~\ref{prop:Agmon}), it follows that the right-hand side of Equation~\eqref{eq:Wp_vs_L1} is bounded by $2 R \Nrm{\vr_\opgam-\vr_f}{L^1(B_R)}^{1/p} + 2\,\hbar^\frac{1}{2p}$ which gives
		\begin{equation*}
			W_p(\vr_\opgam, \vr_f) \leq C\,\hbar^\frac{1}{2p}.
		\end{equation*}
	\end{remark}

\subsubsection{The interacting case}

	In the Hartree case, we consider the minimizers of the functional $\scE_\op$ with an external trap $U(x)$ and a singular pair potential $K(x-y)$ which we assume to be repulsive. Our results can cover Coulomb singularities, at the expense of adding a logarithmic factor $\sqrt{\n{\ln \hbar}}$ in the estimates. Currently, we do not know if this factor is optimal and a feature of the theory, or a consequence of the proof. For potentials that are less singular than Coulomb, we obtain the desired optimal commutator estimates.

	Our main results covering commutator estimates for interacting systems reads as follows. 
	\begin{thm}[Non-linear commutator estimates]\label{thm:commutator_nonlinear}
		For $\hbar\in(0,1)$ and in $d = 3$, let $\opHF$ be a minimizer of the Hartree energy $\scE$ given by~\eqref{eq:Hartree_enregy}. Assume that the interaction potential $K: \R^3 \rightarrow \R$ is of the form
		\begin{equation*}\label{hyp:K}\tag{H2}
			K(x) = \kappa \n{x}^{-a} \quad \text{ with } \quad \kappa \geq 0
		\end{equation*}
		where $a\in[0,1]$, and that the external potential $U : \R^3 \rightarrow \R$ satisfies, for some $\beta\geq 0$
		\begin{equation*}\label{hyp:U}\tag{H3}
			\lim_{\n{x} \to \infty} U (x) = \infty \, , 
			\qquad e^{-\beta \n{x}} \, \nabla^2 U\, \in L^{\infty}(\R^3) \, . 
		\end{equation*}
		Let $p\in[1,\infty)$. Then, if $a \in (0,1)$ there exists $C>0$ independent of $\hbar$ such that
		\begin{equation*}
			\hd \Tr{\n{\com{x,\opHF}}^p} \leq C\, \hbar
			\quad \text{ and } \quad \hd \Tr{\n{\com{\hbar \nabla, \opHF}}^p} \leq C\, \hbar 
		\end{equation*}
		while if $a = 1$ there exists $C>0$ independent of $\hbar$ such that
		\begin{align*}
			\hd \Tr{\n{\com{x,\opHF}}^p} \leq C\, \hbar
			\quad
			\text{ and }
			\quad 
			\hd \Tr{\n{\com{\hbar\nabla,\opHF}}^p} \leq
			\begin{cases}
				C\, \hbar &\text{ if } p \geq 2
				\\
				C \,\hbar \n{\ln\hbar}^\frac{2-p}{2} &\text{ if } p\leq 2\, .
			\end{cases}
		\end{align*}
		The values of the constants $C$ can be made explicit in terms of $V$ and the analogous constant $\cC_1>0$.
	\end{thm}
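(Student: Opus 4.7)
The strategy is to reduce Theorem~\ref{thm:commutator_nonlinear} to Theorem~\ref{thm:commutator:linear} by exploiting the variational structure of the Hartree problem. The Euler--Lagrange equation for the grand canonical minimizer gives that $\opHF$ coincides with $\indic_{H_{\mathrm{eff}}\leq 0}$, possibly modified by a finite-rank operator supported on the Fermi level (which is absorbed in the error), for the one-body Schr\"odinger operator
\begin{equation*}
	H_{\mathrm{eff}} := -\hbar^2\Delta + V_{\mathrm{eff}}, \qquad V_{\mathrm{eff}}(x) := U(x) + (K\ast\vr_{\opHF})(x).
\end{equation*}
It therefore suffices to verify that $V_{\mathrm{eff}}$ satisfies~\eqref{hyp:V} uniformly in $\hbar$ (with $\nabla^2 V_{\mathrm{eff}}\in L^\infty$ when $a<1$), and that the local eigenvalue estimate~\eqref{eq:LEE} holds for $H_{\mathrm{eff}}$ with a constant $\cC_1$ independent of $\hbar$; the announced bounds then follow from Theorem~\ref{thm:commutator:linear} applied to $V_{\mathrm{eff}}$.

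The key a priori input is a uniform-in-$\hbar$ bound $\sNrm{\vr_{\opHF}}{L^1\cap L^\infty}\leq C$, which I would obtain in two steps. First, combining the Hartree energy bound with a Lieb--Thirring inequality yields $\vr_{\opHF}\in L^1\cap L^{1+2/d}$ uniformly, while Agmon estimates (Proposition~\ref{prop:Agmon}) together with the growth of $U$ in~\eqref{hyp:U} confine $\vr_{\opHF}$ to an $\hbar$-independent region up to exponentially small corrections. Second, the $L^\infty$-bound follows self-consistently from the spectral representation: the pointwise consequence of~\eqref{eq:LEE} applied to $H_{\mathrm{eff}}$ gives $\vr_{\opHF}\leq \om\,(V_{\mathrm{eff},-})^{3/2}+O(\hbar)$, and since $K\ast\vr_{\opHF}\geq 0$ by repulsivity, $V_{\mathrm{eff},-}\leq U_{-}$ which is bounded by~\eqref{hyp:U}.

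With $\vr_{\opHF}\in L^1\cap L^\infty$ at hand, the regularity of $V_{\mathrm{eff}}$ reduces to that of $K\ast\vr_{\opHF}$. For $a\in(0,1)$, the Hessian $\nabla^2 K$ has the locally integrable singularity $\n{x}^{-a-2}$, so by Young's inequality $\nabla^2(K\ast\vr_{\opHF})\in L^\infty$ uniformly in $\hbar$. Combined with~\eqref{hyp:U} on $U$, this verifies the stronger form of~\eqref{hyp:V}, yielding $\hd\tr\n{\com{\hbar\nabla,\opHF}}^p\leq C\,\hbar$ without logarithmic correction; the commutator with $x$ is handled identically.

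The main obstacle is the Coulomb case $a=1$, where $\nabla^2 K\sim \n{x}^{-3}$ is a Calder\'on--Zygmund kernel and $\nabla^2(K\ast\vr_{\opHF})$ is no longer in $L^\infty$. What is available is the classical log-Lipschitz bound for the Newtonian potential of an $L^1\cap L^\infty$ density, namely
\begin{equation*}
	\n{\nabla V_{\mathrm{eff}}(x) - \nabla V_{\mathrm{eff}}(y)} \leq C\,\n{x-y}\,\bigl(1 + \n{\ln\n{x-y}}\bigr),
\end{equation*}
which is strictly stronger than the $C^{1,1/2}$ regularity underlying Theorem~\ref{thm:commutator:linear}, but strictly weaker than $C^{1,1}$. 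The plan is then to revisit the proof of Theorem~\ref{thm:commutator:linear}: the factor $\n{\ln\hbar}^{2-p}$ there arises from integrating a modulus of continuity $\omega(r)=r^{1/2}$ against kernels concentrated at scale $r\sim\hbar$; substituting the log-Lipschitz modulus $\omega(r)=r\n{\ln r}$ and redoing the same borderline estimate produces exactly the announced factor $\n{\ln\hbar}^{(2-p)/2}$. Tracking this substitution carefully through every step of the linear proof, while ensuring that the self-consistent bound on $\vr_{\opHF}$ stays intact (it only needs the bare $L^\infty$ estimate and not the finer commutator bounds), will be the most delicate piece of the argument.
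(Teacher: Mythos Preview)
Your overall reduction strategy is the same as the paper's: write $\opHF=\indic_{H_{\mathrm{eff}}<0}+\opq$, dispose of the piece $\opq$ on the Fermi level using the local eigenvalue estimate, and then feed the effective potential $V_{\mathrm{eff}}=U+K\ast\vr_{\opHF}$ into the linear machinery. For the subcritical range $a\in(0,1)$ your outline is essentially correct and matches the paper.

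The gap is in the Coulomb case $a=1$. Your explanation of the mechanism is mistaken on two counts. First, in the linear Theorem~\ref{thm:commutator:linear} the factor $\n{\ln\hbar}^{2-p}$ does \emph{not} come from integrating a modulus of continuity; it comes from the resolvent sum $\hd\Tr{\opgam(\lambda-H)^{-1}}\leq C(1+\ln(\eps_0/\lambda))$ (Lemma~\ref{lem:trace}) evaluated at $\lambda=\hbar$, which enters the trace-class bound that uses only $\nabla V$. There is no ``borderline estimate against kernels at scale $\hbar$'' to rerun with a different modulus. Second, the paper does not use the log-Lipschitz regularity of $\nabla V_{\mathrm{eff}}$ at all. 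What it uses instead is an $\hbar$-dependent $L^\infty$ bound on the Hessian: the Calder\'on--Zygmund estimate
\[
\Nrm{\nabla^2(K\ast\vr_{\opHF})}{L^\infty}\leq C\bigl(1+\ln(1+\Nrm{\nabla\vr_{\opHF}}{L^\infty})\bigr)\leq C\n{\ln\hbar},
\]
where the last inequality requires the a~priori bound $\Nrm{\nabla\vr_{\opHF}}{L^\infty}\leq C/\hbar$ (Lemma~\ref{lem:rho_inf}), itself established via control of $\sNrm{\n{\opp}^4\opgam}{\L^\infty}$. One then goes back to the general trace-class estimate with the free parameter $\lambda$ (Proposition~\ref{prop:comm_frac}) and optimizes $\lambda\mapsto \lambda+\frac{\hbar}{\lambda}\n{\ln\hbar}$ at $\lambda=\hbar\n{\ln\hbar}^{1/2}$, yielding the square-root improvement. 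So you cannot simply invoke Theorem~\ref{thm:commutator:linear} as a black box (its constants would blow up with $\Nrm{\nabla^2 V_{\mathrm{eff}}}{L^\infty}$); you must return to the intermediate propositions with explicit constants and optimize.

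A smaller issue: your route to the $L^\infty$ bound on $\vr_{\opHF}$ via a ``pointwise consequence of~\eqref{eq:LEE}'' is circular, since~\eqref{eq:LEE} is a trace inequality, not a pointwise one. The paper obtains $\Nrm{\vr_{\opHF}}{L^\infty}$ by a duality argument against $L^1$ test functions and Kato--Seiler--Simon (Lemma~\ref{lem:Linfty}), using only $\Nrm{U_-}{L^\infty}$ and the Agmon bound $\Nrm{V_{\mathrm{eff}}\,\opgam}{\L^\infty}$; no eigenvalue counting is needed there.
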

	
	\begin{remark}
		In the Coulomb case $a=1$, this gives for $p=1$ the commutator estimates
		\begin{equation*}
			\Nrm{\Dhv\opHF}{\L^1} \leq C \quad \text{ and } \quad \Nrm{\Dhx\opHF}{\L^1} \leq C \n{\ln\hbar}^{1/2}.
		\end{equation*}
		The fact that we obtain a $\n{\ln\hbar}^{1/2}$ rather than a $\n{\ln\hbar}$ correction is useful for applications.
		Indeed, in works such as \cite{benedikter_mean-field_2014, benedikter_hartree_2016} one typically obtain errors of the form
		\begin{equation*}
			\Nrm{\op(t) - \op_2(t)}{\L^p} \leq \frac{C}{N^\alpha}\, e^{C_t \Nrm{\Dh\op(0)}{\L^1}}
		\end{equation*}
		for some constant $\alpha>0$, with $\op$ and $\op_2$ solutions of two different evolution equations and $N = \hbar^{-d}$, and $C_t$ some constant growing with time. If $\Nrm{\Dh\op(0)}{\L^1}$ was of size $\n{\ln\hbar}$, then we get an error $\frac{C}{N^\alpha}\, \hbar^{-C_t} = C\,\hbar^{\alpha d -C_t}$ which would become large in finite time. On the other hand, if $\Nrm{\Dh\op(0)}{\L^1}$ is of size $\n{\ln\hbar}^{1/2}$, the error
		remains small globally in time. This is because the error is of the form $\frac{C}{N^\alpha}\, e^{C_t \sqrt{\n{\ln\hbar}}} = C\,e^{C_t \sqrt{\n{\ln\hbar}} - \alpha d \n{\ln\hbar}}$.
	\end{remark}
 
	\begin{remark}\label{rem:rho}
		Similarly as in the linear theory, we obtain the estimates for the densities 
		\begin{equation}
			\Nrm{\nabla \vr_{\opHF}}{L^1} \leq C \, , 
			\quad \text{ if } a < 1
			\quad \text{ and }
			\quad 
			\Nrm{\nabla \vr_{\opHF}}{L^1} \leq C \n{\ln \hbar}^{1/2} , 
			\quad \text{ if } a = 1 \,.
		\end{equation}
	\end{remark}
	
	Finally, we turn to our last main result, regarding the quantitative convergence of position densities, and of states. Recalling that $\rhoTF(x)$ is the associated minimizer of the Thomas--Fermi functional, solving the fixed point equation~\eqref{eq:rhoTF}, we let the associated classical phase space distribution be 
	\begin{equation}\label{eq:fTF}
		\fTF(x,\xi): = \indic(\n{\xi} \leq \om^{-1/d} \rhoTF^{{1/d}}) \, , 
	\end{equation}
	for $(x , \xi) \in \Rd \times \Rd$.	Then, we obtain the following quantitative result. Here again, $\L^1$ denotes the scaled trace norm defined in~\eqref{eq:Schatten}. 
	
	\begin{thm}[Non-linear quantitative convergence]\label{thm:weyl:nonlinear}
		For $\hbar\in(0,1)$ and $d =3$, let $\opHF$ be a minimizer of the Hartree energy $\scE$ given by Equation~\eqref{eq:Hartree_enregy}.
		Let $\vr_{\opHF}$ be its density, and $f_{\opHF}$ its Wigner transform. 
		Denote by $ \rhoTF$ and $\fTF$ the respective limits, given by \eqref{eq:rhoTF} and \eqref{eq:fTF}. Assuming the potentials verify hypotheses~\eqref{hyp:K} with $a\in (0,1]$ and \eqref{hyp:U}, there is a constant $C> 0$ independent of $\hbar$ such that
		\begin{align*}
			\Nrm{f_{\opHF} - \fTF}{L^2} &\leq C\, \hbar^{1/4}
			\\
			\Nrm{\vr_{\opHF} - \rhoTF}{L^2} &\leq 
			C \, \big(1 + \Nrm{\vr_{\opHF}}{\dot{B}^{1/2}_{2,\infty}}^{2/3} \big) \,
			\hbar^{1/3} \, . 
		\end{align*}
		Moreover, in the case when $ 0< a<1$, it holds
		\begin{equation*}
			\Nrm{\opHF - \op_{\fTF}}{\L^1} \leq C\, \hbar^{1/2}
			\quad
			\text{ and }
			\quad 
			\Nrm{\vr_{\opHF} - \rhoTF}{L^1} \leq C\, \hbar^{1/2}
		\end{equation*}
		while in the Coulomb case $a=1$, one obtains
		\begin{equation*}
			\Nrm{\opHF - \op_{\fTF}}{\L^1} \leq C\, \hbar^{1/2} \n{\ln \hbar}^{1/2} \quad \text{ and }\quad \Nrm{\vr_{\opHF} - \rhoTF}{L^1} \leq C\, \hbar^{1/2} \n{\ln \hbar}^{1/4}.
		\end{equation*}
		Additionally, if $\hbar^4\, e^{-\beta \n{x}}\, \nabla^5 U \in L^\infty(\R^3)$ uniformly in $\hbar$, then there exists a constant $C>0$ independent of $\hbar$ such that for any $0 < a \leq 1$
		\begin{equation*}
			\Nrm{\vr_{\opHF} - \rhoTF}{L^2} \leq C \, \hbar^{1/3} \, . 
		\end{equation*}
	\end{thm}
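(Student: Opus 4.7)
The strategy is to linearize the Hartree problem around $\opHF$, invoke the linear quantitative convergence result of Theorem~\ref{thm:weyl:linear} for the resulting Schr\"odinger operator, and close a self-consistent comparison at the classical level using stability of the Thomas--Fermi fixed point. The Euler--Lagrange equation for a minimizer of~\eqref{eq:Hartree_enregy} gives $\opHF = \indic_{H_{\rm eff} \leq 0}$ with $H_{\rm eff} = -\hbar^2 \Delta + V_{\rm eff}$ and $V_{\rm eff} := U + K * \vr_{\opHF}$. Under hypotheses~\eqref{hyp:K}--\eqref{hyp:U}, together with the a priori uniform-in-$\hbar$ bound $\vr_{\opHF} \in L^1\cap L^\infty(\R^3)$, the convolution $K*\vr_{\opHF}$ inherits $C^{1,1/2}_{\loc}$ regularity for any $a\in[0,1]$, so $V_{\rm eff}$ satisfies~\eqref{hyp:V} uniformly. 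Under the stronger assumption $\hbar^4 e^{-\beta|x|}\nabla^5 U\in L^\infty$, one similarly gets $e^{-\beta|x|}\nabla^2 V_{\rm eff}\in L^\infty$ uniformly, using the smoothing provided by $K*$ and the uniform $W^{1,1}$-regularity of $\vr_{\opHF}$ coming from Theorem~\ref{thm:commutator_nonlinear} and Remark~\ref{rem:rho}.

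Set $\tilde f := \indic_{|\xi|^2 + V_{\rm eff} \leq 0}$. Applying Theorem~\ref{thm:weyl:linear} with $V = V_{\rm eff}$, and inserting the commutator bounds of Theorem~\ref{thm:commutator_nonlinear} together with the Besov control via interpolation from Remark~\ref{remark:besov}, we obtain the announced $\hbar$-rates for the differences $\vr_{\opHF}-\vr_{\tilde f}$, $f_{\opHF}-\tilde f$, and $\opHF-\op_{\tilde f}$, with the Coulomb logarithmic factors entering precisely through $\Nrm{\Dh\opHF}{\L^1}$ and $\Nrm{\nabla \vr_{\opHF}}{L^1}$. It then remains to compare the classical references. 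Using $\vr_{\tilde f} = \om (V_{\rm eff})_-^{3/2}$ and the TF fixed point $\rhoTF = \om (U+K*\rhoTF)_-^{3/2}$ together with the local Lipschitz bound $|s_-^{3/2}-t_-^{3/2}| \leq \tfrac{3}{2}(|s|\vee|t|)^{1/2}|s-t|$ and the compactness of the minus-part supports, one obtains the pointwise inequality
\begin{equation*}
	|\vr_{\tilde f}(x) - \rhoTF(x)| \leq C\, w(x)\, \bigl|K*(\vr_{\opHF}-\rhoTF)\bigr|(x)
\end{equation*}
for some compactly supported bounded weight $w$. A Young-type inequality then gives $\Nrm{\vr_{\tilde f}-\rhoTF}{L^p} \leq C \Nrm{\vr_{\opHF}-\rhoTF}{L^{q_p}}^{\theta_p}$ for suitable $q_p \leq p$ and $\theta_p\in(0,1]$ depending on $a$, while a coarea / layer-cake argument on the two level sets $\{|\xi|^2+V_{\rm eff} = 0\}$ and $\{|\xi|^2 + U + K*\rhoTF = 0\}$ bounds $\Nrm{\tilde f - \fTF}{L^2}$ by a power of $\Nrm{K*(\vr_{\opHF}-\rhoTF)}{L^\infty(\text{supp})}$.

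Combining both steps by the triangle inequality produces a self-consistent inequality of the form $\Nrm{\vr_{\opHF}-\rhoTF}{L^p} \leq C\hbar^{\gamma_p} + C\Nrm{\vr_{\opHF}-\rhoTF}{L^{q_p}}^{\theta_p}$, which can be bootstrapped from the crude bound $\Nrm{\vr_{\opHF}-\rhoTF}{L^1} = O(1)$ to yield the announced rates $\hbar^{1/2}$ in $L^1$, $\hbar^{1/3}$ in $L^2$, and correspondingly $\hbar^{1/4}$ in $L^2$ for $f_{\opHF}-\fTF$. For the trace-norm estimate, we decompose $\opHF - \op_{\fTF} = (\opHF - \op_{\tilde f}) + \op_{\tilde f - \fTF}$: the first term is controlled by Theorem~\ref{thm:weyl:linear}, while the second is bounded in $\L^1$ using a Calder\'on--Vaillancourt type estimate on the Weyl quantization of the indicator-difference together with the small measure of $\{\tilde f \neq \fTF\}$. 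The improved $L^2$ bound under the extra $\nabla^5 U$ assumption follows by removing the Besov prefactor in the linear theorem once the stronger regularity of $V_{\rm eff}$ is ensured.

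The main obstacle is the borderline Coulomb case $a=1$: the kernel $K(x) = 1/|x|$ lies only in the weak Lebesgue space $L^{3,\infty}(\R^3)$, so Young's inequality is critical in dimension three and introduces logarithmic losses at each step. Tracking these losses carefully through the self-consistent iteration — in particular, choosing the exponents $q_p$ and the interpolation exponents $\theta_p$ optimally at each bootstrap stage — is what yields the precise factors $|\ln\hbar|^{1/2}$ in the trace-norm bound and $|\ln\hbar|^{1/4}$ in the $L^1$ density bound, rather than a polynomial blow-up that a naive iteration would give.
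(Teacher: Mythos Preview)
Your approach differs fundamentally from the paper's, and the central gap is that your self-consistent bootstrap does not close. From the Lipschitz bound $|s_-^{3/2}-t_-^{3/2}|\le \tfrac32(|s|\vee|t|)^{1/2}|s-t|$ and a Young-type estimate on $K*(\vr_{\opHF}-\rhoTF)$ you obtain an inequality of the form
\[
\Nrm{\vr_{\opHF}-\rhoTF}{L^p}\le C\hbar^{\gamma_p}+C'\Nrm{\vr_{\opHF}-\rhoTF}{L^{q_p}},
\]
but the constant $C'$ coming from Hardy--Littlewood--Sobolev and the prefactor $(|s|\vee|t|)^{1/2}$ is an absolute constant depending on $K$ and $U$, not a small constant. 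Iterating from the crude $O(1)$ bound therefore gives no improvement. To make this route work you would need a quantitative \emph{stability} estimate for the Thomas--Fermi minimizer---essentially that the linearized TF map is a contraction, or equivalently a coercivity bound for the second variation of the TF energy around $\rhoTF$---and you never address this. A second, smaller issue: Calder\'on--Vaillancourt bounds the Weyl quantization in operator norm, not in trace norm, so it cannot control $\Nrm{\op_{\tilde f-\fTF}}{\L^1}$.

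The paper avoids the bootstrap entirely by working variationally with the \emph{nonlinear} energies on both sides. The key identity (Proposition~\ref{prop:energy:nonlinear}) compares $\scE_{\tilde\opgam}-\scE_\opgam+\scE_{\tilde f}-\scE_f$ directly, and the interaction term is handled by the single observation that $\widehat K\ge 0$ implies $\int \vr_{\tilde f}\,K*\vr_{\tilde f}\le \int \vr_f\,K*\vr_f$ (and likewise for $\opgam$), so the energy differences reduce to the \emph{linear} ones involving only the external trap $U$. This is precisely the TF-stability input packaged as convexity of the energy, and it makes the remainder of the proof a verbatim replay of the linear argument in Propositions~\ref{prop:linear_local_Weyl_law}--\ref{prop:linear_local_Weyl_law2}. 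The logarithmic factors in the Coulomb case then enter solely through the commutator bounds of Theorem~\ref{thm:commutator_nonlinear}, with no iteration needed.
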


	\begin{remark}
		A similar proof also yields estimates in $L^q$ for $\rhoHF$ for any $q\in [1,2]$, with rate $\hbar^\frac{1}{q+1}$, and an additional logarithmic factor if $a=1$. Of course, since $\Nrm{\vr_{\opHF}}{L^\infty}$ is bounded uniformly in $\hbar$ in our analysis, one also gets convergence for $q\geq 2$ by interpolation (i.e. H\"older inequality), but in this case with a rate $\hbar^{1/(3q)}$. Similarly, one obtains estimates in $\L^q$ for $\opHF$. Notice however that these do not imply convergence of the Wigner transform in $L^q$ with $q < 2$. Such results could still be obtained for example using weighted $L^2$ estimates as in~\cite{lafleche_optimal_2024}.
	\end{remark}
 
	\begin{remark}\label{remark:besov}
		In our analysis, $\Nrm{\vr_{\opHF}}{L^\infty}$ is bounded uniformly in $\hbar$. Thus, thanks to H\"older's inequality, we obtain that under hypotheses~\eqref{hyp:K} and~\eqref{hyp:U}
		\begin{equation*}
			\Nrm{\vr_{\opHF}}{\dot{B}^{1/2}_{2,\infty}} \leq C \Nrm{\nabla \vr_{\opHF}}{L^1}^{1/2} .
		\end{equation*}
		Note that the right-hand side introduces a logarithmic factor $\n{\ln \hbar}$ for the Coulomb case $a =1$. On the other hand, under the additional condition $\hbar^4\, e^{-\beta \n{x}}\, \nabla^5 U \in L^\infty(\R^3)$ uniformly in $\hbar$, we prove that the following are bounded uniformly in $\hbar$
		\begin{equation}\label{eq:reg:rho}
			\Nrm{ \vr_{\opHF}}{\dot{B}^{1/2}_{2,\infty}} , \quad \sqrt \hbar \Nrm{\nabla \vr_{\opHF}}{L^2} , \quad \text{ and } \quad \Nrm{\vr_{\opHF}}{H^s} , \quad \text{ for } s < 1/2 \, . 
		\end{equation}
		See Proposition~\ref{prop:regu_rho_L2}. The reader should not confuse the estimates~\eqref{eq:reg:rho} with those for $f_{\opHF}$. The latter follow essentially from Theorem~\ref{thm:commutator_nonlinear}, whereas the validity of those of $\vr_{\opHF}$ must be established via \textit{weighted} commutator estimates.
	\end{remark}

\subsection{Strategy of proof}

	Let us summarize here some of the main ideas involved in our proofs. First, for the linear commutator estimates in Theorem~\ref{thm:commutator:linear}, one of the main ingredients is the following local eigenvalue estimate
	\begin{equation}\label{local}
		\hd \tr \indic_{[a,b]}(H) \leq \cC_1 \(\n{b -a} + \hbar\)
	\end{equation}
	where $\cC_1 >0$ is a distinguished constant (depending only on $V$) and $a < b$ are appropriate real numbers sufficiently close to $0$. In our analysis, we derive~\eqref{local} from Weyl's law for $C^{1, \alpha}$ potentials. Equipped now with local eigenvalue estimate~\eqref{local}, we can prove the following singular resolvent estimate 
	\begin{equation*} 
		\hd \Tr{\indic_{H\leq 0}\(\hbar - H\)^{-2}} \leq \cC_2 \,\hbar
	\end{equation*}
	see e.g. Section~\ref{sec:mot} for its motivation and a simple example. Its role in the proof of the commutator estimates is crucial and is developed in Section~\ref{sec:com}. Let us mention that the connection between~\eqref{local} and commutator estimates was first established in~\cite{fournais_optimal_2020}. Here, while we start from the same local eigenvalue estimate, we do not employ additional pseudo-differential operators methods. Instead, our methods are based on the understanding of correlations between eigenfunctions of $H$, created by $x$ or $\opp$. In this regard, our estimates borrow inspiration from~\cite{benedikter_effective_2022}.
	
	Our technique seems to be also related to the double operator integral techniques which were introduced in~\cite{daletskii_integration_1965}, studied more systematically by Birman and Solomyak~\cite{birman_stieltjes_1967} and allowed to obtain the characterization of operator-lipschitz functions in Schatten spaces \cite{potapov_operator-lipschitz_2011}, which is a problem closely related to commutator estimates. We refer the reader to the survey~\cite{aleksandrov_operator_2016} for details and additional references in the huge literature on the subject.
	
	As for the proof of the linear convergence in Theorem~\ref{thm:weyl:linear}, it is based on two steps. First, we use the well-known methods of coherent states to show via variational methods that the Husimi measure of $\opgam = \indic_{H \leq 0}$ is close to the limit $f = \indic_{\n{\xi}^2 + V \leq 0}$. Secondly, we compare the Wigner function to the Husimi measure via convolution estimates --- here, the regularity of $\opgam$ and $\vr_\opgam$ enter the estimate in Theorem~\ref{thm:weyl:linear}.
	
	As for the interacting setting, one of the main ingredients is a representation formula for the minimizers $\opHF$ of $\scE$. Namely, let us denote the effective one-body Hamiltonian by 
	\begin{equation*}
		H_\op := -\hbar^2 \Delta + V_\op(x) \quad \text{ with } \quad V_\op := K*\vr_\op + U\, . 
	\end{equation*}
	Then, the minimizers satisfy the fixed-point equation 
	\begin{equation*}
		\opHF = \indic_{(-\infty, 0)}(H_{\opHF}) + \opq
		\quad \text{ where }\quad
		\ran \opq \subset \ker H_{\opHF} 
	\end{equation*}
	for some self-adjoint $0 \leq \opq \leq \id $, see e.g.~\cite{nguyen_weyl_2024}. This fixed point equation will then establish the connection to the linear theory, after one proves several \textit{apriori estimates} for the ($\hbar$-dependent) non-linear interaction potential
	\begin{equation*}
		V_{\opHF} = U + K* \vr_{\opHF} \, .
	\end{equation*}
	In particular, we prove that its $C^{1, 1/2}_{\loc}$ semi-norm is uniformly bounded in $\hbar$, which yields the optimal Weyl laws for $H_\opgam$ and as a corollary the local eigenvalue estimate~\eqref{local}.
	
	Finally, let us discuss the differences occurring in the choice of $a$ for $K(x) = \kappa\n{x}^{-a}$. The case $a \in [0,1)$ is \textit{sub-critical} and we obtain the optimal commutator estimates~\eqref{eq:comms} for the minimizers of $\scE$. The case $a =1$ corresponds to the Coulomb potential and becomes the \textit{critical case}, where we can only show the commutator estimates~\eqref{eq:comms} with an additional logarithmic factor $\n{\ln\hbar}$. The key difference between these two regimes comes from the estimates that are satisfied by the mean-field potential
	\begin{equation}\label{map}
		\vr\in L^1(\R^3) \cap L^\infty(\R^3) \implies 
		K * \vr \in C^2_b(\R^3) \, . 
	\end{equation}
	Let us further explain. First, we observe that the CLR bound and standard Agmon estimates imply $\sNrm{\vr_{\opHF}}{L^1} \leq C$ and $\sNrm{\vr_{\opHF}}{L^\infty} \leq C$, respectively. In our proof, in order to obtain the optimal commutator estimates in the trace-class norm, we require boundedness of the $C^2$ norm of the mean-field potential $K *\vr_{\opHF}$, uniformly in $\hbar$. In particular, since the map $\vr\mapsto K*\vr$ is continuous for $ 0 < a < 1$ from $L^1\cap L^\infty \to C^2_b$, this is easily obtained via the $L^1$ and $L^\infty$ estimates. On the other hand, for the Coulomb case, we lose the continuity of the map, and are unable to make this strategy work.

\section{A priori estimates}

	Throughout this section, we denote, in dimension $d \geq 1$
	\begin{equation*}
		\opgam = \indic_{H\leq 0} \quad \text{ with } \quad H=-\hbar^2\Delta+V
	\end{equation*}
	for some real-valued $V \in L^1_{\loc}(\Rd)$ such that $V_-\in L^\infty(\Rd)$. We state several estimates on $\opgam$, quantitatively with respect to the potential $V$. Let us introduce here the notation for the momentum operator on $L^2(\Rd)$, to be used in the rest of the paper
	\begin{equation*}
		\opp := - i \hbar \nabla \, . 
	\end{equation*}
	
\subsection{\texorpdfstring{$L^p$}{Lp} estimates}

	Recall that the position density $\vr_\opgam$ associated to $\opgam$ was defined in~\eqref{eq:density}. Let us first notice that the $L^1$ norm of $\vr_\opgam$ is bounded uniformly in $\hbar$ by the Cwikel--Lieb--Rozenblum inequality \cite{rozenblum_distribution_1972, cwikel_weak_1977, lieb_number_1980} in dimension $d\geq 3$. That is, there exists a constant $\sfL_{0,d}>0$ such that
	\begin{equation}\label{eq:CLR}
		\Nrm{\vr_\opgam}{L^1} \leq \sfL_{0,d}
		\intd V_-^{d/2} \d x \,. 
	\end{equation}
	In lower dimensions, one can use the fact that $V_-$ is compactly supported and bounded and $V$ is growing at infinity as in~\cite[Theorem~XIII.81]{reed_analysis_1978}. 
	
	Combined with the $L^1$ bound~\eqref{eq:CLR} and Agmon's estimates (established below), the following lemma implies 
	uniform bounds for all $L^p$ norms of $\vr_\opgam$.

	\begin{lem}\label{lem:Linfty}
		Let $1 \leq d \leq 3$. Then, there exists $C_d>0$ such that
		\begin{equation*} 
			\Nrm{\vr_\opgam}{L^\infty} \leq C_d \(1 + \Nrm{V_-}{L^\infty} + \Nrm{V \opgam}{\L^\infty}\).
		\end{equation*}
	\end{lem}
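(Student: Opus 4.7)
The plan is to control the diagonal $\opgam(x,x)$ of the projection kernel through an operator-theoretic factorisation via the Green's function of the regularised Laplacian $-\hbar^2\Delta+\mu$. Let $G_\mu$ denote the kernel of $(-\hbar^2\Delta+\mu)^{-1}$; the dimensional restriction $d\leq 3$ enters only through the fact that $G_\mu\in L^2(\Rd)$ in this range.

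I would first use the idempotence $\opgam^2=\opgam$ to write $\opgam(x,x)=\int|\opgam(x,y)|^2\,dy$. Setting $B:=(-\hbar^2\Delta+\mu)\opgam$ and factoring $\opgam=(-\hbar^2\Delta+\mu)^{-1}B$, the kernel becomes $\opgam(x,y)=\int G_\mu(x-z)\,B(z,y)\,dz$. A Cauchy--Schwarz in the $z$-variable followed by integration in $y$ (or equivalently, realising the identity $\opgam=(AB)^*(AB)=B^*A^2B$ operator-theoretically and using the $L^\infty$ bound of the symbol of $A^2$) then yields the pointwise estimate
\[
    \opgam(x,x)\leq \Nrm{G_\mu}{L^2(\Rd)}^2\cdot \Nrm{B}{\L^\infty}^2.
\]

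To control $\Nrm{B}{\L^\infty}$, I would decompose $-\hbar^2\Delta=H-V$, giving $B=H\opgam-V\opgam+\mu\opgam$. Since $\opgam=\indic_{H\leq 0}$ projects onto the non-positive spectrum of $H$ and $H\geq -V_-$, the operator $H\opgam$ has spectrum in $[-\Nrm{V_-}{L^\infty},0]$; hence $\Nrm{H\opgam}{\L^\infty}\leq \Nrm{V_-}{L^\infty}$, and the triangle inequality yields
\[
    \Nrm{B}{\L^\infty}\leq \mu+\Nrm{V_-}{L^\infty}+\Nrm{V\opgam}{\L^\infty}.
\]

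Finally, an explicit Plancherel computation gives $\Nrm{G_\mu}{L^2}^2=\int(h^2|\xi|^2+\mu)^{-2}\,d\xi=C_d\,h^{-d}\mu^{d/2-2}$, which is finite precisely when $d\leq 3$. Combining with $\vr_\opgam=h^d\opgam(x,x)$ cancels the factor $h^{-d}$, and choosing $\mu$ of the order $1+\Nrm{V_-}{L^\infty}+\Nrm{V\opgam}{\L^\infty}$ then produces the advertised estimate. The main technical point I expect to revisit is making the pointwise Cauchy--Schwarz step rigorous when applied to objects involving $\delta_x$; this is handled by regularising with a bump profile $\varphi_\eps(y)=\eps^{-d/2}\varphi((y-x)/\eps)$, applying the inequality at each scale $\eps>0$, and passing to the limit using the continuity of the kernel of $\opgam$ on the diagonal (which itself follows from $\Nrm{B}{\L^\infty}<\infty$). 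The restriction $d\leq 3$ is invoked exclusively in the $L^2$-integrability of the Yukawa kernel $G_\mu$; for $d\geq 4$ one would need to work with fractional powers $(-\hbar^2\Delta+\mu)^{-s}$ combined with operator interpolation.
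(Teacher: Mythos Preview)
Your argument is correct and is essentially the paper's own proof, phrased at the kernel level rather than by duality against a test function $u\in L^1$: both insert a resolvent factor $(-\hbar^2\Delta+\mu)^{-1}$ (the paper fixes $\mu=1$ via $\weight{\opp}^{-2}$) and use that $(-\hbar^2\Delta)\opgam=(H-V)\opgam$ has bounded operator norm; your kernel estimate $\opgam(x,x)=\lVert B^{*}G_\mu(x-\cdot)\rVert_{L^2}^2\le \lVert B\rVert_{\L^\infty}^2\lVert G_\mu\rVert_{L^2}^2$ is the same H\"older step the paper writes as $\lVert\sqrt{u}\,\opgam\rVert_{\L^2}\le\lVert\sqrt{u}\,\weight{\opp}^{-2}\rVert_{\L^2}\lVert\weight{\opp}^{2}\opgam\rVert_{\L^\infty}$.

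One small quantitative slip: after optimising in $\mu$ your bound reads $\lVert\vr_\opgam\rVert_{L^\infty}\le C_d\,\mu^{d/2-2}(\mu+M)^2\big|_{\mu\sim M}\sim C_d(1+M)^{d/2}$ with $M=\lVert V_-\rVert_{L^\infty}+\lVert V\opgam\rVert_{\L^\infty}$, which for $d=3$ is $(1+M)^{3/2}$ rather than the linear $(1+M)$ stated; the paper's version (with $\mu=1$ and the square on $\lVert\weight{\opp}^2\opgam\rVert_{\L^\infty}$ carried through) gives $(1+M)^2$, so neither literally attains the linear exponent, but this is immaterial for the downstream applications where $M$ is bounded uniformly in $\hbar$.
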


	\begin{proof}
		Let $u \in L^1(\Rd)$ be a non-negative test function. Then, we have for all $s>0$
		\begin{equation*}
			\intd u \,\vr_\opgam = \hd \Tr{u\,\opgam} = \Nrm{\sqrt{u}\,\opgam}{\L^2}^2 \leq
			\Nrm{\sqrt u \weight{\opp}^{-s}}{\L^2}^2 \Nrm{\weight{\opp}^s \opgam}{\L^\infty}^2
		\end{equation*}
		where $\weight{\opp}^2 = \id + \opp^*\opp$. For $s > \frac{d}{2}$ we compute $\sNrm{\sqrt u \weight{\opp}^{-s}}{\L^2}^2 = \Nrm{u}{L^1} \sNrm{\weight{\xi}^{-2s}}{L^1_\xi} = C(s,d) \Nrm{u}{L^1}$. We therefore find for $d\leq 3$ that with $s=2$
		\begin{equation}
			\Nrm{\vr_\opgam}{L^\infty} \leq C_d \Nrm{\weight{\opp}^2 \opgam}{\L^\infty} \leq C_d \(1 + \sNrm{\n{\opp}^2 \opgam}{\L^\infty}\). 
		\end{equation}
		The last term can be estimated as follows
		\begin{equation}\label{eq:weight_2}
			\Nrm{\n{\opp}^2 \opgam}{\L^\infty} \leq \Nrm{H \opgam}{\L^\infty} + \Nrm{V \opgam}{\L^\infty}\, . 
		\end{equation}
		It suffices now to use $H \geq - \Nrm{V_-}{L^\infty}$. This finishes the proof.
	\end{proof}
	
\subsection{Energy estimates}

	Another common way to obtain bounds on the $L^p$ norms with $p<\infty$ is through Lieb--Thirring inequalities~\cite{lieb_inequalities_1976}, which give for instance the following quantum analogue of a classical interpolation inequality used in kinetic theory (see e.g.~\cite{lions_propagation_1991})
	\begin{equation}\label{eq:Lieb-Thirring}
		\Nrm{\vr_{\op}}{L^p} \leq \(\CLT \Nrm{\op}{\L^\infty}\)^{1/p'} \(\hd\Tr{\n{\opp}^2\!\op}\)^{1/p}
	\end{equation}
	for $d \geq 1$. Here, $\op$ is a density operator, $p = 1 + \frac{2}{d}$ and $\CLT$ only depends on $d$. It gives in particular the following natural upper bounds for the kinetic and potential energy of the spectral function $\opgam$ in terms of $V_-$.
	\begin{lem}\label{lem:LT}
		There holds
		\begin{equation*}
			\hd \Tr{\n{\opp}^2\opgam} \leq \intd \vr_{\opgam} V_- \leq 
			\CLT \Nrm{V_-}{L^{p'}}^{p'}
		\end{equation*}
		where $p' = 1+\frac{d}{2}$.
	\end{lem}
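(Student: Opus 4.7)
The plan is to obtain both inequalities by combining the defining operator bound $H\opgam \leq 0$—immediate from $\opgam = \indic_{H\leq 0}$ via functional calculus—with the Lieb--Thirring inequality~\eqref{eq:Lieb-Thirring} applied to $\opgam$ itself. The first inequality is essentially algebraic: multiplying $H\opgam \leq 0$ by $\hd$, taking the trace, splitting $H = \n{\opp}^2 + V$, and using the identity $\hd\Tr{V\opgam} = \intd V\,\vr_\opgam$ coming from~\eqref{eq:density}, I get at once
\[
\hd\Tr{\n{\opp}^2\opgam} \leq -\intd V\,\vr_\opgam \leq \intd V_-\,\vr_\opgam,
\]
which is the left-hand inequality.

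For the right inequality I would estimate $\intd V_-\,\vr_\opgam$ by H\"older with conjugate exponents $(p,p') = (1+\tfrac{2}{d},\,1+\tfrac{d}{2})$, bounding it by $\Nrm{V_-}{L^{p'}}\,\Nrm{\vr_\opgam}{L^p}$, and then apply~\eqref{eq:Lieb-Thirring} directly to $\opgam$. Since $\opgam$ is an orthogonal projection, $\Nrm{\opgam}{\L^\infty}\leq 1$, so this gives $\Nrm{\vr_\opgam}{L^p} \leq \CLT^{1/p'}\,T^{1/p}$ with $T := \hd\Tr{\n{\opp}^2\opgam}$. Chaining with the first step produces the self-improving bound $T \leq \CLT^{1/p'}\Nrm{V_-}{L^{p'}}\,T^{1/p}$; solving for $T$ and using $1/p+1/p'=1$ yields $T \leq \CLT\,\Nrm{V_-}{L^{p'}}^{p'}$, and re-injecting this into H\"older gives the same bound for $\intd V_-\,\vr_\opgam$.

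The only mildly delicate point is justifying the bootstrap when $T$ is not a priori finite. If $\Nrm{V_-}{L^{p'}}=\infty$ the claim is vacuous; otherwise I would either invoke the standing assumption $V_-\in L^\infty(\Rd)$ together with the CLR bound~\eqref{eq:CLR} (for $d\geq 3$) to secure trace-class of $\opgam$, or, more cleanly and uniformly in the dimension, run the argument above for the truncated spectral projector $\opgam_n := \indic_{[-\Nrm{V_-}{L^\infty},-1/n]}(H)$, which is of finite rank whenever $V_-\in L^{p'}$, and then pass to the limit $n\to\infty$ via monotone convergence. Everything else is entirely routine.
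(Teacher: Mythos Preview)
Your proposal is correct and follows essentially the same approach as the paper: both use $H\opgam\leq 0$ for the first inequality, then H\"older plus the Lieb--Thirring inequality~\eqref{eq:Lieb-Thirring} with $\Nrm{\opgam}{\L^\infty}\leq 1$ to obtain a self-improving bound (the paper bootstraps on $\intd \vr_\opgam V_-$ directly rather than on $T$, but this is equivalent given the first inequality). Your remark about justifying finiteness via a finite-rank truncation is a nice addition that the paper leaves implicit.
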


	\begin{proof}
		The first inequality just follows from the fact that $\intd \vr_{\opgam} V_- - \hd \Tr{\n{\opp}^2\opgam} = \hd \Tr{H\opgam} - \intd \vr_{\opgam} V_+ \leq 0$. Combining this inequality with H\"older's inequality, the fact that $0\leq \opgam \leq \id$ and the Lieb--Thirring inequality~\eqref{eq:Lieb-Thirring} gives
		\begin{align*}
			\intd \vr_{\opgam} V_- &\leq \Nrm{\rho_\opgam}{L^p} \Nrm{V_-}{L^{p'}} \leq \CLT^{1/p'} \(\hd\Tr{\n{\opp}^2\!\op}\)^{1/p} \Nrm{V_-}{L^{p'}}
			\\
			&\leq \CLT^{1/p'} \(\intd \vr_{\opgam}\, V_-\)^{1/p} \Nrm{V_-}{L^{p'}}
		\end{align*}
		from which the second inequality follows.
	\end{proof}

\subsection{Agmon-like estimates}
	
	In order to establish some decay properties of the density $\vr_\opgam$ at infinity, we review here the well-known Agmon estimates \cite{agmon_spectral_1975, agmon_lectures_1982} for the density operator $\opgam$.
	\begin{lem}[Energy estimate]
		Let $d \geq 1 $. Then, for any smooth function $u:\Rd\to\R$,
		\begin{equation}\label{eq:Agmon_0}
			\Tr{(V-\hbar^2 \n{\nabla u}^2)\,e^{2u}\,\opgam} = \Tr{e^{2u}\,\opgam\,H} - \Tr{\n{\opp}^2 e^u\,\opgam\, e^u} \leq 0 \, .
		\end{equation}
	\end{lem}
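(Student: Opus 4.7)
The plan is to split the statement into two parts: (i) the algebraic identity between the three trace expressions, and (ii) the sign inequality $\leq 0$ for their common value.

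For (i), the key ingredient is the IMS-type conjugation formula
\begin{equation*}
	e^u H e^u = \tfrac{1}{2}\big(H\,e^{2u} + e^{2u} H\big) + \hbar^2 \n{\nabla u}^2\, e^{2u},
\end{equation*}
which follows from the pointwise Leibniz expansion $\Delta(e^u\psi) = e^u \Delta\psi + 2 e^u \nabla u\cdot\nabla\psi + e^u(\Delta u + \n{\nabla u}^2)\psi$ applied to both $e^u\psi$ and $e^{2u}\psi$, together with the fact that $V$ and $e^u$ commute as multiplication operators. Taking the trace against $\opgam$, the commutation $[H,\opgam]=0$ combined with cyclicity collapses both $\Tr{He^{2u}\opgam}$ and $\Tr{e^{2u}H\opgam}$ into $\Tr{e^{2u}\opgam H}$, yielding
\begin{equation*}
	\Tr{e^u H e^u \opgam} = \Tr{e^{2u}\opgam H} + \hbar^2 \Tr{\n{\nabla u}^2 e^{2u}\opgam}.
\end{equation*}
On the other hand, decomposing $e^u H e^u = e^u \n{\opp}^2 e^u + V e^{2u}$ and using cyclicity once more gives $\Tr{e^u H e^u \opgam} = \Tr{\n{\opp}^2 e^u \opgam\, e^u} + \Tr{V e^{2u} \opgam}$. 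Equating the two expressions produces the claimed identity.

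For (ii), the operator $\opgam H = H\,\indic_{H\leq 0}(H)$ is bounded and non-positive (boundedness using $V_-\in L^\infty$). Conjugation by the positive multiplication operator $e^u$ preserves non-positivity, so $e^u \opgam H\, e^u \leq 0$, and by cyclicity
\begin{equation*}
	\Tr{e^{2u}\opgam H} = \Tr{e^u \opgam H e^u} \leq 0.
\end{equation*}
Writing $\n{\opp}^2 = \sum_j \opp_j^*\opp_j$ with $\opp_j = -i\hbar\,\partial_j$, cyclicity gives
\begin{equation*}
	\Tr{\n{\opp}^2 e^u \opgam e^u} = \sum_j \Tr{\big(\opp_j e^u \opgam^{1/2}\big)\big(\opp_j e^u \opgam^{1/2}\big)^{\!*}} \geq 0,
\end{equation*}
so the middle expression is the difference of a non-positive and a non-negative number, hence is itself non-positive.

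The main technical subtlety I would need to address is that $e^u$ may be unbounded when $u$ grows at infinity, so the cyclic rearrangements require justification. The natural route is to establish the identity first for the bounded truncations $u_n := \min(u,n)$, where every factor is bounded and every trace is finite thanks to the trace-class property of $\opgam$, and then to pass to the limit $n\to\infty$ by monotone convergence in the regime where the right-hand side is finite---precisely the regime in which the lemma is applied in the sequel to propagate Agmon-type exponential decay.
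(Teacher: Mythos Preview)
Your proof is correct. Both arguments reach the same identity, but the packaging differs. The paper computes the single commutator $[H,e^u] = -i\hbar\,e^u(\opp\cdot\nabla u + \nabla u\cdot\opp) - \hbar^2\n{\nabla u}^2 e^u$, expands $\Tr{He^{2u}\opgam}$ via $He^{2u} = [H,e^u]e^u + e^u H e^u$, and then kills the first-order cross term by taking real parts (observing it is purely imaginary). You instead invoke the IMS localization identity $e^u H e^u = \tfrac{1}{2}(He^{2u}+e^{2u}H) + \hbar^2\n{\nabla u}^2 e^{2u}$, which is exactly the double-commutator version of the same computation and bypasses the real-parts trick. Your route is arguably cleaner and more recognizable; the paper's is slightly more hands-on but equivalent. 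For the inequality, the paper simply says ``since $\opgam H\leq 0$'' and leaves the non-negativity of $\Tr{\n{\opp}^2 e^u\opgam e^u}$ implicit, whereas you spell both out explicitly. Your remark on justifying cyclicity via bounded truncations $u_n = \min(u,n)$ is a point the paper does not address and is a reasonable way to make the manipulations rigorous.
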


	\begin{proof}
		By the Leibniz rule for commutators and the fact that $\com{\opp,e^u} = -i\hbar\,e^u\,\nabla u$
		\begin{align*}
			\com{H,e^u} = \com{\n{\opp}^2,e^u} &= -i\hbar \(\opp\cdot e^u\nabla u + e^u\nabla u\cdot \opp\)
			\\
			&= -i\hbar\, e^u \(\opp\cdot \nabla u + \nabla u\cdot \opp\) - \hbar^2 \n{\nabla u}^2 e^u.
		\end{align*}
		Therefore, by cyclicity of the trace
		\begin{multline*}
			\Tr{H\,e^{2u}\,\opgam} = \Tr{\com{H,e^u} e^u\,\opgam} + \Tr{H\, e^u\,\opgam\, e^u}
			\\
			= -i\hbar\Tr{e^u \(\opp\cdot \nabla u + \nabla u\cdot \opp\) e^u\,\opgam} - \hbar^2 \Tr{\n{\nabla u}^2 e^{2u}\,\opgam} + \Tr{H\, e^u\,\opgam\, e^u}.
		\end{multline*}
		Since $\opgam\,H$ is self-adjoint, $\tr(H\,e^{2u}\,\opgam) = \tr(\opgam\,H\,e^{2u}) \in\R$. Moreover, $\tr(\n{\nabla u}^2 e^{2u}\,\opgam)$ and $\tr(H\, e^u\,\opgam\, e^u)$ are also real because $\opgam$ is self-adjoint. On the other side, the term $i\hbar\Tr{e^u \(\opp\cdot \nabla u + \nabla u\cdot \opp\) e^u\,\opgam}$ is imaginary. Hence, taking the real part of the previous equation yields
		\begin{equation*}
			\Tr{H\,e^{2u}\,\opgam} = - \hbar^2 \Tr{\n{\nabla u}^2 e^{2u}\,\opgam} + \Tr{H\, e^u\,\opgam\, e^u}
		\end{equation*}
		which can also be written
		\begin{equation*}
			\Tr{(V-\hbar^2 \n{\nabla u}^2)\,e^{2u}\,\opgam} = \Tr{e^{2u}\,\opgam\,H} - \Tr{\n{\opp}^2 e^u\,\opgam\, e^u}.
		\end{equation*}
		Since $\opgam\,H\leq 0$, it implies the result.
	\end{proof}

	The above lemma implies both exponential decay in $x$ and smallness in $\hbar$ in the region where the potential is positive. In the following proposition, we write operator versions of Agmon's estimates on exponential decay away from the classically allowed regions. For $ 0 \leq a < b $ we use the notation
	\begin{equation}\label{eq:omega}
		\Omega_a := \{V \leq a \}\,,
		\quad
		\Omega_{a,b} := \{x \in \Rd: \Dist{x, \Omega_a} < b \} = \Omega_a + B_b 
	\end{equation}
	where $B_r : = \{ x \in \Rd : \n{x} < r \}$. We also use the distance function 
	\begin{equation}\label{eq:dist}
		d_a(x) := \Dist{x, \Omega_a} .
	\end{equation}

	\begin{prop}[Smallness outside of the bulk]\label{prop:Agmon}
		Let $d \geq1$ and $a>0$. Then, the following statements are true. For any $0 < \alpha < a$
		\begin{equation}\label{eq:smallness_out_of_bulk}
			\hd\Tr{\indic_{V>a}\,e^{2\sqrt{\alpha}\,d_a/\hbar}\,\opgam} \leq \frac{1}{a-\alpha}\intd \vr_\opgam\,V_- \, . 
		\end{equation}
		For any $R>0$,
		\begin{align}\label{eq:smallness_out_of_bulk_2}
			\int_{\Omega_{1,R}^c} e^{\frac{d_1}{2\hbar}}\vr_\opgam &\leq \frac{4}{3}\, e^{-\frac{R}{2\hbar}} \intd \vr_\opgam\, V_-\, ,
			\\\nonumber
			\Nrm{\indic_{\Omega_{1,R}^c} e^\frac{d_1}{4\hbar}\, \opgam}{\L^\infty} &\leq \frac{C_{d}}{R^{d/2}} \,e^{-\frac{R}{8\hbar}} \(\intd \vr_\opgam\, V_-\)^{1/2}
		\end{align}
		where $C_d = 2/\sqrt{3} \(2d/(\pi e)\)^{d/2}$. For all $\beta>0$ such that $\beta\,\hbar \leq 1/4$, 
		\begin{equation}\label{eq:smallness_out_of_bulk_3}
			\Nrm{e^{\beta\,d_1}\,\opgam}{\L^\infty} \leq e^{\beta} + C_d \,e^{-\,\frac{1}{8\hbar}} \(\intd \vr_\opgam\, V_- \)^{1/2}.
		\end{equation}
		In particular, let $r \geq 1$ be such that $\Omega_{1,1} \subset B_r$. Then, for $0 \leq \beta \leq 1/ 4 \hbar$
		\begin{equation}\label{exp:decay}
			\intd e^{ \beta \n{x}} \vr_\opgam \leq C \intd \vr_\opgam \(1 + V_-\) ,
		\end{equation}
		with $C = e^{ 2 \beta r} + 4 / (3 \sqrt e)$. 
	\end{prop}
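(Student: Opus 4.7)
The plan is to apply the energy estimate established just above with several carefully chosen Lipschitz weights $u$, combined with elementary decompositions of the integration region.

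For~\eqref{eq:smallness_out_of_bulk}, I would take $u = \sqrt{\alpha}\, d_a/\hbar$, which is Lipschitz with $\n{\nabla u} \leq \sqrt{\alpha}/\hbar$ almost everywhere and vanishes on $\Omega_a = \{V \leq a\}$, so that $e^{2u} = 1$ there. Splitting the resulting trace over $\{V > a\}$ (where $V - \hbar^2\n{\nabla u}^2 \geq a - \alpha > 0$) and over $\Omega_a$, and moving the latter contribution to the right-hand side, yields the claim after bounding the negative part by $\intd V_-\,\vr_\opgam$.

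For~\eqref{eq:smallness_out_of_bulk_2}, I specialize to $\alpha = 1/4$, $a = 1$, giving $\int_{\{V>1\}} e^{d_1/\hbar}\,\vr_\opgam \leq \tfrac{4}{3}\intd V_-\,\vr_\opgam$. Since $d_1 > R$ on $\Omega_{1,R}^c \subset \{V > 1\}$, factoring out $e^{-R/(2\hbar)}$ immediately yields the first inequality. For the operator-norm bound, the starting point is the Hilbert--Schmidt identity $\Nrm{\indic_{\Omega_{1,R}^c} e^{d_1/(4\hbar)}\opgam}{\L^2}^2 = \int_{\Omega_{1,R}^c} e^{d_1/(2\hbar)}\,\vr_\opgam$ (which uses $\opgam^2 = \opgam$). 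The sharper $R^{-d/2}$ factor in place of the naive $\hd^{-1/2}$ comes from the subharmonicity of $\opgam(x,x)$ on $\{V > 0\}$: a direct computation on eigenfunctions gives $\Delta \n{\varphi_i}^2 \geq 2(V - \lambda_i)\n{\varphi_i}^2/\hbar^2 \geq 0$ there, so that $\opgam(x,x) = \sum \n{\varphi_i(x)}^2$ is subharmonic. For $x \in \Omega_{1,R}^c$ the ball $B_{R/2}(x)$ sits inside $\Omega_{1,R/2}^c$, and a mean-value inequality combined with the scale-$R/2$ version of the Agmon bound already established produces the desired $R^{-d/2}\,e^{-R/(8\hbar)}$ factor.

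Claims~\eqref{eq:smallness_out_of_bulk_3} and~\eqref{exp:decay} then follow by elementary manipulations. For~\eqref{eq:smallness_out_of_bulk_3}, I split $e^{\beta d_1} = e^{\beta d_1}\indic_{d_1\leq 1} + e^{\beta d_1}\indic_{d_1>1}$; the first contributes at most $e^\beta$ via $\Nrm{\opgam}{\L^\infty}\leq 1$, while for the second I use $e^{\beta d_1} \leq e^{d_1/(4\hbar)}$ (from $\beta\hbar \leq 1/4$) and apply~\eqref{eq:smallness_out_of_bulk_2} at $R = 1$. For~\eqref{exp:decay}, I split at $\n{x} = r$: on $B_r$, $e^{\beta \n{x}} \leq e^{\beta r}\leq e^{2\beta r}$; on $B_r^c \subset \Omega_{1,1}^c$, the inclusion $\Omega_1 \subset B_r$ gives $d_1(x) \geq \n{x} - r$, hence $e^{\beta\n{x}} \leq e^{\beta r}\,e^{\beta d_1}$, and applying~\eqref{eq:smallness_out_of_bulk_2} to the second factor produces the $4/(3\sqrt{e})$ constant.

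The hard part will be the $R^{-d/2}$ factor in the operator-norm estimate of~\eqref{eq:smallness_out_of_bulk_2}: the naive Hilbert--Schmidt-to-operator-norm conversion only yields $\hd^{-1/2}$, which is weaker than the target when $R$ is large compared to $\hbar^{1/d}$. Extracting the $R^{-d/2}$ requires combining the subharmonicity of the diagonal $\opgam(x,x)$ on the classically forbidden region with a mean-value argument at scale $R/2$, effectively trading the quantum scale $\hbar$ for the geometric scale of the forbidden region.
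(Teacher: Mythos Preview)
Your outline for \eqref{eq:smallness_out_of_bulk}, the integral half of \eqref{eq:smallness_out_of_bulk_2}, and the last two estimates is essentially the paper's argument. The divergence is in the operator-norm half of \eqref{eq:smallness_out_of_bulk_2}, which you flag as the ``hard part'' and propose to handle by subharmonicity of the diagonal $\opgam(x,x)$ and a mean-value argument.

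This detour is unnecessary, and your diagnosis of the difficulty is off. The paper simply bounds the operator norm squared by the unscaled trace,
\[
\Nrm{\indic_{\Omega_{1,R}^c} e^{d_1/(4\hbar)}\opgam}{\L^\infty}^2 = \Nrm{\opgam\,\indic_{\Omega_{1,R}^c} e^{d_1/(2\hbar)}\opgam}{\L^\infty} \leq \Tr{\indic_{\Omega_{1,R}^c} e^{d_1/(2\hbar)}\opgam} = \frac{1}{\hd}\int_{\Omega_{1,R}^c} e^{d_1/(2\hbar)}\vr_\opgam,
\]
and then uses the already-established bound $\frac{4}{3}\,e^{-R/(2\hbar)}\intd \vr_\opgam V_-$ for the integral. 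The point is that the $h^{-d}$ prefactor you call ``naive'' is harmless: split $e^{-R/(2\hbar)} = e^{-R/(4\hbar)}\cdot e^{-R/(4\hbar)}$ and absorb $h^{-d}$ into one factor via the elementary inequality $t^d e^{-R\pi t/2} \leq \(2d/(R\pi e)\)^d$ applied at $t=1/h$. This is where the $R^{-d/2}$ comes from; it is not a geometric scale extracted from subharmonicity but simply what survives after the exponential eats the semiclassical prefactor.

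Your proposed alternative also has a gap as written: the mean-value inequality gives pointwise control of the diagonal $\opgam(x,x)$, but there is no direct passage from that to the operator norm of $\indic_{\Omega_{1,R}^c} e^{d_1/(4\hbar)}\opgam$. Moreover, the averaging still produces $\int \opgam(y,y)\,\d y = h^{-d}\int \vr_\opgam$, so the $h^{-d}$ you are trying to avoid reappears anyway. The step you identify as hard is in fact a one-liner, and the remedy you propose does not obviously close.
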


	\begin{proof}[Proof of Proposition~\ref{prop:Agmon}]
		Take $u = \sqrt{\alpha}\, d_a/\hbar$ in Inequality~\eqref{eq:Agmon_0}. Then $\n{\nabla u} = \frac{\sqrt{\alpha}}{\hbar} \,\indic_{V>a}$ and Inequality~\eqref{eq:Agmon_0} gives
		\begin{equation*}
			\hd\Tr{(V-\alpha \indic_{V>a})\,e^{2u}\,\opgam} \leq 0 \, .
		\end{equation*}
		This implies that
		\begin{equation*}
			\hd\Tr{\(V-\alpha\)\indic_{V>a}\,e^{2u}\,\opgam} \leq -\hd\Tr{V\,\indic_{V\leq a}\,e^{2u}\,\opgam} \leq \intd \vr_\opgam \, V_-\,,
		\end{equation*}
		which proves Inequality~\eqref{eq:smallness_out_of_bulk}. Let us now take $a = 1$ and $\alpha = 1 /4$ to find that 
		\begin{equation}\label{eq:smallness_out_of_bulk_1}
			\hd\Tr{\indic_{V>1}\,e^{\,d_1 /\hbar}\,\opgam} \leq \frac{4}{3}\intd \vr_\opgam\,V_- \, . 
		\end{equation}
		Inequality~\eqref{eq:smallness_out_of_bulk_2} now follows by noticing that $x\in\Omega_{1,R}^c \implies d_1(x) \geq (d_1(x)+R)/2$. To get the next inequality, observe that
		\begin{equation*} 
			\Nrm{\indic_{\Omega_{1,R}^c} e^\frac{d_1}{4\hbar}\, \opgam}{\L^\infty}^2 = \Nrm{\opgam\,\indic_{\Omega_{1,R}^c} e^\frac{d_1}{2\hbar}\, \opgam}{\L^\infty} \leq \Tr{\indic_{\Omega_{1,R}^c} e^\frac{d_1}{2\hbar}\, \opgam} = \frac{1}{\hd} \int_{\Omega_{1,R}^c} e^\frac{d_1}{2\hbar} \vr_\opgam \, . 
		\end{equation*} 
		Hence it follows from Inequality~\eqref{eq:smallness_out_of_bulk_2} and the fact that for any $t>0$, $t^d \,e^{-R\pi t/2} \leq \(2d/(R\pi e)\)^d$
		\begin{equation*}
			\Nrm{\indic_{\Omega_{1,R}^c} e^\frac{d_1}{4\hbar}\, \opgam}{\L^\infty} \leq \frac{C_d}{R^{d/2}} \,e^{-\frac{R}{8\hbar}} \(\intd \vr_\opgam V_-\)^{1/2}.
		\end{equation*}
		Finally, we use the triangle inequality and use $0\leq \opgam\leq 1$ and $\beta\, \hbar \leq 1/4$ to get
		\begin{equation*}
			\Nrm{e^{\beta d_1}\, \opgam}{\L^\infty} \leq e^{\beta R} + \Nrm{\indic_{\Omega_{1,R}^c} e^\frac{d_1}{4\hbar}\, \opgam}{\L^\infty} ,
		\end{equation*}
		which implies Inequality~\eqref{eq:smallness_out_of_bulk_3} by taking $R=1$.

		For the last inequality, we split the integration region in $B_{2 \ell}$ and $B_{2\ell}^c$. In the first one, $e^{\beta \n{x}} \leq e^{2 \beta \ell} $. In the second one, we use $\beta \n{x} \leq \frac{\hbar \n{x}}{4}$ and $d_1(x) \geq \n{x} - 1 \geq \frac{\n{x}}{2}$, and then~\eqref{eq:smallness_out_of_bulk_2}. 
	\end{proof}
 
	We will need to consider some higher order Agmon-like estimates, which can be thought of as Agmon's estimates for gradients of eigenfunctions, or from a semiclassical point of view, as decay in phase space in both $x$ and $\xi$. Recalling that $\vr_\op = \hd \op(x,x)$, we look first at $\vr_{\opp\cdot\op\,\opp}$, which is a quantum analogue of $\intd f \n{\xi}^2\d\xi$. In the case when $\op$ is diagonalized in the form $\op = \sum_{j\geq 0} \lambda_j \ket{\psi_j}\bra{\psi_j}$ for some orthonormal family of eigenvectors $(\psi_j)_{j\geq 0}$, then
	\begin{equation}\label{eq:rho_2_diagonalisation}
		\vr_{\opp\cdot\op\,\opp} = \hd \sum_{j\geq 0} \lambda_j \n{\hbar\nabla\psi_j}^2.
	\end{equation}
	One can also notice that by the definition of $\vr_\op$ in the weak sense, for any nonnegative function $\varphi\in L^\infty(\Rd)$,
	\begin{equation*}
		\intd \vr_{\opp\cdot\op\,\opp} \,\varphi = \hd \Tr{\opp\cdot\op\,\opp\,\varphi} = \Nrm{\sqrt{\varphi}\,\opp\,\op}{\L^2}^2.
	\end{equation*}
	We will also need similar estimates for higher order gradients. When multiplying two vectors, we will mean their tensor product. In particular, we denote by $\opp^k := \opp\otimes \opp\dots\otimes \opp$ the operator-valued tensor $(\opp_{j_1}\dots\opp_{j_k})_{1\leq j_1\dots,j_k\leq d}$, and, for tensors of order $2$, denoting their double contraction by $A : B := \sum_k A_{jk}B_{kj}$,
	\begin{equation}\label{eq:rho_4_diagonalisation}
		\vr_{\opp^2 :\op\,\opp^2} = \hd \sum_{j\geq 0} \lambda_j \n{\hbar^2\nabla^2\psi_j}^2.
	\end{equation}
	with $\nabla^2\psi_j$ denoting the Hessian of $\psi_j$. Then, for any nonnegative $\varphi\in L^\infty(\Rd)$,
	\begin{equation*}
		\intd \vr_{\opp^2:\op\,\opp^2} \,\varphi = \hd \Tr{\opp^2:\op\,\opp^2\,\varphi} = \Nrm{\sqrt{\varphi}\,\opp^2\op}{\L^2}^2.
	\end{equation*}
	
	In the classical case, if $f = \indic_{\n{\xi}^2 \leq V_-}$, then we have the following straightforward inequalities
	\begin{equation*}
		\intd f \n{\xi}^2 \d \xi \leq V_- \,\vr_f \quad \text{ and } \quad \intd f \n{\xi}^4 \d \xi \leq V_- \intd f \n{\xi}^2 \d \xi \, .
	\end{equation*}
	The following lemmas can be seen as quantum analogues of the above formulas. Recall $d_1$ is given by \eqref{eq:dist}.
	
	\begin{lem}[Agmon-like estimate for gradients]\label{lem:agmon_gradients}
		Let $d \geq 1 $. Then, in the weak sense, there holds 
		\begin{equation}\label{eq:Agmon_rho_2}
			0 \leq \vr_{\opp\cdot\opgam\,\opp} \leq \(\tfrac{\hbar^2}{2}\Delta -V\)\vr_\opgam\,.
		\end{equation}
		This and Agmon estimates give that for any $R>0$,
		\begin{align}\label{eq:Agmon_rho_2_L1}
			\int_{\Omega_{1,R}^c} \vr_{\opp\cdot\opgam\,\opp} \, e^\frac{d_1}{2\hbar} &\leq \,e^{1-\frac{R}{2\hbar}} \intd \vr_\opgam\,V_-
			\\\label{eq:Agmon_rho_2_Linfty}
			\Nrm{\indic_{\Omega_{1,R}^c}\,e^\frac{d_1}{4\hbar}\,\opp\,\opgam}{\L^\infty} &\leq \frac{C_d}{R^{d/2}} \,e^{\frac{1}{2}-\frac{R}{8\hbar}} \(\intd \vr_\opgam\,V_-\)^{1/2}
		\end{align}
		with $C_d = \(2d/(\pi e)\)^{d/2}$. Moreover, for any $\beta >0$ such that $\beta \hbar \leq 1/4$,
		\begin{equation}\label{eq:Agmon_rho_2_Linfty_global}
			\Nrm{e^{\beta d_1} \opp \,\opgam}{\L^\infty} \leq \sqrt{2} \, e^\beta \Nrm{V_-}{L^\infty}^{1/2} + C_d \, e^{\frac{1}{2}-\frac{1}{8\hbar}} \(\intd \vr_\opgam\,V_-\)^{1/2}.
		\end{equation}
	\end{lem}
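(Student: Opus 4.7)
The plan is to prove the pointwise bound~\eqref{eq:Agmon_rho_2} directly from an operator identity and then derive~\eqref{eq:Agmon_rho_2_L1}, \eqref{eq:Agmon_rho_2_Linfty}, \eqref{eq:Agmon_rho_2_Linfty_global} by combining it with Proposition~\ref{prop:Agmon}.

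For~\eqref{eq:Agmon_rho_2}, the starting point is the commutator identity $2\,\opp\cdot\varphi\,\opp = \varphi\,\opp^2 + \opp^2\,\varphi + \hbar^2\Delta\varphi$, which is valid for any smooth $\varphi \geq 0$ and follows from $[\opp_j,\varphi] = -i\hbar\,\partial_j\varphi$. Taking $\hd\Tr(\,\cdot\,\opgam)$ on both sides, substituting $\opp^2 = H - V$, and exploiting cyclicity, $[\opgam,H]=0$, and $\sqrt\varphi\,\opgam H\sqrt\varphi \leq 0$ (since $\opgam H$ is self-adjoint and nonpositive), the trace terms involving $H$ combine and can be discarded, leaving
\[
\intd\varphi\,\vr_{\opp\cdot\opgam\opp} \leq -\intd V\varphi\,\vr_\opgam + \tfrac{\hbar^2}{2}\intd(\Delta\varphi)\,\vr_\opgam
\]
for every such $\varphi$, which is exactly the weak form of the claimed upper bound. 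Nonnegativity $\vr_{\opp\cdot\opgam\opp}\geq 0$ is read off from~\eqref{eq:rho_2_diagonalisation}.

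For~\eqref{eq:Agmon_rho_2_L1}, since on $\Omega_{1,R}^c$ we have $d_1\geq R$ and thus $e^{d_1/(2\hbar)}\leq e^{-R/(2\hbar)}e^{d_1/\hbar}$, the claim reduces to showing $\intd e^{d_1/\hbar}\vr_{\opp\cdot\opgam\opp}\leq e\,\intd V_-\vr_\opgam$. This is extracted from~\eqref{eq:Agmon_rho_2} applied to a smoothed version of $\varphi = e^{d_1/\hbar}$: the term $-\intd V\varphi\,\vr_\opgam$ contributes only $\intd V_-\vr_\opgam$ from $\Omega_1$ (where $d_1=0$ and $-V\leq V_-$) and is nonpositive on $\Omega_1^c$, while the Laplacian term produces the quadratic piece $\tfrac{1}{2}\indic_{V>1}e^{d_1/\hbar}$ controlled by~\eqref{eq:smallness_out_of_bulk_1} together with a $\Delta d_1$ mean-curvature contribution. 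An equivalent route, avoiding direct distributional differentiation of $e^{d_1/\hbar}$, is to apply the energy identity~\eqref{eq:Agmon_0} with $u = d_1/(2\hbar)$, which yields a clean bound on $\hd\Tr(\n{\opp}^2\, e^{d_1/(2\hbar)}\opgam\, e^{d_1/(2\hbar)})$, and then to translate this operator quantity into $\intd e^{d_1/\hbar}\vr_{\opp\cdot\opgam\opp}$ via the pointwise expansion of $|\hbar\nabla(e^u\psi_k)|^2$ summed with weights $\lambda_k$.

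Once~\eqref{eq:Agmon_rho_2_L1} is in hand, \eqref{eq:Agmon_rho_2_Linfty} follows from the Hilbert--Schmidt bound $\sNrm{A}{\L^\infty}^2 \leq \Tr(A^*A)$ applied to $A = \indic_{\Omega_{1,R}^c}e^{d_1/(4\hbar)}\opp\,\opgam$: cyclicity and $\opgam^2 = \opgam$ reduce $\Tr(A^*A)$ to $\hd^{-1}\int_{\Omega_{1,R}^c}e^{d_1/(2\hbar)}\vr_{\opp\cdot\opgam\opp}$, and the factor $R^{-d/2}$ arises from the elementary maximization $\hbar^{-d/2}e^{-R/(8\hbar)}\leq (4d/e)^{d/2}R^{-d/2}$, obtained by optimizing $t^{d/2}e^{-Rt/8}$ in $t = 1/\hbar$. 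For~\eqref{eq:Agmon_rho_2_Linfty_global}, I would split $e^{\beta d_1}\opp\,\opgam$ via the partition $\indic_{\Omega_{1,1}}+\indic_{\Omega_{1,1}^c}$: the first piece is bounded by $e^\beta\sNrm{\opp\,\opgam}{\L^\infty}\leq e^\beta\Nrm{V_-}{L^\infty}^{1/2}$ using $\opgam\opp^2\opgam = \opgam(H-V)\opgam\leq \opgam V_-\opgam$ on the range of $\opgam$, while the second is bounded by~\eqref{eq:Agmon_rho_2_Linfty} with $R=1$, the condition $\beta\hbar\leq 1/4$ ensuring $\beta d_1\leq d_1/(4\hbar)$ on its support.

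The principal technical obstacle is the treatment of the distributional Laplacian $\Delta d_1$ arising in~\eqref{eq:Agmon_rho_2_L1}: away from $\partial\Omega_1$ and the cut locus of $d_1$, $\Delta d_1$ is a smooth mean curvature of the level sets, but at the cut locus it contributes a negative Radon measure and along $\partial\Omega_1$ a surface-measure jump. The cleanest way to justify the formal bound is to mollify $d_1$, exploit the favorable sign of the cut-locus contribution to obtain a one-sided inequality, and pass to the limit; equivalently, one must carefully handle the sign-indefinite cross term $\hbar^2\int e^{2u}\nabla u\cdot\nabla\vr_\opgam$ that appears when converting $\hd\Tr(\n{\opp}^2 e^u\opgam e^u)$ into $\int e^{2u}\vr_{\opp\cdot\opgam\opp}$ in the alternative approach via~\eqref{eq:Agmon_0}.
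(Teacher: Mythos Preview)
Your approach matches the paper's almost exactly: the weak bound~\eqref{eq:Agmon_rho_2} via the commutator identity $2\,\opp\cdot\varphi\,\opp = \varphi\,\opp^2 + \opp^2\,\varphi + \hbar^2\Delta\varphi$ and $\opgam H\leq 0$, then an exponential weight for~\eqref{eq:Agmon_rho_2_L1}, then the Schatten-norm passage to~\eqref{eq:Agmon_rho_2_Linfty}, and finally the $\Omega_{1,1}/\Omega_{1,1}^c$ splitting for~\eqref{eq:Agmon_rho_2_Linfty_global}. The one substantive difference is in how the exponential weight is handled for~\eqref{eq:Agmon_rho_2_L1}: rather than mollifying $d_1$ or routing through~\eqref{eq:Agmon_0} as you propose, the paper plugs $\varphi = e^{v/\hbar}$ with the explicit smooth substitute $v = (\hbar^2+d_1^2)^{1/2}$ into~\eqref{eq:Agmon_rho_2}. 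This $v$ satisfies $d_1\leq v\leq d_1+\hbar$ (so $e^{v/\hbar}$ and $e^{d_1/\hbar}$ differ by at most a factor $e$), $|\nabla v| = d_1/v \leq \indic_{V>1}$, and has Laplacian bounded by $\indic_{V>1}/\hbar$, which cleanly removes the boundary singularity of $\Delta d_1$ at $\partial\Omega_1$ that you correctly flag as the main obstacle; the bound $\int e^{d_1/\hbar}\vr_{\opp\cdot\opgam\,\opp}\leq e\int\vr_\opgam V_-$ then follows in three lines by combining the $|\nabla v|^2$, $\hbar\Delta v$, and $-V$ contributions with~\eqref{eq:smallness_out_of_bulk_1}. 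Your mollification route would also work but is less concrete, and your alternative via~\eqref{eq:Agmon_0} would require controlling the cross term you mention, which the paper's regularization avoids entirely.
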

	
	\begin{proof}
		Let $\varphi\in C^2_b(\Rd)$ be such that $\varphi\geq 0$. Then since $\com{\opp,\varphi} = -i\hbar\nabla \varphi$ and by cyclicity of the trace
		\begin{align*}
			0\leq \intd \vr_{\opp\cdot\opgam\,\opp}\, \varphi &= \hd\Tr{\opp\cdot \opgam\,\opp\, \varphi} = -i\hbar\,\hd\Tr{\opp\cdot \opgam\,\nabla \varphi} + \hd\Tr{\n{\opp}^2 \opgam\, \varphi}
			\\
			&= -i\hbar\,\hd\Tr{\nabla \varphi\cdot\opp\, \opgam} + \hd\Tr{\varphi\, \opgam\, H} - \hd\Tr{V \varphi\, \opgam}
		\end{align*}
		where $\nabla \varphi$ denote the operator of multiplication by $\nabla \varphi$. Taking the real part and observing that $i\hbar\,\opp\cdot\nabla \varphi - i\hbar\,\nabla \varphi\cdot\opp = \hbar^2 \Delta \varphi$ yields
		\begin{equation*}
			\intd \vr_{\opp\cdot\opgam\,\opp}\, \varphi = \hd\Tr{\tfrac{\hbar^2}{2}\,\Delta \varphi\, \opgam} + \hd\Tr{\varphi\, \opgam\, H} - \hd\Tr{V \varphi\, \opgam}
		\end{equation*}
		and so since $\varphi\geq 0$ and $\opgam\, H \leq 0$, it follows that
		\begin{equation*}
			0\leq \intd \vr_{\opp\cdot\opgam\,\opp}\, \varphi \leq \intd \vr_\opgam\(\tfrac{\hbar^2}{2}\Delta \varphi - V\,\varphi\)
		\end{equation*}
		which is the meaning of Inequality~\eqref{eq:Agmon_rho_2}. The previous Agmon estimates together with a standard approximation argument show that we can take $\varphi$ to grow exponentially fast at infinity. In particular, taking $\varphi = e^\frac{v}{\hbar}$ gives
		\begin{equation*}
			\intd \vr_{\opp\cdot\opgam\,\opp}\, e^\frac{v}{\hbar} \leq \tfrac{1}{2} \intd \vr_\opgam\(\n{\nabla v}^2 + \hbar\,\Delta v - 2\,V\) e^\frac{v}{\hbar}.
		\end{equation*}
		Let us take for instance $v = \(\hbar^2+d_1^2\)^{1/2} \geq \hbar$, so that $d_1 \leq v \leq d_1 + \hbar$. Then $\n{\nabla v} = \frac{d_1}{v} \leq \indic_{V\geq 1}$ and $\Delta v = \indic_{V\geq 1} \, \frac{\hbar^2}{v^3} \leq \indic_{V\geq 1} \, \frac{1}{\hbar}$, and so
		\begin{equation}\label{eq:Agmon_rho_2_L1_0}
			\intd \vr_{\opp\cdot\opgam\,\opp}\, e^\frac{d_1}{\hbar} \leq \intd \vr_{\opp\cdot\opgam\,\opp}\, e^\frac{v}{\hbar} \leq e \intd \vr_\opgam\,V_- \, .
		\end{equation}
		Similarly as for the Agmon estimate, restricting to $\Omega_{1,R}^c$ gives Inequality~\eqref{eq:Agmon_rho_2_L1} and then using the inequalities between Schatten norms leads to Inequality~\eqref{eq:Agmon_rho_2_Linfty}. This inequality with $R=1$, combined with the fact that
		\begin{equation}\label{eq:weight_1}
			\n{\opp\,\opgam}^2 = \opgam \n{\opp}^2 \opgam \leq \opgam \(H + V_-\) \opgam \leq 2 \Nrm{V_-}{L^\infty}
		\end{equation}
		leads to Inequality~\eqref{eq:Agmon_rho_2_Linfty_global}.
	\end{proof}

	In the case of a vector valued function $u:\Rd\to\Rd$, the previous lemma gives the following.
	\begin{cor}\label{cor:agmon_gradients}
		Let $d \geq 1$. Then, there holds
		\begin{equation*}
			\Nrm{\indic_{\Omega_{1,R}^c} u\cdot\opp\,\opgam}{\L^\infty} \leq \frac{C_d}{R^{d/2}} \,e^{\frac{1}{2}-\frac{R}{8\hbar}} \Nrm{\n{u}^2 \, e^{-\,\frac{d_1}{2\hbar}}}{L^\infty}^{1/2} \(\intd \vr_\opgam\,V_- \)^{1/2}
		\end{equation*}
		with $C_d = \(2d/(\pi e)\)^{d/2}$. Thus, if $\beta\,\hbar\leq 1/4$,
		\begin{equation*}
			\Nrm{u\cdot\opp\,\opgam}{\L^\infty} \leq C_{\beta,\opgam} \Nrm{V_-}{L^\infty} \Nrm{\n{u} \, e^{-\beta\,d_1}}{L^\infty}.
		\end{equation*}
		with $C_{\beta,\opgam} = \sqrt{2} \,e^\beta + C_d \,e^{\frac{1}{2}-\frac{1}{8\hbar}} \Nrm{\vr_\opgam}{L^1}^{1/2}$.
	\end{cor}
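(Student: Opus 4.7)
The plan is to reduce the vector-valued case to the scalar estimates of Lemma~\ref{lem:agmon_gradients} via a pointwise Cauchy--Schwarz estimate that separates the multiplier $u$ from the operator factor $\opp\,\opgam$, inserting a weight $e^{\pm\beta\,d_1}$ that matches the exponential decay already proved for $\opp\,\opgam$.

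For the first inequality, for any $\psi\in L^2(\Rd)$, the pointwise inequality $\n{u(x)\cdot v} \leq \n{u(x)}\, \n{v}$ applied to $v = (\opp\,\opgam\,\psi)(x)\in\CC^d$ gives
\begin{equation*}
\Nrm{\indic_{\Omega_{1,R}^c}\, u\cdot\opp\,\opgam\,\psi}{L^2}^{2} \leq \int_{\Omega_{1,R}^c} \n{u}^2 e^{-d_1/(2\hbar)}\cdot e^{d_1/(2\hbar)}\n{\opp\,\opgam\,\psi}^2 \d x\, ,
\end{equation*}
which, after pulling out the $L^\infty$ norm of $\n{u}^2 e^{-d_1/(2\hbar)}$, becomes
\begin{equation*}
\Nrm{\indic_{\Omega_{1,R}^c}\, u\cdot\opp\,\opgam}{\L^\infty}^{2} \leq \Nrm{\n{u}^2 e^{-d_1/(2\hbar)}}{L^\infty}\, \Nrm{\indic_{\Omega_{1,R}^c} e^{d_1/(4\hbar)}\,\opp\,\opgam}{\L^\infty}^{2}.
\end{equation*}
Taking square roots and inserting the bound~\eqref{eq:Agmon_rho_2_Linfty} yields the first claim.

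For the second inequality, I would run exactly the same pointwise Cauchy--Schwarz argument globally with the weight $e^{\beta d_1}$ in place of $e^{d_1/(4\hbar)}$: for $\psi\in L^2$,
\begin{equation*}
\Nrm{u\cdot\opp\,\opgam\,\psi}{L^2}^{2} \leq \Nrm{\n{u}\, e^{-\beta d_1}}{L^\infty}^{2}\, \Nrm{e^{\beta d_1}\,\opp\,\opgam\,\psi}{L^2}^{2}.
\end{equation*}
This gives $\Nrm{u\cdot\opp\,\opgam}{\L^\infty} \leq \Nrm{\n{u}\, e^{-\beta d_1}}{L^\infty}\, \Nrm{e^{\beta d_1}\,\opp\,\opgam}{\L^\infty}$. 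Now the global Agmon estimate~\eqref{eq:Agmon_rho_2_Linfty_global}, together with the trivial bound $\intd \vr_\opgam\, V_- \leq \Nrm{\vr_\opgam}{L^1}\Nrm{V_-}{L^\infty}$, controls the last factor by $C_{\beta,\opgam}\Nrm{V_-}{L^\infty}^{1/2}$, which delivers the stated bound (the condition $\beta\hbar \leq 1/4$ being inherited from~\eqref{eq:Agmon_rho_2_Linfty_global}).

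There is no genuine obstacle here: everything reduces to a pointwise Cauchy--Schwarz estimate in $\CC^d$ plus the scalar-valued Agmon bounds already established. The only thing that requires some care is keeping the weights on the correct side (multiplier vs.\ operator) so that $\Nrm{\n{u}\, e^{-\beta d_1}}{L^\infty}$ and the known bound on $\Nrm{e^{\beta d_1}\opp\,\opgam}{\L^\infty}$ combine cleanly.
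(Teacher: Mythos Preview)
Your argument is correct and rests on the same pointwise Cauchy--Schwarz idea as the paper, but the organization differs. The paper diagonalizes $\opgam = \sum_j \ket{\psi_j}\bra{\psi_j}$ and first bounds the Hilbert--Schmidt norm,
\[
\Nrm{\indic_{\Omega_{1,R}^c}\,u\cdot\opp\,\opgam}{\L^2}^2 \leq \int_{\Omega_{1,R}^c} \n{u}^2\,\vr_{\opp\cdot\opgam\,\opp}\,,
\]
and then combines the Schatten embedding $\Nrm{\cdot}{\L^\infty}^2 \leq h^{-d}\Nrm{\cdot}{\L^2}^2$ with the $L^1$-type Agmon estimate~\eqref{eq:Agmon_rho_2_L1} (the $h^{-d}$ being absorbed by the exponential, exactly as in Proposition~\ref{prop:Agmon}). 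You instead stay at the $\L^\infty$ level throughout, apply Cauchy--Schwarz to a single test vector, and feed directly into the already-packaged $\L^\infty$ bound~\eqref{eq:Agmon_rho_2_Linfty}. Your route is shorter and avoids the eigenfunction expansion; the paper's route makes the link to the density $\vr_{\opp\cdot\opgam\,\opp}$ explicit. Both yield the same constants. One small remark: your computation for the second inequality actually produces $C_{\beta,\opgam}\Nrm{V_-}{L^\infty}^{1/2}$, which is sharper than the $\Nrm{V_-}{L^\infty}$ written in the statement; this is a harmless overstatement in the corollary rather than a gap in your proof.
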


	\begin{proof}
		Diagonalizing $\opgam$ in the form $\opgam = \sum_{j\geq 0} \ket{\psi_j}\bra{\psi_j}$ with $(\psi_j)_{j\geq 0}$ orthonormal, we can write
		\begin{equation*}
			\Nrm{u\cdot\opp\,\opgam}{\L^2}^2 = \hd \Tr{\opgam \,\opp\cdot u\,u\cdot\opp\,\opgam} = \hd \sum_{j\geq 0} \intd \n{u\cdot \hbar\nabla\psi_j}^2 \leq \intd \n{u}^2 \hd \sum_{j\geq 0} \n{\hbar\nabla\psi_j}^2
		\end{equation*}
		hence by Formula~\eqref{eq:rho_2_diagonalisation},
		\begin{equation*}
			\Nrm{\indic_{\Omega_{1,R}^c} u\cdot\opp\,\opgam}{\L^2}^2 \leq \int_{\Omega_{1,R}^c} \n{u}^2 \vr_{\opp\cdot\opgam\,\opp}\, ,
		\end{equation*}
		which yields the result as in the previous proof using the inequality between Schatten norms and Inequality~\eqref{eq:Agmon_rho_2_L1}.
	\end{proof}
	
	Let us give as an application of the Agmon-like estimates for gradients a lemma on the decay in the momentum direction for $\opgam$. These will find an application when we treat Coulomb-like potentials. 
	
	\begin{lem}\label{lem:weight_4}
		Let $d \geq 1$, and assume that $\nabla V$ and $\Delta V$ are locally integrable. Then, 
		\begin{equation*}
			\Nrm{\n{\opp}^4 \opgam}{\L^\infty} \leq C_{\beta,V,\opgam} \(1+\Nrm{V_-}{L^\infty}\)
		\end{equation*}
		with $C_{\beta,V,\opgam} = \Nrm{V_-}{L^\infty}+3\Nrm{V\opgam}{\L^\infty} + 2 \, C_{\beta,\opgam} \Nrm{\n{\hbar\nabla V} e^{-\beta\,d_1}}{L^\infty} + \Nrm{\hbar^2\Delta V\opgam}{\L^\infty}$.
	\end{lem}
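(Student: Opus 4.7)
The plan is to expand $\n{\opp}^4\opgam = (H-V)^2\opgam$ into controlled pieces using the commutativity of $H$ with $\opgam = \indic_{H\leq 0}$ and $\opgam^2 = \opgam$. Combining $HV\opgam = VH\opgam + [H,V]\opgam$ with $[H,V] = -\hbar^2\Delta V - 2i\hbar\,\nabla V\cdot\opp$, I first obtain the working identity
\begin{equation*}
\n{\opp}^4\opgam = H^2\opgam - 2VH\opgam + V^2\opgam + \hbar^2(\Delta V)\opgam + 2i\hbar\,\nabla V\cdot\opp\opgam.
\end{equation*}
The two factorizations $H^2\opgam = (H\opgam)^2$ and $VH\opgam = (V\opgam)(H\opgam)$—both consequences of $[H,\opgam] = 0$ and $\opgam^2 = \opgam$—together with $\Nrm{H\opgam}{\L^\infty}\leq\Nrm{V_-}{L^\infty}$ immediately yield $\Nrm{H^2\opgam}{\L^\infty}\leq\Nrm{V_-}{L^\infty}^2$ and $\Nrm{VH\opgam}{\L^\infty}\leq\Nrm{V\opgam}{\L^\infty}\Nrm{V_-}{L^\infty}$. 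The Laplacian term $\hbar^2(\Delta V)\opgam$ appears directly in the constant $C_{\beta,V,\opgam}$, and the gradient term is controlled by Corollary~\ref{cor:agmon_gradients} with $u = \hbar\nabla V$, giving $\Nrm{\hbar\nabla V\cdot\opp\opgam}{\L^\infty}\leq C_{\beta,\opgam}\Nrm{V_-}{L^\infty}\Nrm{\n{\hbar\nabla V}\,e^{-\beta d_1}}{L^\infty}$.

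The main obstacle is the $V^2\opgam$ term, since $V$ is unbounded as a multiplication operator and cannot be placed on the left of a naive operator bound. My plan is to use the identity
\begin{equation*}
V^2\opgam \,=\, VH\opgam - \n{\opp}^2 V\opgam,
\end{equation*}
which follows from $V^2\opgam = V(V\opgam) = V(H\opgam - \n{\opp}^2\opgam)$ after expanding $V\n{\opp}^2\opgam = \n{\opp}^2 V\opgam - [\n{\opp}^2,V]\opgam$. The first summand is already controlled by $\Nrm{V\opgam}{\L^\infty}\Nrm{V_-}{L^\infty}$. For $\n{\opp}^2 V\opgam$, I would expand the Laplacian acting on $V\opgam$ via the product rule, producing the already bounded pieces $(\Delta V)\opgam$ and $\nabla V\cdot\opp\opgam$, together with $V\n{\opp}^2\opgam = V(H\opgam-V\opgam)$; to avoid the circular dependence on $V^2\opgam$ that this expansion naively reproduces, I invoke the Agmon decay estimates of Proposition~\ref{prop:Agmon} and Lemma~\ref{lem:agmon_gradients} applied to the weight $e^{-\beta d_1}$, so that $\Nrm{V^2\opgam}{\L^\infty}$ is ultimately bounded by a linear combination of $\Nrm{V\opgam}{\L^\infty}\Nrm{V_-}{L^\infty}$, $\Nrm{V\opgam}{\L^\infty}$, and the commutator norms $\Nrm{\n{\hbar\nabla V}e^{-\beta d_1}}{L^\infty}$ and $\Nrm{\hbar^2\Delta V\opgam}{\L^\infty}$.

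Collecting all five contributions from the working identity and factoring out $(1+\Nrm{V_-}{L^\infty})$, the three summands contributing a $\Nrm{V\opgam}{\L^\infty}$ (one from $-2VH\opgam$ and two from $V^2\opgam$ together with its cross term with $\Nrm{V_-}{L^\infty}$) combine to give the factor $3\Nrm{V\opgam}{\L^\infty}$ appearing in $C_{\beta,V,\opgam}$. The hard part is precisely closing the $V^2\opgam$ estimate without circular dependence on $\Nrm{\n{\opp}^4\opgam}{\L^\infty}$: every straightforward commutator expansion reproduces $\n{\opp}^2 V\opgam = HV\opgam - V^2\opgam$, which feeds back into the quantity we are trying to bound. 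This is where the exponential localization estimates of Section~2.3, rather than pure operator algebra, are essential.
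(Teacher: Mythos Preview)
Your overall strategy and working identity match the paper's exactly: expand $\n{\opp}^4\opgam$ via $\n{\opp}^2 = H - V$, factor the $H$-terms using $[H,\opgam]=0$ and $\opgam^2=\opgam$, apply Corollary~\ref{cor:agmon_gradients} to $\nabla V\cdot\opp\,\opgam$, and leave $\hbar^2\Delta V\opgam$ as is.

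The divergence is in how you handle $V^2\opgam$, and here your proposal does not close. First, a minor error: your displayed identity $V^2\opgam = VH\opgam - \n{\opp}^2 V\opgam$ drops the commutator $[\n{\opp}^2,V]\opgam$ that your own derivation produces. More seriously, your plan to then expand $\n{\opp}^2 V\opgam$ by the product rule necessarily regenerates $V\n{\opp}^2\opgam = VH\opgam - V^2\opgam$, and you never explain concretely how the Agmon estimates break this loop. The difficulty is real: once you multiply a vector in the range of $\opgam$ by $V$, the result is no longer in that range, so the bound $\Nrm{V\opgam}{\L^\infty}$ cannot be iterated to control $\Nrm{V^2\opgam}{\L^\infty}$ by pure operator algebra.

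The paper's resolution is simpler than yours: it does not re-expand $V^2\opgam$ algebraically at all, but bounds $\Nrm{(V^2+\hbar^2\Delta V)\opgam}{\L^\infty}$ directly, invoking the Agmon localization of Proposition~\ref{prop:Agmon} (the same mechanism that makes $\Nrm{V\opgam}{\L^\infty}$ finite and appear in the constant). Your instinct that the estimates of Section~2.3, rather than more commutator identities, are what is needed is correct; the mistake is trying first to reduce $V^2\opgam$ to lower-order pieces via $H = \n{\opp}^2 + V$, which only reproduces the quantity you started with.
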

	
	\begin{proof}
		Since $\n{\opp}^4 \opgam = \n{\opp}^2 \opgam \, H - \n{\opp}^2 V \opgam$, and $\com{\n{\opp}^2, V} = -i\hbar \(\opp\cdot\nabla V + \nabla V\cdot \opp\)$, it follows that
		\begin{align*}
			\n{\opp}^4 \opgam &= \n{\opp}^2 \opgam \, H - V \n{\opp}^2 \opgam + i\hbar \(\opp\cdot\nabla V + \nabla V\cdot \opp\)\opgam
			\\
			&= \n{\opp}^2 \opgam \, H - V \opgam\, H + V^2 \opgam + 2i\hbar \,\nabla V\cdot \opp\,\opgam + \hbar^2 \Delta V \opgam
		\end{align*}
		and so taking the operator norm, we deduce from the triangle inequality that
		\begin{equation*}
			\Nrm{\n{\opp}^4 \opgam}{\L^\infty} \leq I_1 + I_2 + I_3
		\end{equation*}
		with
		\begin{align*}
			I_1 &= \Nrm{\(\n{\opp}^2 -V\) \opgam\, H}{\L^\infty} \leq \Nrm{\(H - 2V\) \opgam}{\L^\infty} \Nrm{\opgam \, H}{\L^\infty}
			\\
			I_2 &= \Nrm{\(V^2 +\hbar^2 \Delta V\)\opgam}{\L^\infty} \leq \Nrm{V_-}{L^\infty}\Nrm{V\opgam}{\L^\infty} + \hbar^2 \Nrm{\Delta V\opgam}{\L^\infty}
			\\
			I_3 &= 2\hbar \Nrm{\nabla V\cdot\opp\,\opgam}{\L^\infty}.
		\end{align*}
		The term $I_1$ is bounded using $H_- \leq \Nrm{V_-}{L^\infty}$ by
		\begin{equation*}
			I_1 \leq \(\Nrm{V_-}{L^\infty}+2\Nrm{V\opgam}{\L^\infty}\) \Nrm{V_-}{L^\infty}
		\end{equation*}
		while the terms $I_2$ and $I_3$ are bounded using the Agmon-like inequalities in Proposition~\ref{prop:Agmon} and Corollary~\ref{cor:agmon_gradients}.
	\end{proof}
	
	\begin{lem}[Agmon-like estimate for the Hessian]\label{lem:agmon2}
		In the weak sense, it holds
		\begin{equation}\label{eq:Agmon_rho_4}
			0\leq \vr_{\opp^2:\opgam\,\opp^2} \leq \(\tfrac{\hbar^2}{2}\Delta - V\)\vr_{\opp\cdot\opgam\,\opp} -\tfrac{\hbar^2}{2}\, \nabla\vr_\opgam \cdot \nabla V\,.
		\end{equation}
		Combined with the previous Agmon-like estimates, it gives for $\beta\,\hbar\leq 1/8$
		\begin{equation*}
			\Nrm{e^{\beta d_1}\opp^2\,\opgam}{\L^\infty} \leq e^{\beta} \(\Nrm{V_-}{L^\infty} + \Nrm{V \opgam}{\L^\infty}\) + C_d\, C_{\opgam,V}^{1/2} \Nrm{\vr_\opgam}{L^1}^{1/2}
		\end{equation*}
		with $C_d = \(2d/(\pi e)\)^{d/2}$ and
		\begin{equation}\label{eq:cst_Agmon_rho_4}
			C_{\opgam,V} = \Nrm{\(\tfrac{\hbar}{2}\n{\nabla V} + \hbar^2\Delta V\) e^{-\frac{d_1}{2\hbar}}}{L^\infty} \(1+\Nrm{V_-}{L^\infty}\) + \sqrt{e} \Nrm{V_-}{L^\infty}^2.
		\end{equation}
	\end{lem}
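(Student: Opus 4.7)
My approach is to follow the same strategy as in Lemma~\ref{lem:agmon_gradients}, iterating it by playing the same game with the self-adjoint operator $\op_2 := \opp\cdot\opgam\,\opp$ in place of $\opgam$, since $\vr_{\opp^2:\opgam\,\opp^2} = \vr_{\opp\cdot\op_2\,\opp}$. The only structural difference is that $\op_2$ no longer commutes with $H$, and the corresponding commutator $[H,\opp_k] = i\hbar\,\partial_k V$ is precisely what generates the new $\nabla V$ term in~\eqref{eq:Agmon_rho_4}.

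For the weak inequality, I would test against a smooth $\varphi \geq 0$ and expand exactly as in Lemma~\ref{lem:agmon_gradients}:
\[
\hd\,\Tr{\opp\cdot\op_2\,\opp\,\varphi} = \hd\,\Tr{\n{\opp}^2\op_2\,\varphi} - i\hbar\,\hd\,\Tr{\opp\cdot\op_2\,\nabla\varphi}.
\]
Taking real parts, the last term contributes $\tfrac{\hbar^2}{2}\intd \Delta\varphi\,\vr_{\op_2}$ by self-adjointness of $\op_2$. Substituting $\n{\opp}^2 = H-V$ splits the first term into $-\intd V\varphi\,\vr_{\op_2}$ and $\hd\,\re\Tr{H\op_2\,\varphi}$. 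For this last piece I would use $H\,\op_2 = \opp\cdot H\opgam\,\opp + i\hbar\,\nabla V\cdot\opgam\,\opp$: the first summand gives $\hd\,\Tr{\sqrt\varphi\,\opp\cdot H\opgam\,\opp\,\sqrt\varphi}\leq 0$ since $H\opgam\leq 0$, while the real part of the second summand, by cyclicity and $[\opp_k,\varphi\,\partial_k V] = -i\hbar\,\partial_k(\varphi\,\partial_k V)$, evaluates cleanly to $\tfrac{\hbar^2}{2}\intd \vr_\opgam\bigl(\nabla\varphi\cdot\nabla V + \varphi\,\Delta V\bigr)$. One integration by parts then absorbs the $\varphi\,\Delta V$ piece and leaves $-\tfrac{\hbar^2}{2}\intd \varphi\,\nabla\vr_\opgam\cdot\nabla V$, which is the weak form of~\eqref{eq:Agmon_rho_4}.

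For the operator norm bound, I would split $\Nrm{e^{\beta d_1}\opp^2\opgam}{\L^\infty}$ into contributions from $\Omega_{1,1}$ and $\Omega_{1,1}^c$. The bulk part is bounded by $e^\beta\Nrm{\n{\opp}^2\opgam}{\L^\infty}\leq e^\beta(\Nrm{V_-}{L^\infty}+\Nrm{V\opgam}{\L^\infty})$, using $\n{\opp}^2\opgam = H\opgam - V\opgam$ and $\Nrm{H\opgam}{\L^\infty}\leq\Nrm{V_-}{L^\infty}$. For the exterior part, I would use the weighted Hilbert--Schmidt identity
\[
\Nrm{\indic_{\Omega_{1,1}^c}\,e^{\beta d_1}\opp^2\opgam}{\L^2}^2 = \int_{\Omega_{1,1}^c}e^{2\beta d_1}\,\vr_{\opp^2:\opgam\opp^2}
\]
coming from the diagonalization~\eqref{eq:rho_4_diagonalisation}, and bound the right-hand side by testing~\eqref{eq:Agmon_rho_4} against a smooth $\varphi\geq\indic_{\Omega_{1,1}^c}e^{2\beta d_1}$ of the form $\varphi = e^{v/\hbar}$ with $v = \tfrac14\sqrt{\hbar^2+d_1^2}$. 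The $\vr_{\opp\cdot\opgam\opp}$ contributions are then controlled via the weighted estimate~\eqref{eq:Agmon_rho_2_L1_0}; the $\nabla\vr_\opgam\cdot\nabla V$ contribution, after another integration by parts, is bounded by $\Nrm{(\tfrac{\hbar}{2}\n{\nabla V}+\hbar^2\Delta V)e^{-d_1/(2\hbar)}}{L^\infty}$ times the Agmon estimate~\eqref{eq:smallness_out_of_bulk_1} applied to $\vr_\opgam\,e^{d_1/\hbar}$. The leftover $h^{-d/2}$ factor from the $\L^\infty$--$\L^2$ conversion is absorbed into part of the exponential decay, using the elementary inequality $t^{d/2}e^{-\pi t/2}\leq (2d/(\pi e))^{d/2}e^{-\pi t/4}$ with $t = 1/h$, which produces the dimensional constant $C_d$ out of the budget $\beta\,\hbar\leq 1/8$ supplied by the hypothesis.

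The delicate step is the symmetrization that extracts the real part of $i\hbar\,\Tr{\nabla V\cdot\opgam\,\opp\,\varphi}$ in the form $\tfrac{\hbar^2}{2}\intd \vr_\opgam\,\nabla\cdot(\varphi\,\nabla V)$: this algebraic identity, combined with $H\opgam\leq 0$, is what closes the estimate into the clean form~\eqref{eq:Agmon_rho_4}, with the spurious $\varphi\,\Delta V$ piece disappearing after the subsequent integration by parts and leaving only a single $\nabla V$ correction.
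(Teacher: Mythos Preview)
Your proposal is correct and follows essentially the same route as the paper: test the weak inequality against $\varphi\geq 0$, expand $\n{\opp}^2=H-V$, use the commutator $[H,\opp]=i\hbar\nabla V$ to produce the extra $\nabla V$ term, take real parts and drop $\Tr{\opp\cdot H\opgam\,\opp\,\varphi}\leq 0$; then for the norm bound, choose an exponential weight $\varphi=e^{v/(2\hbar)}$ with $v=\sqrt{\hbar^2+d_1^2}$ (the paper uses this exact normalization rather than your $v=\tfrac14\sqrt{\hbar^2+d_1^2}$, which only affects the constants), combine with the earlier Agmon estimates~\eqref{eq:smallness_out_of_bulk_1} and~\eqref{eq:Agmon_rho_2_L1_0}, and pass from the weighted $\L^2$ bound to $\L^\infty$ via the Schatten embedding and the $t^{d/2}e^{-\pi t/2}$ trick. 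The only cosmetic difference is that the paper keeps the $\nabla V$ term in the form $\tfrac{\hbar^2}{2}\int\vr_\opgam\,\nabla\!\cdot(\varphi\nabla V)$ throughout rather than integrating by parts twice as you do.
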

	
	\begin{remark}
		As in Corollary~\ref{cor:agmon_gradients}, this allows to control terms of the form $\Nrm{u\cdot \opp\,\opp\, \opgam}{\L^\infty}$ if $u:\Rd\to\Rd$ or $\Nrm{u: \opp^2\, \opgam}{\L^\infty}$ if $u$ is matrix valued.
	\end{remark}
	
	\begin{proof}
		Let $\varphi\geq 0$. Then since $\com{\opp,\varphi} = -i\hbar\nabla \varphi$, with $\nabla \varphi$ the operator of multiplication by $\nabla \varphi$, and by cyclicity of the trace
		\begin{align*}
			0\leq \intd \vr_{\opp^2:\opgam\,\opp^2}\, \varphi &= \hd\Tr{-i\hbar\,\opp\cdot \opgam\,\opp \(\nabla \varphi\cdot\opp\) + \n{\opp}^2 \opp \cdot\opgam\,\opp\, \varphi}
			\\
			&= \hd\Tr{-i\hbar\,\opp\cdot \opgam\,\opp \(\nabla \varphi\cdot\opp\) + \opp \cdot H \opgam\,\opp\, \varphi - \opp \cdot V \opgam\,\opp\, \varphi}
		\end{align*}
		which can be written
		\begin{multline*}
			\intd \vr_{\opp^2:\opgam\,\opp^2}\, \varphi = \hd\Tr{-i\hbar\,\opp\cdot \opgam\,\opp \(\nabla \varphi\cdot\opp\) + \opp \cdot H \opgam\,\opp\, \varphi - V\opp \cdot \opgam\,\opp\, \varphi + i\hbar\,\opgam\,\opp\cdot \nabla V \varphi}
			\\
			= \hd\Tr{\opp \cdot H \opgam\,\opp\, \varphi + i\hbar\(\opgam\,\opp\cdot \(\nabla V\) \varphi- \opp\cdot \opgam\,\opp \(\nabla \varphi\cdot\opp\)\)} - \intd \vr_{\opp\cdot\opgam\,\opp} \,V\varphi \, .
		\end{multline*}
		Taking the real part and observing that $i\hbar\,\opp\cdot\nabla \varphi - i\hbar\,\nabla \varphi\cdot\opp = \hbar^2 \Delta \varphi$ yields
		\begin{equation*}
			\intd \vr_{\opp^2:\opgam\,\opp^2}\, \varphi = \hd\Tr{\opp \cdot H \opgam\,\opp\, \varphi} + \tfrac{\hbar^2}{2}\intd \vr_\opgam\, \nabla\cdot\(\varphi\nabla V\) + \vr_{\opp\cdot\opgam\,\opp} \,\Delta \varphi - \intd \vr_{\opp\cdot\opgam\,\opp} \,V\varphi
		\end{equation*}
		and we conclude using the fact that $\varphi\geq 0$ and $\opgam\,H \leq 0$ that
		\begin{equation*}
			0\leq \intd \vr_{\opp^2:\opgam\,\opp^2}\, \varphi \leq \tfrac{\hbar^2}{2} \intd \vr_\opgam\, \nabla\cdot\(\varphi\nabla V\) + \vr_{\opp\cdot\opgam\,\opp} \(\tfrac{\hbar^2}{2}\Delta \varphi - V\varphi\)
		\end{equation*}
		which gives Inequality~\eqref{eq:Agmon_rho_4}. Let $v = (\hbar^2+d_1^2)^{1/2}$ and take $\varphi = e^\frac{v}{2\hbar}$. Then we get similarly as for the Agmon estimate for gradients
		\begin{align*}
			\intd \vr_{\opp^2:\opgam\,\opp^2}\, e^\frac{v}{2\hbar}
			&= \intd \(\tfrac{\hbar}{4}\,\nabla v\cdot\nabla V+ \tfrac{\hbar^2}{2}\Delta V\) \vr_\opgam \,e^\frac{v}{2\hbar} + \(\tfrac{\hbar}{4}\, \Delta v + \tfrac{1}{8}\n{\nabla v}^2 - V\) \vr_{\opp\cdot\opgam\,\opp}\, e^\frac{v}{2\hbar}
			\\
			&\leq \sqrt{e} \intd \(\tfrac{\hbar}{4}\n{\nabla V} + \tfrac{\hbar^2}{2}\Delta V\) \vr_\opgam \,e^\frac{d_1}{2\hbar} + V_-\, \vr_{\opp\cdot\opgam\,\opp}
		\end{align*}
		where we used the fact that $v \leq \hbar+d_1$ to get the last inequality. By Inequality~\eqref{eq:smallness_out_of_bulk_1} and using $\Tr{\opgam \n{\opp}^2} \leq \Tr{\opgam\, V_-}$, it follows that
		\begin{equation}\label{eq:Agmon_rho_4_L1}
			\intd \vr_{\opp^2:\opgam\,\opp^2}\, e^\frac{d_1}{2\hbar} \leq C_{\opgam,V} \Nrm{\vr_\opgam}{L^1} . 
		\end{equation}
		Thus, arguing as before, and using the embedding between Schatten norms, the fact that $\sNrm{\indic_{\n{\opp}\neq 0} \n{\opp}^{-2}\opp^2}{\L^\infty} \leq 1$, it gives
		\begin{equation*}
			\Nrm{e^{\beta d_1}\opp^2\,\opgam}{\L^\infty} \leq C_d\, C_{\opgam,V}^{1/2} + e^{\beta} \Nrm{\n{\opp}^2\opgam}{\L^\infty}
		\end{equation*}
		and the result follows as in the proof of Lemma~\ref{lem:Linfty}.
	\end{proof}

	\begin{lem}\label{lem:weight_6}
		In $d \geq 1$, assume that $V$, $\nabla V$, $\nabla^2 V$, $\nabla \Delta V$, $ \Delta^2 V$ are locally bounded. Then, 
		\begin{equation*}
			\Nrm{\n{\opp}^6\opgam}{\L^\infty} \leq \widetilde C_{\opgam,V} \(1+\Nrm{\n{\opp}^4\opgam}{\L^\infty}\) \(1+ \Nrm{V_-}{L^\infty}\)
		\end{equation*}
		with
		\begin{multline*}
			\widetilde C_{\opgam,V} = 2 + 2 \Nrm{\n{V}^3 \opgam}{\L^\infty}+ \hbar\(4 \Nrm{\nabla V\cdot\opp \,\opgam}{\L^\infty} + 3 \Nrm{\nabla (V^2)\cdot\opp\, \opgam}{\L^\infty}\)
			\\
			+ \hbar^2\(2 \Nrm{\Delta V\opgam}{\L^\infty} + 3 \Nrm{V \Delta V\,\opgam}{\L^\infty} + 4 \Nrm{\nabla^2 V:\opp^2\opgam}{\L^\infty} + 4 \Nrm{\n{\nabla V}^2 \opgam}{\L^\infty}\)
			\\
			+ 4\,\hbar^3\Nrm{\Delta\nabla V\cdot\opp\,\opgam}{\L^\infty} + \hbar^4\Nrm{\Delta^2 V\,\opgam}{\L^\infty}.
		\end{multline*}
	\end{lem}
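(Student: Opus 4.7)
The plan is to push the strategy of Lemma~\ref{lem:weight_4} one order higher. Using $[\opgam,H]=0$ together with $\n{\opp}^2\opgam=\opgam H-V\opgam$, I write
\begin{equation*}
\n{\opp}^6\opgam=\n{\opp}^4\opgam\,H-\n{\opp}^4V\opgam.
\end{equation*}
The first piece contributes $\Nrm{\n{\opp}^4\opgam}{\L^\infty}\Nrm{V_-}{L^\infty}$, since $\Nrm{\opgam H}{\L^\infty}\leq\Nrm{V_-}{L^\infty}$. For the second, I commute $V$ past $\n{\opp}^4$ using
\begin{equation*}
\com{\n{\opp}^4,V}=\n{\opp}^2 L_1+L_1\n{\opp}^2,\qquad L_1:=\com{\n{\opp}^2,V}=-2i\hbar\,\nabla V\cdot\opp-\hbar^2\Delta V,
\end{equation*}
together with the analogous rule $\n{\opp}^2 f=f\n{\opp}^2+\com{\n{\opp}^2,f}$ applied to $f=\nabla V,\Delta V$, and the identity $\com{\n{\opp}^2,\nabla V\cdot\opp}=-2i\hbar\,\nabla^2 V:\opp^2-\hbar^2\nabla\Delta V\cdot\opp$.

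After substituting $\n{\opp}^2\opgam=\opgam H-V\opgam$ everywhere to eliminate the remaining occurrences of $\n{\opp}^2$, and collecting like terms, one obtains the explicit expansion
\begin{equation*}
\begin{aligned}
\n{\opp}^6\opgam&=\n{\opp}^4\opgam\,H-V\opgam H^2+2V^2\opgam H-V^3\opgam+4i\hbar\,\nabla V\cdot\opp\,\opgam H+2\hbar^2\Delta V\opgam H\\
&\quad-3i\hbar\,\nabla(V^2)\cdot\opp\,\opgam-3\hbar^2 V\Delta V\opgam-4\hbar^2\n{\nabla V}^2\opgam+4\hbar^2\nabla^2 V:\opp^2\,\opgam\\
&\quad-4i\hbar^3\nabla\Delta V\cdot\opp\,\opgam-\hbar^4\Delta^2 V\,\opgam.
\end{aligned}
\end{equation*}
The coefficients, $\hbar$-powers and weight norms of the second and third lines match exactly those listed in $\widetilde C_{\opgam,V}$, either directly or multiplied by $\Nrm{V_-}{L^\infty}$ via $\Nrm{f\opgam H}{\L^\infty}\leq\Nrm{f\opgam}{\L^\infty}\Nrm{V_-}{L^\infty}$ for the two terms of the form $f\opgam H$.

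The only delicate step is the chain $-V\opgam H^2+2V^2\opgam H-V^3\opgam$: a naive submultiplicative bound would generate spurious factors $\Nrm{V_-}{L^\infty}^2$ or $\Nrm{V_-}{L^\infty}^3$ incompatible with the linear factor $(1+\Nrm{V_-}{L^\infty})$. To control them, I use the algebraic identity $V\opgam H^2-2V^2\opgam H+V^3\opgam=V(\opgam H^2-2V\opgam H+V^2\opgam)=V\n{\opp}^4\opgam+V L_1\opgam$, coming from the Lemma~\ref{lem:weight_4} formula for $\n{\opp}^4\opgam$, combined with the log-convexity estimate
\begin{equation*}
\Nrm{V^k\opgam}{\L^\infty}^2=\Nrm{\opgam V^{2k}\opgam}{\L^\infty}\leq\Nrm{V^{k-1}\opgam}{\L^\infty}\Nrm{V^{k+1}\opgam}{\L^\infty},
\end{equation*}
which iteratively yields $\Nrm{V\opgam}{\L^\infty}\leq\Nrm{V^3\opgam}{\L^\infty}^{1/3}$ and $\Nrm{V^2\opgam}{\L^\infty}\leq\Nrm{V^3\opgam}{\L^\infty}^{2/3}$. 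Each of the three $V^k\opgam H^{3-k}$ is then controlled by a power of $\Nrm{V^3\opgam}{\L^\infty}$ times a power of $\Nrm{V_-}{L^\infty}$; Young's inequality absorbs the mixed powers into the $2\Nrm{V^3\opgam}{\L^\infty}$ summand of $\widetilde C_{\opgam,V}$ multiplied by the factor $(1+\Nrm{V_-}{L^\infty})$, while the extra $V L_1\opgam=-i\hbar\,\nabla(V^2)\cdot\opp\,\opgam-\hbar^2 V\Delta V\opgam$ fits directly into the $\nabla(V^2)\cdot\opp\,\opgam$ and $V\Delta V\opgam$ summands. This bookkeeping of the mixed terms is the main obstacle; all remaining contributions land inside $\widetilde C_{\opgam,V}(1+\Nrm{V_-}{L^\infty})(1+\Nrm{\n{\opp}^4\opgam}{\L^\infty})$ by direct inspection.
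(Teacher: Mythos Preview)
Your expansion and overall strategy coincide with the paper's: write $\n{\opp}^6\opgam=\n{\opp}^4\opgam\,H-\n{\opp}^4V\opgam$, expand $\n{\opp}^4V$ by the chain rule, and then replace each residual $\n{\opp}^2$ by $H-V$. The paper's proof stops there and simply bounds each $V^k\opgam H^{3-k}$ by $\Nrm{V^k\opgam}{\L^\infty}\Nrm{V_-}{L^\infty}^{3-k}$ via $\Nrm{\opgam H}{\L^\infty}\leq\Nrm{V_-}{L^\infty}$; it does not attempt the refinement you sketch.

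The two fixes you propose for the ``delicate chain'' do not work as stated. First, the algebraic identity $V\opgam H^2-2V^2\opgam H+V^3\opgam=V\n{\opp}^4\opgam+VL_1\opgam$ is correct, but it is circular: the left side is exactly what you obtained by expanding $V\n{\opp}^4\opgam$, so invoking it just undoes that expansion and brings you back to $\Nrm{V\n{\opp}^4\opgam}{\L^\infty}$, which you cannot bound by $\Nrm{V\opgam}{\L^\infty}\Nrm{\n{\opp}^4\opgam}{\L^\infty}$ since $V$ and $\n{\opp}^4$ do not commute. Second, log-convexity does give $\Nrm{V^k\opgam}{\L^\infty}\leq\Nrm{\n{V}^3\opgam}{\L^\infty}^{k/3}$, but Young's inequality applied to $\Nrm{\n{V}^3\opgam}{\L^\infty}^{k/3}\Nrm{V_-}{L^\infty}^{3-k}$ produces an additive $\Nrm{V_-}{L^\infty}^3$ term, which is \emph{not} absorbed by $2\Nrm{\n{V}^3\opgam}{\L^\infty}(1+\Nrm{V_-}{L^\infty})$ (take e.g.\ $\Nrm{\n{V}^3\opgam}{\L^\infty}=1$ and $\Nrm{V_-}{L^\infty}$ large). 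So the claim that ``Young's inequality absorbs the mixed powers'' is the genuine gap.

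In short: your strategy is the paper's, your identification of the subtlety is correct, but neither of your proposed resolutions closes the gap. The paper's own bookkeeping just records the naive term-by-term bounds; the precise constant $\widetilde C_{\opgam,V}$ as displayed appears to tacitly use $\Nrm{V^k\opgam}{\L^\infty}\leq 1+\Nrm{\n{V}^3\opgam}{\L^\infty}$ while a factor $(1+\Nrm{V_-}{L^\infty})^2$ (rather than a single power) would be needed to absorb $\Nrm{V\opgam}{\L^\infty}\Nrm{V_-}{L^\infty}^2$. This does not affect the only use of the lemma, namely that $\Nrm{\n{\opp}^6\opgam}{\L^\infty}$ is bounded uniformly in $\hbar$.
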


	\begin{proof}
		We write $\n{\opp}^6 \opgam = \n{\opp}^4 H\opgam - \n{\opp}^4 V\opgam$ and then since $\opgam^2=\opgam$
		\begin{equation*}
			\Nrm{\n{\opp}^4 H\opgam}{\L^\infty} = \Nrm{\n{\opp}^4 \opgam\, H}{\L^\infty} \leq \Nrm{\n{\opp}^4 \opgam}{\L^\infty} \Nrm{\opgam\, H}{\L^\infty}
		\end{equation*}
		while by the chain rule
		\begin{multline*}
			\n{\opp}^4 V\opgam = \hbar^4\Delta^2 V\,\opgam + 4i\hbar^3\Delta\nabla V\cdot\opp\,\opgam + 2\hbar^2\Delta V\n{\opp}^2\opgam + 4 \hbar^2\,\nabla^2 V:\opp^2\opgam
			\\
			+ 4i\hbar \,\nabla V\cdot\opp \n{\opp}^2\opgam + V\n{\opp}^4 \opgam \, .
		\end{multline*}
		Observe that we can again use the fact that $\n{\opp}^2 = H-V$ for some terms. This gives for instance
		\begin{equation*}
			\Nrm{\Delta V\n{\opp}^2\opgam}{\L^\infty} \leq \Nrm{\Delta V\opgam}{\L^\infty} \Nrm{\opgam \,H}{\L^\infty} + \Nrm{\Delta V\, V\,\opgam}{\L^\infty}
		\end{equation*}
		and
		\begin{align*}
			\nabla V\cdot\opp \n{\opp}^2\opgam &= \nabla V\cdot\opp \,\opgam\, H - \nabla V\cdot\opp\, V \opgam
			\\
			&= \nabla V\cdot\opp \,\opgam\, H - V\nabla V\cdot\opp\, \opgam + i\hbar \n{\nabla V}^2 \opgam
		\end{align*}
		and, using also the identity $\n{\opp}^2 V = V \n{\opp}^2 - 2i\hbar \,\nabla V\cdot\opp - \hbar^2 \Delta V$,
		\begin{align*}
			V \n{\opp}^4 \opgam &= V \n{\opp}^2 \opgam\, H - V \n{\opp}^2 V \opgam
			\\
			&= V \opgam\, H^2 - 2\,V^2 \opgam\, H - V^3 \opgam + i\hbar \,\nabla V^2\cdot\opp\, \opgam + \hbar^2\, V\Delta V \opgam \, .
		\end{align*}
		The result follows by using the triangle inequality for the operator norm in all these inequalities, combining them, and using each time that $\Nrm{\opgam\,H}{\L^\infty} \leq \Nrm{V_-}{L^\infty}$.
	\end{proof}

\section{Commutator estimates}\label{sec:com}

	The main purpose of this section is the proof of the commutators estimates in Theorem~\ref{thm:commutator:linear}. We assume the interaction potential satisfies the assumptions contained in its statement.

\subsection{Motivation}\label{sec:mot}
	
	In order to motivate the upcoming methods, let us consider the simplest case of the one-dimensional harmonic oscillator $H = - \hbar^2 \frac{\d^2}{\d x^2} + x^2$. As is well-known its spectrum consists of the sequence of eigenvalues $E_n = \(2n+1\) \hbar$ for numbers $n \geq 0 $. Given a chemical potential $\mu \in \hbar \,\N$, let us denote
	\begin{equation*}
		\opgam := \sum_{E_n \leq \mu} \ket{\varphi_n}\bra{\varphi_n}
	\end{equation*}
	where $\(\varphi_n\)_{n\in\N}$ are the Hermite functions at scale $\hbar$. 

	We now claim that the commutator estimates follow as soon as we know that the eigenvalues satisfy the \textit{gap condition}
	\begin{equation}\label{eq:eigenvalue:gap}
		\forall n, k \geq 1,\,\n{E_n - E_k} \geq C \,\hbar \n{n - k}. 
	\end{equation}
	Indeed, let us consider now a self-adjoint operator $A$ on $L^2(\R)$, which we think of either being $A = x$ or $A = - i \hbar\, \frac{\d}{\d x}$. For the purpose of the exposition, let us work with the Hilbert--Schmidt norms which are easier to compute. Indeed, using the fact that $\opgam^2 = \opgam$, we find 
	\begin{equation*}
		\frac{1}{4} \Tr{\n{\com{A, \opgam}}^2} = \Tr{\n{\(\id - \opgam\)A \, \opgam}^2} = \sum_{E_k > \mu} \sum_{E_n \leq \mu} \n{\Inprod{\varphi_k}{A\varphi_n}}^2 .
	\end{equation*}
	Of course, if $A$ was not present we readily obtain $\Inprod{\varphi_n}{\varphi_k} = 0$ due to summation constrains. However, the operator $A$ induces correlations between the different eigenfunctions. These correlations, in fact, decay with $\n{n - k}$ thanks to Inequality~\eqref{eq:eigenvalue:gap}. In order to estimate these correlations we may easily compute using the fact that $\varphi_n$ are eigenfunctions of $H$
	\begin{equation*}
		\n{\Inprod{\varphi_k}{A \varphi_n}}^2 = \frac{\n{\Inprod{\varphi_k}{\com{H,A}\varphi_n}}^2}{\n{E_k - E_n}^2} \leq \frac{C \,\hbar^{-2}}{\n{k - n}^2} \n{\Inprod{\varphi_k}{\com{H,A} \varphi_n}}^2 . 
	\end{equation*}
	Here, we have used the gap condition to obtain some decay with respect to $n$ and $k$. Additionally, thanks to the separation $E_n \leq \mu < E_k$ we have $\n{k - n} \geq \n{k_0 - n}$ for some fixed $k_0 \leq k$. The Cauchy--Schwarz inequality now implies
	\begin{equation*}
		\sum_{E_k > \mu} \sum_{E_n \leq \mu} \n{\Inprod{\varphi_k}{A \varphi_n}}^2 
		\leq C\, \hbar^{-2} \sum_{E_n \leq\mu}
		\frac{\Nrm{\com{H, A} \varphi_n}{L^2}^2}{\n{k_0 - n}^2} \, . 
	\end{equation*}
	Finally, one may verify that for each eigenfunction $\varphi_n$ with $E_n \leq \mu $ we have $\Nrm{\com{H, A} \varphi_n}{L^2} \leq C\, \hbar \, \mu$. All in all, we obtain 
	\begin{equation}\label{eq:comm2}
		\Tr{\n{\com{A, \opgam}}^2} \leq C \sum_{E_n \leq \mu} \frac{\mu^2}{\n{k_0 - n}^2} \, . 
	\end{equation}
	Observe now that the sum on the right-hand side is convergent and uniformly bounded in $\hbar$. On the other hand, in one dimension $\Tr{\opgam^2} = C\, \hbar^{-1}$ and we have, therefore, reduced by a factor $\hbar$ the growth of the commutator. In other words, \eqref{eq:comm2} coincides with \eqref{eq:comms} for $p=2$ and $d=1$. 
	
	The argument presented above can be readily generalized to the harmonic oscillator in any spatial dimension, as we know that the gap condition \eqref{eq:eigenvalue:gap} is verified for such models. Additionally, it can be suitably modified to prove its stronger trace-class variant. 
	
	For more complicated models, we do not expect such strong gap condition to hold pointwise (i.e. for all $n$ and $k$) but at least in an \textit{averaged} sense. More precisely, we replace it with the resolvent inequality (see Lemma~\ref{lem:trace})
	\begin{equation}\label{eq:sum}
		\hd \Tr{\indic_{H\leq 0}\(\hbar - H\)^{-2}} \leq C \,\hbar \, . 
	\end{equation}
	Similar to \eqref{eq:eigenvalue:gap}, the new bound \eqref{eq:sum} states that there is an averaged separation of eigenvalues by a gap of order $\hbar$. As we shall see, it will become sufficient to prove our commutator estimates, which is a suitable modification of the argument just presented.

\subsection{Eigenvalue estimates}

	As explained in the introductory section our starting point is the optimal Weyl law for $C^{1, \alpha}$ potentials. More precisely, it follows from \cite[Theorem~1.5]{mikkelsen_sharp_2023} that for potentials $V$ verifying \eqref{hyp:V} in $d \geq 3$, there exists $\cC_0 >0$ such that for all $ \hbar \in (0,1)$ 
	\begin{equation}\label{eq:weyl0}
		\n{\hd\Tr{\indic_{H \leq 0}} - \int_{\n{\xi}^2 + V (x)\leq 0} \d x\d \xi} \leq \cC_0 \,\hbar \, .
	\end{equation}
	For the next observation, fix some $\nu>0$. Although not explicitly mentioned in~\cite{mikkelsen_sharp_2023}, 
	the constant $\cC_0$ is uniform over potentials which satisfy 
	\begin{equation}\label{eq:C1}
		\Nrm{V}{C^{1,\alpha}(\Omega_{\nu})} < C_{\Omega_\nu} , \quad \text{ with } \quad \Omega_\nu = \{V \leq \nu\} \, , 
	\end{equation}
	where thanks to \eqref{hyp:V} the set $\Omega_\nu$ is compact. Indeed, the proof introduces a regularization $V_\eps(x)$ whose various $C^{k ,\alpha}$ semi-norms depend on $V$ only through~\eqref{eq:C1}. The proof of the Weyl law is then a delicate pseudo-differential operators expansion based on the regularization $V_\eps$. As stated in~\cite{mikkelsen_sharp_2023}, all resulting estimates depend only on the $C^{k, \alpha}$ norms of $V_\eps$, which then depend on $V$ only through $C_{\Omega_\nu}$.

	This observation has the following consequence. Namely, one can find a small enough and fixed $ 0 < \varepsilon_0 \leq 1$ (independent of $\hbar$) such that the bound \eqref{eq:C1} still holds if we replace $V \mapsto V +E $ for all $\n{E}\leq \eps_0$. Thus, the following perturbed version of the result of \cite{mikkelsen_sharp_2023} is valid
	\begin{equation}\label{eq:weyl}
		\n{\hd\Tr{\indic_{H \leq E}} - \int_{\n{\xi}^2 + V(x)\leq E} \d x \d \xi} \leq \cC_0 \,\hbar \, ,
	\end{equation}
	for all $E \in [-\varepsilon_0, \varepsilon_0]$, where $\cC _0 >0$ is the same constant as before. 
	
	In our setting, the main consequence of~\eqref{eq:weyl} is the validity of the following \textit{local eigenvalue estimate}.
	\begin{lem}\label{lem:local_estimate}
		Let $a \in [- \eps_0, 0]$ and $b \in [a, 1]$. Then, in $d \geq 3$
		\begin{equation}\label{eq:local_estimate}
			\hd \tr{\indic_{[a,b]}(H)} \leq \cC_1 \(\n{b- a} + \hbar\) 
		\end{equation}
		where $\cC_1 = 2\, \cC_0 + \(\frac{\omega_d}{2} + \frac{\sfL_{0,d}}{\eps_0}\) \intd \(V-1\)_-^{d/2 -1}$.
	\end{lem}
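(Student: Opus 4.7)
The plan is to reduce the local eigenvalue estimate to the uniform Weyl law~\eqref{eq:weyl} (valid for energies in $[-\eps_0,\eps_0]$) together with the CLR inequality~\eqref{eq:CLR} (to handle energies beyond $\eps_0$). Let $N(E) := \tr \indic_{H \leq E}$ and $\cN(E) := \omega_d \intd (E-V)_+^{d/2}\d x$. I would first write
\begin{equation*}
  \hd \tr \indic_{[a,b]}(H) = \hd N(b) - \hd N(a^-),
\end{equation*}
and bound $\hd N(a^-)$ from below by applying~\eqref{eq:weyl} at an auxiliary energy $a' \in [-\eps_0,a)$ and then letting $a' \nearrow a$, using the continuity of $E \mapsto \cN(E)$ (by dominated convergence) to conclude $\hd N(a^-) \geq \cN(a) - \cC_0\,\hbar$.

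For the upper bound on $\hd N(b)$, I would split into two regimes. If $b \leq \eps_0$, apply~\eqref{eq:weyl} directly to get $\hd N(b) \leq \cN(b) + \cC_0\,\hbar$, so that
\begin{equation*}
  \hd \tr \indic_{[a,b]}(H) \leq 2\,\cC_0\,\hbar + \cN(b) - \cN(a).
\end{equation*}
If instead $b \in (\eps_0,1]$, decompose $[a,b] = [a,\eps_0]\cup(\eps_0,b]$: the first piece is bounded by the previous case (with $b$ replaced by $\eps_0$, and noting $\eps_0 - a \leq b - a$), and the second is controlled by the CLR inequality applied to $H-b$, yielding $\hd \tr \indic_{(\eps_0,b]}(H) \leq \sfL_{0,d}\intd(V-b)_-^{d/2}\d x$. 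Since in this regime $b - a > \eps_0$, one has the gratis inequality $1 \leq (b-a)/\eps_0$, which absorbs the constant-type CLR bound into the desired $(b-a)$ factor.

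It remains to estimate the classical volume difference $\cN(b) - \cN(a)$. The fundamental theorem of calculus gives
\begin{equation*}
  (b-V)_+^{d/2} - (a-V)_+^{d/2} = \tfrac{d}{2}\int_a^b (E-V)_+^{d/2-1}\d E,
\end{equation*}
and the elementary inequality $(E-V)_+ \leq (V-1)_-$ for $E \leq 1$ upgrades this into the bound $\cN(b)-\cN(a) \leq \tfrac{d\,\omega_d}{2}\,(b-a)\intd (V-1)_-^{d/2-1}\d x$. Combining all of the above produces an estimate of the form $\hd \tr \indic_{[a,b]}(H) \leq 2\cC_0\,\hbar + C_{V,d}\,(b-a)$, as required.

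The main subtlety I anticipate is reconciling the constants: the CLR step in the case $b > \eps_0$ naturally produces the power $(V-1)_-^{d/2}$, whereas the Weyl step produces $(V-1)_-^{d/2-1}$, so some care is needed to rewrite the CLR contribution consistently — using $V_- \in L^\infty$ to compare the two powers on the support of $(V-1)_-$, or equivalently factoring $(V-b)_-^{d/2} = (V-b)_-\,(V-b)_-^{d/2-1}$ and exploiting $b \leq 1$. A secondary technical nuisance is the careful treatment of the left limit $N(a^-)$ through the continuity of $\cN$, which is the only reason one gains no extra $\hbar$ factor at the boundary.
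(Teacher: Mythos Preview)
Your proposal is correct and follows essentially the same approach as the paper: apply the Weyl law~\eqref{eq:weyl} when $b\leq\eps_0$ and use CLR together with $b-a\geq\eps_0$ when $b>\eps_0$, with the volume difference controlled by the derivative bound $(b-V)_+^{d/2}-(a-V)_+^{d/2}\leq\tfrac{d}{2}(b-a)(b-V)_+^{d/2-1}$. The only cosmetic difference is that for $b>\eps_0$ the paper does not split $[a,b]=[a,\eps_0]\cup(\eps_0,b]$ but simply bounds $\indic_{[a,b]}(H)\leq\indic_{H\leq b}$ and applies CLR to the whole thing; your concern about reconciling the powers $d/2$ versus $d/2-1$ in the constant is legitimate and is handled (implicitly) via $(V-b)_-\leq (V-1)_-$ together with the boundedness of $(V-1)_-$.
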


	\begin{proof}
		Assume first $ b\leq \eps_0$ so that $ - \eps_0 \leq a \leq b \leq \eps_0$. Since $\indic_{[a,b]}(H) = \indic_{H\leq b} - \indic_{H < a}$, Inequality~\eqref{eq:weyl} gives
		\begin{equation*}
			\hd\tr {\indic_{[a,b]}(H)} \leq 2\,\cC_0\, \hbar + \frac{\omega_d}{d}\intd \(b-V\)_+^\frac{d}{2} - \(a-V\)_+^\frac{d}{2}.
		\end{equation*}
		The second term on the right-hand side is now bounded using the fact that for $0\leq \alpha\leq\beta$, $\beta_+^{d/2} - \alpha_+^{d/2} \leq \frac{d}{2} \(\beta - \alpha\) \beta_+^{d/2-1}$, which gives
		\begin{equation*}
			\intd \(b-V\)_+^\frac{d}{2} - \(a-V\)_+^\frac{d}{2} \leq \frac{d}{2} \(b-a\) \intd \(b-V\)_+^{\frac{d}{2}-1}
		\end{equation*}
		which gives the result in this case. Assume now $\eps_0 \leq b \leq 1$ so that $\n{b-a} \geq \eps_0$. Then, it suffices to use the Cwikel--Lieb--Rozenblum inequality~\eqref{eq:CLR} to find 
		\begin{equation*}
			\hd \tr {\indic_{[a,b]}(H)} \leq \hd \Tr{\indic_{H\leq b}} \leq \frac{\sfL_{0,d}}{\eps_0} \n{b-a} \intd \(b-V\)_+^{d/2}.
		\end{equation*}
		This finishes the proof. 
	\end{proof}

	We will use the local eigenvalue estimate both directly in the proof of the commutator estimates, but also to establish the following singular sum estimate. Here and below, $\lambda$ is an auxiliary parameter, which
	we take as $\lambda = \hbar$ in some cases of interest. However, keeping $\lambda$ flexible allows us to improve upon some rates, as well as proving Besov-type estimates. 

	\begin{lem}\label{lem:trace}
		Let $d\geq 3$. Then for any $\lambda \geq \hbar$
		\begin{align*} 
			\hd\Tr{\indic_{H\leq 0} \(\lambda - H\)^{-2}} &\leq \cC_2 \, \lambda^{-1}
			\\ 
			\hd \Tr{\indic_{H\leq 0} \(\lambda - H\)^{-1}} &\leq \cC_2 \(1+\ln\!\(1+\tfrac{\eps_0}{\lambda}\)\)
		\end{align*}
		where $\cC_2 = 6 \,\cC_1 + \frac{\sfL_{0,d}}{\eps_0} \int V_-^{d/2}$.
	\end{lem}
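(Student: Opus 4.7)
The plan is to decompose the spectral projection $\indic_{H\leq 0}$ into the far-from-zero piece $\indic_{H\leq -\eps_0}$ and a dyadic resolution of the near-zero region $(-\eps_0,0]$. On the far piece, the resolvent $(\lambda-H)^{-1}$ is bounded by $1/\eps_0$, so combined with the Cwikel--Lieb--Rozenblum bound~\eqref{eq:CLR} on $\hd\Tr{\indic_{H\leq 0}}$, we immediately control
\begin{equation*}
\hd\Tr{\indic_{H\leq -\eps_0}(\lambda-H)^{-2}} \leq \frac{\sfL_{0,d}}{\lambda\,\eps_0}\intd V_-^{d/2}, \qquad \hd\Tr{\indic_{H\leq -\eps_0}(\lambda-H)^{-1}} \leq \frac{\sfL_{0,d}}{\eps_0}\intd V_-^{d/2},
\end{equation*}
using additionally $(\lambda-H)^{-2}\leq 1/(\lambda\eps_0)$ for the first bound. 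This gives the term $\frac{\sfL_{0,d}}{\eps_0}\int V_-^{d/2}$ appearing in $\cC_2$.

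For the near-zero region I would write $(-\eps_0,0]$ as the disjoint union of $[-\lambda,0]$ and the dyadic intervals $I_k := (-2^{k+1}\lambda,-2^k\lambda]$, for $k=0,\dots,k^\star$, where $k^\star = \lceil\log_2(\eps_0/\lambda)\rceil$. On the innermost piece $[-\lambda,0]$ the resolvent is bounded by $1/\lambda$ (resp.\ $1/\lambda^2$), and since this interval has length at most $\lambda$ with $\lambda\geq\hbar$, the local eigenvalue estimate of Lemma~\ref{lem:local_estimate} gives $\hd\tr\indic_{[-\lambda,0]}(H)\leq \cC_1(\lambda+\hbar)\leq 2\cC_1\lambda$. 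On $I_k$, one has the pointwise bound $(\lambda-H)^{-1}\leq \bigl((1+2^k)\lambda\bigr)^{-1}$, while $\hd\tr\indic_{I_k}(H)\leq \cC_1(2^k\lambda+\hbar)\leq 2\cC_1\cdot 2^k\lambda$, again by Lemma~\ref{lem:local_estimate} since $I_k\subset[-\eps_0,0]$ and $\hbar\leq\lambda\leq 2^k\lambda$.

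For the first claimed bound, I would multiply the two estimates above (with the square) on each $I_k$ to get
\begin{equation*}
\hd\Tr{\indic_{I_k}(H)(\lambda-H)^{-2}} \leq \frac{2\cC_1\,2^k}{(1+2^k)^2\,\lambda},
\end{equation*}
and sum a convergent geometric series $\sum_{k\geq 0} 2^k/(1+2^k)^2\leq 2$ to produce a contribution of order $\cC_1/\lambda$. Combined with the innermost piece and the far piece, this yields the stated bound with $\cC_2 = 6\,\cC_1+\frac{\sfL_{0,d}}{\eps_0}\int V_-^{d/2}$. For the second claim, the same dyadic summation but without the squared denominator yields a contribution $2\cC_1$ per dyadic scale, hence a total of order $\cC_1\,k^\star \lesssim \cC_1\,\ln(1+\eps_0/\lambda)$.

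No real obstacle is expected here; the only delicate point is ensuring each dyadic interval lies in the range $[-\eps_0,0]$ where Lemma~\ref{lem:local_estimate} applies, which is guaranteed by the cutoff at $k^\star$, and choosing the right trade-off so that the factors $(\lambda+\hbar)$ coming from the local eigenvalue estimate are absorbed by the factor $2^k\lambda\geq\lambda\geq\hbar$.
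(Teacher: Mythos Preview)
Your proof is correct and follows the same high-level strategy as the paper: split off the far piece $H\leq -\eps_0$ via CLR, partition the near-zero region, apply the local eigenvalue estimate of Lemma~\ref{lem:local_estimate} on each piece, and sum. The difference is in the partition: the paper uses a \emph{uniform} partition of $(-\eps_0,0]$ into $n\sim \eps_0/\lambda$ equal-length intervals and then compares the resulting sum to the Riemann integral $\int_0^1(\lambda+\eps_0 r)^{-2}\,\d r=\tfrac{1}{\lambda(\lambda+\eps_0)}$, whereas you use a \emph{dyadic} partition anchored at $0$ and sum the geometric series $\sum_k 2^k/(1+2^k)^2\leq 2$ directly. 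Both arrive at the constant $\cC_2=6\,\cC_1+\tfrac{\sfL_{0,d}}{\eps_0}\int V_-^{d/2}$; your dyadic argument is arguably more transparent (each scale contributes an explicit $O(\cC_1/\lambda)$), while the paper's Riemann-sum comparison makes the connection to the heuristic $\int_{-\infty}^0(\lambda-u)^{-2}\,\d u=\lambda^{-1}$ more visible. One small point to tighten: when $\lambda>\eps_0$, the interval $[-\lambda,0]$ exceeds the range where Lemma~\ref{lem:local_estimate} applies, so either handle that case separately (as the paper does, using $(\lambda-H)^{-2}\leq(\lambda\eps_0)^{-1}$ on all of $\{H\leq 0\}$) or replace the innermost piece by $[-\min(\lambda,\eps_0),0]$.
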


	\begin{remark}
		Essentially, in the proof we compare the trace with the Riemann sum corresponding to the integral $\int_{-\infty}^0 (\lambda-u)^{-2} \d u = \lambda^{-1}$.
	\end{remark}

	\begin{proof}[Proof of Lemma~\ref{lem:trace}]
		We will only give the detailed proof of the first item. The proof of the second one is completely analogous. Observe first that when $\lambda > \eps_0$, we can just use the fact that $\(\lambda - H\)^{-2} \leq \lambda^{-2} \leq \(\eps_0\,\lambda\)^{-1}$ and the Cwikel--Lieb--Rozenblum bound~\eqref{eq:CLR} to get
		\begin{equation*} 
			\hd\Tr{\indic_{H\leq 0} \(\lambda - H\)^{-2}} \leq \frac{\sfL_{0,d}}{\eps_0\,\lambda} \intd V_-^{d/2}.
		\end{equation*}

		Now we assume $\lambda\leq \eps_0$ and we decompose $(-\infty, 0] = (-\infty , -\eps_0] \cup (-\eps_0, 0]$. On the first region, we can again use the Cwikel--Lieb--Rozenblum bound to get
		\begin{equation*}
			\hd\Tr{\indic_{H\leq -\eps_0} \(\lambda - H\)^{-2}} \leq \frac{\hd}{\(\lambda+\eps_0\)^2} \Tr{\indic_{H\leq - \eps_0}} \leq \frac{\sfL_{0,d}}{3\,\lambda\,\eps_0} \intd V_-^{d/2}.
		\end{equation*} 
		We now focus on the interval $(-\eps_0, 0]$. To this end, consider an integer $n \geq 1 $, soon to be chosen (the reader should think $n \sim \lambda^{-1}$), and let us partition the interval $[-\eps_0, 0]$ into $n$ disjoint intervals of equal length $\frac{\eps_0}{n}$ to obtain 
		\begin{equation*}
			\indic_{(-\eps_0, 0]}(H) = \sum_{j = 0}^{n-1} \indic_{(-\frac{j+1}{n}\,\eps_0,-\frac{j}{n}\,\eps_0]}(H) \, .
		\end{equation*}
		Note that the map $r \mapsto (\lambda - r)^{-2}$ is increasing on $\R_-$. Hence, we can estimate $H \leq -\frac{j}{n}\,\eps_0$ over each interval
		\begin{equation*} 
			\Tr{\indic_{(-\eps_0, 0]}(H) \(\lambda - H\)^{-2}} \leq \sum_{j = 0}^{n-1} \frac{1}{\(\lambda + \frac{j}{n}\, \eps_0\)^2} \Tr{\indic_{(-\frac{j+1}{n}\,\eps_0,-\frac{j}{n}\,\eps_0]}(H)}.
		\end{equation*}
		Next, we use the local eigenvalue estimate~\eqref{eq:local_estimate} to 
		estimate the traces. We obtain 
		\begin{equation*} 
			\hd\Tr{\indic_{(-\eps_0, 0]}(H) \(\lambda - H\)^{-2}} \leq \cC_1 \(\eps_0 + n \hbar\) \frac{1}{n} \sum_{j = 0}^{n-1} \frac{1}{\(\lambda + \frac{j}{n}\, \eps_0\)^2}
		\end{equation*}
		and we estimate the Riemann sum via an integral comparison using the fact that $r\mapsto(\lambda+r)^{-2}$ is decreasing. This gives, thanks to $\lambda \geq 0$
		\begin{equation*}
			\frac{1}{n}\sum_{j = 0}^{n-1} \frac{1}{\(\lambda + \frac{j}{n}\, \eps_0\)^2} \leq \frac{1}{n\,\lambda^2} + \int_0^1 \frac{\d r}{\(\lambda+ \eps_0\, r\)^2} = \frac{1}{n\,\lambda^2} + \frac{1}{\lambda\(\lambda+\eps_0\)}
			\leq \frac{1}{\lambda} \(\frac{1}{n\lambda} + \frac{1}{\eps_0}\) .
		\end{equation*}
		We now choose an integer $n \in \N$ such that $\eps_0 \leq n \lambda \leq 2 \, \eps_0$ and use $\hbar \leq \lambda$ to find 
		\begin{equation*} 
			\hd\Tr{\indic_{(-\eps_0, 0]}(H) \(\lambda - H\)^{-2}} \leq \frac{\cC_1}{\lambda} \(\eps_0 + n \lambda\) \(\frac{1}{n\lambda} + \frac{1}{\eps_0}\) \leq 6 \,\cC_1 \, .
		\end{equation*} 
		This finishes the proof. 
	\end{proof}

\subsection{General Hilbert--Schmidt bounds}

	For the proof of Theorem~\ref{thm:commutator:linear} we will develop general commutator estimates for characteristic functions of self-adjoint operators. In particular, they provide bounds for $\com{A,\opgam}$ in terms of more tractable objects. See Proposition~\ref{prop:comm_HS} and~\ref{prop:comm2} for Hilbert--Schmidt and trace-class bounds. Additionally, we will need to prove more technical counterparts in Proposition~\ref{prop:comm_frac} and Proposition~\ref{prop:comm_HS_weight}.
	
	\begin{prop}[Hilbert--Schmidt bounds]\label{prop:comm_HS} 
		Let $\opgam = \indic_{H\leq 0}$ with $H = -\hbar^2 \Delta + V$ in $d \geq 3$, and let $A$ be a normal operator on $L^2(\Rd)$. Then, for all $\lambda \geq \hbar$
		\begin{equation}\label{eq:comm_HS}
			\hd\Tr{\n{\com{A,\opgam}}^2} \leq \CHS \(\lambda
			\Nrm{\opgam\, A}{\L^\infty}^2 +
			\frac{1}{\lambda}
			\Nrm{\opgam \com{A,H} \(\id-\opgam\)}{\L^\infty}^2\)
		\end{equation}
		where $\CHS = 2\(\cC_1 + \cC_2\)$.
	\end{prop}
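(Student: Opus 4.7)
The plan is to mimic the pointwise gap argument in Section~\ref{sec:mot}, replacing the eigenvalue gap with the averaged resolvent bound of Lemma~\ref{lem:trace}. Setting $P := \opgam$ and $Q := \id - \opgam$, the commutator splits as $\com{A,\opgam} = QAP - PAQ$, and the two summands have ranges and co-ranges in mutually orthogonal subspaces, so that
\begin{equation*}
\hd \Tr{\n{\com{A,\opgam}}^2} = \Nrm{QAP}{\L^2}^2 + \Nrm{PAQ}{\L^2}^2 \, .
\end{equation*}
A short cyclicity computation yields $\Nrm{QAP}{\L^2}^2 - \Nrm{PAQ}{\L^2}^2 = \hd \Tr{\com{A^*,A}\,\opgam}$, which vanishes because $A$ is normal. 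Hence it suffices to bound twice $\Nrm{PAQ}{\L^2}^2$.

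Next I decompose $Q = \indic_{0 < H \leq \lambda} + \indic_{H > \lambda}$ as a sum of orthogonal spectral projections, so that $\Nrm{PAQ}{\L^2}^2 = \Nrm{PA\,\indic_{0<H\leq\lambda}}{\L^2}^2 + \Nrm{PA\,\indic_{H>\lambda}}{\L^2}^2$. The low-energy piece is controlled by $\Nrm{PA\,\indic_{0<H\leq\lambda}}{\L^2} \leq \Nrm{\opgam A}{\L^\infty} \Nrm{\indic_{0<H\leq\lambda}}{\L^2}$ together with the local eigenvalue estimate of Lemma~\ref{lem:local_estimate}, giving $\hd \Tr{\indic_{0<H\leq\lambda}} \leq \cC_1(\lambda+\hbar) \leq 2\,\cC_1 \lambda$ since $\lambda \geq \hbar$. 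For the high-energy piece I work in an eigenbasis $(\psi_j)_j$ of $H$ with eigenvalues $(e_j)_j$, exploiting the commutator identity $(e_j - e_k)\,\bangle{\psi_k, A\psi_j} = \bangle{\psi_k, \com{A,H}\psi_j}$. Since $e_j > \lambda \geq 0 \geq e_k$ forces $e_j - e_k \geq \lambda - e_k > 0$, Parseval applied to $\com{A,H}^*\psi_k$ yields
\begin{equation*}
\Nrm{PA\,\indic_{H>\lambda}}{\L^2}^2 \leq \Nrm{\indic_{H>\lambda}\com{A,H}^*\opgam}{\L^\infty}^2 \, \hd\Tr{\opgam\(\lambda - H\)^{-2}} \, ,
\end{equation*}
the resolvent trace being bounded by $\cC_2/\lambda$ via Lemma~\ref{lem:trace}. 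Taking the adjoint inside the first operator norm (a free step, as adjoints preserve operator norm) turns $\Nrm{\indic_{H>\lambda}\com{A,H}^*\opgam}{\L^\infty}$ into $\Nrm{\opgam\com{A,H}\,\indic_{H>\lambda}}{\L^\infty} \leq \Nrm{\opgam\com{A,H}(\id-\opgam)}{\L^\infty}$, producing exactly the form appearing in the right-hand side of Proposition~\ref{prop:comm_HS}.

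The main subtlety I foresee is arranging for the commutator factor to emerge in the precise order $\opgam\com{A,H}(\id-\opgam)$ rather than its mirror image. This is exactly why I choose to estimate $\Nrm{PAQ}{\L^2}^2$ in preference to $\Nrm{QAP}{\L^2}^2$: the spectral-gap step first produces the expression $\indic_{H>\lambda}\com{A,H}^*\opgam$, and taking a single adjoint places $\opgam$ on the correct side. A secondary technical point is the rigorous justification of the eigenbasis expansion under the minimal hypothesis $V \in C^{1,1/2}_\loc$, but since $V \to \infty$ at infinity the operator $H$ has compact resolvent, all the spectral projections $\indic_{H\leq E}$ for finite $E$ are of finite rank, and the sums involved are convergent; the argument is then fully rigorous after a standard truncation.
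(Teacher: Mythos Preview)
Your proof is correct and follows essentially the same route as the paper's own argument: both reduce to $2\,\hd\Tr{A^*\opgam A(\id-\opgam)}$ via normality, split $\id-\opgam$ spectrally at level $\lambda$, handle the low-energy window by Lemma~\ref{lem:local_estimate}, and treat the high-energy piece by the eigenbasis identity $\opProj_j A\opProj_k = (\lambda_k-\lambda_j)^{-1}\opProj_j\com{A,H}\opProj_k$ together with the resolvent trace of Lemma~\ref{lem:trace}. Your use of Parseval on $\com{A,H}^*\psi_k$ is exactly the paper's summation over~$\opProj_k$ written in coordinates, and the adjoint step you flag as a subtlety is handled in the paper by the same inclusion $\indic_{H\geq\lambda}\leq \id-\opgam$.
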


	\begin{remark}
		Taking $\lambda = \hbar$ and either $A = x$ or $A = \opp = - i \hbar \nabla$, then $\lambda \Nrm{\opgam\, A}{\L^\infty}^2$ becomes $O(\hbar)$ while for the second terms one can use $0 \leq \opgam \leq \id$ and the fact that 
		\begin{equation*}
			\Nrm{\com{H,x} \opgam}{\L^\infty}^2 \leq 2\,\hbar^2 \Nrm{\opp\, \opgam}{\L^\infty}^2 \quad \text{ and } \quad
			\Nrm{\com{H, \opp}\opgam}{\L^\infty}^2 \leq \hbar^2 \Nrm{\nabla V \opgam}{\L^\infty}^2 \, .
		\end{equation*}
		Thus, the right-hand side of Inequality~\eqref{eq:comm_HS} becomes $O(\hbar)$ once we can control $\Nrm{\opp\, \opgam}{\L^\infty}$, $\Nrm{x\, \opgam}{\L^\infty}$ and $\Nrm{\nabla V\,\opgam}{\L^\infty}$ uniformly in $\hbar$.
	\end{remark}

	\begin{remark}
		For the proof, we establish the more general inequality: for all $\lambda > 0$ and $d \geq 1$
		\begin{equation}\label{eq:HS}
			\Tr{\n{\com{A,\opgam}}^2} \leq 2 \Tr{\indic_{0< H< \lambda} \n{\opgam\, A}^2} + 2 \Nrm{\opgam \com{A, H} \(\id - \opgam\)}{\L^\infty}^2 \Tr{\opgam \(\lambda - H\)^{-2}} .
		\end{equation} 
		This bound can be formulated in an abstract setting. Namely, the bound is also valid for a sequence of self-adjoint operators $(H_\hbar)_{\hbar \in (0,1)}$ on a Hilbert space $\mathscr{H}$. See Appendix~\ref{appendix}. 
	\end{remark}
	
	In the following proof, we rely on the fact that $H$ has discrete spectrum and prove the intermediate inequality~\eqref{eq:HS} using an eigenbasis expansion. The Hilbert--Schmidt estimate~\eqref{eq:HS} can be established without requiring that $H$ has discrete spectrum; see Appendix~\ref{appendix} for an alternative proof. Currently, we cannot adapt that alternative approach to prove the corresponding trace-class bounds in Subsection~\ref{subsection:trace}. Thus, we include here the proof with an eigenbasis expansion as it will also set up the stage for the upcoming trace-class bounds.

	\begin{proof}[Proof of Proposition~\ref{prop:comm_HS}]
		First, observe that using $\opgam \(\id - \opgam\)=0$, the fact that $A$ is normal and the cyclicity of the trace gives
		\begin{equation*}
			\Tr{\n{\com{A,\opgam}}^2} = 2 \Tr{A^* \,\opgam\, A\(\id - \opgam\)} .
		\end{equation*}
		Decomposing $\id - \opgam = \indic_{0< H < \lambda} + \indic_{H \geq \lambda}$ and using the fact that $A^*\opgam\, A = \n{\opgam\, A}^2$, this can be written
		\begin{equation}\label{eq:HS1}
			\Tr{\n{\com{A,\opgam}}^2} = 2 \Tr{\n{\opgam\, A}^2 \indic_{0< H < \lambda}} + 2 \Tr{A^* \, \opgam\, A \, \indic_{H \geq \lambda}} .
		\end{equation}
		On the right-hand side, the first term is already the first term of Equation~\eqref{eq:HS}. We now proceed to estimate the second one. To this end, we recall that the spectrum of $H$ is discrete and given by an increasing sequence of eigenvalues $(\lambda_j)_{j=0}^\infty \subset \R$. Thus, 
		\begin{equation*} 
			\opgam = \sum_{\lambda_k \leq 0} \opProj_k
			\quad \text{ where } \quad \opProj_k = \indic_{H = \lambda_k} \,,
		\end{equation*}
		which, using also the cyclicity of the trace, yields
		\begin{equation}\label{eq:}
			\Tr{A^*\, \opgam \, A \, \indic_{H \geq \lambda}} = \sum_{\lambda_k \geq \lambda}\sum_{\lambda_j < 0} \Tr{\opProj_k\, A^* \,\opProj_j\, A\, \opProj_k}.
		\end{equation} 
		Next, observe that by the definition of the projectors $\opProj$,
		\begin{equation}\label{eq:proj}
			\opProj_j \,A\, \opProj_k = \frac{\opProj_j \com{A, H} \opProj_k}{\lambda_k - \lambda_j}
		\end{equation}
		from which it follows that, bounding $(\lambda_k - \lambda_j)^{-2} \leq (\lambda - \lambda_j)^{-2}$,
		\begin{align*}
			\Tr{A^* \, \opgam \, A \, \indic_{H \geq \lambda}} &= \sum_{\lambda_j < 0} \sum_{\lambda_k \geq \lambda} \frac{1}{(\lambda_k - \lambda_j)^2} \Tr{\opProj_k \com{A,H}^* \opProj_j \com{A ,H} \opProj_k}
			\\
			&\leq \sum_{\lambda_j < 0} \sum_{\lambda_k \geq \lambda} \frac{1}{(\lambda - \lambda_j)^2} \Tr{\opProj_k \com{A,H}^* \opProj_j \com{A ,H} \opProj_k} .
		\end{align*}
		By cyclicity of the trace and by writing $\opgam \(\lambda - H\)^{-2}$ and $\indic_{H \geq \lambda}$ in terms of the projectors $\opProj$, this can be written
		\begin{equation*}
			\Tr{A^* \, \opgam \, A \, \indic_{H \geq \lambda}} \leq \Tr{\indic_{H \geq \lambda} \com{A,H}^* \opgam \(\lambda - H\)^{-2} \com{A ,H} \indic_{H \geq \lambda}} ,
		\end{equation*}
		Using $\opgam = \opgam^2$ twice, the operator inequality $\indic_{H \geq \lambda} \leq \id - \opgam $, and the H\"older inequality for the trace yields
		\begin{align*}
			\Tr{A^* \, \opgam \, A \, \indic_{H \geq \lambda}} &\leq \Nrm{\opgam \com{A, H} \(\id - \opgam\)}{\L^\infty}^2 \Tr{\opgam \(\lambda - H\)^{-2}} .
		\end{align*}
		Plugging this estimate back in Inequality~\eqref{eq:HS1} proves Formula~\eqref{eq:HS}. Proposition~\ref{prop:comm_HS} then follows for $\lambda \geq \hbar$ thanks to Lemma~\ref{lem:local_estimate} and Lemma~\ref{lem:trace}.
	\end{proof}

\subsection{General trace-class bounds}
\label{subsection:trace}

	For the case of the trace-class commutator estimates, we state two propositions. The first one will be useful for the position commutator estimates. The second one will be useful for the momentum estimates.

	\begin{prop}[First trace-class bounds]\label{prop:comm2}
		Let $\opgam = \indic_{H\leq 0}$ with $H = -\hbar^2 \Delta + V$ in $d \geq 3$ and let $A$ be a normal operator on $L^2(\Rd)$. Then, for all $\lambda \geq \hbar$
		\begin{align*} 
			\hd \Tr{\n{\com{A, \opgam}}} &\leq \CTr \(\lambda \Nrm{A\,\opgam}{\L^\infty} + 
			\frac{1}{\lambda} \Nrm{\(\id - \opgam\) \com{\com{A,H},H} \opgam}{\L^\infty}\)
			\\ 
			\hd \Tr{\n{\com{A, \opgam}}} &\leq \CTr \(\lambda
			\Nrm{A\,\opgam}{\L^\infty} + 
			\(1+\ln\!\(1+ \frac{\eps_0}{\lambda}\)\) \Nrm{(\id - \opgam) \com{A, H} \opgam}{\L^\infty}\) ,
		\end{align*}
		where $\cC_{\rm Tr} = 2\(\cC_1 + \cC_2\)$.
	\end{prop}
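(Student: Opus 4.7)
The plan is to mirror the Hilbert--Schmidt argument of Proposition~\ref{prop:comm_HS}, substituting a trace-norm H\"older inequality for Cauchy--Schwarz and a semigroup integral for the eigenbasis sum. Writing $\com{A,\opgam} = (\id-\opgam)A\opgam - \opgam A(\id-\opgam)$, the two summands are block off-diagonal in the splitting $\opgam L^2(\Rd) \oplus (\id-\opgam)L^2(\Rd)$, so $\Tr{\n{\com{A,\opgam}}} = \Nrm{(\id-\opgam)A\opgam}{\L^1} + \Nrm{\opgam A(\id-\opgam)}{\L^1}$; the second term is estimated analogously to the first, with the relevant sizes agreeing by normality (in particular $\Nrm{A^*\opgam}{\L^\infty}^2 = \Nrm{\opgam A^*A\opgam}{\L^\infty} = \Nrm{\opgam A A^*\opgam}{\L^\infty} = \Nrm{A\opgam}{\L^\infty}^2$), so it suffices to bound $\Nrm{(\id-\opgam)A\opgam}{\L^1}$. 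Splitting $\id-\opgam = \indic_{0<H<\lambda} + \indic_{H\geq\lambda}$, the near-threshold piece satisfies $\Nrm{\indic_{0<H<\lambda}A\opgam}{\L^1} \leq \Nrm{A\opgam}{\L^\infty}\Tr{\indic_{0<H<\lambda}}$, and Lemma~\ref{lem:local_estimate} controls the last trace by $2\cC_1\hbar^{-d}\lambda$ as soon as $\lambda\geq\hbar$, producing the $\lambda\Nrm{A\opgam}{\L^\infty}$ term in both bounds.

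For the far piece $X_\lambda := \indic_{H\geq\lambda}A\opgam$, I would reuse the matrix-element identities from the proof of Proposition~\ref{prop:comm_HS},
\begin{equation*}
 \opProj_k A\opProj_j = \frac{\opProj_k\com{A,H}\opProj_j}{\lambda_j-\lambda_k} = \frac{\opProj_k\com{\com{A,H},H}\opProj_j}{(\lambda_k-\lambda_j)^2},
\end{equation*}
together with the Laplace identities $(\lambda_k-\lambda_j)^{-1} = \int_0^\infty e^{-s(\lambda_k-\lambda_j)}\d s$ and $(\lambda_k-\lambda_j)^{-2} = \int_0^\infty s\,e^{-s(\lambda_k-\lambda_j)}\d s$. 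Summing in the eigenbasis over $\lambda_k\geq\lambda$, $\lambda_j\leq 0$ produces the two operator identities
\begin{equation*}
 X_\lambda = -\int_0^\infty \indic_{H\geq\lambda}e^{-sH}\com{A,H}e^{sH}\opgam\d s = \int_0^\infty s\,\indic_{H\geq\lambda}e^{-sH}\com{\com{A,H},H}e^{sH}\opgam\d s.
\end{equation*}

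On each of these representations I apply the trace-norm H\"older inequality $\Nrm{P_1 Q P_2}{\L^1} \leq \Nrm{P_1}{\L^\infty}\Nrm{Q}{\L^\infty}\Nrm{P_2}{\L^1}$ with $P_1 = \indic_{H\geq\lambda}e^{-sH}$ and $P_2 = \opgam e^{sH}$, extracting in the middle the factor $(\id-\opgam)\com{A,H}\opgam$ or $(\id-\opgam)\com{\com{A,H},H}\opgam$ thanks to $\indic_{H\geq\lambda}(\id-\opgam) = \indic_{H\geq\lambda}$ and $\opgam e^{sH} = e^{sH}\opgam$. Since $\Nrm{P_1}{\L^\infty}\leq e^{-s\lambda}$ and $\Nrm{P_2}{\L^1} = \Tr{\opgam e^{sH}}$ (finite because $H\opgam\leq 0$), Fubini reduces everything to
\begin{equation*}
 \int_0^\infty e^{-s\lambda}\Tr{\opgam e^{sH}}\d s = \Tr{\opgam(\lambda-H)^{-1}},\quad \int_0^\infty s\, e^{-s\lambda}\Tr{\opgam e^{sH}}\d s = \Tr{\opgam(\lambda-H)^{-2}},
\end{equation*}
which Lemma~\ref{lem:trace} bounds by $\cC_2\hbar^{-d}(1+\ln(1+\eps_0/\lambda))$ and $\cC_2\hbar^{-d}/\lambda$ respectively, yielding the two stated bounds. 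The main technical point is the operator-level justification of the semigroup representations of $X_\lambda$; under \eqref{hyp:V} the Hamiltonian $H$ has compact resolvent and hence discrete spectrum, so the identities follow by eigenbasis rearrangement once one checks absolute convergence of the integrals in operator norm --- immediate from the $e^{-s\lambda}$ decay of $P_1$ together with uniform boundedness of $\com{A,H}\opgam$ and $\com{\com{A,H},H}\opgam$, which reduce to multiplication operators involving $\nabla V$, $\opp$, and $x$ on the essentially compact support of $\opgam$ and are therefore controlled by the Agmon-type estimates of Section~2.
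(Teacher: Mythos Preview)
Your argument is correct and reaches the same endpoint as the paper, but via a genuinely different mechanism for the far piece $\indic_{H\geq\lambda}A\opgam$. The paper proceeds by duality: it pairs against an arbitrary compact $\opb$, expands in the eigenbasis, uses the double-commutator identity $\opProj_k A\opProj_j = (\lambda_k-\lambda_j)^{-2}\opProj_k\com{\com{A,H},H}\opProj_j$, bounds $(\lambda_k-\lambda_j)^{-2}\leq(\lambda-\lambda_j)^{-2}$, and then applies Cauchy--Schwarz twice (once in $k$, once in $j$) before taking the supremum over $\opb$; this yields the same factor $\Tr{\opgam(\lambda-H)^{-2}}$. Your Laplace/semigroup representation replaces the duality step and the two Cauchy--Schwarz inequalities by a single trace H\"older inequality, which is arguably cleaner and makes the appearance of $\Tr{\opgam(\lambda-H)^{-1}}$ and $\Tr{\opgam(\lambda-H)^{-2}}$ completely transparent as Laplace integrals. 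The price is that you must interpret $e^{sH}\opgam$ carefully (since $e^{sH}$ is unbounded), which you do correctly by noting that $\opgam e^{sH}\opgam$ is a bounded positive trace-class operator. The paper's route, on the other hand, works entirely with traces and never needs the semigroup as an operator; this pays off later in the paper when the same scaffold is reused for the fractional version (Proposition~\ref{prop:comm_frac}, with $H_\mu^{1/2}$ in place of one $H$) and for the weighted Hilbert--Schmidt bounds (Proposition~\ref{prop:comm_HS_weight}), where the double Cauchy--Schwarz structure is what allows inserting asymmetric factors. Your semigroup approach would need separate adaptation for those.
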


	\begin{remark}
		Again, in the case $\lambda = \hbar$, the first term in the inequalities is $O(\hbar)$ and, using $0 \leq \opgam \leq \id$, we obtain for the second terms
		\begin{align*}
			\Nrm{\com{\com{x, H} , H} \opgam}{\L^\infty} &\leq 2\, \hbar^2 \Nrm{\nabla V\,\opgam}{\L^\infty}
			\\
			\Nrm{\com{\com{\opp, H}, H}\opgam}{\L^\infty} &\leq \hbar^2 \Nrm{\(\opp \cdot \nabla^2 V + \nabla^2 V \cdot \opp\) \opgam}{\L^\infty}
			\\
		\Nrm{\com{\opp, H}\opgam}{\L^\infty} &\leq \hbar \Nrm{\nabla V \, \opgam}{\L^\infty}\, . 
		\end{align*}
		The first bound gives optimal commutators estimates for $A = x$. The second bound with $A = \opp$ can be bounded using $\opp\cdot \nabla^2 V = \nabla^2 V \cdot\opp - i \hbar \, \nabla\Delta V$, but this involves $3$ gradients on $V$. See Proposition~\ref{prop:comm_frac} for an alternative. On the other hand, if $V$ has only one derivative, the third bound with $A = \opp$ will still give a bound but with an additional $\n{\ln\hbar}$.
	\end{remark}
 
	\begin{remark} 
		We will prove the general estimates: for all $\lambda >0$ and $d \geq 1$
		\begin{align}\label{eq:tr}
			\tr{\n{\com{A, \opgam}}} &\leq 2 \Tr{\n{\indic_{0< H< \lambda} \,A\,\opgam}} + 2 \Nrm{\(\id-\opgam\) \com{\com{A,H},H} \opgam}{\L^\infty} \Tr{\(\lambda - H\)^{-2} \opgam}
			\\\label{eq:tr2}
			\tr{\n{\com{A, \opgam}}} &\leq 2 \Tr{\n{\indic_{0< H< \lambda} \,A\,\opgam}} + 2 \Nrm{\(\id-\opgam\) \com{A,H} \opgam}{\L^\infty} \Tr{\(\lambda - H\)^{-1} \opgam} .
		\end{align}
		As in the Hilbert--Schmidt case, these bounds can also be formulated in an abstract setting. 
	\end{remark}
 
	\begin{proof}
		For the proof let us write $\Nrm{A}{p}^p = \Tr{\n{A}^p}$. We only prove \eqref{eq:tr} in detail as the second one is analogous. Similarly as in the proof for the Hilbert--Schmidt estimates, we write the commutator using the operator $\id-\opgam$ as $\com{A,\opgam} = \(\id-\opgam\)A\,\opgam - \(\id-\opgam\)\opgam\, A$ and we write $\id-\opgam = \indic_{0<H<\lambda} + \indic_{H\geq\lambda}$ to get
		\begin{equation*}
			\Nrm{\com{A, \opgam}}{1} \leq 2 \Nrm{\(\id - \opgam\) A\, \opgam}{1} \leq 2 \Nrm{\indic_{0<H<\lambda}\, A \, \opgam}{1} + 2 \Nrm{\indic_{H\geq\lambda}\, A\,\opgam}{1} . 
		\end{equation*}
		It remains to bound the second term on the right-hand side. To this end, we proceed similarly as in the proof of the Hilbert--Schmidt estimates (with the same notations) but take here two commutators. Let $\opb$ be a compact operator on $L^2(\Rd)$. Then, expanding the projection operators and using Identity~\eqref{eq:proj} yields
		\begin{equation}\label{eq:comm}
		\begin{aligned}
			\Tr{\indic_{H\geq \lambda} \, A\, \opgam\, \opb}
			&= \sum_{\lambda_k \geq \lambda} \sum_{\lambda_j \leq 0} \frac{1}{(\lambda_k - \lambda_j)^2} \Tr{\opProj_k \com{\com{A,H},H} \opProj_j \,\opb}
			\\
			&= \sum_{\lambda_k \geq \lambda} \sum_{\lambda_j \leq 0} \frac{1}{(\lambda_k - \lambda_j)^2} \Tr{\opProj_k \, \opc^* \, \opProj_j \,\opb \, \opProj_k}
			\\
		\end{aligned}
		\end{equation}
		where we now denote $\opc := \com{\com{A, H},H}^*$. First, we observe that thanks to the Cauchy--Schwarz inequality over the trace 
		\begin{equation}\label{eq:comm_CS}
			\Tr{\n{\opProj_k \,\opc^*\, \opProj_j \,\opb}} \leq \(\Tr{\opProj_k\, \opc^* \,\opProj_j \,\opc\, \opProj_k}\)^{\frac{1}{2}} \(\Tr{\opProj_k \, \opb^* \, \opProj_j \,\opb\, \opProj_k}\)^{\frac{1}{2}} .
		\end{equation}
		Next, we bound the denominator using $\lambda_k \geq \lambda$ in Identity~\eqref{eq:comm}, exchange the sums, use the bound~\eqref{eq:comm_CS}, and the Cauchy--Schwarz inequality over $\lambda_k \in \sigma(H)$ to obtain 
		\begin{align*}
			\n{\Tr{\indic_{H\geq \lambda} \, A\, \opgam\, \opb}}
			&\leq \sum_{\lambda_j\leq 0} \frac{1}{\(\lambda - \lambda_j\)^2} \sum_{\lambda_k\geq \lambda} \(\Tr{\opProj_k \, \opc^* \, \opProj_j \, \opc \, \opProj_k}\)^{\frac{1}{2}} \(\Tr{\opProj_k \, \opb^* \, \opProj_j \, \opb \, \opProj_k}\)^{\frac{1}{2}}
			\\
			&\leq \sum_{\lambda_j\leq0} \frac{1}{\(\lambda - \lambda_j\)^2} \(\Tr{\opc^*\, \opProj_j \opc \, \indic_{H\geq \lambda}}\)^{\frac{1}{2}} \(\Tr{\opb^* \, \opProj_j \, \opb \, \indic_{H\geq \lambda}}\)^{\frac{1}{2}} . 
		\end{align*}
		Finally, we distribute one power $(\lambda - \lambda_j)^{-2}$ in each factor and do the Cauchy--Schwarz over $\lambda_j \in \sigma(H)$ to get 
		\begin{align*}
			\n{\Tr{\indic_{H\geq \lambda} \, A\, \opgam\, \opb}} &\leq \(\Tr{\opc^* \opgam \(\lambda - H\)^{-2} \opc \, \indic_{H\geq \lambda}}\)^{\frac{1}{2}} \(\Tr{\opb^* \opgam \(\lambda - H\)^{-2} \opb \, \indic_{H\geq \lambda}}\)^{\frac{1}{2}}
			\\ 
			&\leq \Nrm{\indic_{H\geq \lambda} \com{\com{A,H},H} \opgam \(\lambda - H\)^{-1}}{2} \Nrm{\indic_{H\geq \lambda} \,\opb\, \opgam \(\lambda - H\)^{-1}}{2} . 
		\end{align*}
		The two factors can now be bounded in the following way
		\begin{align*}
			\Nrm{\indic_{H\geq \lambda} \,\opb\, \opgam \(\lambda - H\)^{-1}}{2} &\leq \Nrm{\opb}{\L^\infty} \sNrm{\opgam \(\lambda - H\)^{-1}}{2}
			\\
			\Nrm{\indic_{H\geq \lambda} \com{\com{A,H},H} \opgam \(\lambda - H\)^{-1}}{2} &\leq \Nrm{\opgam \com{\com{A,H},H} \(\id - \opgam\)}{\L^\infty} \sNrm{\opgam \(\lambda - H\)^{-1}}{2}\, .
		\end{align*}
		Observe that $\sNrm{\opgam \(\lambda - H\)^{-1}}{2}^2 = \tr{\(\lambda - H\)^{-2} \opgam}$. Thus, we now take the supremum over all compact operator with $\Nrm{\opb}{\L^\infty}\leq 1$ which leads to Inequality~\eqref{eq:tr}.

		The proof of the estimate~\eqref{eq:tr2} is identical, one simply uses one commutator instead of two. This has the effect of only introducing a singular factor $(\lambda_k - \lambda_m)^{-1}$ in Equation~\eqref{eq:comm} which then leads to $ \tr(\(\lambda - H\)^{-1}\opgam)$.
		
		The claim of the proposition then follows by combining inequalities~\eqref{eq:tr} and~\eqref{eq:tr2} with Lemma~\ref{lem:local_estimate} and~\ref{lem:trace} with $\lambda \geq \hbar$.
	\end{proof}

	For the following proposition we introduce the shifted Hamiltonian
	\begin{equation}
		H_\mu := H + \mu\geq 0
		\quad \text{ for } \quad
		\mu > \mu_0 := \Nrm{V_-}{L^\infty} \geq \n{\inf \sigma (H)} . 
	\end{equation}
	
	\begin{prop}[Second trace-class bounds]\label{prop:comm_frac}
		Let $\opgam = \indic_{H\leq 0}$ with $H = -\hbar^2 \Delta + V$ in $d \geq 3$, and let $A$ be a normal operator on $L^2(\Rd)$. Then, for all $\lambda \geq \hbar $ and $\mu> \mu_0$
		\begin{equation*}
			\hd \Tr{\n{\com{A, \opgam}}} \leq
			\CTr' \(\lambda \Nrm{A\,\opgam}{\L^\infty} + \frac{1}{\lambda} \Nrm{\(\id-\opgam\)\com{\com{A,H},H_\mu^{1/2}}\opgam}{\L^\infty}\). 
		\end{equation*}
		Here $\CTr' = 2 \,\cC_1 + 4 \sqrt{1 +\mu}\, \cC_2$.
	\end{prop}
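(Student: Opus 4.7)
The plan is to mirror the proof of the first bound in Proposition~\ref{prop:comm2}, but now taking the second commutator against $H_\mu^{1/2}$ instead of $H$. The gain is that $H$ and $H_\mu^{1/2}$ share the same spectral projections $\opProj_k$, with eigenvalues $\lambda_k$ and $\alpha_k := \sqrt{\lambda_k + \mu}$ respectively. Combining the identities $\opProj_k \com{A, H} \opProj_j = (\lambda_j - \lambda_k) \opProj_k A \opProj_j$ and $\opProj_k \com{B, H_\mu^{1/2}} \opProj_j = (\alpha_j - \alpha_k) \opProj_k B \opProj_j$, together with the factorization $\lambda_j - \lambda_k = (\alpha_j - \alpha_k)(\alpha_j + \alpha_k)$, produces the key representation
\begin{equation*}
    \opProj_k A \opProj_j = \frac{\opProj_k \com{\com{A, H}, H_\mu^{1/2}} \opProj_j}{(\alpha_j - \alpha_k)^2 (\alpha_j + \alpha_k)} .
\end{equation*}

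With this formula in hand, I would repeat almost verbatim the argument of Proposition~\ref{prop:comm2}. Starting from $\Tr{\n{\com{A, \opgam}}} \leq 2 \Nrm{(\id - \opgam) A \opgam}{\L^1}$ and splitting $\id - \opgam = \indic_{0 < H < \lambda} + \indic_{H \geq \lambda}$, the low-energy contribution $\Nrm{A \opgam}{\L^\infty} \Tr{\indic_{0 < H < \lambda}}$ is controlled by Lemma~\ref{lem:local_estimate}, producing the $\lambda \Nrm{A \opgam}{\L^\infty}$ term. For the high-energy part, I would test against a compact operator $\opb$ with $\Nrm{\opb}{\L^\infty} \leq 1$, expand $\Tr{\indic_{H \geq \lambda} A \opgam \opb}$ on the eigenbasis, insert the representation above, and pull the factor $(\alpha_j + \alpha_k)^{-1}$ out of the double sum: since $\lambda_k \geq \lambda$ forces $\alpha_k \geq \sqrt{\lambda + \mu}$, while $\lambda_j + \mu > 0$ gives $\alpha_j \geq 0$, this factor is bounded uniformly by $(\lambda + \mu)^{-1/2}$.

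The residual sum $\sum_{k, j} (\alpha_j - \alpha_k)^{-2} \Tr{\opProj_k C \opProj_j \opb \opProj_k}$ with $C = \com{\com{A, H}, H_\mu^{1/2}}$ is now structurally identical to the one analyzed in the proof of Proposition~\ref{prop:comm2}, modulo the substitution $(H, \lambda) \leftrightarrow (H_\mu^{1/2}, \sqrt{\lambda + \mu})$. Running the same nested Cauchy--Schwarz manipulation will yield for that sum a bound of the form $\Nrm{(\id - \opgam) C \opgam}{\L^\infty} \Nrm{\opb}{\L^\infty} \Tr{\opgam (\sqrt{\lambda + \mu} - H_\mu^{1/2})^{-2}}$.

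The essentially new step, and what I regard as the main obstacle, is converting this $H_\mu^{1/2}$-resolvent trace back to the already-controlled quantity of Lemma~\ref{lem:trace}. The algebraic identity $(\sqrt{\lambda + \mu} - H_\mu^{1/2})^{-1} = (\sqrt{\lambda + \mu} + H_\mu^{1/2}) (\lambda - H)^{-1}$, together with the operator inequality $H_\mu^{1/2} \opgam \leq \sqrt{\mu}\, \opgam$ (valid because $H \leq 0$ on the range of $\opgam$), gives $\opgam (\sqrt{\lambda + \mu} - H_\mu^{1/2})^{-2} \leq 4 (\lambda + \mu)\, \opgam (\lambda - H)^{-2}$, and hence $\hd \Tr{\opgam (\sqrt{\lambda + \mu} - H_\mu^{1/2})^{-2}} \leq 4 (\lambda + \mu) \cC_2 / \lambda$ by Lemma~\ref{lem:trace}. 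Combined with the extracted prefactor $(\lambda + \mu)^{-1/2}$, this produces an overall factor $4 \sqrt{\lambda + \mu}/\lambda \leq 4 \sqrt{1 + \mu}/\lambda$ (for $\lambda \leq 1$), matching the constant $\CTr'$ in the statement. What remains is the standard Cauchy--Schwarz bookkeeping (identical to that of Proposition~\ref{prop:comm2}) together with an adjoint check when $A$ is normal but not self-adjoint, to make sure the operator appearing in the final bound is really $\Nrm{(\id - \opgam) \com{\com{A, H}, H_\mu^{1/2}} \opgam}{\L^\infty}$ as stated.
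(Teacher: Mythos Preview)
Your approach is correct and follows the same skeleton as the paper's proof: both start from the mixed double-commutator identity
\[
\opProj_k\, A\, \opProj_j = \frac{\opProj_k \com{\com{A,H},H_\mu^{1/2}} \opProj_j}{(\lambda_k-\lambda_j)(\alpha_k-\alpha_j)}
\]
and then rerun the Cauchy--Schwarz machinery of Proposition~\ref{prop:comm2}. The difference lies in how the denominator is reduced to something Lemma~\ref{lem:trace} can handle. The paper uses $\tfrac{1}{\alpha_k-\alpha_j} = \tfrac{\alpha_k+\alpha_j}{\lambda_k-\lambda_j}$ to pass directly to $(\lambda_k-\lambda_j)^{-2}$, claiming a uniform bound $\alpha_k+\alpha_j \leq 2\sqrt{\lambda+\mu}$ and then invoking Lemma~\ref{lem:trace} verbatim. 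Your route goes the other way: you factor $(\lambda_k-\lambda_j) = (\alpha_k-\alpha_j)(\alpha_k+\alpha_j)$, bound $(\alpha_k+\alpha_j)^{-1}\leq(\lambda+\mu)^{-1/2}$, run Cauchy--Schwarz on the $H_\mu^{1/2}$ scale, and then convert the resulting trace $\Tr{\opgam(\sqrt{\lambda+\mu}-H_\mu^{1/2})^{-2}}$ back to $\Tr{\opgam(\lambda-H)^{-2}}$ via the algebraic identity $(\sqrt{\lambda+\mu}-\alpha_j)^{-2} = (\sqrt{\lambda+\mu}+\alpha_j)^2(\lambda-\lambda_j)^{-2} \leq 4(\lambda+\mu)(\lambda-\lambda_j)^{-2}$.

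Your detour is arguably the more robust of the two: the paper's bound $\alpha_k+\alpha_j\leq 2\sqrt{\lambda+\mu}$ is not obviously true for large $\lambda_k$ (indeed $\alpha_k=\sqrt{\lambda_k+\mu}$ is unbounded), so either something is being compressed there or your extra conversion step is genuinely needed to close the argument. The cost of your route is a slightly worse constant (you pick up an extra factor of $2$ compared with the stated $\CTr'$) and the implicit restriction $\lambda\leq 1$ to pass from $\sqrt{\lambda+\mu}$ to $\sqrt{1+\mu}$; both are harmless for the applications.
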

%

	\begin{proof}[Proof]
		Similarly as in the proof of Proposition~\ref{prop:comm2}, we first write
		\begin{equation*}
			\Nrm{\com{A, \opgam}}{1} \leq 2 \Nrm{\(\id - \opgam\) A\, \opgam}{1} \leq 2 \Nrm{\indic_{0<H<\lambda}\, A \, \opgam}{1} + 2 \Nrm{\indic_{H\geq\lambda}\, A\,\opgam}{1} . 
		\end{equation*}
		To bound the second term on the right-hand side, we take $\opb$ a compact operator on $L^2(\Rd)$. We now use a generalized version of \eqref{eq:proj} suitable for the square root functions. That is, 
		\begin{equation*}
			\opProj_k \,A\, \opProj_j = \frac{\opProj_k \com{\com{A,H},\(H+\mu\)^s} \opProj_j}{\(\lambda_k - \lambda_j\) \(\(\lambda_k+\mu\)^s - \(\lambda_j+\mu\)^s\)} \, . 
		\end{equation*}
		Together with the identity $\frac{1}{\sqrt b - \sqrt a} \leq \frac{\sqrt b + \sqrt a}{b-a}$ we obtain 
		\begin{equation*}
			\Tr{\indic_{H\geq \lambda} \, A\, \opgam\, \opb} \leq \sum_{\lambda_k \geq \lambda} \sum_{\lambda_j \leq 0} \frac{2 \sqrt{\lambda+\mu}}{(\lambda_k - \lambda_j)^2} \Tr{\opProj_k\com{\com{A,H},H_\mu^s} \opProj_j\,\opb}.
		\end{equation*}
		The proof then follows by the same steps as in the proof of Proposition~\ref{prop:comm2}.
	\end{proof}

\subsection{Proof of Theorems~\ref{thm:commutator:linear} and~\ref{thm:comm:Besov} -- Application to semiclassical gradient estimates}

	Let us now apply the results of our general commutator estimates. More precisely, we will evaluate the operator $A$ with the position operator $x$, the momentum operator $\opp = - i \hbar \nabla$ the shift operator $\tau_z$ and obtain bounds for the quantum gradients of $\opgam$.

	With our notations for the quantum gradients~\eqref{eq:quantum_gradients} and the scaled Schatten norms~\eqref{eq:Schatten}, we recall that 
	\begin{equation*}
		\Nrm{\Dhv \opgam}{\L^2}^2 = \(2\pi\)^2 h^{d-2} \Tr{\n{\com{x , \opgam}}^2},
		\qquad
		\Nrm{\Dhx \opgam}{\L^2}^2 = \(2\pi\)^2 h^{d-2} \Tr{\n{\com{\opp, \opgam}}^2}.
	\end{equation*}
	and
	\begin{equation*}
		\Nrm{\Dhv \opgam}{\L^2}^2 = \Nrm{\Dv f_\opgam}{L^2}^2 ,
		\qquad
		\Nrm{\Dhx \opgam}{\L^2}^2 = \Nrm{\Dx f_\opgam}{L^2}^2.
	\end{equation*}
	We then have the following.
	\begin{prop}\label{prop:linear:comm_HS}
		Let $\hbar\in (0,1)$ and $\opgam = \indic_{H\leq 0}$ with $H = -\hbar^2 \Delta + V$ in $d \geq 3$. Then, it holds that 
		\begin{align*}
			\Nrm{\Dhv \opgam}{\L^2}^2 &\leq \frac{\CHS}{\hbar} \(\Nrm{x\,\opgam}{\L^\infty}^2 + 4 \Nrm{\opp \, \opgam}{\L^\infty}^2\)
			\\
			\Nrm{\Dhx \opgam}{\L^2}^2 &\leq \frac{\CHS}{\hbar} \(\Nrm{\opp\,\opgam}{\L^\infty}^2 + \Nrm{\nabla V \opgam}{\L^\infty}^2\) \, . 
		\end{align*}
		Additionally, for any $z \in \Rdd$
		\begin{equation*}
			\Nrm{\sfT_z\opgam-\opgam}{\L^2} \leq \cD_\opgam
			\min\!\(\tfrac{\n{z}}{\sqrt{\hbar}}, \sqrt{\n{z}}, 1\)
		\end{equation*}
		where $\mathcal D_\opgam = \CHS \(1 + \Nrm{\nabla V \opgam}{\L^\infty}^2 + \Nrm{x\,\opgam}{\L^\infty}^2 + 5 \Nrm{\opp \, \opgam}{\L^\infty}^2\) + \hd \Tr{\opgam}$. If $\mathcal D_\opgam$ is bounded uniformly in $\hbar$, it implies in particular that there exists $C$ independent of $\hbar$ such that
		\begin{equation*}
			\Nrm{\opgam}{\cW^{1,2}} \leq \frac{C}{\sqrt{\hbar}} \quad \text{ and } \quad \Nrm{\opgam}{\cB^{1/2}_{2,\infty}} \leq C.
		\end{equation*}
	\end{prop}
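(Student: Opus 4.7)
The first two inequalities follow from a direct application of Proposition~\ref{prop:comm_HS} componentwise with $\lambda = \hbar$. For $A = x_j$, the identity $\com{x_j,H} = 2i\hbar\,\opp_j$ yields $\Nrm{\opgam\com{x_j,H}(\id-\opgam)}{\L^\infty} \leq 2\hbar\Nrm{\opp_j\opgam}{\L^\infty}$, while $\Nrm{\opgam x_j}{\L^\infty} = \Nrm{x_j\opgam}{\L^\infty}$ by self-adjointness. Summing over coordinates and using the scaling identity $\Nrm{\Dhv\opgam}{\L^2}^2 = (\hd/\hbar^2)\Tr{\n{\com{x,\opgam}}^2}$ gives the first bound. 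The argument for $\Nrm{\Dhx\opgam}{\L^2}^2$ is analogous, now with $A = \opp_j$ and $\com{\opp_j,H} = \com{\opp_j,V} = -i\hbar\,\partial_j V$.

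For the three-fold bound on $\Nrm{\sfT_z\opgam - \opgam}{\L^2}$, I treat the three terms of the minimum separately. The trivial bound follows from the triangle inequality and the fact that $\tau_z$ is an isometry of $L^2$, so $\Nrm{\sfT_z\opgam}{\L^2} = \Nrm{\opgam}{\L^2} = (\hd\Tr{\opgam})^{1/2}$; this accounts for the $\hd\Tr{\opgam}$ contribution in $\cD_\opgam$. The $\n{z}/\sqrt\hbar$ bound is obtained by path-integration: writing $\tau_z = e^{i\xi_0\cdot x_0/(2\hbar)}W(z)$ with $W(z) = e^{i(\xi_0\cdot x - x_0\cdot\opp)/\hbar}$, differentiation along $t \mapsto \sfT_{tz}\opgam$ produces a generator involving $\com{Y,\opgam} = i\hbar\,z\cdot\Dh\opgam$ (where $Y = \xi_0\cdot x - x_0\cdot\opp$), plus a lower-order phase drift of size $\n{z}^2/\hbar$. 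Integrating from $0$ to $1$ and invoking the first two inequalities to bound $\Nrm{\Dh\opgam}{\L^2} \leq C/\sqrt\hbar$ yields the estimate in the regime $\n{z} \leq \sqrt\hbar$.

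The central ingredient is the $\sqrt{\n{z}}$ bound, for which I apply Proposition~\ref{prop:comm_HS} to the unitary $A = \tau_{-z}$ after the decomposition $\sfT_z\opgam - \opgam = \tau_z\com{\opgam,\tau_{-z}} + (\tau_z\tau_{-z} - \id)\opgam$, the latter term being equal to $(e^{i\xi_0\cdot x_0/\hbar}-1)\opgam$ and hence controlled by the two previous cases. The norm $\Nrm{\opgam\tau_{-z}}{\L^\infty}\leq 1$ is immediate. For the commutator term, the intertwining relations $\tau_{-z}^{-1}\opp\,\tau_{-z} = \opp - \xi_0$ and $\tau_{-z}^{-1}V(x)\tau_{-z} = V(x+x_0)$ give $\com{\tau_{-z},H} = (2\xi_0\cdot\opp + \n{\xi_0}^2 + V(\cdot+x_0) - V(\cdot))\tau_{-z}$, whose norm against $\opgam$ is bounded by $C\n{z}(\Nrm{\opp\opgam}{\L^\infty} + \Nrm{\nabla V\opgam}{\L^\infty} + \n{z})$ via Taylor expansion of $V$. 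Choosing $\lambda = \max(\n{z},\hbar) \geq \hbar$ balances the two terms on the right-hand side of Proposition~\ref{prop:comm_HS} and produces $\Nrm{\com{\tau_{-z},\opgam}}{\L^2}^2 \leq C\n{z}$, with the constants absorbed into $\cD_\opgam$.

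The final regularity statements are immediate consequences: the first two inequalities give $\Nrm{\opgam}{\cW^{1,2}} \leq C/\sqrt\hbar$, while the Besov bound $\Nrm{\opgam}{\cB^{1/2}_{2,\infty}} \leq C$ follows from $\min(\n{z}/\sqrt\hbar, \sqrt{\n{z}}, 1)/\n{z}^{1/2} \leq 1$ uniformly in $z \in \R^{2d}$. The main technical obstacle is the estimate $\Nrm{\com{\tau_{-z},H}\opgam}{\L^\infty} \leq C\n{z}$: one must track the projective phase in $\tau_z\tau_{-z}$ and carefully exploit the intertwining of $\tau_z$ with $H$ to extract a gain linear in $\n{z}$.
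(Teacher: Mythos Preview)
Your treatment of the first two inequalities and of the trivial and $\n{z}/\sqrt\hbar$ regimes is essentially the paper's: the paper invokes the identity $\sfT_z\opgam-\opgam = z\cdot\int_0^1\sfT_{\theta z}\Dh\opgam\,d\theta$ directly rather than differentiating, but this is the same computation, and your ``phase drift'' term disappears once one uses the symmetric Weyl operator $W(z)$ you already introduced.

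There is, however, a genuine gap in your $\sqrt{\n{z}}$ argument. Your decomposition
\[
	\sfT_z\opgam-\opgam=\tau_z\com{\opgam,\tau_{-z}}+(e^{i\xi_0\cdot x_0/\hbar}-1)\,\opgam
\]
produces a phase term with $\L^2$ norm of order $\min\!\bigl(\n{z}^2/\hbar,\,2\bigr)\Nrm{\opgam}{\L^2}$. You claim this is ``controlled by the two previous cases'', and indeed it is bounded by $C\min(\n{z}/\sqrt\hbar,1)$ --- but this does \emph{not} dominate $\sqrt{\n{z}}$ in the range $\hbar\le\n{z}\le 1$. For example at $\n{z}=\sqrt\hbar$ the phase contribution is of order $1$ while $\sqrt{\n{z}}=\hbar^{1/4}$. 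Since none of your three bounds on $\Nrm{\sfT_z\opgam-\opgam}{\L^2}$ then beats a constant at $\n{z}=\sqrt\hbar$, the Besov estimate $\Nrm{\opgam}{\cB^{1/2}_{2,\infty}}\le C$ cannot follow.

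The source of the problem is the convention for $\tau_z$. The phase-space shift must be conjugation by a unitary (to preserve self-adjointness), so one should work with $\sfT_z\op=\tau_z\op\tau_z^*$ where $\tau_z$ is the symmetric operator $W(z)=e^{i(\xi_0\cdot x-x_0\cdot\opp)/\hbar}$, for which $W(-z)=W(z)^*$. Then $\tau_z\tau_{-z}=\id$, the spurious phase term vanishes, and one has directly
\[
	\Nrm{\sfT_z\opgam-\opgam}{\L^2}=\Nrm{\com{\tau_z,\opgam}\tau_z^*}{\L^2}=\Nrm{\com{\tau_z,\opgam}}{\L^2},
\]
after which your application of Proposition~\ref{prop:comm_HS} with $A=\tau_z$ and $\lambda=\n{z}$ goes through cleanly. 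This is exactly what the paper does. The paper also streamlines your intertwining computation for $\com{\tau_z,H}$ by applying the same fundamental-theorem-of-calculus identity to $H$ itself, obtaining $\Nrm{(\sfT_zH-H)\opgam}{\L^\infty}\le\n{z}\bigl(2\Nrm{\opp\,\opgam}{\L^\infty}^2+\Nrm{\nabla V\opgam}{\L^\infty}^2\bigr)^{1/2}$ without a separate Taylor expansion of $V$.
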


	Let us now state our main results for the semiclassical gradients~\eqref{eq:quantum_gradients} in the trace-class norm. Again we recall that
	\begin{align*}
		\Nrm{\Dhx \opgam}{\L^1} = 2\pi \,h^{d-1} \tr{\n{\com{\opp, \opgam}}}
		\quad
		\text{ and }
		\quad
		\Nrm{\Dhv \opgam}{\L^1} = 2\pi \,h^{d-1} \tr{\n{\com{x, \opgam}}} \, .
	\end{align*}

	\begin{prop}\label{prop:linear:commTR}
		Let $\hbar \in (0,1)$ and $\opgam = \indic_{H\leq 0}$ with $H = -\hbar^2 \Delta + V$ in $d \geq 3$. Then,
		\begin{align*} 
			\Nrm{\Dhv\opgam}{\L^1} &\leq \CTr \(\Nrm{x\,\opgam}{\L^\infty} + 2 \Nrm{\nabla V \opgam}{\L^\infty}\)
			\\ 
			\Nrm{\Dhx\opgam}{\L^1}
			&\leq \CTr \(\Nrm{\opp \,\opgam}{\L^\infty} + \(1 + \ln(1 + \eps_0/\hbar)\) \Nrm{\nabla V \opgam}{\L^\infty}\).
		\end{align*} 
		Additionally, for all $\mu > \mu_0 = \Nrm{V_-}{L^\infty}$ 
		\begin{equation*} 
			\Nrm{\Dhx \opgam}{\L^1} \leq \CTr' \(\Nrm{\opp\,\opgam}{\L^\infty} + 2\, \sNrm{\nabla^2V \opgam}{\L^\infty} + \tfrac{2}{\sqrt{\mu}} \sNrm{\nabla^2V \cdot \opp\, \opgam}{\L^\infty}\) .
		\end{equation*}
	\end{prop}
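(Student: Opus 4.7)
Each inequality is obtained by substituting $A = x$ or $A = \opp$ into one of the general commutator estimates of Propositions~\ref{prop:comm2} and~\ref{prop:comm_frac}, and taking $\lambda = \hbar$. Since $\Nrm{\Dhv \opgam}{\L^1} = \hbar^{-1}\,\hd\Tr\n{\com{x,\opgam}}$ and $\Nrm{\Dhx \opgam}{\L^1} = \hbar^{-1}\,\hd\Tr\n{\com{\opp,\opgam}}$, the prefactor $\hbar^{-1}$ combines with $\lambda = \hbar$ in the first summand of each proposition to recover the stated $\Nrm{A\,\opgam}{\L^\infty}$ term, so what remains is to compute the commutators of $A$ with $H$.

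For the first bound I would apply the first inequality of Proposition~\ref{prop:comm2} with $A = x$. The Heisenberg-type calculations $\com{x, H} = -\hbar^2 \com{x, \Delta} = 2i\hbar\, \opp$ and $\com{\com{x,H}, H} = 2i\hbar\,\com{\opp, V} = 2\hbar^2\, \nabla V$ produce, together with the prefactor $\lambda^{-1} = \hbar^{-1}$, the coefficient $2$ in front of $\Nrm{\nabla V\,\opgam}{\L^\infty}$. The second bound follows similarly from the second inequality of Proposition~\ref{prop:comm2} applied to $A = \opp$, using the single commutator $\com{\opp, H} = -i\hbar\, \nabla V$; the logarithmic factor passes through unchanged.

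The third, logarithm-free bound rests on Proposition~\ref{prop:comm_frac} with $A = \opp$, where the quantity to control is $\Nrm{(\id - \opgam)\com{\com{\opp, H}, H_\mu^{1/2}}\opgam}{\L^\infty} = \hbar\, \Nrm{(\id - \opgam)\com{\nabla V, H_\mu^{1/2}}\opgam}{\L^\infty}$. I would exploit the operator identity $H_\mu^{1/2}\com{\nabla V, H_\mu^{1/2}} + \com{\nabla V, H_\mu^{1/2}}H_\mu^{1/2} = \com{\nabla V, H}$ which, sandwiched between $(\id - \opgam)$ on the left and $\opgam$ on the right and combined with $[\opgam, H_\mu^{1/2}] = 0$, yields the Sylvester equation
\begin{equation*}
	A\, Z + Z\, B = (\id - \opgam)\com{\nabla V, H}\,\opgam,
\end{equation*}
where $Z := (\id - \opgam)\com{\nabla V, H_\mu^{1/2}}\opgam$, $A = H_\mu^{1/2}|_{\ran(\id - \opgam)} \geq \sqrt{\mu}$, and $B = H_\mu^{1/2}|_{\ran\opgam} \geq \sqrt{\mu - \mu_0}$. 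Its explicit solution $Z = \int_0^\infty e^{-sA}\,(\id - \opgam)\com{\nabla V, H}\opgam\,e^{-sB}\,\d s$ is the source of the $1/\sqrt{\mu}$ factor.

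The main obstacle is the right-hand side. Via $\com{\nabla V, H} = i\hbar\,(\nabla^2 V \cdot \opp + \opp \cdot \nabla^2 V)$, the first summand produces directly the $\hbar\,\Nrm{\nabla^2 V \cdot \opp\,\opgam}{\L^\infty}$ contribution, yielding the $2/\sqrt{\mu}$ coefficient after division by $\sqrt{\mu} + \sqrt{\mu - \mu_0}$. The second summand is delicate: the naive identity $\opp \cdot \nabla^2 V = \nabla^2 V \cdot \opp - i\hbar\,\nabla\Delta V$ would bring in a third derivative of $V$. My plan is to circumvent this by factoring $\opp = H_\mu^{1/2}\cdot(H_\mu^{-1/2}\opp)$, using the operator inequality $\opp^2 \leq H + \Nrm{V_-}{L^\infty} \leq H_\mu$ for $\mu \geq \mu_0$ to ensure $\Nrm{H_\mu^{-1/2}\opp}{\L^\infty} \leq 1$, and then integrating by parts inside the Sylvester integral via $A\,e^{-sA} = -\tfrac{\d}{\d s}e^{-sA}$. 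The boundary term at $s = 0$ and the remaining bulk integration each contribute an $\hbar\,\Nrm{\nabla^2 V\,\opgam}{\L^\infty}$ without any $1/\sqrt{\mu}$ weight, producing the factor $2$ in front of $\Nrm{\nabla^2 V\,\opgam}{\L^\infty}$ in the statement.
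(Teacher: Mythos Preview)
Your handling of the first two inequalities is essentially the paper's: choose $\lambda=\hbar$ in Proposition~\ref{prop:comm2} with $A=x$ (first estimate, double commutator) and $A=\opp$ (second estimate, single commutator with the log factor), and compute $\com{x,H}=2i\hbar\,\opp$, $\com{\com{x,H},H}=2\hbar^2\nabla V$, $\com{\opp,H}=-i\hbar\,\nabla V$.

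For the third inequality your route is different from the paper's. The paper isolates the step $\Nrm{(\id-\opgam)[\nabla V,H_\mu^{1/2}]\opgam}{\L^\infty}\leq 2\hbar\bigl(\Nrm{\nabla^2V\,\opgam}{\L^\infty}+\tfrac{1}{\sqrt\mu}\Nrm{\nabla^2V\cdot\opp\,\opgam}{\L^\infty}\bigr)$ as a separate lemma (Lemma~\ref{lem:frac}) and proves it via the resolvent representation $H_\mu^{1/2}=\tfrac{1}{\pi}\int_0^\infty\frac{H_\mu}{H_\mu+t}\,t^{-1/2}\d t$, expanding the commutator under the integral. You instead pass through the Sylvester equation $H_\mu^{1/2}Z+ZH_\mu^{1/2}=(\id-\opgam)\com{\nabla V,H}\opgam$ and its explicit solution $Z=\int_0^\infty e^{-sH_\mu^{1/2}}(\cdots)e^{-sH_\mu^{1/2}}\d s$; the projection structure gives the spectral lower bounds $\sqrt\mu$ and $\sqrt{\mu-\mu_0}$ on the two sides, and the dangerous $\opp\cdot\nabla^2V$ term is tamed by writing $\opp=H_\mu^{1/2}(H_\mu^{-1/2}\opp)$ and integrating by parts. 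Both arguments hinge on the same operator bound $\Nrm{H_\mu^{-1/2}\opp}{\L^\infty}\leq 1$; the paper's formulation has the advantage of being stated for a general multiplication operator $\phi\in C^1$ and is therefore directly reusable, while your Sylvester approach is a bit more streamlined here because it exploits the projections $(\id-\opgam),\opgam$ from the outset and even yields a coefficient $\frac{1}{\sqrt\mu+\sqrt{\mu-\mu_0}}\leq\frac{1}{\sqrt\mu}$ on the $\nabla^2V\cdot\opp$ term, slightly sharper than the stated $\frac{2}{\sqrt\mu}$.
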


	\begin{remark}
		It follows immediately from Identity~\eqref{eq:fund_thm} that
		\begin{equation*}
			\Nrm{\sfT_z\opgam-\opgam}{\L^1} \leq \(\Nrm{\Dhx\opgam}{\L^1} + \Nrm{\Dhv\opgam}{\L^1}\) \n{z},
		\end{equation*}
		and it follows from the fact that $\intd \nabla \vr_\opgam\,\varphi = \hd \Tr{\com{\nabla,\opgam}\varphi}$ that
		\begin{equation}\label{eq:rho_W11}
			\Nrm{\nabla\vr_\opgam}{L^1} \leq \Nrm{\Dhx{\opgam}}{\L^1}.
		\end{equation}
	\end{remark}

	\begin{proof}[Proof of Proposition~\ref{prop:linear:comm_HS}]
		Taking $A = \opp $ or $A = x $ in Proposition~\ref{prop:comm_HS} gives directly the two first inequalities. Next, consider $z = (y,\xi)\in \Rdd$ so that, by~\cite[Formula~(28)]{lafleche_quantum_2024}, we have 
		\begin{equation}\label{eq:fund_thm}
			\sfT_z\opgam - \opgam = z\cdot\int_0^1 \sfT_{\theta z} \Dh\opgam \d\theta 
		\end{equation}
		where $\Dh\opgam = (\Dhx\opgam,\Dhv\opgam)$. Thus, we find in Schatten norm 
		\begin{equation*}
			\Nrm{\sfT_z\opgam - \opgam}{\L^2}^2 \leq \CHS\, \frac{\n{z}^2}{\hbar} \(5 \Nrm{\opp\,\opgam}{\L^\infty}^2 + \Nrm{x\,\opgam}{\L^\infty}^2 + \Nrm{\nabla V \opgam}{\L^\infty}^2\) .
		\end{equation*}
		This is sufficient to analyze $\n{z} \leq \hbar$. Consider now $\hbar \leq \n{z} \leq 1$. Since $\sfT_z\opgam = \tau_z\, \opgam\, \tau_z^*$ with $\tau_z$ unitary, it follows that
		\begin{equation*}
			\Nrm{\sfT_z\opgam - \opgam}{\L^2} = \Nrm{\tau_z\, \opgam \, \tau_z^* - \opgam \,\tau_z \,\tau_z^*}{\L^2} = \Nrm{\com{\tau_z, \opgam}}{\L^2} . 
		\end{equation*}
		We now take $\lambda = \n{z} \leq 1$ in Inequality~\eqref{eq:comm_HS} with $A = \tau_z$. Thanks to Lemma~\ref{lem:local_estimate} and~\ref{lem:trace} we then get 
		\begin{equation*}
			\Nrm{\sfT_z\opgam - \opgam}{\L^2}^2 \leq \CHS \(\n{z} \Nrm{\opgam}{\L^\infty}^2 + \frac{1}{\n{z}} \Nrm{\(\sfT_zH-H\)\opgam}{\L^\infty}^2\) .
		\end{equation*}
		Using Formula~\eqref{eq:fund_thm} for $H$ gives
		\begin{equation*}
			\Nrm{\(\sfT_zH-H\)\opgam}{\L^\infty}^2 \leq \n{z}^2 \(2\Nrm{\opp\,\opgam}{\L^\infty}^2 + \Nrm{\nabla V \opgam}{\L^\infty}^2\) .
		\end{equation*}
		This yields
		\begin{equation*}
			\Nrm{\sfT_z\opgam - \opgam}{\L^2}^2 \leq \CHS \n{z}\(\Nrm{\opgam}{\L^\infty}^2 + 2 \Nrm{\opp\,\opgam}{\L^\infty}^2 + \Nrm{\nabla V \opgam}{\L^\infty}^2\) .
		\end{equation*}
		Finally, for $\n{z} \geq 1$, it is sufficient to use the triangle inequality for Schatten norms and the fact that $\sfT_z$ is an isometry to get
		\begin{equation*}
			\Nrm{\sfT_z\opgam - \opgam}{\L^2}^2 \leq 2 \Nrm{\opgam}{\L^2}^2 = 2 \,\hd\Tr{\opgam} . 
		\end{equation*}
		This finishes the proof after we collect all the constants in $\mathcal D _\opgam$.
	\end{proof}

	We need the following lemma to evaluate the commutator with the fractional operator in the trace norm. In practice, we take $A = \nabla$, which yields $\com{\nabla,H} = \nabla V$. Thus it suffices to compute the commutator with the fractional operator in the case of multiplication operators.

	\begin{lem}[Fractional commutator]\label{lem:frac}
		Let $d \geq1$, $\mu > \mu_0 = \Nrm{V_-}{L^\infty}$ and $\phi \in C^1(\Rd)$ seen as an operator of multiplication. Then
		\begin{equation*} 
			\Nrm{\(\id - \opgam\) \com{\(H+\mu\)^{1/2}, \phi} \opgam}{\L^\infty} \leq 2\,\hbar \(\Nrm{\nabla \phi\, \opgam}{\L^\infty} + \tfrac{1}{\sqrt{\mu}} \Nrm{\nabla\phi \cdot \opp\, \opgam}{\L^\infty}\) .
		\end{equation*}
	\end{lem}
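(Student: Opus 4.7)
The plan is to use the standard integral representation
\[
\sqrt{A} \;=\; \frac{1}{\pi}\int_0^\infty \frac{A}{\sqrt{t}\,(t+A)}\,dt, \qquad A=H+\mu,
\]
which is valid because the hypothesis $\mu>\mu_0=\Nrm{V_-}{L^\infty}$ ensures $A>0$. This gives the commutator identity
\[
[\sqrt{A},\phi] \;=\; \frac{1}{\pi}\int_0^\infty \sqrt{t}\,(t+A)^{-1}[A,\phi](t+A)^{-1}\,dt.
\]
Since $\opgam$ commutes with every bounded Borel function of $H$, in particular with the resolvent $(t+A)^{-1}$, sandwiching by $\id-\opgam$ on the left and $\opgam$ on the right pulls the projections through the resolvents and gives the two spectral weights $\Nrm{(\id-\opgam)(t+A)^{-1}}{\L^\infty}\le(t+\mu)^{-1}$ and $\Nrm{(t+A)^{-1}\opgam}{\L^\infty}\le(t+\mu-\mu_0)^{-1}$. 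The core scalar integral
\[
\int_0^\infty \frac{\sqrt{t}\,dt}{(t+\mu)(t+\mu-\mu_0)} \;=\; \frac{\pi}{\sqrt{\mu}+\sqrt{\mu-\mu_0}}\;\le\;\frac{\pi}{\sqrt{\mu}},
\]
computed explicitly via the substitution $t=u^2$ and partial fractions, then yields the $1/\sqrt{\mu}$ decay advertised in the statement.

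Expanding $[H,\phi]=-i\hbar(\opp\cdot\nabla\phi+\nabla\phi\cdot\opp)$ splits the estimate into two parts. For the contribution of $\nabla\phi\cdot\opp$ I push $\opgam$ through the rightmost resolvent so that $\nabla\phi\cdot\opp\,\opgam$ sits as a bounded operator between the two resolvent factors; this contributes a term of size $\hbar/\sqrt{\mu}\cdot\Nrm{\nabla\phi\cdot\opp\,\opgam}{\L^\infty}$, matching the second summand of the claim.

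For the contribution of $\opp\cdot\nabla\phi$, where $\opp$ lies on the wrong side of $\nabla\phi$, I rewrite $\opp\cdot\nabla\phi = \nabla\phi\cdot\opp - i\hbar\Delta\phi$. The $\nabla\phi\cdot\opp$ part is handled as above. For the residual $-i\hbar\Delta\phi$, I use the reverse identity $\Delta\phi = \tfrac{i}{\hbar}(\opp\cdot\nabla\phi-\nabla\phi\cdot\opp)$ combined with the decomposition
\[
(\id-\opgam)\opp\cdot\nabla\phi\,\opgam \;=\; (\id-\opgam)\opp\,\opgam\cdot\nabla\phi\,\opgam \;+\; (\id-\opgam)\opp(\id-\opgam)\nabla\phi\,\opgam,
\]
obtained by inserting $\id=\opgam+(\id-\opgam)$ between $\opp$ and $\nabla\phi$. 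The first summand factors through the bounded operator $\opp\,\opgam$, whose norm is controlled by the energy estimate $\Nrm{\opp\,\opgam}{\L^\infty}^2\le\Nrm{V_-}{L^\infty}=\mu_0$. Together with an adjoint argument (using $(\opp\cdot\nabla\phi)^*=\nabla\phi\cdot\opp$) applied to the remaining off-diagonal piece, this absorbs the $\Delta\phi$ contribution and yields a term of size $\hbar\Nrm{\nabla\phi\,\opgam}{\L^\infty}$, producing the first summand of the claimed bound; the factor $2$ comes from collecting the two symmetric contributions.

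The main technical hurdle is handling the $\opp\cdot\nabla\phi$ piece: a naive estimate would introduce $\hbar^2\Nrm{\Delta\phi\,\opgam}{\L^\infty}/\sqrt{\mu}$, i.e.\ a second derivative of $\phi$, which would be fatal for the intended application with $\phi=\partial_j V$ under the regularity assumption~\eqref{hyp:V}. The trade-off between the two forms of the anticommutator identity and the energy bound $\Nrm{\opp\,\opgam}{\L^\infty}\le\sqrt{\mu_0}$ is precisely what converts this residual into a term controlled by $\Nrm{\nabla\phi\,\opgam}{\L^\infty}$ alone.
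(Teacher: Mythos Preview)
Your treatment of the $\nabla\phi\cdot\opp$ contribution is fine, but the handling of $\opp\cdot\nabla\phi$ does not close. Writing $\opp\cdot\nabla\phi=\nabla\phi\cdot\opp-i\hbar\,\Delta\phi$ and then invoking the ``reverse identity'' $\Delta\phi=\tfrac{i}{\hbar}(\opp\cdot\nabla\phi-\nabla\phi\cdot\opp)$ is circular: it simply returns the original term. The decomposition you then propose, inserting $\id=\opgam+(\id-\opgam)$ between $\opp$ and $\nabla\phi$, does produce one good piece (the one carrying $\opp\,\opgam$, controlled by $\Nrm{\opp\,\opgam}{\L^\infty}\le\sqrt{\mu_0}$), but the remaining off-diagonal piece, which inside the $t$-integral reads
\[
\sqrt{t}\,(\id-\opgam)(t+H_\mu)^{-1}\opp\,(\id-\opgam)\,\nabla\phi\,\opgam\,(t+H_\mu)^{-1},
\]
is not controlled by the ``adjoint argument'' you sketch. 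Even granting the bound $\Nrm{(\id-\opgam)(t+H_\mu)^{-1}\opp}{\L^\infty}\le(t+\mu)^{-1/2}$ (which already requires the operator inequality $\n{\opp}^2\le H_\mu$ that you never invoke), the resulting integrand $\sqrt{t}\,(t+\mu)^{-1/2}(t+\mu-\mu_0)^{-1}$ behaves like $t^{-1}$ at infinity and the $t$-integral diverges logarithmically. Taking adjoints does not help: it merely swaps the roles of $\opgam$ and $\id-\opgam$ and produces $\Nrm{\nabla\phi\,(\id-\opgam)}{\L^\infty}$ rather than $\Nrm{\nabla\phi\,\opgam}{\L^\infty}$.

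The missing idea is to absorb $\opp$ directly into the fractional power via $\Nrm{H_\mu^{-1/2}\opp}{\L^\infty}\le 1$, which follows from $\n{\opp}^2\le H_\mu$. The paper exploits this through an \emph{asymmetric} rewriting of the integral representation,
\[
\pi\,[H_\mu^{1/2},\phi]\;=\;\int_0^\infty \frac{1}{H_\mu+t}\,[\phi,H]\,\frac{H_\mu}{H_\mu+t}\,\frac{dt}{\sqrt t}\;-\;\pi\,H_\mu^{-1/2}[\phi,H].
\]
In the second summand the explicit $H_\mu^{-1/2}$ swallows $\opp$ and gives $\Nrm{H_\mu^{-1/2}\opp\cdot\nabla\phi\,\opgam}{\L^\infty}\le\Nrm{\nabla\phi\,\opgam}{\L^\infty}$ directly. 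In the first summand, the factor $H_\mu/(H_\mu+t)$ acting on $\opgam$ contributes an extra $\mu/(\mu+t)$, which restores integrability after one factor of $(H_\mu+t)^{-1/2}$ has been sacrificed to absorb $\opp$. With the symmetric $\sqrt t$ representation you use, this extra decay is unavailable, and the argument as written cannot be completed.
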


	\begin{proof}
		Recall $H_\mu = H + \mu \geq - \Nrm{V_-}{L^\infty} + \mu > 0$. To compute the fractional commutator we use the formula
		\begin{equation*}
			H_\mu^{1/2} = \frac{1}{\pi} \int_0^\infty \frac{H_\mu}{H_\mu+t}\, \frac{\d t}{t^{1/2}}
		\end{equation*}
		which gives
		\begin{align*}
			\pi \com{H_\mu^{1/2}, \phi} &= \int_0^\infty \frac{1}{H_\mu+t} \com{\phi, H} \frac{H_\mu}{H_\mu+t}\, \frac{\d t}{t^{1/2}} + \int_0^\infty \frac{1}{H_\mu+t} \com{\phi, H} \frac{\d t}{t^{1/2}}
			\\
			&= \int_0^\infty \frac{1}{H_\mu+t} \com{\phi, H} \frac{H_\mu}{H_\mu+t}\, \frac{\d t}{t^{1/2}} + \pi \,H_\mu^{1/2} \com{\phi, H} . 
		\end{align*}
		Since $\com{\phi,H} = i\hbar\(\opp \cdot \nabla \phi + \nabla\phi \cdot \opp\)$ where $\nabla \phi$ is the operator of multiplication by $\nabla \phi$, and $\opgam$ commutes with $H_\mu$, it yields
		\begin{multline*} 
			I := \frac{\pi}{\hbar} \Nrm{\(\id - \opgam\) \com{H_\mu^{1/2}, \phi} \opgam}{\L^\infty} \leq \pi \Nrm{\(\id-\opgam\) H_\mu^{1/2} \(\opp \cdot \nabla \phi + \nabla\phi \cdot \opp\) \opgam}{\L^\infty}
			\\
			+ \int_0^\infty \(\Nrm{\frac{\indic_{H_\mu \geq \mu}}{H_\mu+t}\, \nabla \phi\cdot \opp\,\opgam \, \frac{H_\mu}{H_\mu+t}}{\L^\infty} + \Nrm{\frac{\indic_{H_\mu \geq \mu}}{H_\mu+t}\, \opp \cdot \nabla \phi\, \opgam \, \frac{H_\mu}{H_\mu+t}}{\L^\infty}\) \frac{\d t}{t^{1/2}} \, .
		\end{multline*}
		Observing that $\opgam = \indic_{0 \leq H_\mu\leq \mu}$, $\id - \opgam = \indic_{H_\mu \geq \mu}$ and $u \mapsto u/(u+t)$ is increasing on $\R_+$, it follows that
		\begin{multline*} 
			I \leq \pi \(\Nrm{H_\mu^{-1/2} \opp \cdot \nabla \phi\, \opgam}{\L^\infty} + \mu^{-1/2} \Nrm{\nabla\phi \cdot \opp\, \opgam}{\L^\infty}\)
			\\
			+ \int_0^\infty \(\frac{1}{\mu+t}\Nrm{\nabla \phi\cdot \opp\,\opgam}{\L^\infty} + \frac{1}{(\mu+t)^{1/2}} \Nrm{H_\mu^{-1/2}\, \opp \cdot \nabla \phi\, \opgam}{\L^\infty}\) \frac{\mu}{\mu+t}\,\frac{\d t}{t^{1/2}} \, .
		\end{multline*}
		Notice that since $\mu \geq \Nrm{V_-}{L^\infty}$, it follows that
		\begin{equation*}
			\n{\opp\,H_\mu^{-1/2}}^2 = H_\mu^{-1/2} \n{\opp}^2 H_\mu^{-1/2} = H_\mu^{-1/2} \(H_\mu-V-\mu\) H_\mu^{-1/2} \leq 1 \, .
		\end{equation*}
		Therefore, it gives
		\begin{multline*} 
			I \leq \pi \(\Nrm{\nabla \phi\, \opgam}{\L^\infty} + \mu^{-1/2} \Nrm{\nabla\phi \cdot \opp\, \opgam}{\L^\infty}\)
			\\
			+ \int_0^\infty \(\frac{1}{\mu+t}\Nrm{\nabla \phi\cdot \opp\,\opgam}{\L^\infty} + \frac{1}{(\mu+t)^{1/2}} \Nrm{\nabla \phi\, \opgam}{\L^\infty}\) \frac{\mu}{\mu+t}\,\frac{\d t}{t^{1/2}} \, .
		\end{multline*}
		The result follows by computing the integrals in $t$ and using $1+\frac{2}{\pi}, \frac{3}{2} \leq 2$.
	\end{proof}

	\begin{proof}[Proof of Proposition~\ref{prop:linear:commTR}]
		Taking $A = \opp $ or $A = x $ in the first estimate in Proposition~\ref{prop:comm2} gives directly the two first inequalities. Now for $\mu > \mu_0 $ we use the fractional commutator inequalities for $A = \opp$. In this case $\com{\com{\opp,H},H_\mu^{1/2}} =- i \hbar \com{\nabla V,H_\mu^{1/2}}$. It follows from Lemma~\ref{lem:frac} that
		\begin{equation*}
			\Nrm{\(\id - \opgam\) \com{\com{\opp ,H},H_\mu^{1/2}} \opgam}{\L^\infty} \leq 2\,\hbar^2 \(\Nrm{\nabla^2V \opgam}{\L^\infty} + \tfrac{1}{\sqrt{\mu}} \Nrm{\nabla^2V \cdot \opp\, \opgam}{\L^\infty}\).
		\end{equation*}
		It then suffices to apply Proposition~\ref{prop:comm_frac}, which finishes the proof.
	\end{proof}

	We are now ready to give a proof of our first main result, Theorem~\ref{thm:commutator:linear}. 
 
	\begin{proof}[Proof of Theorem~\ref{thm:commutator:linear}]
		In the proof, we make use of the estimates coming from propositions~\ref{prop:linear:comm_HS} and~\ref{prop:linear:commTR}. Observe that they depend only upon the constants $\CHS$, $\CTr$, $\CTr'$, $\cC_1$ and $\cC_2$, which are manifestly independent of $\hbar$. For the proof, let us set $\cC_*$ as $5$ times the maximum of these constants. Before we turn to the proof of the commutator estimates, let us give some preliminary estimates on the various moments that appear in the bounds. Namely, thanks to Inequality~\eqref{eq:weight_1}, Inequality~\eqref{eq:smallness_out_of_bulk_3}, Inequality~\eqref{eq:Agmon_rho_2_Linfty_global} and Lemma~\ref{lem:LT} we have for all $0 < \beta \leq 1/ 4 \hbar$
		\begin{align}\label{eq:m1}
			\Nrm{\opp\, \opgam}{\L^\infty} &\leq \sqrt 2 \Nrm{V_-}{L^\infty}^{1/2}
			\\ 
			\Nrm{e^{\beta\,d_1}\opgam}{\L^\infty} &\leq e^{\beta} + C_d \Nrm{V_-}{L^{p'}}^{p'/2}
			\\
			\Nrm{e^{\beta d_1} \opp \,\opgam}{\L^\infty} &\leq \sqrt 2\, e^\beta \Nrm{V_-}{L^\infty} + \sqrt e\, C_d \Nrm{V_-}{L^{p'}}^{p'/2}
		\end{align}
		where $p' = 1+\frac{d}{2}$, $C_d = 2/\sqrt{3} \(2d/(\pi e)\)^{d/2} \CLT^{1/2}$. Let us consider now $R > 0$ so that $\Omega_1 = \{V \leq 1\} \subset B_R$. In particular, $d_1 \geq \n{x}-R$ on $\Rd$. We then obtain 
		\begin{align}\label{eq:m2}
			\Nrm{x\, \opgam}{\L^\infty}
			&\leq e^R \Nrm{\n{x} e^{-\beta \n{x}}}{L^\infty} \(e^\beta + C_d \Nrm{V_-}{L^{p'}}^{p'/2}\)
			\\\label{eq:m3}
			\Nrm{\nabla V \opgam}{\L^\infty}
			&\leq e^R \Nrm{\nabla V e^{- \beta \n{x}}}{L^\infty} \(e^\beta + C_d \Nrm{V_-}{L^{p'}}^{p'/2}\)
			\\\label{eq:m4}
			\Nrm{\nabla^2V \opgam}{\L^\infty}
			&\leq e^R \Nrm{\nabla^2 V e^{-\beta \n{x}}}{L^\infty} \(e^\beta + C_d \Nrm{V_-}{L^{p'}}^{p'/2}\)
			\\\label{eq:m5}
			\Nrm{\nabla^2V \opp \, \opgam}{\L^\infty}
			&\leq e^R \Nrm{\nabla^2 V e^{-\beta \n{x}}}{L^\infty} \(\sqrt 2\, e^\beta \Nrm{V_-}{L^\infty} + \sqrt e\, C_d \Nrm{V_-}{L^{p'}}^{p'/2}\) . 
		\end{align} 

		Let us turn to the proof of the commutator estimates for $V$ satisfying $\eqref{hyp:V}$ with parameter $\beta \geq 0$. Note that if $ \beta\,\hbar > 1/4$, then we have
		\begin{equation*}
			\Nrm{\com{x, \opgam}}{\L^2}^2 \leq 4 \Nrm{x\,\opgam}{\L^\infty} \Nrm{\opgam}{\L^1} \leq 16 \,\beta\, \hbar \, \sfL_{0,d} \Nrm{x\, \opgam}{\L^\infty}
		\end{equation*}
		where we used $\opgam^2 = \opgam$ and the CLR inequality~\eqref{eq:CLR}. Note now that $\Nrm{x\, \opgam}{\L^\infty}$ is uniformly bounded in $\hbar \leq 1$ thanks to Inequality~\eqref{eq:m2} with parameter $0 < \beta_1 \leq 1/4$. An analogous argument is also true for the commutator with $\opp$.

		Thus, let us assume $0 \leq \beta\, \hbar \leq 1/ 4 $. First, from Proposition~\ref{prop:linear:comm_HS} we get
		\begin{align*}
			\Nrm{\com{x,\opgam}}{\L^2} &\leq \cC_*^{1/2} \hbar^{1/2} \(\Nrm{x\,\opgam}{\L^\infty} + \Nrm{\opp\, \opgam}{\L^\infty}\)
			\\ 
			\Nrm{\com{\opp,\opgam}}{\L^2} &\leq \cC_*^{1/2} \hbar^{1/2} \(\Nrm{\opp\,\opgam}{\L^\infty} + \Nrm{\nabla V \opgam}{\L^\infty}\) . 
		\end{align*}
		For $V$ satisfying~\eqref{hyp:V}, the right-hand side becomes bounded uniformly in $\hbar$ thanks to inequalities~\eqref{eq:m1}, \eqref{eq:m2} and~\eqref{eq:m3}. This proves the case $p=2$. In the case $p=\infty$, it suffices to write the commutator as a difference and use the triangle inequality for the operator norm to get
		\begin{equation*}
			\Nrm{\com{x,\opgam}}{\L^\infty} \leq 2 \Nrm{x\,\opgam}{\L^\infty} \quad \text{ and } \quad \Nrm{\com{\opp,\opgam}}{\L^\infty} \leq 2 \Nrm{\opp\,\opgam}{\L^\infty} .
		\end{equation*}
		Again, in each case, the right-hand side becomes bounded uniformly in $\hbar$ thanks to inequalities~\eqref{eq:m1} and~\eqref{eq:m2}. The case $p\in(2,\infty)$ now follows by interpolation of the case $p=2$ and $p=\infty$, that is, by H\"older's inequality for Schatten norms,
		\begin{equation*}
			\Nrm{\com{x,\opgam}}{\L^p} \leq \Nrm{\com{x,\opgam}}{\L^2}^{2/p} \Nrm{\com{x,\opgam}}{\L^\infty}^{1-2/p} \leq C\, \hbar^{1/p} . 
		\end{equation*}
		and similarly for $\com{\opp,\opgam}$.
		
		As for the trace-class estimates, we apply Proposition~\ref{prop:linear:commTR} to find
		\begin{align*}
			\hd \Tr{\n{\com{x,\opgam}}} &\leq \cC_*\, \hbar \(\Nrm{x\,\opgam}{\L^\infty} + \Nrm{\nabla V \opgam}{\L^\infty}\)
			\\
			\hd \Tr{\n{\com{\opp,\opgam}}} &\leq \cC_*\, \hbar \(\Nrm{\opp \,\opgam}{\L^\infty} + \ln (1 + \eps_0\, \hbar^{-1}) \Nrm{\nabla V \opgam}{\L^\infty}\). 
		\end{align*}
		Thanks to inequalities~\eqref{eq:m1}, \eqref{eq:m2} and~\eqref{eq:m3}, this proves the claim for $p=1$. The case $p\in(1,2)$ now follows by interpolation of the cases $p=1$ and $p=2$.
		
		Finally, assume $e^{-\beta \n{x}}\, \nabla^2 V \in L^\infty$ uniformly in $\hbar$. Then, Proposition~\ref{prop:linear:commTR} for $\mu = \Nrm{V_-}{L^\infty} + 1$ implies
		\begin{equation*}
			\hd \Tr{\n{\com{\opp,\opgam}}} \leq \cC_* \,\hbar \(\Nrm{\opp\,\opgam}{\L^\infty} + \sNrm{\nabla^2V \opgam}{\L^\infty} + \sNrm{\nabla^2V \cdot \opp\, \opgam}{\L^\infty}\) .
		\end{equation*}
		Thanks to inequalities~\eqref{eq:m1}, \eqref{eq:m4} and~\eqref{eq:m5}, this proves the claim for $p=1$ and the case $p\in(1,2)$ follows again by interpolation.
	\end{proof}

	\begin{proof}[Proof of Theorem~\ref{thm:comm:Besov}]
		For $p=2$ the claim follows from Proposition~\ref{prop:linear:comm_HS}. The constant $\mathcal D_\opgam$ is estimated similarly as in the proof of Theorem~\ref{thm:commutator:linear}. For $p=1$, the claim follows from the trace-class commutator estimates in Theorem~\ref{thm:commutator:linear} and the fact that for any $z \in \R^6$, by Formula~\eqref{eq:fund_thm},
		\begin{equation*}
			\Nrm{\sfT_z\opgam-\opgam}{\L^1} \leq \(\Nrm{\Dhx\opgam}{\L^1} + \Nrm{\Dhv\opgam}{\L^1}\) \n{z} .
		\end{equation*}
		For $p=\infty$, notice that on the one hand, since $0 \leq \opgam \leq 1$ and $0\leq \sfT_z\opgam\leq 1$, it follows that $\Nrm{\sfT_z\opgam-\opgam}{\L^\infty} \leq 1$. On the other hand, Formula~\eqref{eq:fund_thm} gives
		\begin{equation*}
			\Nrm{\sfT_z\opgam-\opgam}{\L^\infty} \leq \(\Nrm{\Dhx\opgam}{\L^\infty} + \Nrm{\Dhv\opgam}{\L^\infty}\) \n{z}
		\end{equation*}
		and $\Nrm{\Dhx\opgam}{\L^\infty}$ and $\Nrm{\Dhv\opgam}{\L^\infty}$ are bounded uniformly in $\hbar$ similarly as in the proof of Theorem~\ref{thm:commutator:linear}. For every other $p \in [1,\infty]$ we can use an interpolation argument, i.e. H\"older's inequality for Schatten norms.
	\end{proof}

\section{Quantitative local Weyl's law}

	The main goal of this section is to prove Theorem~\ref{thm:weyl:linear}. To this end, we shall obtain quantitative versions of the local Weyl law and the Weyl law in the phase space. The proof of the theorem then follows from propositions~\ref{prop:linear_local_Weyl_law} and~\ref{prop:linear_local_Weyl_law2} and will be given at the end of this section. In what follows, $\beta \geq 0$ will be a fixed non-negative parameter. 
	
	Let us introduce some notations in order to keep track of the constants. We recall $\Omega_a$ and $\Omega_{a,b}$
	where introduced in \eqref{eq:omega}. We denote by $\cM$ the space of measures on $\Rdd $ with finite total variation, with norm $\Nrm{F}{\cM}$. We let $g_\eps = \eps^{-d/2} \,e^{-\pi\n{x}^2/\eps}$ be a Gaussian at scale $\eps \in (0,1]$ on $\Rd$. Next, we introduce the following quantities for any classical function $f$, and operator $\op$. First, 
	\begin{equation}\label{eq:M} 
		\mathsf M_{f, \op} = \frac{d}{8\pi}\(M_\op + M_f\) , 
		\qquad
		M_f = \intdd f(x,\xi)\d x\d\xi\,, 
		\qquad
		M_\op = \hd \Tr{\op}.
	\end{equation}
	Secondly, for $q \in \{1,2\}$ 
	\begin{align}\label{eq:L1}
		&\sfL_{f, \op, 1} = \frac{2\sqrt{p'}}{p} \omega_d^{1/d} C_{d,\beta} \(\Nrm{e^{\beta \n{x}} \vr_f}{L^1}^{\frac{d-2}{d}} + \Nrm{e^{\beta \n{x}} \vr_\op}{L^1}^{\frac{d-2}{d}}\)
		\\\label{eq:L2}
		&\mathsf L _{f, \op, 2} = \frac{2\sqrt{p'}}{p} \omega_d^{1/d} \(\Nrm{\vr_f}{L^\infty}^{\frac{d-2}{d}}+
		\Nrm{\vr_\op}{L^\infty}^{\frac{d-2}{d}}\)
	\end{align}
	with $p = 1 +\frac{2}{d}$ and $C_{d , \beta} = \Nrm{e^{- \frac{(d-2) \beta}{2d} \n{x}}}{L^{d}(\Rd)}\Nrm{e^{\beta \n{x}} g_1}{L^1 (\Rd)}$. Finally, 
	\begin{align*}
		\mathsf D _{f ,\op} &= \tfrac{d}{\pi} \min\!\(
		\Nrm{\nabla V_-}{L^\infty} \Nrm{\nabla(\vr_f + \vr_\op)}{L^1},
		\Nrm{V}{W^{2,1} (\Omega_1)} \Nrm{\vr_f	+ \vr_\op}{L^{\infty}}\) 
		\\ 
		&\qquad 
		+ \frac{1}{2}\intd\n{x}^2 e^{- \pi \n{x}^2 + \beta \n{x}} \d x \Nrm{e^{- \beta \n{x}} \nabla V_+}{L^\infty} \Nrm{e^{+ \beta \n{x}} \nabla \vr_f}{L^1} . 
	\end{align*}
	
	We now state our result regarding convergence of the densities.	
	
	\begin{prop}[Convergence of densities]\label{prop:linear_local_Weyl_law}
		Let $\hbar\in(0,1)$, $f = \indic_{\n{\xi}^2+V\leq 0}$ and $\opgam = \indic_{H\leq 0}$ where $H = -\hbar^2 \Delta + V$, in $d \geq 3$. Denote $\sM:= \sM_{f , \opgam}$, $ \sL_q := \sL_{f ,\opgam, q}$ and $\D:= \D_{f , \opgam}$. Then, there exists a constant $C_d>0$ such that 
		\begin{align*}
			\Nrm{\vr_f-\vr_\opgam}{L^2(\Rdd)}
			&\leq C_d\, h^{1/3} (1 + \Nrm{\vr_\opgam}{\dot{B}^{1/2}_{2,\infty}})^{2/3}
			\(
			1 + \sL_2\, \sM^{1/2} + \sL_2\, \D^{1/2} h^{1/3}
			\)
			\\ 
			\Nrm{\vr_f-\vr_\opgam}{L^1(\Rdd)}
			&\leq C_d\, h^{1/2}
			\(
			1 +\Nrm{\nabla \vr_\opgam}{L^1}^{1/2} + \D^{1/2}
			\) \(1 + \sL_1 + \sM^{1/2}\).
		\end{align*}
	\end{prop}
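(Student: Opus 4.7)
The plan is to leverage coherent-state (Husimi) methods together with the uniform regularity bounds on $\vr_\opgam$ derived from Theorems~\ref{thm:commutator:linear} and~\ref{thm:comm:Besov}. Fix a smoothing scale $\eps \in (0,1]$ and let $G_\eps$ denote the normalized phase-space Gaussian at scale $\eps$ adapted to coherent states, whose position marginal is $g_\eps$. By the triangle inequality,
\begin{equation*}
\Nrm{\vr_\opgam - \vr_f}{L^q} \leq \Nrm{\vr_\opgam - \vr_\opgam*g_\eps}{L^q} + \Nrm{\vr_\opgam*g_\eps - \vr_f*g_\eps}{L^q} + \Nrm{\vr_f*g_\eps - \vr_f}{L^q}.
\end{equation*}

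The first and third (smoothing-error) terms are controlled by the regularity of the densities. For $q = 2$, the uniform Besov bound $\vr_\opgam \in \dot B^{1/2}_{2,\infty}$ together with the classical counterpart for $\vr_f = \omega_d V_-^{d/2}$ (which inherits the $C^{1,1/2}$ regularity of $V$, encoded in $\D$) yields contributions of order $\eps^{1/4}$. For $q = 1$, the standard inequality $\Nrm{u - u*g_\eps}{L^1} \leq C\sqrt\eps\,\Nrm{\nabla u}{L^1}$ combined with Corollary~\ref{cor:reg:density} gives the $O(\sqrt\eps)$ contribution, with constant proportional to $\Nrm{\nabla\vr_\opgam}{L^1}^{1/2} + \D^{1/2}$.

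For the middle term, use variational arguments. Since $0 \leq f*G_\eps \leq 1$, the Weyl quantization $\OP(f*G_\eps)$ satisfies $0 \leq \OP(f*G_\eps) \leq \id$, so the minimizing property of $\opgam = \indic_{H\leq 0}$ among such states at zero chemical potential gives
\begin{equation*}
\hd\Tr{H\opgam} \leq \hd\Tr{H\,\OP(f*G_\eps)} \leq \iintd(\n{\xi}^2 + V)\,f\,\d x\,\d\xi + C\,(\eps + \hbar^2/\eps)\,\D ,
\end{equation*}
where the error term arises from a second-order Weyl-symbol expansion. Conversely, the Berezin--Lieb inequality applied to the Husimi transform $\tilde f^\eps_\opgam = f_\opgam * G_\eps \geq 0$ yields
\begin{equation*}
\hd\Tr{H\opgam} \geq \iintd(\n{\xi}^2 + V)\,\tilde f^\eps_\opgam\,\d x\,\d\xi - C_d\,(\hbar^2/\eps)\,M_\opgam .
\end{equation*}
Subtracting and using that $f$ is the classical minimizer at chemical potential zero, the nonnegative quantity $\iintd(\n{\xi}^2+V)(\tilde f^\eps_\opgam - f)\,\d x\,\d\xi$ is of order $\hbar^2/\eps + \eps$. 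A pointwise Thomas--Fermi/Lieb--Thirring comparison (minimizing in $\xi$ at each fixed $x$) then converts this phase-space energy excess into an $L^{1+2/d}$-type bound for $\vr_\opgam*g_\eps - \vr_f*g_\eps$; H\"older's inequality together with the $L^\infty$ or exponentially-weighted $L^1$ bounds on $\vr_\opgam$ and $\vr_f$ (controlled via Proposition~\ref{prop:Agmon}) introduces the $\sL_q$ and $\sM^{1/2}$ factors, giving the middle-term contribution of order $(\hbar^2/\eps)^{1/2}$.

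Optimizing $\eps$ then balances smoothing and coherent-state errors: for $q = 2$, setting $\eps^{1/4} \sim \hbar/\sqrt\eps$ gives $\eps \sim \hbar^{2/3}$ and the rate $\hbar^{1/3}$; for $q = 1$, setting $\sqrt\eps \sim \hbar/\sqrt\eps$ gives $\eps \sim \hbar$ and the rate $\hbar^{1/2}$. The main obstacle is performing the Weyl calculus at the limited $C^{1,1/2}$ regularity of $V$: one cannot use $\nabla^2 V$ pointwise, and must absorb the second-order remainder through the two alternative expressions in $\D$ (namely $\Nrm{\nabla V_-}{L^\infty}\Nrm{\nabla(\vr_f+\vr_\opgam)}{L^1}$, obtained by transferring one derivative onto the density via integration by parts, or $\Nrm{V}{W^{2,1}(\Omega_1)}\Nrm{\vr_f+\vr_\opgam}{L^\infty}$ using $L^1$--$L^\infty$ duality on the bulk). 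The exponential weights appearing in $\sL_1$ and in the last term of $\D$ are precisely those needed so that the Agmon tails of $\vr_\opgam$ outside $\Omega_1$ contribute only lower-order errors.
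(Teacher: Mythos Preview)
Your approach is essentially the paper's: smooth via coherent states, use the variational principle on both sides to bound the energy excess $\int(|\xi|^2+V)(\tilde f_\opgam - f)$ by $h^2/\eps + \eps\,\D$ (this is the paper's Corollary~\ref{cor:energy} and Lemma~\ref{lem:energies}), convert that via the bathtub/convexity identity into control of $\|\vr_{\tilde\opgam}^{p/2}-\vr_f^{p/2}\|_{L^2}$ and then H\"older to $L^q$ (Lemma~\ref{lemma:weyl}), and optimize $\eps$. Two small corrections: the paper uses a two-term split (it compares $\vr_\opgam$ to $\vr_{\tilde\opgam}$ and $\vr_{\tilde\opgam}$ directly to $\vr_f$, never smoothing $\vr_f$); and your justification ``$0\le f*G_\eps\le 1$ hence $0\le \OP(f*G_\eps)\le\id$'' is not a property of Weyl quantization --- the positivity holds because $\op_{G_\eps*f}=f\star\op_{G_\eps}$ is a Toeplitz operator, which the paper makes explicit.
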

	
	For the next result, we introduce the Husimi transform of the density operator $\opgam$ as follows 
	\begin{equation*}
		m_\opgam : = g_{h/2} \otimes g_{h/2} * f_\opgam \,.
	\end{equation*}
	For notational convenience, we introduce the following two constants, depending on $V$ and $\opgam$ only through quantities that are uniformly bounded in $\hbar$:
	\begin{align}\label{C1}
		\sC_1 &= d\, \om^\frac{2}{d} \n{\Omega_{1,1}}^\frac{2}{d}\, M_\opgam^{1-\frac{2}{d}} + \n{\Omega_{1,1}} 2^\frac{d}{2}
		\\\label{C2}
		\sC_2 &= \(2\(\frac{d+2}{2e\pi}\)^{1+\frac{d}{2}} + \frac{8}{3 \pi}\) M_\opgam \(1 +\Nrm{V_-}{L^\infty}\).
	\end{align}

	\begin{prop}[Convergence of states]\label{prop:linear_local_Weyl_law2}
		With the same notations as in Proposition~\ref{prop:linear_local_Weyl_law} 
		\begin{equation*}
			\Nrm{f - m_\opgam}{L^1 (\Rdd)} \leq h^{1/2} \(\(\mathsf L_1 + \sC_1\) \(\mathsf M + \D\) + \sC_2\) .
		\end{equation*}
		Additionally, there exists $C_d>0$ such that 
		\begin{align*} 
			\Nrm{f - f_\opgam}{L^2 (\Rdd)}
			&\leq C_d\, h^{1/4} \Nrm{f}{\dot B_{2,\infty}^{1/2}} + \Nrm{f - m_\opgam}{L^1 (\Rdd)}^{1/2}
			\\
			\Nrm{\op_f-\opgam}{\L^1}
			&\leq C_d\, h^{1/2} \(\Nrm{\Dh\opgam}{\L^1} + \Nrm{\nabla f}{\cM}\) + \Nrm{f - m_\opgam}{L^1 (\Rdd)}.
		\end{align*}
	\end{prop}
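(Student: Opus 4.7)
The plan is to prove the three inequalities in order: (1), the Husimi $L^1$ bound, is the technical heart; (2) and (3) are then deduced from it by passing from the Husimi function to the Wigner transform and then to the Weyl quantization.

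\textbf{Step 1: the Husimi bound (1).} I use a coherent-state variational argument. Since the anti-Wick operator $\op_{g_h * f}$ satisfies $0 \leq \op_{g_h * f} \leq \id$ for $f \in [0,1]$, the minimality of $\opgam = \indic_{H \leq 0}$ yields $\hd\Tr(\opgam H) \leq \hd\Tr(\op_{g_h * f} H)$, and the duality $\hd\Tr(\opgam\,\op_{g_h * F}) = \int m_\opgam\, F$ applied with $F = H_{\t{cl}} := |\xi|^2 + V$, together with $g_h * |\xi|^2 = |\xi|^2 + c_1 h$, gives
\begin{equation*}
	\int (m_\opgam - f)\, H_{\t{cl}} \leq c_1 h\,(M_\opgam + M_f) + \int (\vr_\opgam + \vr_f)(g_h * V - V).
\end{equation*}
The crucial point is that the last error is of order $h$, not the naive $h^{3/4}$ from $V \in C^{1,1/2}$: the vanishing first moment of $g_h$ together with an integration by parts transferring a derivative from $V$ onto $\vr$ yield
\begin{equation*}
	\Big|\int \vr\,(g_h * V - V)\Big| \lesssim h\, \Nrm{\nabla V_-}{L^\infty}\, \Nrm{\nabla \vr}{L^1},
\end{equation*}
the first part of $\D$ (with the alternative $h\,\Nrm{V}{W^{2,1}(\Omega_1)}\,\Nrm{\vr}{L^\infty}$ yielding the second), and Agmon decay from Proposition~\ref{prop:Agmon} takes care of the tails. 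Since $f = \indic_{H_{\t{cl}} \leq 0}$ and $m_\opgam \in [0,1]$, the sign identity $(m_\opgam - f)\,H_{\t{cl}} = |m_\opgam - f|\cdot|H_{\t{cl}}|$ holds pointwise, turning this into a weighted $L^1$ estimate of order $h$. A co-area dichotomy at $|H_{\t{cl}}| = \delta$, using the layer-volume bound $|\{|H_{\t{cl}}| < \delta\}| \lesssim (\sL_1 + \sC_1)\,\delta$ (itself a Lieb--Thirring-type bound on $\int V_-^{d/2-1}$ with Agmon-weight corrections), and optimization $\delta \sim \sqrt h$ produce the claimed $h^{1/2}$ rate.

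\textbf{Step 2: the Wigner $L^2$ and operator $\L^1$ bounds.} For (2), I split $\Nrm{f - f_\opgam}{L^2} \leq \Nrm{f - m_\opgam}{L^2} + \Nrm{m_\opgam - f_\opgam}{L^2}$. Since $f, m_\opgam \in [0,1]$, the first term is $\leq \Nrm{f - m_\opgam}{L^1}^{1/2}$; for the second, using $f_{\sfT_z\opgam} = f_\opgam(\cdot - z)$ and Minkowski's inequality,
\begin{equation*}
	\Nrm{m_\opgam - f_\opgam}{L^2} \leq \int g_h(z)\, \Nrm{\sfT_z\opgam - \opgam}{\L^2}\, \d z,
\end{equation*}
which Theorem~\ref{thm:comm:Besov} with $p=2$ bounds by $C_d h^{1/4}\, \Nrm{f}{\dot B^{1/2}_{2,\infty}}$ once the resulting constant is recast in terms of the Besov norm of the classical limit. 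For (3), writing $\opgam = \op_{f_\opgam}$, I decompose
\begin{equation*}
	\opgam - \op_f = (\opgam - \op_{m_\opgam}) + \op_{g_h * (f_\opgam - f)} + \op_{g_h * f - f}.
\end{equation*}
The identity $\op_{m_\opgam} = \int g_h(w)\,\sfT_w\opgam\,\d w$ combined with~\eqref{eq:fund_thm} and Theorem~\ref{thm:commutator:linear} gives $\Nrm{\opgam - \op_{m_\opgam}}{\L^1} \lesssim \sqrt h\, \Nrm{\Dh\opgam}{\L^1}$; the other two terms are handled by duality against $\opb$ with $\Nrm{\opb}{\L^\infty} \leq 1$, which reduces matters to pairings $\int G\, f_\opb$ with $G \in \{m_\opgam - g_h * f,\, g_h * f - f\}$, and these are controlled by $\Nrm{f - m_\opgam}{L^1} + \sqrt h\,\Nrm{\nabla f}{\cM}$ via $\Nrm{f_\opb}{L^\infty} \lesssim 1$ and the classical $BV$ regularization $\Nrm{g_h * f - f}{L^1} \leq \sqrt h\,\Nrm{\nabla f}{\cM}$.

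\textbf{Main obstacle.} The delicate point is achieving the $h^{1/2}$ rate in (1): the naive pointwise bound $|g_h * V - V| \lesssim h^{3/4}$ from $V \in C^{1,1/2}$ alone, combined with the $\delta$-order layer volume, would only yield $h^{3/8}$. Reaching $h^{1/2}$ requires exploiting both the vanishing of $\int y\, g_h(y)\,\d y$ and the $W^{1,1}$ control on $\vr_\opgam$ furnished by the commutator estimates of Section~\ref{sec:com}, which is precisely why $\D$ enters the final constant.
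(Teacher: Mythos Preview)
Your Step~1 is essentially the paper's approach: both combine the quantum and classical variational principles through the coherent-state smoothing (what the paper packages as Corollary~\ref{cor:energy} and Lemma~\ref{lem:energies}) and then run a layer-cake argument at level $\delta\sim\sqrt{h}$ (the paper's inequality~\eqref{eq:CV_L1_Husimi_0}). The observation about needing integration by parts to reach $O(h)$ rather than $O(h^{3/4})$ is exactly the point of Lemma~\ref{lem:energies}.

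There is, however, a genuine gap in your $\L^1$ argument. After writing $\opgam-\op_f = (\opgam-\op_{m_\opgam}) + \op_{g_h*(f_\opgam-f)} + \op_{g_h*f-f}$, you control the last two terms by duality, invoking $\Nrm{f_\opb}{L^\infty}\lesssim 1$ for $\Nrm{\opb}{\L^\infty}\leq 1$. This is false: the Wigner transform of a bounded operator is in general only a tempered distribution, not in $L^\infty$; equivalently, Weyl quantization is not bounded $L^1(\Rdd)\to\L^1$. So neither $\Nrm{\op_{g_h*f-f}}{\L^1}\leq\Nrm{g_h*f-f}{L^1}$ nor the analogous bound for the middle term holds. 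The paper avoids this by pivoting at the \emph{double}-smoothed $\tilde{\tilde\opgam}$ rather than $\op_{m_\opgam}=\tilde\opgam$. The point is that $\tilde{\tilde\opgam}$ is the Toeplitz operator with symbol $m_\opgam=\tilde f_\opgam$, so $\tildop_f-\tilde{\tilde\opgam}=(f-m_\opgam)\star\op_{G_h}$, and the semiclassical Young inequality yields $\Nrm{\tildop_f-\tilde{\tilde\opgam}}{\L^1}\leq\Nrm{f-m_\opgam}{L^1}$ directly. The remaining pieces $\Nrm{\tilde{\tilde\opgam}-\opgam}{\L^1}$ and $\Nrm{\op_f-\tildop_f}{\L^1}$ are handled by the smoothing bound $\Nrm{\tildop-\op}{\L^1}\leq C_d\sqrt{h}\,\Nrm{\Dh\op}{\L^1}$ from~\cite{lafleche_quantum_2024}; applying this with $\op=\tildop_f$ gives the $\Nrm{\nabla f}{\cM}$ term because $\Dh\tildop_f=\tildop_{\nabla f}$ is again Toeplitz.

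A smaller point on the $L^2$ bound: you split with pivot $m_\opgam$ and estimate $\Nrm{m_\opgam-f_\opgam}{L^2}$ via Theorem~\ref{thm:comm:Besov}. This produces a constant depending on the quantum Besov norm of $\opgam$, not $\Nrm{f}{\dot B^{1/2}_{2,\infty}}$ as stated; your ``recasting'' is not justified. The paper instead pivots at $\tilde f$ and uses the identity (end of the proof of Lemma~\ref{lemma:weyl})
\[
\intdd \n{f-\tilde f_\opgam} = \hd\Tr{\n{\tildop_f-\opgam}^2+\tildop_f\(\id-\tildop_f\)} \geq \Nrm{\tilde f-f_\opgam}{L^2}^2,
\]
which gives the stated constant directly.
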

	
	We now follow the classic strategy of using coherent states together with the above variational principle to first get the following quantitative version of the Weyl law. To optimize the rate of convergence, we use anisotropic coherent states. Let $g_\eps = \eps^{-d/2} \,e^{-\pi\n{x}^2/\eps}$. Then we can consider the function of the phase space 
	\begin{equation*}
		G_\eps(x,\xi) := g_\eps(x)\,g_\frac{h^2}{4\eps}(\xi)\,.
	\end{equation*}
	Computing the kernel of its Weyl quantization gives
	\begin{equation*}
		\op_{G_\eps}(x,y) = h^{-d}\,g_\eps(\tfrac{x+y}{2})\, \widehat{g}_\frac{h^2}{4\eps}(\tfrac{x-y}{h}) = h^{-d}\eps^{-d/2} \,e^{-\pi \,\frac{\n{x}^2+\n{y}^2}{2\eps}}
	\end{equation*}
	that is, $\op_{G_\eps} = h^{-d} \ket{\psi_\eps}\bra{\psi_\eps}$ with $\psi_\eps = \eps^{-d/4}\, e^{-\pi\n{x}^2/(2\eps)}$. In particular, $\op_{G_\eps}\geq 0$.
	
	To keep our notations close to the classical case, we will write the coherent states transform (or Toeplitz operator, or Wick quantization) induced by these coherent states using in the form of a semiclassical convolution, as introduced already for instance in \cite{werner_quantum_1984, lafleche_quantum_2024}. This semiclassical convolution is defined for any operator $\op\in\L^p$ and function $f \in L^{p'}(\Rdd)$ by 
	\begin{equation*}
		f \star\op = \intdd g(z)\,\sfT_z\op\d z
	\end{equation*}
	where $\sfT_z$ is the operator of translation in the phase space $\sfT_z\op := \tau_{z} \, \op \, \tau_{-z}$ with $\tau_z = e^{-i\,z_0^\perp\cdot\opz/\hbar}$. It is a positive operator whenever $g\geq 0$ and $\op\geq 0$ and it verifies the analogue of Young's inequalities (see~\cite{werner_quantum_1984}): $\Nrm{g\star \op}{\L^p} \leq \Nrm{g}{L^q(\Rdd)} \Nrm{\op}{\L^r}$ if $1+\frac{1}{p} = \frac{1}{q}+\frac{1}{r}$. In particular, we denote by
	\begin{equation*}
		\tildop := G_\eps \star \op
	\end{equation*}
	and the Toeplitz operator of a function $f$ induced by these coherent states is nothing but $\tildop_f = G_\eps \star \op_f = f \star \op_{G_\eps}$. In particular, if $0\leq f\leq 1$, then 
	$0\leq \tildop_f\leq 1$ in the sense of operators. 
	We also denote by $\tilde{f}$ the function of the phase space $\tilde{f} := G_\eps * f$, so that $\tildop_f = \op_{\tilde{f}}$. Notice also that $\tilde{f}_\op$ is then the Husimi function associated to $\op$ induced by this anisotropic family of coherent states. 
	
	Let us first recall the following result sometimes referred to as the variational principle or the generalized min-max principle (see e.g. \cite{lieb_analysis_2001, frank_weyls_2023}) and additionally give a quantitative version of it.
 
	\begin{lem}[Quantitative Variational Principle]\label{prop:minmax}
		Let $H = -\hbar^2\Delta + V$ and $\opgam = \indic_{H \leq 0}$. Then, for any operator $0 \leq \op \leq 1$ there holds 
		\begin{equation}\label{eq:variational_principle}
			\Tr{H\(\op - \opgam\)} = \Tr{\n{H} \n{\op- \opgam}^2} + \Tr{\n{H} \op \(\id - \op\)}.
		\end{equation}
		Similarly, if $\cH(x,\xi) = \n{\xi}^2 + V(x)$ and $f = \indic_{\cH \leq 0}$. Then, for any function $0 \leq g \leq 1$
		\begin{equation}\label{eq:variational_principle_classical}
			\intdd \cH \(g - f\) = \intdd \n{\cH} \n{f-g}.
		\end{equation}
	\end{lem}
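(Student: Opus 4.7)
Since $\opgam = \indic_{H\leq 0}$ is defined by functional calculus, it commutes with $H$. We set $P := \opgam$ and $Q := \id - \opgam$, so that $PQ = 0$ and $[\n{H}, P] = [\n{H}, Q] = 0$. Splitting $H = H_+ - H_-$ with $H_- := \n{H}\,P \geq 0$ supported on $\{H \leq 0\}$ and $H_+ := \n{H}\,Q \geq 0$ supported on $\{H > 0\}$ yields the two key relations $\n{H} = H_+ + H_-$ and $\n{H}(Q - P) = H_+ - H_- = H$. With these in hand, the identity~\eqref{eq:variational_principle} follows by a purely algebraic expansion under the trace.

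\textbf{Proof of~\eqref{eq:variational_principle}.} We expand the right-hand side. Since $\op$ and $\opgam$ are self-adjoint,
\begin{equation*}
	\n{\op - \opgam}^2 = (\op - \opgam)^2 = \op^2 - \op\,P - P\,\op + P, \qquad \op\,(\id - \op) = \op - \op^2,
\end{equation*}
and the $\op^2$ contributions cancel when both terms are summed against $\n{H}$. Cyclicity of the trace together with $[\n{H}, P] = 0$ gives $\Tr{\n{H}\,\op\,P} = \Tr{\n{H}\,P\,\op}$, and the sum reduces to
\begin{equation*}
	\Tr{\n{H}\,\op} - 2\,\Tr{\n{H}\,P\,\op} + \Tr{\n{H}\,P} = \Tr{\n{H}\,(Q - P)\,\op} + \Tr{\n{H}\,P}.
\end{equation*}
The identity $\n{H}(Q - P) = H$ rewrites the first term on the right as $\Tr{H\,\op}$, while $\Tr{\n{H}\,P} = \Tr{H_-} = -\Tr{H\,P} = -\Tr{H\,\opgam}$. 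Combining yields $\Tr{H(\op - \opgam)}$, as claimed.

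\textbf{Classical case.} The proof of~\eqref{eq:variational_principle_classical} is parallel and simpler since all objects commute pointwise. Splitting $\cH = \cH_+ - \cH_-$ with $\cH_\pm \geq 0$ supported on $\{\cH > 0\}$ and $\{\cH \leq 0\}$ respectively, and using $\cH_+ f = 0$ together with $\cH_- f = \cH_-$, we have
\begin{equation*}
	\intdd \cH\,(g - f) = \intdd \cH_+\,g + \intdd \cH_-\,(1 - g).
\end{equation*}
On $\{\cH > 0\}$, the relations $f = 0$ and $g \geq 0$ give $g = \n{g - f}$; on $\{\cH \leq 0\}$, the relations $f = 1$ and $g \leq 1$ give $1 - g = \n{g - f}$. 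Therefore the right-hand side equals $\intdd (\cH_+ + \cH_-)\,\n{f - g} = \intdd \n{\cH}\,\n{f - g}$.

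\textbf{Expected obstacle.} The computation boils down to an algebraic identity under the trace, so no serious analytic difficulty is expected. The only point deserving care is to justify the manipulations when $\Tr{H(\op - \opgam)}$ is infinite; in that situation both sides of~\eqref{eq:variational_principle} are $+\infty$ and the identity is trivial, while otherwise all intermediate traces are finite thanks to $0 \leq \op \leq \id$ and the spectral decomposition of $H$.
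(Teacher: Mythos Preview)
Your proof is correct and follows essentially the same approach as the paper: both arguments rest on the spectral identities $H\opgam = -\n{H}\opgam$ and $H(\id-\opgam) = \n{H}(\id-\opgam)$, cyclicity of the trace, and the algebraic cancellation of the $\op^2$ terms. The only cosmetic difference is direction---you expand the right-hand side and reduce it to the left, whereas the paper starts from $\Tr{H(\op-\opgam)}$, inserts $\id = \opgam + (\id-\opgam)$, and rearranges to reach the right-hand side---and your handling of the classical case matches the paper's one-line remark about splitting into the regions $\{f=0\}$ and $\{f=1\}$.
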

	
	\begin{remark}\label{rmk:var_principle_classical}
		The quantum version is reminiscent of the fact that in the classical version, we may write the right-hand side as follows
		\begin{equation*}
			\intdd \cH \(g - f\) = \intdd \n{\cH} \n{f-g}^2 + \intdd \n{\cH} g \(1-g\).
		\end{equation*}
	\end{remark}
	
	\begin{proof}
		Since $\opgam^2 = \opgam$, it follows that
		\begin{align*}
			\Tr{H\(\op-\opgam\)} &= \Tr{H\(\id-\opgam\)\(\op-\opgam\)} + \Tr{H\,\opgam\(\op-\opgam\)}
			\\
			&= \Tr{\n{H} \(\id-\opgam\) \op} + \Tr{\n{H} \opgam\(\id-\op\)}
		\end{align*}
		which can also be written as
		\begin{equation*}
			\Tr{H\(\op - \opgam\)} = \Tr{\n{H} \(\op^2 - \opgam\,\op - \op\,\opgam + \opgam^2\)} + \Tr{\n{H} \(\op - \op^2\)}
		\end{equation*}
		which gives Identity~\eqref{eq:variational_principle}. The classical case is treated analogously by splitting the integral into the regions where $f=0$ and $f=1$. 
	\end{proof}
	
	Let us introduce here the linear version of the functional $\scE$. That is, 
	\begin{equation*}
		\Eps_\op := \Tr{\(-\hbar^2 \Delta + V\)\op} , 
		\quad \text{ and } \quad
		\Eps_f := \intdd \(\n{\xi}^2 + V(x)\) f(x,\xi) \d x \d \xi\,.
	\end{equation*}
	Thanks to the properties of the Wigner transform, we obtain that for any $\op$ and $f$
	\begin{equation}\label{energy:relations}
		\Eps_{\op_f} = \Eps_f
		\quad
		\text{ and }
		\quad
		\Eps_{f_\op} = \Eps_\op \, .
	\end{equation}
	We use the following notation for a classical function $g : \Rd \times \Rd 
	\rightarrow \R$
	\begin{equation*}
		\cH_g(x, \xi) := \n{\xi}^2 - c_d\, \vr_g(x)^\frac{2}{d} \quad \text{ with } \quad c_d = \omega_d^{-\,\frac{2}{d}} \, .
	\end{equation*}
	We also introduce the following notations in order to keep track of the classical and quantum errors
	\begin{align}\label{eq:Q1}
		Q(g) &:= \intdd \n{\cH_g} \n{\indic_{\cH_g \leq 0} - g}
		\\\label{eq:Q2}
		\widetilde Q _\opgam(\op) &:= \hd \Tr{\n{H} \n{\op -\opgam}^2 + \n{H} (\op - \op^2)}
	\end{align}
	for any classical function $0 \leq g \leq 1$, and any operator $0 \leq \op \leq \id$.

	\begin{lem}\label{lem:energy_convexity}
		Let $\cH := \n{\xi}^2 + V(x)$ and $f := \indic_{\cH \leq 0}$. Then, for any function $0 \leq g \leq 1$
		\begin{equation}\label{eq:energy_convexity}
			\Eps_g - \Eps_f = Q(g) + \intd V_+\, \vr_g + \frac{c_d}{p} \intd \vr_g^p - \vr_f^p - p\,\vr_f^{p-1} \(\vr_g - \vr_f\) 
		\end{equation}
		with $p = 1 + \frac{2}{d}$ and $c_d = \omega_d^{-\,2/d}$. In particular, the content of the third integral verifies 
		\begin{equation}\label{eq:Dp_rho}
			\vr_g^p -\vr_f^p - p\,\vr_f^{p-1} \(\vr_g -\vr_f\) \geq \(p-1\) \n{\vr_g^{p/2} - \vr_f^{p/2}}^2.
		\end{equation}
	\end{lem}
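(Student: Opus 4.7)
The plan is to exploit the observation that the ``Thomas--Fermi dual'' $\tilde g := \indic_{\cH_g \leq 0}$ has the same position density as $g$. Indeed, for each fixed $x$, $\tilde g(x,\cdot)$ is the characteristic function of the ball of radius $\omega_d^{-1/d}\vr_g(x)^{1/d}$ in $\Rd$, whose Lebesgue volume is exactly $\vr_g(x)$. Applying the classical variational principle (Lemma~\ref{prop:minmax}, Equation~\eqref{eq:variational_principle_classical}) with $\cH$ replaced by $\cH_g$ and $f$ by $\tilde g$ yields
\begin{equation*}
Q(g) = \iintd \cH_g\bigl(g - \tilde g\bigr) = \iintd \n{\xi}^2 (g - \tilde g) - c_d \intd \vr_g^{2/d}\bigl(\vr_g - \vr_{\tilde g}\bigr),
\end{equation*}
and the second term vanishes by the previous observation. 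A direct integration of $\n{\xi}^2$ over balls of radius $R(x) = \omega_d^{-1/d}\vr_g(x)^{1/d}$ then gives $\iintd \n{\xi}^2 \tilde g = \tfrac{c_d}{p}\intd \vr_g^p$, and hence
\begin{equation*}
\iintd \n{\xi}^2 g = Q(g) + \frac{c_d}{p}\intd \vr_g^p.
\end{equation*}

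Substituting this into $\Eps_g = \iintd \n{\xi}^2 g + \intd V\vr_g$, splitting $V = V_+ - V_-$, and using the pointwise identity $V_- = c_d \vr_f^{p-1}$ --- which follows from $\vr_f = \omega_d V_-^{d/2}$ together with the fact that both sides vanish on $\{V \geq 0\}$ --- lets us rewrite $\intd V_- \vr_g = c_d \intd \vr_f^{p-1}\vr_g$. Specializing the same formula to $g = f$, where $\cH_f = \cH$ on the support of $\vr_f$ and hence $\tilde f = f$ almost everywhere so that $Q(f) = 0$, yields $\Eps_f = -\tfrac{p-1}{p} c_d \intd \vr_f^p$. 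Subtracting $\Eps_g - \Eps_f$ and regrouping via the elementary identity $(p-1)\vr_f^p - p\vr_f^{p-1}\vr_g = -\vr_f^p - p\vr_f^{p-1}(\vr_g - \vr_f)$ produces precisely Equation~\eqref{eq:energy_convexity}.

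For the pointwise inequality~\eqref{eq:Dp_rho}, I would set $a := \vr_g(x)$, $b := \vr_f(x)$ and, assuming $a > 0$, divide through by $a^p$ and introduce $t := (b/a)^{1/2}$. The inequality then reduces to showing
\begin{equation*}
h(t) := (2-p) + 2(p-1)\,t^p - p\,t^{2(p-1)} \geq 0 \qquad \text{for all } t \geq 0,
\end{equation*}
valid for $p = 1 + 2/d \in (1,2]$. Since $h'(t) = 2p(p-1)t^{p-1}\bigl(1 - t^{p-2}\bigr)$ is negative on $(0,1)$ and positive on $(1,\infty)$ (using $p - 2 \leq 0$), the point $t = 1$ is the unique global minimum, and $h(1) = 0$ gives $h \geq 0$. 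I expect the main obstacle is the bookkeeping leading to~\eqref{eq:energy_convexity}: once the Thomas--Fermi dual $\tilde g$ is identified, each manipulation is elementary, but keeping the signs and the rearrangement of the $\vr_f^p$ terms straight requires some care. The pointwise inequality is then just single-variable calculus.
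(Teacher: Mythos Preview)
Your proof is correct and follows essentially the same route as the paper's. Both arguments hinge on the bathtub identity for $\cH_g$, the kinetic formula $\iintd |\xi|^2 \indic_{\cH_g\le 0} = \tfrac{c_d}{p}\intd \vr_g^p$, and the pointwise relation $V_- = c_d\,\vr_f^{p-1}$; you simply organize these by computing $\Eps_g$ and $\Eps_f$ separately before subtracting, whereas the paper manipulates the difference $\Eps_g-\Eps_f$ directly. For the pointwise bound~\eqref{eq:Dp_rho}, the paper reduces it to a one-parameter Young-type inequality $c \le (\tfrac{2}{p}-1)c^p + \tfrac{2}{p'}c^{p/2}$, while you reach the equivalent statement $h(t)\ge 0$ by a direct critical-point argument; both are elementary and yield the same conclusion. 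One minor point: you should also note that the case $a=\vr_g(x)=0$ gives equality in~\eqref{eq:Dp_rho} (both sides equal $(p-1)\vr_f^p$), so the division by $a^p$ is harmless.
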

	
	\begin{proof}
		Notice that, since $\vr_f = \om \,V_-^{d/2}$
		\begin{align*}
			\intdd \(\n{\xi}^2+V\) \(g - f\) &= \intdd \n{\xi}^2 \(g -f\) +
			\intd V \(\vr_g - \vr_f\)
			\\
			&= \intdd \n{\xi}^2 \(g -f\) + \intd V_+\, \vr_g - c_d \intd \vr_f^\frac{2}{d} \(\vr_g - \vr_f\).
		\end{align*}
		Therefore, using the fact that 
		$\intdd f\n{\xi}^2\d x\d\xi = \frac{c_d}{p} \intd \vr_f^p$, it gives
		\begin{multline*}
			\intdd \(\n{\xi}^2+V\) \(g - f\) \d z = \intd V_+\, \vr_g + \intdd \cH_g\, g \d x\d\xi
			\\
			+ \frac{c_d}{p} \intd p\, \vr_g^p -\vr_f^p - p\,\vr_f^{p-1} \(\vr_g - \vr_f\).
		\end{multline*}
		Now using the classical quantitative bathtub principle formula~\eqref{eq:variational_principle_classical} gives
		\begin{equation*}
			\intdd \cH_g \(g - \indic_{\cH_g \leq 0}\)\d x\d\xi = \intdd 
			\n{\cH_{g}} \n{\indic_{\cH_g \leq 0} - g}\d x\d\xi
		\end{equation*}
		and on the other side,
		\begin{equation*}
			\intdd \cH_g\, \indic_{\cH_g \leq 0} \d x\d\xi = -\intdd \(\n{\xi}^2-c_d\, \vr_g^{2/d}\)_- = -\frac{c_d}{p'}\intd \vr_g^p\d y
		\end{equation*}
		where $p'$ is the dual exponent of $p$. This proves Identity~\eqref{eq:energy_convexity}. To get Inequality~\eqref{eq:Dp_rho}, one can use Young's inequality for the product to get that for any $c>0$, $c \leq \(\frac{2}{p}-1\) c^p + \frac{2}{p'}\, c^{p/2}$ which can be written $\(p-1\) \n{c^{p/2}-1}^2 \leq c^p + 1 - p\(c-1\)$, and implies that for any $a,b\geq 0$
		\begin{equation*}
			\(p-1\) \n{b^{p/2}-a^{p/2}}^2 \leq b^p - a^p - p \,a^{p-1} \(b-a\),
		\end{equation*} 
		which implies the result by taking $a = \vr_f$ and $b = \vr_g$.
	\end{proof}

	\begin{cor}[Control via energies]\label{cor:energy}
		Let $H = -\hbar^2\Delta + V$, $\cH = \n{\xi}^2 + V(x)$, $\opgam = \indic_{H\leq 0}$, $f = \indic_{\cH \leq 0}$. 
		Then 
		\begin{equation*}
			Q(f_{\tilde{\opgam}}) + \widetilde Q_\opgam (\op_{\tilde f}) + \frac{c_d}{p'} \intd \n{\vr_{\tilde{\opgam}}^{p/2} - \vr_f^{p/2}}^2 + \intd V_+ \,\vr_{\tilde{\opgam}} \leq \Eps_{\tilde f} -	\Eps_f + \Eps_{\tilde{\opgam}} - \Eps_\opgam \, .
		\end{equation*}
		Here $p = 1 + \frac{2}{d}$, $c_d = \omega_d^{-2/d}$ and $Q$ and $\widetilde Q$ are defined in equations~\eqref{eq:Q1} and~\eqref{eq:Q2}.
	\end{cor}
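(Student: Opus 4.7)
The plan is to combine the classical energy identity of Lemma~\ref{lem:energy_convexity} with the quantum variational principle of Lemma~\ref{prop:minmax}, evaluated respectively at the Husimi transform of $\opgam$ and at the Toeplitz operator built from $f$, and then simply add the two resulting inequalities.

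First, I would apply Lemma~\ref{lem:energy_convexity} with test function $g := f_{\tilde\opgam}$, the Wigner transform of $\tilde\opgam = G_\eps\star\opgam$, which is exactly the Husimi transform of $\opgam$ at scale $\eps$. The admissibility condition $0 \leq g \leq 1$ follows from $0 \leq \opgam \leq \id$ together with the positivity of $\op_{G_\eps}$ and the unit-mass normalization $\intdd G_\eps = 1$ of the two Gaussians making up $G_\eps$. Using the relations $\vr_{f_{\tilde\opgam}} = \vr_{\tilde\opgam}$ and $\Eps_{f_{\tilde\opgam}} = \Eps_{\tilde\opgam}$ from~\eqref{energy:relations}, the pointwise convexity bound~\eqref{eq:Dp_rho}, and the algebraic identity $\frac{c_d(p-1)}{p} = \frac{c_d}{p'}$, the lemma then produces
\begin{equation*}
	\Eps_{\tilde\opgam} - \Eps_f \geq Q(f_{\tilde\opgam}) + \intd V_+ \,\vr_{\tilde\opgam} + \frac{c_d}{p'} \intd \n{\vr_{\tilde\opgam}^{p/2} - \vr_f^{p/2}}^2.
\end{equation*}

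Next, I would apply the quantum variational principle of Lemma~\ref{prop:minmax} with the Toeplitz operator $\op := \op_{\tilde f} = f \star \op_{G_\eps}$. The bound $0 \leq \op_{\tilde f} \leq \id$ again follows from $0 \leq f \leq 1$ together with the positivity and normalization of $\op_{G_\eps}$. Identity~\eqref{eq:variational_principle}, combined with the relation $\Eps_{\op_{\tilde f}} = \Eps_{\tilde f}$ from~\eqref{energy:relations} and the definition~\eqref{eq:Q2} of $\widetilde Q_\opgam$, then gives
\begin{equation*}
	\Eps_{\tilde f} - \Eps_\opgam = \widetilde Q_\opgam(\op_{\tilde f}).
\end{equation*}

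Summing the two displayed estimates yields exactly the asserted inequality. The whole argument reduces to an algebraic rearrangement once the two lemmas are in hand, so I do not expect any serious obstacle; the only place where genuine content enters is the verification of admissibility of the two test objects, namely $0 \leq f_{\tilde\opgam} \leq 1$ and $0 \leq \op_{\tilde f} \leq \id$, which are standard facts of Toeplitz/Husimi calculus for the anisotropic Gaussian $G_\eps$ introduced just above the corollary.
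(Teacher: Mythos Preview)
Your proposal is correct and matches the paper's proof essentially line for line: the paper applies the quantum variational identity~\eqref{eq:variational_principle} with $\op = \op_{\tilde f}$, Lemma~\ref{lem:energy_convexity} with $g = \tilde f_\opgam = f_{\tilde\opgam}$, the energy relations~\eqref{energy:relations}, and the convexity bound~\eqref{eq:Dp_rho}, then sums. Your additional remarks on the admissibility of the test objects ($0 \leq f_{\tilde\opgam} \leq 1$ and $0 \leq \op_{\tilde f} \leq \id$) are a useful explicit check that the paper leaves implicit.
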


	\begin{proof}
		We employ Equation~\eqref{eq:variational_principle} with $\op = \op_{\tilde f}$, Lemma~\ref{lem:energy_convexity} with $g = \tilde f_\opgam$, the relations~\eqref{energy:relations}, and the second inequality in Lemma~\ref{lem:energy_convexity}.
	\end{proof}

	The previous corollary establishes that the energy differences control various quantities of interest. In our next lemma, we establish estimates that will be helpful in the control of the energies. Recall $\sM$ was introduced in Proposition~\ref{prop:linear_local_Weyl_law}.

	\begin{lem}[Energy estimates]\label{lem:energies}
		Take the same notations as in Corollary~\ref{cor:energy}. Then if $V_-\in W^{1,\infty}(\Rd)$,
		\begin{align*} 
			\Eps_{\tilde f} - \Eps_{f} + \Eps_{\tilde \opgam} - \Eps_{\opgam} \leq 
			\frac{h^2}{\eps} \sM + \frac{d \eps}{4\pi} \Nrm{\nabla V_-}{L^\infty} \Nrm{\nabla (\vr_f + \vr_\opgam)}{L^1}
			+ \intd \(g_\eps * V_+\) \(\vr_f + \vr_\opgam\).
		\end{align*}	
			On the other hand, if $V \in W^{2,1}(\Omega)$ where $\Omega $ is an open set containing $\{V \leq 0\}$, then 
		\begin{align*}
			\Eps_{\tilde f} - \Eps_{f} + \Eps_{\tilde \opgam} - \Eps_{\opgam} \leq \frac{h^2}{\eps} \sM + \frac{d\,\eps}{\pi} \Nrm{V}{W^{2,1}(\Omega)} \Nrm{\vr_f + \vr_\opgam}{L^{\infty}} + \intd \(g_\eps * V_+\) \(\vr_f + \vr_\opgam\).
		\end{align*} 
		Additionally, we also have the estimate for the positive part 
		\begin{align*}
			\intd \(g_\eps * V_+\) \vr_f &\leq C_{d, \beta}\,
			\eps\Nrm{e^{- \beta \n{x}} \nabla V_+}{L^\infty} \Nrm{e^{+ \beta \n{x}} \nabla \vr_f}{L^1}
		\end{align*}
		where $C_{d, \beta} = \frac{1}{2} \intd e^{\beta \n{x}} \n{x}^2 g_1 (x) \d x$.
	\end{lem}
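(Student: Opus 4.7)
The plan is to decompose each energy difference into kinetic and potential parts and to exploit the evenness and second moment of the anisotropic Gaussian $G_\eps = g_\eps\otimes g_{h^2/(4\eps)}$. Since $\tilde\opgam = G_\eps\star\opgam$ has Wigner function $G_\eps * f_\opgam$ while $\tilde f = G_\eps * f$, the kinetic shift reduces to
\begin{equation*}
\intdd \n{\xi}^2\,(G_\eps*g - g)(x,\xi)\,\d x\,\d\xi = \frac{d\,h^2}{8\pi\eps}\intdd g,
\end{equation*}
using $\int g_{h^2/(4\eps)}(\eta)\,\eta\,\d\eta = 0$ and $\int \n{\eta}^2\,g_{h^2/(4\eps)}(\eta)\,\d\eta = \frac{dh^2}{8\pi\eps}$. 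Applied to $g = f$ and $g = f_\opgam$ this sums to $\frac{h^2}{\eps}\sM$. For the potential part, $\vr_{\tilde f} = g_\eps * \vr_f$, $\vr_{\tilde\opgam} = g_\eps * \vr_\opgam$, and the evenness of $g_\eps$ give $\int V(g_\eps*\rho - \rho) = \int(g_\eps*V - V)\rho$ with $\rho = \vr_f + \vr_\opgam$. Splitting $V = V_+ - V_-$, the $V_+$ piece yields $\int(g_\eps*V_+)\rho - \int V_+\rho$, and dropping the non-positive $-\int V_+\rho$ leaves exactly the positive-part term in the statement.

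The remainder $\int V_-(\rho - g_\eps*\rho)$ is treated case-by-case. In the first case, representing $\rho - g_\eps*\rho = \int g_\eps(y)\int_0^1 y\cdot\nabla\rho(\cdot - ty)\,\d t\,\d y$ by first-order Taylor on $\rho$ and changing variables $x\mapsto x+ty$, I would split $V_-(x+ty) = V_-(x) + (V_-(x+ty) - V_-(x))$: the first summand gives $V_-(x)\,y\cdot\nabla\rho(x)$, whose $y$-integral vanishes by $\int g_\eps(y)\,y\,\d y = 0$, and the second is bounded by $t\n{y}\Nrm{\nabla V_-}{L^\infty}$, producing an $\n{y}^2$ factor whose Gaussian moment $\int g_\eps\,\n{y}^2\,\d y = \frac{d\eps}{2\pi}$ delivers the constant $\frac{d\eps}{4\pi}$. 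In the second case, $V\in W^{2,1}(\Omega)$ allows a direct second-order Taylor on $V$ which, combined with $\int g_\eps(y)\,y\,\d y = 0$, gives $(g_\eps*V - V)(x) = \int_0^1(1-t)\int g_\eps(y)\,y^\top\nabla^2 V(x-ty)\,y\,\d y\,\d t$ and pairs with $\rho\in L^\infty$ through the same Gaussian moment; the support of $\rho$ sits essentially in $\{V\leq 0\}\subset\Omega$ by Agmon, so the integration stays compatible with $\Omega$.

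For the positive-part estimate on $\vr_f$, by Fubini and the substitution $x\mapsto x+y$, using $V_+\vr_f = 0$ pointwise,
\begin{equation*}
\intd(g_\eps * V_+)\,\vr_f = \intd g_\eps(y)\intd V_+(x)\bigl(\vr_f(x+y) - \vr_f(x)\bigr)\,\d x\,\d y.
\end{equation*}
A first-order Taylor on $\vr_f$ followed by integration by parts in $x$ transfers a gradient onto $V_+$, yielding a term of the form $-\int y\cdot\nabla V_+(x)\,\vr_f(x+ty)\,\d x$. Splitting $\vr_f(x+ty) = \vr_f(x) + (\vr_f(x+ty) - \vr_f(x))$, the $\vr_f(x)$ piece is eliminated after $g_\eps(y)\,y$-integration, and expanding the remainder by another first-order Taylor introduces a second factor $\nabla\vr_f$. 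The resulting $\n{y}^2$ weight combines with the pointwise bound $\n{\nabla V_+(x-sy)}\leq \Nrm{e^{-\beta\n{\cdot}}\nabla V_+}{L^\infty}\,e^{\beta\n{x}}e^{\beta\n{y}}$ (which absorbs the growth of $V_+$) and the weighted Gaussian moment $\int g_\eps(y)\,\n{y}^2\, e^{\beta\n{y}}\,\d y \leq 2 C_{d,\beta}\,\eps$ to close the estimate.

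The main obstacle is exactly this double cancellation: a single Taylor expansion yields only $O(\sqrt\eps)$, so one must combine the disjoint-support identity $V_+\vr_f = 0$, two Taylor/IBP steps acting on both $V_+$ and $\vr_f$, and the vanishing first moment of $g_\eps$ to eliminate leading-order contributions and reach the $O(\eps)$ rate. The $V_-$ analysis in the first case is a one-sided instance of the same mechanism (one Taylor, one cancellation), whereas the second case replaces one Taylor step by a direct Hessian estimate on $V$.
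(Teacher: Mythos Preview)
Your kinetic computation and the split $V=V_+-V_-$ coincide with the paper. For the $V_-$ contribution and for the positive-part estimate on $\vr_f$, the paper organizes the argument via a single symmetrization identity,
\[
\intd (g_\eps*W - W)\,\rho \;=\; \frac12\iintd \bigl(W(x)-W(y)\bigr)\,g_\eps(x-y)\,\bigl(\rho(x)-\rho(y)\bigr)\,\d x\,\d y,
\]
after which one Taylor step on each factor immediately produces the $\n{y}^2$ weight and the constant $\frac{d\eps}{4\pi}$ (respectively $C_{d,\beta}\,\eps$). Your route --- Taylor on $\rho$, translate, split off the $y$-odd part, then Taylor once more --- is an equivalent way to manufacture the same cancellation and reaches the same constants; the symmetrization just makes the mechanism more transparent and avoids the intermediate integration by parts.

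There is a genuine technical gap in your $W^{2,1}$ case. Writing
\[
(g_\eps*V - V)(x) \;=\; \int_0^1(1-t)\intd g_\eps(y)\, y^\top\nabla^2 V(x-ty)\,y\,\d y\,\d t
\]
samples $\nabla^2 V$ over all of $\Rd$ through the Gaussian tail, regardless of where $x$ lies; knowing that $\rho=\vr_f+\vr_\opgam$ is concentrated near $\{V\le 0\}$ via Agmon does not confine this integral to $\Omega$. The paper fixes this by inserting a smooth cutoff $\chi$ equal to $1$ on $\Omega$ and supported in a slightly larger $\Omega_1$: one writes $V=\chi V+(1-\chi)V$, observes that $(1-\chi)V$ is supported in $\{V\ge 0\}$ and hence is absorbed into the $\intd(g_\eps*V_+)\rho$ term, and then applies the second-order Taylor expansion to $\chi V\in W^{2,1}(\Rd)$ with $\Nrm{\nabla^2(\chi V)}{L^1}\le 4\Nrm{V}{W^{2,1}(\Omega_1)}$. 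You should replace the Agmon hand-wave by this cutoff step.
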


	\begin{proof}
		In general,	since $\intdd \n{\xi}^2 G_\eps(z)\d z = \frac{d\,h^2}{8\pi\,\eps}$, $G_\eps$ is even and $\intdd f = M_f$, it follows that
		\begin{equation*}
			\intdd \n{\xi}^2 \tilde{f}(z)\d z = \intdd \n{\xi}^2 f(z)\d z + \frac{d\,h^2}{8\pi\,\eps}\, M_f
		\end{equation*}
		which yields
		\begin{equation*}
			\Eps_{\tilde f} = \Eps_f + \frac{d\,h^2}{8\pi\,\eps}\, M_f + \intd \(g_\eps * V - V\) \vr_f \, .
		\end{equation*}
		An analogous proof in the case of quantum densities yields
		\begin{equation*}
			\Eps_{\tilde{\opgam}} = \Eps_\opgam + \frac{d\,h^2}{8\pi\,\eps} M_\opgam + \intd \(g_\eps * V - V\) \vr_\opgam \, .
		\end{equation*}
		In what follows, we let $\rho = \vr_r + \vr_\opgam$. First, we decompose $V = V_+ - V_-$ and for the part containing $V_-$, we use a symmetrization argument to write 
		\begin{equation*}
			\intd \(g_\eps * V_- - V_-\) \rho = \frac{1}{2} \intdd \(V_-(x) - V_- (y)\) g_\eps(x-y) \(\rho(x) - \rho(y)\) \d x \d y \, .
		\end{equation*}
		First, assume $V_- \in W^{1, \infty}(\Rd)$. Then, a standard Taylor argument shows that 
		\begin{equation*}
			\intd \(g_\eps * V_- - V_-\) \rho \leq \frac{d\,\eps}{4\pi} \Nrm{\nabla V_-}{L^\infty} \Nrm{\nabla \rho}{L^1}.
		\end{equation*}
		
		Secondly, assume $V_- \in W^{2,1}(\Omega)$. We let $\chi \in C_c^\infty(\Rd , [0,1]) $ be so that $\chi = 1 $ on $\Omega$ and $\chi = 0$ on the complement of $\Omega_1 = \{x : \dist(x, \Omega) < 1 \}$. Here, $\chi$ can be chosen so that $\Nrm{\nabla \chi}{L^\infty} \leq 2 $ and $\Nrm{\nabla^2 \chi}{L^\infty} \leq 2$. Write $V = \chi\,V + \(1 - \chi\) V$ so that
		\begin{equation*}
			g* V - V = g* \(\chi\, V\) - \chi\, V + g*\(\(1 - \chi\)V\) - \(1 - \chi\)V \, . 
		\end{equation*}
		Observe that $1 - \chi$ is supported on $\{V \geq 0\}$. Thus for any $\rho \geq 0$
		\begin{equation*}
			\intd \(g* V - V\) \rho \leq \intd \(g* \(\chi\, V\) - \chi\, V\) \rho + \intd \(g* V_+\) \rho \, .
		\end{equation*}
		A second order Taylor expansion for $V_\chi := \chi\, V$ and the fact that $g_\eps$ is even imply 
		\begin{align*}
			V_\chi - 
			g_\eps*V_\chi &= \intd \(V_\chi (x)-V_\chi (x-y)\) g_\eps(y)\d y
			\\
			&= \intd g_\eps(y) \,y\otimes y : \int_0^1 \(1-\theta\) \nabla^2 V_\chi (x-\theta \,y)\d y
		\end{align*}
		where $y\otimes y : \nabla^2 V$ is the double contraction of tensors, that is $y\otimes y : \nabla^2 V = y\cdot \nabla^2 V\cdot y$. The desired Inequality then follows from H\"older's inequality and the product rule
		\begin{equation*}
			\Nrm{\nabla^2 V_\chi}{L^1(\Rd)} \leq\Nrm{\nabla^2 V}{L^1 (\Omega_1)} + 4 \Nrm{\nabla V}{L^1 (\Omega_1)} + 2\Nrm{V}{L^1 (\Omega_1)} \leq 4\Nrm{V}{W^{2,1}(\Omega_1)} 
		\end{equation*}
		which finishes the proof in this case.
		
		Finally, we look at the $V_+$ part for $\vr_f = \omega_d\, V_-^{d/2}$. Let us symmetrize the integrals, change variables $x \mapsto x+y$, and do a Taylor expansion to see 
		\begin{align*}
			\intd g_\eps * V_+ \vr_f &= \frac{1}{2} \intdd \(V_+(y + x) - V_+ (y)\) g_\eps(x) \(\vr_f (y + x) - \vr_f (y)\) \d x \d y
			\\
			&= 
			\frac{1}{2} \int_0^{1} \int_0^1 \intdd x \cdot \nabla V_+ \(y + t_1 x\) g_\eps (x) \, x \cdot \nabla \vr_f \(y + t_2 x\) \d x \d y \d t_1 \d t_2 \, . 
		\end{align*}
		Next, we change $y \mapsto y - t_2\, x $ and use $\n{t_1 - t_2} \leq 1$ to find 
		\begin{equation*}
			\intd g_\eps * V_+ \vr_f \leq \frac{1}{2} \intd \n{x}^2 e^{\beta \n{x}}
			g_\eps(x) \d x \Nrm{e^{-\beta \n{x}}\, \nabla V_+}{L^\infty} \Nrm{e^{\beta \n{y}} \vr_f}{L^1}\,. 
		\end{equation*}
		Thanks to $\eps \leq 1$ we have the bound $\frac{1}{2} \intd e^{\beta \n{x}} \n{x}^2 g_\eps(x) \leq C_{d, \beta} \, \eps$. This finishes the proof. 
	\end{proof}

	A combination of Corollary~\ref{cor:energy} and Lemma~\ref{lem:energies} yields the following result. 

	\begin{prop}\label{prop:energy2:linear}
		Take the same notations as in Corollary~\ref{cor:energy}. Then, for all $h ,\eps >0$ 
		\begin{equation}\label{eq:big_estimate}
			Q(\tilde{f}_\opgam) + \widetilde Q_\opgam(\op_{\tilde f}) + \frac{c_d}{p'} \intd \n{\vr_{\tilde{\opgam}}^{p/2} - \vr_f^{p/2}}^2 \leq
			\frac{h^2}{\eps} \, \sM + \eps \, \D \, .
		\end{equation}
		Here, $\sM$ and $\D$ are defined in Proposition~\ref{prop:linear_local_Weyl_law}, and $p = 1 + \frac{2}{d}$, $c_d = \omega_{d}^{-2 /d}$.
	\end{prop}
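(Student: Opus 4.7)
The plan is to combine Corollary~\ref{cor:energy} with Lemma~\ref{lem:energies} in a direct way. Corollary~\ref{cor:energy} supplies the inequality
\begin{equation*}
	Q(\tilde f_\opgam) + \widetilde Q_\opgam(\op_{\tilde f}) + \frac{c_d}{p'}\intd \n{\vr_{\tilde\opgam}^{p/2} - \vr_f^{p/2}}^2 + \intd V_+\,\vr_{\tilde\opgam} \leq \Eps_{\tilde f} - \Eps_f + \Eps_{\tilde\opgam} - \Eps_\opgam,
\end{equation*}
while Lemma~\ref{lem:energies} bounds the right-hand side by
\begin{equation*}
	\frac{h^2}{\eps}\,\sM + \frac{d\,\eps}{4\pi}\Nrm{\nabla V_-}{L^\infty}\Nrm{\nabla(\vr_f+\vr_\opgam)}{L^1} + \intd (g_\eps * V_+)(\vr_f + \vr_\opgam),
\end{equation*}
(or by the analogue with $\Nrm{V}{W^{2,1}(\Omega_1)}\Nrm{\vr_f+\vr_\opgam}{L^\infty}$ in the $W^{2,1}$ case). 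The task is then to rearrange the resulting inequality into the form $\frac{h^2}{\eps}\sM + \eps\,\D$.

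The key observation is that the $V_+$ terms cancel partially thanks to the term $\intd V_+\,\vr_{\tilde\opgam}$ sitting on the left-hand side of Corollary~\ref{cor:energy}. Since $G_\eps$ is even, integrating out the momentum variable gives $\vr_{\tilde\opgam} = g_\eps * \vr_\opgam$, and the symmetry of $g_\eps$ under $x\mapsto -x$ combined with Fubini yields
\begin{equation*}
	\intd V_+\,\vr_{\tilde\opgam} = \intd (g_\eps * V_+)\,\vr_\opgam.
\end{equation*}
Moving $\intd V_+\,\vr_{\tilde\opgam}$ from the left to the right then kills the $\vr_\opgam$ contribution in the $V_+$ term exactly, leaving only $\intd (g_\eps * V_+)\,\vr_f$, which in turn is controlled by the third (weighted) bound of Lemma~\ref{lem:energies} and produces precisely the $V_+$-part of $\D$.

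Finally, I would take the minimum of the two alternatives available for the $V_-$ gradient term in Lemma~\ref{lem:energies} in order to match the first term in the definition of $\D$, absorbing the harmless factor $4$ that appears when passing from $\frac{d}{4\pi}$ to $\frac{d}{\pi}$ inside the minimum. I do not anticipate a real obstacle: the one non-trivial move is the cancellation through the symmetry of $g_\eps$ just described, and everything else reduces to collecting constants.
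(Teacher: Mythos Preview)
Your proposal is correct and matches the paper's own approach, which it summarizes simply as ``A combination of Corollary~\ref{cor:energy} and Lemma~\ref{lem:energies} yields the following result.'' In particular, your key observation that $\intd V_+\,\vr_{\tilde\opgam} = \intd (g_\eps * V_+)\,\vr_\opgam$ via the evenness of $g_\eps$ is exactly the cancellation needed to reduce the $V_+$ contribution to $\intd (g_\eps * V_+)\,\vr_f$, after which the definitions of $\sM$ and $\D$ absorb the remaining terms.
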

	
	We now use the previous result to get Lebesgue norms estimates. We recall $\Omega_{a,b} = \{x : \dist(x , \{V \leq a \}) < b \}$.

	\begin{lem}\label{lemma:weyl}
		With the same notations as in Lemma~\ref{lem:energies}, Proposition~\ref{prop:energy2:linear}, Proposition~\ref{prop:linear_local_Weyl_law}. Let $d \geq 3$. Then for all $0 < \eps \leq 1$ and $q\in [1,2]$
		\begin{equation}\label{eq:CV_Lq_smoothed_density}
			\Nrm{\vr_f-\vr_{\tilde{\opgam}}}{L^q(\Rd)} \leq \sL_q \(\frac{h^2}{\eps} \sM + \eps\, \D \)^{1/2} . 
		\end{equation} 
		Additionally,
		\begin{align}\label{eq:CV_L1_Husimi1}
			\Nrm{f - \tilde{f}_\opgam}{L^1 (\Rdd)} &\leq \(\sL_1 + \sC_1\) \Big(\frac{h^2}{\eps} \sM + \eps\, \D \Big)^{1/2} + \sC_2 \(h + \eps\)
			\\\nonumber
			\Nrm{\tilde f - f _\opgam}{L^2 (\Rdd)} &\leq \Nrm{f - \tilde{f}_\opgam}{L^1 (\Rdd)}^{1/2}
		\end{align}
		where $\sC_1$ and $\sC_2$ are defined by equations~\eqref{C1} and~\eqref{C2}.
	\end{lem}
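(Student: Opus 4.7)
The plan is to deduce \eqref{eq:CV_Lq_smoothed_density} from Proposition~\ref{prop:energy2:linear}, which controls $\Nrm{\vr_f^{p/2} - \vr_{\tilde{\opgam}}^{p/2}}{L^2}^2$, by a convexity-to-Lebesgue conversion. Since $p/2 = \tfrac{1}{2} + \tfrac{1}{d} \in (\tfrac{1}{2}, 1)$, the mean-value theorem applied to $x \mapsto x^{p/2}$ together with monotonicity of $x^{p/2-1}$ yields
\begin{equation*}
\n{a - b} \leq \tfrac{2}{p}\,(a+b)^{1-p/2}\,\n{a^{p/2} - b^{p/2}}, \qquad a, b \geq 0.
\end{equation*}
Applying this with $a = \vr_f$, $b = \vr_{\tilde{\opgam}}$, and Hölder's inequality with the dual split $\tfrac{1}{q} = \tfrac{1}{2} + \tfrac{1}{r}$, I would bound $\Nrm{\vr_f - \vr_{\tilde{\opgam}}}{L^q}$ by $\tfrac{2}{p}\,\Nrm{(\vr_f + \vr_{\tilde{\opgam}})^{1-p/2}}{L^r}\,\Nrm{\vr_f^{p/2} - \vr_{\tilde{\opgam}}^{p/2}}{L^2}$ and close the second factor via Proposition~\ref{prop:energy2:linear} (the prefactor $\omega_d^{1/d}\sqrt{p'}$ appears from $c_d^{-1} = \omega_d^{2/d}$). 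For $q = 2$ ($r = \infty$), the subadditivity $(u+v)^s \leq u^s + v^s$ for $s\leq 1$ applied to the $L^\infty$ factor produces the form of $\sL_2$. For $q = 1$ ($r = 2$), I would estimate $\int (\vr_f + \vr_{\tilde{\opgam}})^{(d-2)/d}$ by factoring $(\cdot)^{(d-2)/d} = e^{-(d-2)\beta\n{x}/d}\,\bigl(e^{\beta\n{x}}(\cdot)\bigr)^{(d-2)/d}$, applying Hölder with exponents $d/2$ and $d/(d-2)$, and finally transferring the weighted moment from $\vr_{\tilde{\opgam}}$ back to $\vr_\opgam$ by means of the convolution inequality $\int e^{\beta\n{x}}(g_\eps * \vr) \leq \Nrm{e^{\beta\n{x}} g_1}{L^1}\int e^{\beta\n{x}}\vr$, valid for $\eps \leq 1$ thanks to $\n{x}\leq \n{y}+\n{x-y}$ and Gaussian rescaling; this produces the constant $\sL_1$.

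For \eqref{eq:CV_L1_Husimi1}, I would triangulate through the intermediate $\indic_{\cH_{\tilde{f}_\opgam}\leq 0}$. Since both $f = \indic_{\n{\xi}^2 \leq V_-}$ and $\indic_{\cH_{\tilde{f}_\opgam}\leq 0} = \indic_{\n{\xi}^2 \leq c_d \vr_{\tilde{\opgam}}^{2/d}}$ are $\xi$-ball indicators, the $\xi$-integral of their pointwise difference equals $\omega_d\,\n{V_-^{d/2} - c_d^{d/2} \vr_{\tilde{\opgam}}} = \n{\vr_f - \vr_{\tilde{\opgam}}}$ (using $c_d^{d/2} = \omega_d^{-1}$ and $\vr_f = \omega_d V_-^{d/2}$), so the first triangle piece reduces to the $L^1$ density bound from step one. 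For the second piece $\n{\indic_{\cH_{\tilde{f}_\opgam}\leq 0} - \tilde{f}_\opgam}$, I would split the $x$-integration into $\Omega_{1,1}$ and its complement. On $\Omega_{1,1}$, a Markov-type split at threshold $\delta > 0$ bounds the region $\{\n{\cH_{\tilde{f}_\opgam}} > \delta\}$ via $\delta^{-1} Q(\tilde{f}_\opgam)$ (controlled by Proposition~\ref{prop:energy2:linear}), while on the shell $\{\n{\cH_{\tilde{f}_\opgam}} \leq \delta\}$ the $\xi$-slice is an annulus of volume $\omega_d(R_+^d - R_-^d) \leq 2d\,\omega_d\,\delta\,R_+^{d-2}$ with $R_\pm = (c_d\vr_{\tilde{\opgam}}^{2/d} \pm \delta)_+^{1/2}$; splitting $(c_d\vr^{2/d} + \delta)^{(d-2)/2}$ subadditively and using Hölder $\int_{\Omega_{1,1}}\vr_{\tilde{\opgam}}^{(d-2)/d} \leq \n{\Omega_{1,1}}^{2/d}\,M_\opgam^{(d-2)/d}$ yields the two terms of $\sC_1$, the $\delta^{d/2}$ remainder being absorbed after choosing $\delta = h$. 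On $\Omega_{1,1}^c$, I would use the direct bound $\n{\indic_{\cH_{\tilde{f}_\opgam}\leq 0} - \tilde{f}_\opgam} \leq \indic_{\n{\xi}^2 \leq c_d\vr_{\tilde{\opgam}}^{2/d}} + \tilde{f}_\opgam$; both terms integrate in $(x,\xi)$ to a multiple of $\int_{\Omega_{1,1}^c}\vr_{\tilde{\opgam}}$, which the Agmon estimates of Proposition~\ref{prop:Agmon} together with the convolution inequality applied to $\vr_{\tilde{\opgam}} = g_\eps * \vr_\opgam$ render exponentially small in $\hbar$, producing the $\sC_2(h+\eps)$ contribution.

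Finally, the $L^2$ comparison between $\tilde{f}$ and $f_\opgam$ follows by expanding the square and exploiting the self-adjointness of convolution by the symmetric Gaussian $G_\eps$:
\begin{equation*}
\Nrm{\tilde{f} - f_\opgam}{L^2}^2 = \Nrm{\tilde{f}}{L^2}^2 + \Nrm{f_\opgam}{L^2}^2 - 2\,\inprod{f}{\tilde{f}_\opgam}.
\end{equation*}
Since $0 \leq \tilde{f} \leq 1$, $\Nrm{\tilde{f}}{L^2}^2 \leq \int \tilde{f} = M_f$; the Wigner isometry combined with $\opgam^2 = \opgam$ gives $\Nrm{f_\opgam}{L^2}^2 = \hd\Tr{\opgam} = M_\opgam = \int \tilde{f}_\opgam$. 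Regrouping the right-hand side as $\int f(1 - \tilde{f}_\opgam) + \int (1-f)\tilde{f}_\opgam$ and invoking $f \in \{0,1\}$ identifies it with $\int \n{f - \tilde{f}_\opgam}$, giving the stated square-root bound. The technical heart of the argument is the shell measure estimate in the $L^1$ Husimi bound: because $\tilde{f}_\opgam$ has full $x$-support, the interior Hölder estimate on $\Omega_{1,1}$ must be paired with exterior Agmon-type exponential decay, and the Gaussian convolution scale $\eps$ must be tracked carefully through both the density bound and the Agmon estimates to fit the final bound within the stated constants $\sC_1$ and $\sC_2$.
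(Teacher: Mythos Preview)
Your approach matches the paper's almost step for step: the convexity-to-$L^q$ conversion via $|a-b|\le \tfrac{2}{p}\,|a^{p/2}-b^{p/2}|(a^{1-p/2}+b^{1-p/2})$ with H\"older, the triangulation through $\indic_{\cH_{\tilde f_\opgam}\le 0}$, the Markov/shell split on $\Omega_{1,1}$, the Agmon tail on $\Omega_{1,1}^c$, and the convolution duality for the final $L^2$ estimate are all exactly what the paper does. Your version of the last step is in fact slightly more direct than the paper's, which passes through the Toeplitz operator identity $\hd\Tr{\tildop_f+\opgam-2\tildop_f\opgam}=\hd\Tr{|\tildop_f-\opgam|^2+\tildop_f(\id-\tildop_f)}$; your argument via $\|\tilde f\|_{L^2}^2\le M_f$ and $\|f_\opgam\|_{L^2}^2=M_\opgam$ reaches the same conclusion without the operator detour.

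Two points to tighten. First, your threshold choice $\delta=h$ is not what yields the stated bound: balancing the shell term $\sC_1\delta$ against the Markov term $\delta^{-1}Q(\tilde f_\opgam)$ requires $\delta=(\tfrac{h^2}{\eps}\sM+\eps\D)^{1/2}$, which is precisely the paper's choice of $R$. With $\delta=h$ you would get $h^{-1}Q$ on the Markov side, which is too large. Second, the subadditivity $(u+v)^{(d-2)/2}\le u^{(d-2)/2}+v^{(d-2)/2}$ only holds for $d\le 4$; for general $d\ge 3$ the paper instead splits the shell integral into $\{v\le R\}$, where $(v+R)_+^{d/2}\le (2R)^{d/2}$ directly, and $\{v>R\}$, where the mean-value estimate gives $d\,R\,v^{d/2-1}$. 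This is what produces the two pieces of $\sC_1$. Finally, on $\Omega_{1,1}^c$ the paper does not apply Agmon to $\vr_{\tilde\opgam}$ through a bare convolution inequality but splits $\int g_\eps(y)\vr_\opgam(x-y)\,dy$ into $|y|\le 1/2$ (so $x-y\in\Omega_{1,1/2}^c$, where Agmon applies) and $|y|>1/2$ (Gaussian tail of order $\eps$); this is how both the $h$ and $\eps$ contributions to $\sC_2(h+\eps)$ arise.
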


	\begin{proof}
		Notice that all the terms on the left-hand side of inequality~\eqref{eq:big_estimate} are positive, hence they are controlled individually by the sum of the terms on the right-hand side.

		First we prove Inequality~\eqref{eq:CV_Lq_smoothed_density}. Recall for any $a,b\geq 0$ and $\theta\leq 1$, we have the following inequality $\n{a-b} \leq \frac{1}{\theta} \n{a^\theta-b^\theta} \(a^{1-\theta}+b^{1-\theta}\)$. Set $\theta = 1 - p$ with $ p = 1 + \frac{2}{d}$ so that $1 -\theta = \frac{d-2}{2d} > 0$.
		Then, we find thanks to H\"older's inequality 
		\begin{align*}
			\Nrm{\vr- \vr_{\tilde{\opgam}}}{L^1}
			&\leq \frac{2}{p} \Nrm{\vr_f^{p/2}-\vr_{\tilde{\opgam}}^{p/2}}{L^2} \(\Nrm{\vr_f^{\frac{d-2}{2d}}}{L^2} + \Nrm{\vr_{\tilde{\opgam}}^{\frac{d-2}{2d}}}{L^2}\)
			\\
			\Nrm{\vr- \vr_{\tilde{\opgam}}}{L^2} &\leq \frac{2}{p} \Nrm{\vr_f^{p/2}-\vr_{\tilde{\opgam}}^{p/2}}{L^2} \(\Nrm{\vr_f}{L^\infty}^{\frac{d-2}{2d}} + \Nrm{\vr_{\tilde{\opgam}}}{L^\infty}^{\frac{d-2}{2d}}\) .
		\end{align*}
		Observe that Proposition~\ref{prop:energy2:linear} readily gives the desired inequality for the $L^2$ estimate. As for the $L^1$ estimate, we note that for any $\vr \geq 0$ and $\delta > 0$, if $r = \frac{2d}{d-2}$, so that $r ' = d$ and $\frac{1}{r} + \frac{1}{r'} = \frac{1}{2}$, then by H\"older's inequality
		\begin{equation*}
			\Nrm{\vr^\frac{d-2}{2d}}{L^2} \leq \Nrm{e^{- \delta \n{x}}}{L^{r'}} \Nrm{e^{\delta \n{x}} \vr^\frac{d-2}{2d}}{L^r} = \Nrm{e^{- \delta \n{x}}}{L^{d}} \Nrm{e^{\frac{2 d \delta}{d-2}\n{x}}\vr}{L^1}^{\frac{d-2}{2d}} .
		\end{equation*}
		We now choose $\delta = \frac{d-2}{2d}\beta $, and use $\Nrm{e^{\beta \n{x}} \vr_{\tilde \opgam}}{L^p} \leq \Nrm{e^{\beta \n{x}} g_1}{L^1} \Nrm{e^{\beta\n{x}} \vr_{\opgam}}{L^p}$ for $\eps \leq 1 $ to finish the proof of the first estimates.
		
		Secondly, we prove~\eqref{eq:CV_L1_Husimi1}. To this end, we set $v := c_d \,\vr_{\tilde{\opgam}}^{2/d}$ and $\cH_{\tilde f_\opgam} = \n{\xi}^2-v$ and use the triangle inequality
		\begin{equation}\label{eq1}
			\Nrm{f - \widetilde f_\opgam}{L^1} \leq \Nrm{\indic_{\cH_f \leq 0} - \indic_{\cH_{\tilde{f}_\opgam} \leq 0}}{L^1} + \Nrm{\indic_{\cH_{\tilde{f}_\opgam} \leq 0} - \widetilde{f}_\opgam}{L^1}
		\end{equation}
		where we used $f = \indic_{\cH_f} $ with $ \cH_f = \n{\xi}^2 - c_d\,\vr_f^{2/d}$. The first term on the right-hand side of Inequality~\eqref{eq1} can be readily estimated with \eqref{eq:CV_Lq_smoothed_density} with $q=1$ via 
		\begin{equation*}
			\intdd \n{\indic_{\cH_f\leq 0}-\indic_{\cH_{\tilde{f}_\opgam}\leq 0}} = \intd \n{\vr_f^{1/d} - \vr_{\tilde{\opgam}}^{1/d}}^d \leq \intd \n{\vr_f - \vr_{\tilde{\opgam}}}.
		\end{equation*}
		For the second term on the right-hand side of Inequality~\eqref{eq1}, we have 
		\begin{multline*}
			\frac{1}{2}\int_{\Omega_{1,1}^c\times\Rd} \n{\indic_{\cH_{\tilde{f}_\opgam} \leq 0} - \tilde{f}_\opgam} \leq \int_{\Omega_{1,1}^c} \vr_{\tilde{\opgam}}
			\\
			\leq \int_{\Omega_{1,1}^c\times B_{1/2}} \vr_\opgam(x-y)\,g_\eps(y) \d x\d y + \int_{\Omega_{1,1}^c\times B_{1/2}^c} \vr_\opgam(x-y)\,g_\eps(y) \d x\d y
			\\
			\leq \int_{\Omega_{1,1/2}^c\times B_{1/2}} \vr_\opgam(x)\,g_\eps(y) \d x\d y + \frac{M_\opgam}{\eps^{d/2}}\int_{B_{1/2}^c}\,e^{-\frac{\pi\n{y}^2}{\eps}} \d y
			\\
			\leq \int_{\Omega_{1,1/2}^c} \vr_\opgam + C_d ' \,M_\opgam \, \eps
		\end{multline*}
		where $C_d' = 2\(\frac{d+2}{2e\pi}\)^{1+d/2}$. By Agmon's estimates, and more precisely Inequality~\eqref{eq:smallness_out_of_bulk_1}
		\begin{equation*}
			\int_{\Omega_{1,1/2}^c} \vr_\opgam \leq \frac{4}{3}\,e^{-\frac{1}{2\hbar}} \intd \vr_\opgam\,V_-
			\leq \frac{4 h}{3 \pi} M_\opgam\Nrm{V_-}{L^\infty}.
		\end{equation*}
		On the other hand, for any $R>0$, splitting the integral over the region where $\lvert \cH_{\tilde{f}_\opgam}\rvert< R$ and $\lvert \cH_{\tilde{f}_\opgam}\rvert \geq R$ yields
		\begin{equation*}
			\int_{\Omega_{1,1}\times\Rd} \n{\indic_{\cH_{\tilde{f}_\opgam} \leq 0} - \tilde{f}_\opgam} \leq \int_{\Omega_{1,1}\times\Rd} \indic_{\n{\n{\xi}^2 - v}< R} \d x\d\xi + \frac{1}{R}\, Q (\tilde{f}_\opgam) \,.
		\end{equation*}%
		Computing the first integral of this last expression gives
		\begin{multline*}
			\int_{\Omega_{1,1}\times\Rd} \indic_{\n{\n{\xi}^2 - v}< R} \d x\d\xi = \om \int_{\Omega_{1,1}} (v+R)_+^{d/2}-(v-R)_+^{d/2}
			\\
			\leq d \,\om \,R \int_{\Omega_{1,1}} v^{\frac{d}{2}-1} + \n{\Omega_{1,1}} \(2R\)^{d/2}.
		\end{multline*}%
		Using H\"older's inequality to control the integral of $v^{d/2-1}$ by $M_\opgam$, we finally get
		\begin{equation}\label{eq:CV_L1_Husimi_0}
			\intdd \n{\indic_{\cH_{\tilde{f}_\opgam} \leq 0} - \tilde{f}_\opgam}
			\leq M_\opgam \(C'_d \eps + \frac{8 h}{3 \pi}\Nrm{V_-}{L^\infty} h\) + \sC_{1}\, R + \frac{1}{R}\, Q (\tilde{f}_\opgam)
		\end{equation}%
		with $\sC_{1} = d\, \om^\frac{2}{d} \n{\Omega_{1,1}}^\frac{2}{d}\, M_\opgam^{1-\frac{2}{d}} + \n{\Omega_{1,1}} 2^\frac{d}{2}$. Finally choose $R= (\frac{h^2}{\eps} \sM + \eps\, \D)^{1/2} $. The desired estimate then follows from the bound of $Q(\tilde f_{\opgam})$ in Proposition~\eqref{prop:energy2:linear}, and the definition of $\sC_2$.
		
		For the last inequality we observe that 
		\begin{align*}\nonumber
			\intd \n{f-\tilde{f}_\opgam} 
			&= \intdd f\(1-\tilde{f}_\opgam\) + \(1-f\) \tilde{f}_\opgam = \intdd f + \tilde{f}_\opgam - 2 \,f \,\tilde{f}_\opgam
			\\
			&
			= \intdd \tilde{f} + f_\opgam - 2 \,\tilde{f} \,f_\opgam 
			= \hd \Tr{\tildop_f + \opgam - 2\,\tildop_f\,\opgam}
			\\
			&= \hd \Tr{\n{\tildop_f - \opgam}^2 + \tildop_f \(\id-\tildop_f\)}
			\geq \Nrm{\tilde f - f_\opgam}{L^2}^2
		\end{align*}
		where in the third equality we used $\scalar{f}{G* f_\opgam} = \scalar{G* f}{f_\opgam}$ and $\intdd f = \intdd G* f$. 
	\end{proof}
	
	It remains to remove the convolutions by $G_\eps$ using the regularity of $\opgam$ to obtain the proof of the main result of the section.
	
	\begin{proof}[Proof of Proposition~\ref{prop:linear_local_Weyl_law}]
		First, since $\vr_\opgam \in \dot{B}^{1/2}_{2,\infty}$ we get from Inequality~\eqref{eq:CV_Lq_smoothed_density} for $q = 2$, and the elementary bound $\sqrt{a + b} \leq \sqrt a + \sqrt b$, 
		that for some constant $C_d>0$
		\begin{align*}
			\Nrm{\vr_\opgam - \vr_f}{L^2} &\leq \Nrm{\vr_{\tilde \opgam} - \vr_\opgam}{L^2} + \Nrm{\vr_f - \vr_{\tilde \opgam}}{L^2} 
			\\
			&\leq C_d \(\eps^{1/4} \Nrm{\vr_\opgam}{\dot{B}^{1/2}_{2,\infty}} + \sL_2 \sM^{1/2} \frac{h}{\eps^{1/2}} + \sL_2 \D^{1/2} \eps^{1/2}\) .
		\end{align*}
		We minimize the first two terms with respect to $\eps$. That is, we put
		\begin{equation*}
			\eps = \frac{h^{4/3}}{1 + \Nrm{\vr_\opgam}{\dot{B}^{1/2}_{2,\infty}}^{4/3}} \leq 1 \, .
		\end{equation*}
		This gives the bound, for an updated constant $C_d>0$,
		\begin{equation*}
			\Nrm{\vr_\opgam - \vr_f}{L^2} \leq C_d \, h^{1/3} \,\Big(1 + \Nrm{\vr_\opgam}{\dot{B}^{1/2}_{2,\infty}}^{2/3}\Big) \(1 + \sL_2 \sM^{1/2} + \sL_2 \D^{1/2} h^{1/3}\). 
		\end{equation*}

		Secondly, we take $q = 1$ in Inequality~\eqref{eq:CV_Lq_smoothed_density} which gives, for a constant $C_d > 0$
		\begin{align*}
			\Nrm{\vr_\opgam - \vr_f}{L^1} 
			&\leq \Nrm{\vr_{\tilde \opgam} - \vr_\opgam}{L^1} + \Nrm{\vr_f - \vr_{\tilde \opgam}}{L^1} 
			\\
			&
			\leq C_d \(\(\Nrm{\nabla \vr_\opgam}{L^1} + \sL_1 \D^{1/2}\) \eps^{1/2} + \sL_1 \sM^{1/2} \frac{h}{\eps^{1/2}}\). 
		\end{align*}
		We choose $\eps$ as
		\begin{equation*}
			\eps = \frac{h}{1 +\Nrm{\nabla \vr_\opgam}{L^1}} \leq 1 
		\end{equation*}
		which gives, for a new constant $C_d>0$
		\begin{align*}
			\Nrm{\vr_\opgam - \vr_f}{L^1} 
			&\leq C_d \, h^{1/2} \(1 +\Nrm{\nabla \vr_\opgam}{L^1}^{1/2} + \D^{1/2}\) \(1 + \sL_1 + \sM^{1/2}\)
		\end{align*}
		which finishes the proof. 
	\end{proof}

	\begin{proof}[Proof of Proposition~\ref{prop:linear_local_Weyl_law2}]
		The first inequality follows directly from \eqref{eq:CV_L1_Husimi1} by putting $\eps = h$. As a consequence, we obtain also thanks to the triangle inequality and $\tilde{f} = G_\eps * f$ with $f\in \dot{B}_{2,\infty}^{1/2}$ 
		\begin{equation*}
			\Nrm{f-f_\opgam}{L^2} \leq \Nrm{f-\tilde{f}}{L^2} + \Nrm{\tilde{f}-f_\opgam}{L^2} \leq C_d \, h^{1/4} \Nrm{f}{\dot{B}_{2,\infty}^{1/2}} + \Nrm{f - \tilde f_\opgam}{L^1}^{1/2} . 
		\end{equation*}
		Let us now prove the third inequality. It follows from~\cite[Equation~(50)]{lafleche_quantum_2024} that with $\eps = h$ 
		\begin{equation*}
			\Nrm{\tilde{\tilde{\opgam}}-\opgam}{\L^1} \leq C_d \, \sqrt{h} \Nrm{\Dh\opgam}{\L^1} \quad \text{ and } \quad \Nrm{\tildop_g-\op_g}{\L^1} \leq C_d \, \sqrt{h} \Nrm{\Dh\op_g}{\L^1}
		\end{equation*}
		with $C_d = \frac{2d\,\omega_{2d}}{\(2d+1\) \omega_{2d+1}}$ and $\n{\Dh\opgam}^2 = \n{\Dhx\opgam}^2 + \n{\Dhv\opgam}^2$. Therefore, taking $g = \tilde{f}$, it follows by the triangle inequality for the trace norm that
		\begin{equation*}
			\Nrm{\op_f-\opgam}{\L^1} \leq C_d \(\Nrm{\Dh\opgam}{\L^1} + \Nrm{\Dh\op_{\tilde{f}}}{\L^1}\) h^{1/2} + \Nrm{\tildop_f-\tilde{\tilde{\opgam}}}{\L^1}
		\end{equation*}
		which by the properties of the semiclassical convolution implies that
		\begin{equation*}
			\Nrm{\op_f-\opgam}{\L^1} \leq C_d \(\Nrm{\Dh\opgam}{\L^1} + \Nrm{\Dh\op_{\tilde{f}}}{\L^1}\) h^{1/2} + \Nrm{f-\tilde{f}_\opgam}{L^1(\Rdd)}\,. 
		\end{equation*}
		The desired estimate follows by $\Nrm{\Dh\op_{\tilde{f}}}{\L^1} = \Nrm{\tildop_{\nabla f}}{\L^1} \leq \Nrm{\nabla f}{\cM(\Rdd)}$.
	\end{proof}

	Finally, we turn to the proof of Theorem~\ref{thm:weyl:linear}. Let us first argue that the limiting state is in both $BV(\Rdd)$ and $B_{2, \infty}^{1/2}(\Rdd)$. To see this, consider $F (x,\xi) = \indic_{\n{\xi}\leq \rho(x)^{1/d}}$ for some $\rho \geq 0$ with $\rho \in W^{1,1}(\Rd) \cap L^{\frac{d-1}{d}} (\Rd)$. Then, for any $\varphi \in C_c^1(\Rdd)$, we have
	\begin{align*}
		\intdd F(x,\xi) \, \Dx \varphi(x,\xi) \d x \d \xi &= \frac{-1}{d} \intd \nabla \rho(x) \( \int_{\SS^{d-1}} \varphi(x, \rho(x)^{1/d}\sigma) \d \sigma\) \d x
		\\
		\intdd F(x,\xi) \Dv \varphi(x,\xi) \d x \d \xi &= - \intd \rho(x)^{\frac{d-1}{d}} \( \int_{\SS^{d-1}} \varphi(x,\rho(x)^{1/d}\sigma) \d \sigma\) \d x \ . 
 	\end{align*}
	Thus, $ \nabla F \in \mathcal M$ and so $F$ is $BV$. In addition, since $F$ is a characteristic function, it follows that $F \in B_{p, \infty}^{1/p}$ for all $1 \leq p < \infty$. See e.g.~\cite[Theorem 2]{sickel_regularity_2021}.

	\begin{proof}[Proof of Theorem~\ref{thm:weyl:linear}]
		Let $V$ satisfy \eqref{hyp:V} with parameter $\beta \geq 0$ in $d=3$. In view of Proposition~\ref{prop:linear_local_Weyl_law} and~\ref{prop:linear_local_Weyl_law2}, it suffices only to estimate the quantities $\sM$, $\sL_q$ and $\D$. 
		To this end, recall $\vr_f = \omega_d \, V_-^{3/2} \in W^{1,1}(\R^3)$ is  continuous and compactly supported.
		Thus,   $f \in B_{2, \infty}^{1/2} $ and $ \nabla f \in \mathcal M$. Without loss of generality, we assume $\beta \leq 1 /8 \hbar$. Otherwise, we may proceed as in the proof of Theorem~\ref{thm:commutator:linear}.
		
		First, thanks to the CLR bound~\eqref{eq:CLR} and the compact support of $\vr_f$, it is clear that $\sM$ is bounded uniformly in $\hbar$. Additionally, thanks to Agmon's estimates~\eqref{exp:decay}, we have $\Nrm{e^{\beta \n{x}} \vr_\opgam}{L^1}\leq C$ uniformly in $\hbar$. From Lemma~\ref{lem:Linfty} we also get that \begin{equation*}
			\Nrm{\vr_\opgam}{L^\infty} \leq C \(1 +\Nrm{V_-}{L^\infty} +\Nrm{V e^{-\beta \n{x}}}{L^\infty} \Nrm{e^{\beta \n{x}} \opgam}{\L^\infty}\).
		\end{equation*}
		Thanks to Proposition~\ref{prop:Agmon}, we can control $\Nrm{e^{\beta \n{x}} \opgam}{\L^\infty}$. Thus, $\vr_\opgam \in L^\infty$ uniformly in $\hbar$ and we conclude that $\sL_q$ is controlled for $q=1$ and $q=2$. Finally, for $\D$ we note that if $V$ satisfies~\eqref{hyp:V}, then, thanks to $\vr_f \in W^{1,1}(\R^3)$ and $\n{\nabla V_\pm} \leq \n{\nabla V}$, we get
		\begin{align*}
			\D &\leq C \(1 +\Nrm{\nabla V_-}{L^\infty} + \Nrm{e^{- \beta \n{x}} \nabla V_+}{L^1}\) \(1 +\Nrm{\nabla \vr_f}{L^1}\) \(1 +\Nrm{\nabla \vr_\opgam}{L^1}\)
			\\
			&\leq C \(1 +\Nrm{\nabla \vr_\opgam}{L^1}\)
		\end{align*}
		for some constant $C >0$ independent of $\hbar$. Finally, we recall that thanks to Theorem~\ref{thm:commutator:linear}, we get $\Nrm{\nabla \vr_\opgam}{L^1} \leq C \n{\ln \hbar}$. Thus, $\D^{1/2} h^{1/3}\leq C$ for all $\hbar \in (0,1)$. This finishes the proof.
	\end{proof}

\section{Regularity of the position density}
	
	In this section we establish further regularity properties of the position density $\vr_\opgam$. These estimates will be particularly helpful when we study interacting particle systems with Coulomb potentials. First, we note that it follows from the decay of $\opgam$ in the $\n{\opp}$ direction that $\hbar\,\nabla\vr_\opgam$ is in $L^\infty$ uniformly in $\hbar$, as the following lemma shows.
 
	\begin{lem}\label{lem:rho_inf}
		In dimension $d\leq 3$,
		\begin{align*}
			\Nrm{\nabla\vr_\opgam}{L^\infty} &\leq \frac{C_d}{\hbar} \(1+\Nrm{\n{\opp}^4\opgam}{\L^\infty}^{5/4}\)
			\\
			\Nrm{\nabla^3\vr_\opgam}{L^\infty} &\leq \frac{4\,C_d}{\hbar^3} \(1+\Nrm{\n{\opp}^6\opgam}{\L^\infty}^{7/6}\) ,
		\end{align*}
		with $C_d = 4 \intd \weight{x}^{-4}\d x = \frac{8\,\pi^2}{\(4-d\) \omega_{4-d}}$. Additionally, in dimension $d \geq 3$
		\begin{align*}
			\Nrm{\nabla\vr_\opgam}{L^1} &\leq
			\frac{C_d}{\hbar} \Nrm{\opp \opgam}{\L^\infty}
			\\
			\Nrm{\nabla^3\vr_\opgam}{L^1} &\leq \frac{C_d}{\hbar^3} \(\Nrm{\opp^3 \opgam}{\L^\infty} + \Nrm{\opp \opgam}{\L^\infty} \Nrm{\opp^2 \opgam}{\L^\infty}\)
		\end{align*}
		with $C_d = 16 \, \sL_{0,d} $. If $V$ has four locally bounded derivatives, the last factor of the right-hand side of both inequalities can be controlled thanks to Lemma~\ref{lem:weight_4} and Lemma~\ref{lem:weight_6}.
	\end{lem}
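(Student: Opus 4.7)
The plan is to diagonalize $\opgam = \sum_{\lambda_j \leq 0} \ket{\psi_j}\bra{\psi_j}$ in an orthonormal basis of eigenfunctions of $H$, so that $\vr_\opgam(x) = \hd \sum_j |\psi_j(x)|^2$, and to apply the Leibniz rule. For the first derivative this gives $\nabla \vr_\opgam = 2\hd \re \sum_j \overline{\psi_j}\nabla\psi_j$, while for the third,
\begin{equation*}
\nabla^3\vr_\opgam = 2\hd \re \sum_j \overline{\psi_j}\otimes\nabla^3\psi_j + 6\hd \re \sum_j \nabla\overline{\psi_j}\otimes \nabla^2\psi_j,
\end{equation*}
so the task reduces to estimating sums of the form $\hd\sum_j |\nabla^b\psi_j||\nabla^{a-b}\psi_j|$.

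For the $L^\infty$ bounds in $d \leq 3$, I would apply pointwise Cauchy--Schwarz to bound these sums by $\hbar^{-a}\sqrt{\vr_{\opp^b\cdot\opgam\opp^b}(x)\,\vr_{\opp^{a-b}\cdot\opgam\opp^{a-b}}(x)}$, where $\opp^k\cdot\opgam\opp^k$ denotes the scalar contraction of the tensor. Each factor is then controlled in the spirit of Lemma~\ref{lem:Linfty}: from the duality $\Nrm{F}{L^\infty} = \sup_{\Nrm{u}{L^1}=1}|\int uF|$ and the identity $\intd u\,\vr_{\opp^k\cdot\opgam\opp^k} = \Nrm{\sqrt u\,\opp^k\opgam}{\L^2}^2$, writing $\sqrt u\,\opp^k\opgam = (\sqrt u\,\weight{\opp}^{-2})(\weight{\opp}^2\opp^k\opgam)$ and applying H\"older for Schatten norms yields
\begin{equation*}
\Nrm{\vr_{\opp^k\cdot\opgam\opp^k}}{L^\infty} \leq C_d\Nrm{\weight{\opp}^2\opp^k\opgam}{\L^\infty}^2 \leq C_d\(\Nrm{|\opp|^k\opgam}{\L^\infty} + \Nrm{|\opp|^{k+2}\opgam}{\L^\infty}\)^2,
\end{equation*}
with $C_d = \intd\weight{\xi}^{-4}\d\xi$, finite for $d < 4$. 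The final step is the interpolation inequality $\Nrm{|\opp|^s\opgam}{\L^\infty} \leq \Nrm{|\opp|^N\opgam}{\L^\infty}^{s/N}$ for $0 \leq s \leq N$, obtained by applying Hadamard's three-lines theorem to the analytic family $z\mapsto |\opp|^{Nz}\opgam$ (using that $|\opp|^{iy}$ is unitary via the spectral theorem and that $\Nrm{\opgam}{\L^\infty}\leq 1$). Taking $N = 4$ for the first derivative, the combined scaling exponent is $1/2 + 3/4 = 5/4$; taking $N=6$ for the third, both pairings yield $1/3 + 5/6 = 1/2 + 2/3 = 7/6$, recovering the stated powers.

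For the $L^1$ bounds in $d\geq 3$, I would instead apply Cauchy--Schwarz on each $\psi_j$ separately: $\intd |\nabla^b\psi_j||\nabla^c\psi_j|\,\d x \leq \Nrm{\nabla^b\psi_j}{L^2}\Nrm{\nabla^c\psi_j}{L^2} = \hbar^{-(b+c)}\Nrm{\opp^b\psi_j}{L^2}\Nrm{\opp^c\psi_j}{L^2}$. Since $\opgam\psi_j = \psi_j$ and $\Nrm{\psi_j}{L^2} = 1$, each norm is bounded by $\Nrm{\opp^b\opgam}{\L^\infty}$. Summing over $j$ yields the number of eigenvalues $N = \tr\opgam$, and multiplying by $\hd$ gives $\hd N = \Nrm{\vr_\opgam}{L^1}$, controlled via the CLR inequality~\eqref{eq:CLR}; this is the source of the factor $\sfL_{0,d}$ in $C_d$. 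The main obstacle is the operator interpolation bound $\Nrm{|\opp|^s\opgam}{\L^\infty} \leq \Nrm{|\opp|^N\opgam}{\L^\infty}^{s/N}$, needed to sharpen the exponents to $5/4$ and $7/6$; without it, only crude versions of the stated estimates are produced.
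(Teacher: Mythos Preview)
Your approach is correct and essentially matches the paper's. The paper works by duality directly on the trace expression $\hd\Tr{[\opp,\opgam]\varphi}$, inserting $\weight{\opp}^{\pm 2}$ and using $\opgam^2=\opgam$ to arrive at the product $\Nrm{\weight{\opp}^2\opp\,\opgam}{\L^\infty}\Nrm{\opgam\weight{\opp}^2}{\L^\infty}\Nrm{\weight{\opp}^{-2}\sqrt{|\varphi|}}{\L^2}^2$; your diagonalization plus pointwise Cauchy--Schwarz route lands on the same product after one extra square root. For the interpolation step, the paper invokes the Heinz inequality (equivalently Araki--Lieb--Thirring) rather than Hadamard three-lines, but the content is identical. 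For the $L^1$ bounds the paper also argues by duality, bounding $\hd\Tr{[\opp,\opgam]\varphi}$ by $\Nrm{\opp\,\opgam}{\L^1}\leq \Nrm{\opp\,\opgam}{\L^\infty}\Nrm{\opgam}{\L^1}$ and then CLR, and for the third derivative expands the triple commutator $[\opp,[\opp,[\opp,\opgam]]]$; your eigenfunction-by-eigenfunction Cauchy--Schwarz reproduces exactly the same terms.

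One clarification: your closing sentence about needing interpolation ``to sharpen the exponents to $5/4$ and $7/6$'' belongs to the $L^\infty$ part only. The $L^1$ bounds as stated carry no fractional exponents---they are $\Nrm{\opp\,\opgam}{\L^\infty}$ and $\Nrm{\opp^3\opgam}{\L^\infty}+\Nrm{\opp\,\opgam}{\L^\infty}\Nrm{\opp^2\opgam}{\L^\infty}$---and your own $L^1$ argument produces them without any interpolation, so there is no obstacle there.
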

 
	\begin{proof}
		We proceed by duality. Let $\varphi\in C^\infty_c(\Rd)$ be such that $\Nrm{\varphi}{L^1} = 1$. Then
		\begin{equation*}
			\n{\intd \nabla \vr_\opgam\, \varphi} = \n{\frac{i}{\hbar}\,\hd \Tr{\opp \,\opgam\,\varphi - \opgam\,\opp\,\varphi}} \leq \frac{2}{\hbar} \,\hd \n{\Tr{\opp \,\opgam\,\varphi}} .
		\end{equation*}
		Moreover, by the cyclicity of the trace, H\"older's inequality and the fact that $\frac{\varphi}{\n{\varphi}} \,\indic_{\varphi\neq 0}$ is a bounded multiplication operator with norm $1$, it follows that
		\begin{align*}
			\hd\n{\Tr{\opp \,\opgam\,\varphi}} &= \n{\hd\Tr{\weight{\opp}^2\opp \,\opgam \weight{\opp}^2 \weight{\opp}^{-2}\,\varphi \weight{\opp}^{-2}}}
			\\
			&\leq \Nrm{\weight{\opp}^2\opp \,\opgam}{\L^\infty} \Nrm{\opgam \weight{\opp}^2}{\L^\infty} \Nrm{\weight{\opp}^{-2} \sqrt{\n{\varphi}}}{\L^2}^2 .
		\end{align*}
		Moreover, it follows from H\"older's inequality for Schatten norms together with the Araki--Lieb--Thirring inequality (or more precisely the Heinz inequality~\cite{heinz_beitrage_1951}) that for any $k\in[0,4]$,
		\begin{equation*}
			\Nrm{\n{\opp}^k\opgam}{\L^\infty} \leq \Nrm{\n{\opp}^4\opgam}{\L^\infty}^{k/4}.
		\end{equation*}
		Putting all these inequalities together yields 
		\begin{equation*}
			\n{\intd \nabla \vr_\opgam\, \varphi} \leq \frac{2}{\hbar}\(\Nrm{\n{\opp}^4\opgam}{\L^\infty}^{1/4}+\Nrm{\n{\opp}^4\opgam}{\L^\infty}^{3/4}\)\(1+\Nrm{\n{\opp}^4\opgam}{\L^\infty}^{1/2}\) \Nrm{\weight{x}^{-2}}{L^2}^2
		\end{equation*}
		where we used the fact that $\Nrm{\varphi}{L^1} = 1$. This gives the first inequality. The second inequality follows analogously.
		
		For the second part, we proceed analogously with $\varphi \in C_c^\infty$ so that $\Nrm{\varphi}{L^\infty}=1 $. We get thanks to the CLR bound~\eqref{eq:CLR}
		\begin{equation*}
			\n{\intd \nabla \vr_\opgam \varphi} \leq \frac{2}{\hbar} \Nrm{\opp \, \opgam}{\L^1} \leq \frac{2\, \sL_{0,d}}{\hbar} \Nrm{\opp\, \opgam}{\L^\infty} . 
		\end{equation*}
		For the three derivatives, we expand the commutator (here understood as tensors)
		\begin{equation*}
			\com{\opp, \com{\opp, \com{\opp, \opgam}}} = \opp^3\, \opgam - 3\, \opp^2\, \opgam\, \opp + 
			3\, \opp\, \opgam\, \opp^2 - \opgam\, \opp^3
		\end{equation*}
		and proceed analogously. 
	\end{proof}

	In particular it follows from the above lemma and Inequality~\eqref{eq:rho_W11} that for any $p\in[1,\infty]$,
	\begin{equation*}
		\hbar^{1/p'} \Nrm{\nabla\vr_\opgam}{L^p} \leq C_d^{1/p'} \(1+\sNrm{\n{\opp}^4\opgam}{\L^\infty}^{5/4}\)^{1/p'} \Nrm{\Dhx\opgam}{\L^1}^{1/p}
	\end{equation*}
	which is bounded uniformly in $\hbar$ whenever $\sNrm{\n{\opp}^4\opgam}{\L^\infty}$ and $\Nrm{\Dhx\opgam}{\L^1}$ are. However, in the case of nonlinear interactions with a Coulomb potential, we can only estimate $\Nrm{\Dhx\opgam}{\L^1}$ by a term diverging like $\n{\ln \hbar}^{1/2}$. We are still able to prove however that $\hbar^{1/p'} \Nrm{\nabla\vr_\opgam}{L^p}$ is bounded uniformly in $\hbar$ for $p\in[2,\infty]$ by considering weighted Hilbert--Schmidt commutator estimates. This is the content of the next proposition.

	\begin{prop}[$L^2$ regularity of $\vr_\opgam$]\label{prop:regu_rho_L2}
		Let $\hbar \in (0,1)$ and $\opgam = \indic_{H \leq 0}$ with $H=-\hbar^2\Delta+V$ in $d = 3$. Assume that $V$ verifies~\eqref{hyp:V} and, additionally, that there exists $C>0$ and $\beta \geq 0$ independent of $\hbar$ such that for any $0\leq n \leq 5$
		\begin{equation*}\tag{H4}\label{hyp:V4}
			e^{- \beta \n{x}} \n{\nabla^n V(x)} \leq C \, \hbar^{-(n - 1)_+} \, .
		\end{equation*}
		Then there exists a constant $C>0$ such that for any $x\in\R^3$,
		\begin{equation*}
			\Nrm{T_x\vr_\opgam-\vr_\opgam}{L^2} \leq C \min\!\(\tfrac{\n{x}}{\hbar}, \sqrt{\n{x}}, 1\) . 
		\end{equation*}
		In particular, for any $s<1/2$,
		\begin{equation*}
			\sqrt{\hbar}\Nrm{\nabla\vr_\opgam}{L^2}, \qquad \Nrm{\vr_\opgam}{B^{1/2}_{2,\infty}} \quad \text{ and } \quad \Nrm{\vr_\opgam}{H^s}
		\end{equation*}
		are bounded uniformly in $\hbar$.
	\end{prop}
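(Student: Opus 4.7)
The plan is to establish the pointwise bound $\Nrm{T_x\vr_\opgam-\vr_\opgam}{L^2} \leq C \min(\n{x}/\hbar, \sqrt{\n{x}}, 1)$ by proving each of the three branches of the minimum separately, and then deduce the Besov, Sobolev, and gradient conclusions as corollaries. First I would record the uniform $L^p$ bounds the argument relies on. Under \eqref{hyp:V4} (which subsumes \eqref{hyp:V}), Lemma~\ref{lem:Linfty} combined with the Agmon estimates of Proposition~\ref{prop:Agmon} gives $\Nrm{\vr_\opgam}{L^\infty} \leq C$, and the Cwikel--Lieb--Rozenblum inequality~\eqref{eq:CLR} gives $\Nrm{\vr_\opgam}{L^1} \leq C$, both uniformly in $\hbar$. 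For the gradient, Lemma~\ref{lem:weight_4} yields $\Nrm{\n{\opp}^4\opgam}{\L^\infty} \leq C$ because \eqref{hyp:V4} keeps $\hbar^2\,\Delta V \opgam$ bounded in $\L^\infty$, and Lemma~\ref{lem:rho_inf} then delivers $\Nrm{\nabla\vr_\opgam}{L^\infty} \leq C/\hbar$. Finally, Theorem~\ref{thm:commutator:linear} gives an $L^1$ bound on $\nabla\vr_\opgam$, uniform (perhaps up to a logarithm) in $\hbar$.

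With these in hand I would establish the three-part estimate as follows. When $\n{x} \geq 1$, the triangle inequality together with interpolation $\Nrm{\vr_\opgam}{L^2} \leq \Nrm{\vr_\opgam}{L^1}^{1/2}\Nrm{\vr_\opgam}{L^\infty}^{1/2} \leq C$ gives the bound $\Nrm{T_x\vr_\opgam-\vr_\opgam}{L^2} \leq 2\Nrm{\vr_\opgam}{L^2} \leq C$. When $\n{x} \leq 1$, H\"older interpolation between $L^1$ and $L^\infty$ yields $\Nrm{T_x\vr_\opgam-\vr_\opgam}{L^2}^2 \leq \Nrm{T_x\vr_\opgam-\vr_\opgam}{L^1}\,\Nrm{T_x\vr_\opgam-\vr_\opgam}{L^\infty}$, which is controlled by $(\n{x}\Nrm{\nabla\vr_\opgam}{L^1})\cdot 2\Nrm{\vr_\opgam}{L^\infty} \leq C\n{x}$, giving the $\sqrt{\n{x}}$ branch. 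For the $\n{x}/\hbar$ branch, the fundamental theorem of calculus yields $\Nrm{T_x\vr_\opgam-\vr_\opgam}{L^2} \leq \n{x}\Nrm{\nabla\vr_\opgam}{L^2}$, and Cauchy--Schwarz interpolation $\Nrm{\nabla\vr_\opgam}{L^2}^2 \leq \Nrm{\nabla\vr_\opgam}{L^1}\Nrm{\nabla\vr_\opgam}{L^\infty} \leq C/\hbar$ then gives $\Nrm{T_x\vr_\opgam-\vr_\opgam}{L^2} \leq C\n{x}/\sqrt{\hbar} \leq C\n{x}/\hbar$ since $\hbar \leq 1$.

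The stated consequences are then immediate. The Besov seminorm $\Nrm{\vr_\opgam}{\dot B^{1/2}_{2,\infty}} = \sup_{x\neq 0} \Nrm{T_x\vr_\opgam-\vr_\opgam}{L^2}/\sqrt{\n{x}}$ is bounded by combining the $\sqrt{\n{x}}$ and $1$ branches, and adding the uniform $\Nrm{\vr_\opgam}{L^2}$ bound recovers $\Nrm{\vr_\opgam}{B^{1/2}_{2,\infty}}$. The $H^s$ bound for $s < 1/2$ follows from the embedding $B^{1/2}_{2,\infty} \hookrightarrow H^s$ recalled in the introduction. The gradient bound $\sqrt{\hbar}\,\Nrm{\nabla\vr_\opgam}{L^2} \leq C$ is a direct restatement of the interpolation $\Nrm{\nabla\vr_\opgam}{L^2} \leq C/\sqrt{\hbar}$ used in the third branch.

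The main technical subtlety is obtaining $\Nrm{\nabla\vr_\opgam}{L^1} \leq C$ \emph{uniformly in $\hbar$} under \eqref{hyp:V4}. Since \eqref{hyp:V4} only bounds $\nabla^2 V$ by $\hbar^{-1}$, a direct application of Theorem~\ref{thm:commutator:linear} would produce a logarithmic loss, and removing this loss is what forces us to exploit the higher-order hypothesis $\hbar^4\,e^{-\beta\n{x}}\,\nabla^5 V \in L^\infty$. The cleanest route is to establish a weighted Hilbert--Schmidt commutator estimate of the form $\Nrm{\weight{\opp}^s(\sfT_{(x,0)}\opgam-\opgam)\weight{\opp}^s}{\L^2}$ paralleling Proposition~\ref{prop:comm_HS} and Theorem~\ref{thm:comm:Besov} but with $\weight{\opp}^s$ weights, which by duality (as in the proof of Lemma~\ref{lem:Linfty}) controls $\Nrm{T_x\vr_\opgam-\vr_\opgam}{L^2}$; the higher derivatives of $V$ enter when tracking the weights through the double commutator $\com{\com{\tau_z,H},H}$ and through Lemma~\ref{lem:weight_6}. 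This is where I would expect the proof to invest the most care.
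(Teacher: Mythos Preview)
Your final paragraph is where the actual proof lives; the main body takes a route that cannot close. The $L^1$--$L^\infty$ interpolation you use for both the $\sqrt{\n{x}}$ and $\n{x}/\hbar$ branches hinges on a uniform-in-$\hbar$ bound for $\Nrm{\nabla\vr_\opgam}{L^1}$, but under \eqref{hyp:V4} the quantity $e^{-\beta\n{x}}\nabla^2 V$ is only bounded by $C/\hbar$, so Theorem~\ref{thm:commutator:linear} yields at best $\Nrm{\nabla\vr_\opgam}{L^1} \leq C\n{\ln\hbar}$. This logarithmic loss contaminates both branches and is precisely what the proposition is designed to circumvent (the discussion immediately preceding its statement makes this explicit). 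Crucially, the weighted Hilbert--Schmidt estimate you sketch at the end does \emph{not} recover the missing $L^1$ bound; it bypasses the $L^1$ route entirely and delivers the $L^2$ control on $T_x\vr_\opgam - \vr_\opgam$ directly via duality. So you should drop the interpolation argument and make the weighted estimate the whole proof.

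This is exactly how the paper proceeds. It first proves a general weighted Hilbert--Schmidt bound (Proposition~\ref{prop:comm_HS_weight}) for $\Tr{\n{\com{A,\opgam}}^2 m}$ with weight $m = \n{\opp}^4$. For $\n{x} \leq \hbar$ it takes $A = \nabla$ (Lemma~\ref{lem:comm_HS_weight_Dx}), and for $\hbar \leq \n{x} \leq 1$ it takes $A = \tau_x$ with the parameter $\lambda = \n{x}$ (Lemma~\ref{lem:comm5}), obtaining $\hd\Tr{\n{\sfT_x\opgam-\opgam}^2\n{\opp}^4} \leq C\n{x}$. The passage to $\Nrm{T_x\vr_\opgam-\vr_\opgam}{L^2}$ is then the duality step you anticipated, namely $\n{\int(T_x\vr_\opgam-\vr_\opgam)\,\varphi} \leq \Nrm{(\sfT_x\opgam-\opgam)\weight{\opp}^2}{\L^2}\,\Nrm{\weight{\opp}^{-2}\varphi}{\L^2}$, the second factor being finite in $d = 3$ exactly as in Lemma~\ref{lem:Linfty}. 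The constants that arise involve commutators $\com{\n{\opp}^k,V}\opgam$ for $k \leq 5$, which are handled in Lemma~\ref{lemma:alt} using \eqref{hyp:V4} together with the Agmon-type weighted momentum bounds of Lemmas~\ref{lem:agmon_gradients}--\ref{lem:weight_6}; this is where the full strength of the higher-derivative hypothesis enters. Your instinct in the last paragraph is right, but it is the entire argument, not a patch on the interpolation.
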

	
	One of the main ingredients of the proof is the following weighted variant of Proposition~\ref{prop:comm_HS} for Hilbert--Schmidt estimates. 
 
	\begin{prop}[Weighted Hilbert--Schmidt bounds]\label{prop:comm_HS_weight}
		Let $\opgam = \indic_{H\leq 0}$ and $H = -\hbar^2 \Delta + V$ with $d \geq 1$. In addition, let $A$ be a normal operator on $L^2(\Rd)$, and let $m$ be an operator in $L^2(\Rd)$ with polynomial growth in $x$ or $\opp$. Then for any $\lambda >0$,
		\begin{multline*}
			\Tr{\n{\com{A,\opgam}}^2 m} \leq \sum_{\ii=1}^3 \Nrm{\indic_{H \geq \lambda} \com{\opb_\ii ,H} \(\lambda - H\)^{-1}\opgam}{2} \Nrm{\indic_{H \geq \lambda} \com{\opc_\ii ,H} \(\lambda - H\)^{-1}\opgam}{2}
			\\
			+ \Tr{\indic_{0< H< \lambda} \(A^*\opgam\, A\, m + m\,A^* \opgam\, A - m\,\opgam \n{A}^2\)} + \Tr{\com{A^*,m} \opgam\, A}
		\end{multline*}
		where $(\opb_\ii,\opc_\ii)_{\ii=1\dots 3} = \((A^*,A\,m),(m\,A^*,A),(-m,\n{A}^2)\)$.
	\end{prop}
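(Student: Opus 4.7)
My plan is to extend the eigenbasis strategy of Proposition~\ref{prop:comm_HS} while keeping the weight $m$ under careful control. Starting from the normality of $A$ and $\opgam^2 = \opgam$, I would first establish the identity
\begin{equation*}
|\com{A,\opgam}|^2 = \opgam A^*(\id-\opgam)A\opgam + (\id-\opgam)A^*\opgam A(\id-\opgam),
\end{equation*}
so that
\begin{equation*}
\Tr{|\com{A,\opgam}|^2 m} = \Tr{\opgam A^*(\id-\opgam)A\opgam m} + \Tr{(\id-\opgam)A^*\opgam A(\id-\opgam)m}.
\end{equation*}
I would then split $\id - \opgam = \indic_{0<H<\lambda} + \indic_{H\geq\lambda}$, producing from each trace a low-energy contribution (containing at least one factor $\indic_{0<H<\lambda}$) and a purely high-energy contribution (both cutoff factors equal to $\indic_{H\geq\lambda}$).

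The low-energy pieces are reassembled using cyclicity of the trace and the orthogonality $\opgam\,\indic_{0<H<\lambda} = 0$. When moving $m$ past $A^*$ during this rearrangement one generates residual commutators of the form $\com{A^*,m}$; summing the residuals exactly yields the local trace $\Tr{\indic_{0<H<\lambda}(A^*\opgam Am + mA^*\opgam A - m\opgam|A|^2)}$ together with the isolated boundary contribution $\Tr{\com{A^*,m}\opgam A}$.

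For each high-energy contribution, I would diagonalise $H = \sum_k \lambda_k \opProj_k$ and apply the spectral identity
\begin{equation*}
\opProj_j X\opProj_k = \frac{\opProj_j\com{X,H}\opProj_k}{\lambda_k - \lambda_j}, \qquad \lambda_j \neq \lambda_k,
\end{equation*}
on both sides of the trace. The three pairs $(\opb_\ii,\opc_\ii)$ in the statement correspond to the three natural ways of associating $m$ with the factors of $A$: absorbing $m$ into the right factor via the Leibniz rule $\com{Am,H} = \com{A,H}\,m + A\com{m,H}$ yields the pair $(A^*,Am)$; absorbing it into the left factor via $\com{mA^*,H} = m\com{A^*,H} + \com{m,H}A^*$ yields $(mA^*,A)$; finally, isolating $m$ on its own and using $\com{|A|^2,H} = A^*\com{A,H} + \com{A^*,H}A$ yields $(-m,|A|^2)$. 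Cauchy--Schwarz in the Hilbert--Schmidt pairing, combined with the bound $(\lambda_k - \lambda_j)^{-2} \leq (\lambda - \lambda_k)^{-2}$ valid on the indexing range $\lambda_j \geq \lambda$, $\lambda_k \leq 0$, converts the spectral denominators into the resolvent $(\lambda-H)^{-1}\opgam$ and delivers the three Hilbert--Schmidt products $\Nrm{\indic_{H\geq\lambda}\com{\opb_\ii,H}(\lambda-H)^{-1}\opgam}{2}\Nrm{\indic_{H\geq\lambda}\com{\opc_\ii,H}(\lambda-H)^{-1}\opgam}{2}$.

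The main obstacle will be the algebraic bookkeeping: one must check that the three high-energy Cauchy--Schwarz terms, the low-energy trace, and the single boundary commutator $\Tr{\com{A^*,m}\opgam A}$ combine with the correct signs to account for the full left-hand side. The central identities used to balance the contributions are the Leibniz rule for $\com{\cdot,H}$ and the transfer identity $\com{A,\opgam}m = \com{A,\opgam m} - \opgam\com{A,m}$, which exchanges $m$ between the inner commutator and an outer factor $\opgam$ at the cost of a controlled residual.
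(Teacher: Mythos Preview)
Your toolbox is the right one---eigenbasis expansion, the spectral identity $\opProj_j X\opProj_k = (\lambda_k-\lambda_j)^{-1}\opProj_j\com{X,H}\opProj_k$, and Cauchy--Schwarz---but the route you propose is significantly more convoluted than the paper's, and your account of where the three pairs $(\opb_\ii,\opc_\ii)$ come from is off.

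The paper does \emph{not} begin from your identity $|\com{A,\opgam}|^2 = \opgam A^*(\id-\opgam)A\opgam + (\id-\opgam)A^*\opgam A(\id-\opgam)$. That decomposition places $\id-\opgam$ either sandwiched in the middle or on both sides, and after multiplying by $m$ and splitting, the high-energy pieces like $\Tr{\opgam A^*\indic_{H\geq\lambda}A\opgam m}$ are not of the form $\Tr{\indic_{H\geq\lambda}\,\opb\,\opgam\,\opc}$ needed for the two-index spectral argument (you end up with $\opgam m\opgam$ in the middle rather than a single $\opgam$). Instead, the paper first establishes, by cyclicity alone, the algebraic identity
\begin{equation*}
\Tr{|\com{A,\opgam}|^2 m} = \Tr{(\id-\opgam)\,B} + \Tr{\com{A^*,m}\opgam A}, \qquad B = A^*\opgam\,(Am) + (mA^*)\,\opgam\,A - m\,\opgam\,|A|^2,
\end{equation*}
so that a single factor $\id-\opgam$ sits on the outside. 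Splitting $\id-\opgam = \indic_{0<H<\lambda} + \indic_{H\geq\lambda}$ then immediately gives the low-energy trace in the statement, and $\Tr{\indic_{H\geq\lambda}B}$ is manifestly a sum of three terms $\Tr{\indic_{H\geq\lambda}\,\opb_\ii\,\opgam\,\opc_\ii}$, one per summand of $B$. Each is then handled exactly as in Proposition~\ref{prop:comm2}: expand in projectors, pull out one commutator with $H$ on each side, bound $(\lambda_k-\lambda_j)^{-2}\leq(\lambda-\lambda_j)^{-2}$, and apply Cauchy--Schwarz.

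In particular, the Leibniz rules you invoke play no role in generating the three pairs; the pairs are simply the three summands of $B$ read off directly. Your ``transfer identity'' and the residual bookkeeping are likewise unnecessary---the residual $\Tr{\com{A^*,m}\opgam A}$ appears already in the initial rearrangement, not from later commutator shuffling.
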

 
	\begin{proof}[Proof of Proposition~\ref{prop:comm_HS_weight}]
		Using the cyclicity of the trace
		\begin{align}\nonumber 
			\Tr{\n{\com{A,\opgam}}^2 m} &= \Tr{\(A^*\opgam\, A + \opgam \n{A}^2 \opgam - A^* \opgam \, A \, \opgam - A\, \opgam \, A^* \, \opgam\) m}
			\\\label{eq:1}
			&= \Tr{\(\id-\opgam\) \(A^*\opgam\, A\, m + m\,A^* \opgam\, A - m\,\opgam \n{A}^2\) + \com{A^*,m} \opgam\, A} .
		\end{align}
		With the notation $B = A^*\opgam\, A\, m + m\,A^* \opgam\, A - m\,\opgam \n{A}^2$, one can then decompose $\id - \opgam = \indic_{0< H< \lambda} + \indic_{H\geq \lambda}$ to get 
		\begin{equation}\label{eq:2}
			\Tr{\n{\com{A,\opgam}}^2 m} = \Tr{\indic_{H\geq \lambda}\, B} + \Tr{\indic_{0< H< \lambda}\, B} + \Tr{\com{A^*,m} \opgam\, A}
		\end{equation}
		and it remains to treat the first term of the right-hand side, $\Tr{\indic_{H\geq \lambda}\, B}$, which can be written as a sum of terms of the form $\Tr{\indic_{H\geq \lambda}\, \opb \, \opgam\, \opc}$. Keeping the same notation as in the proof of Proposition~\ref{prop:comm2} we have 
		\begin{equation*} 
			\Tr{\indic_{H\geq \lambda}\, \opb \, \opgam\, \opc} = \sum_{\lambda_k \geq \lambda} \sum_{\lambda_j \leq 0} \frac{1}{\(\lambda_k - \lambda_j\)^2} \Tr{\opProj_k \com{H,\opb} \opProj_j \com{\opc ,H} \opProj_k} .
		\end{equation*}
		Following the proof of Proposition~\ref{prop:comm2} we get the estimate
		\begin{equation*}
			\n{\Tr{\indic_{H\geq \lambda}\, \opb \, \opgam\, \opc}} \leq \Nrm{\indic_{H\geq \lambda} \com{H,\opb} \opgam \(\lambda - H\)^{-1}}{2} \Nrm{\indic_{H\geq \lambda} \com{H,\opc} \opgam \(\lambda - H\)^{-1}}{2} . 
		\end{equation*}
		The last inequality combined with \eqref{eq:1} and \eqref{eq:2} finishes the proof. 
	\end{proof}
 
	Our first step towards the proof of the $L^2$ regularity is the following lemma, which is based on an application of the weighted Hilbert--Schmidt commutator bound in Proposition~\ref{prop:comm_HS_weight}. 
	\begin{lem}\label{lem:comm_HS_weight_Dx}
		Let $\opgam = \indic_{H\leq 0}$ and $H = -\hbar^2 \Delta + V$ with $d \geq 3$. Then
		\begin{equation*}\label{eq:comm_HS_weight_Dx}
			\Nrm{\Dhx\opgam \n{\opp}^2}{\L^2}^2 \leq \frac{\cC_3}{\hbar} \(\sfA_\opgam + \sfB_\opgam\) 
		\end{equation*}
		with $\cC_3 =3 \, \cC_1 + \cC_2$, and 
		\begin{align*}
			\sfA_\opgam &= \frac{1}{\hbar} \Nrm{\nabla V\,\opgam}{\L^\infty} \Nrm{\com{\n{\opp}^4 \opp,V}\opgam}{\L^\infty} + \frac{1}{\hbar^2} \Nrm{\com{\n{\opp}^2,V}\opgam}{\L^\infty} \Nrm{\com{\n{\opp}^4,V}\opgam}{\L^\infty}
			\\
			\sfB_\opgam &= \sNrm{\weight{\opp}^5 \opgam}{\L^\infty} .
		\end{align*}
		Additionally,
		\begin{equation*}
			\Nrm{\nabla \vr_\opgam}{L^2} \leq \frac{\Nrm{\weight{x}^{-2}}{L^2(\Rd)}}{\sqrt{\hbar}} \(\Nrm{\Dhx\opgam \n{\opp}^2}{\L^2} + \Nrm{\Dhx\opgam}{\L^2}\).
		\end{equation*}
	\end{lem}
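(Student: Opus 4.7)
The plan is to apply Proposition~\ref{prop:comm_HS_weight} componentwise with $A=\opp_j$ and scalar weight $m=\n{\opp}^4$, then sum over $j$; this choice recovers the left-hand side of the first inequality via the identity
\[
  \Nrm{\Dhx\opgam \n{\opp}^2}{\L^2}^2
  \;=\; \hbar^{-2}\, \hd\, \Tr{\n{\com{\opp,\opgam}}^2\, \n{\opp}^4}.
\]
Three families of terms will appear on the right side of the proposition: the final commutator $\com{\opp,\n{\opp}^4}\opgam\opp$ vanishes identically because $\opp$ commutes with $\n{\opp}^4$; three resolvent products involving $(\lambda-H)^{-1}\opgam$; and the middle term projected by $\indic_{0<H<\lambda}$. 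The free parameter $\lambda$ will be fixed to $\hbar$ at the end.

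I would treat each resolvent product by factoring $\opgam^2=\opgam$ and commuting $(\lambda-H)^{-1}$ past one copy of $\opgam$, producing a Hölder split $\Nrm{\com{\opb_\ii,H}\opgam}{\L^\infty}\Nrm{\opgam(\lambda-H)^{-1}}{\L^2}$. The last factor is controlled by $\sqrt{\cC_2/\lambda}$ via Lemma~\ref{lem:trace}, while the operator-norm factor is identified through the explicit commutators $\com{\opp,H}=-i\hbar\nabla V$, $\com{\n{\opp}^4\opp,H}=\com{\n{\opp}^4\opp,V}$, $\com{\n{\opp}^2,H}=\com{\n{\opp}^2,V}$, $\com{\n{\opp}^4,H}=\com{\n{\opp}^4,V}$. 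Setting $\lambda=\hbar$ reproduces the $\tfrac{\cC_3}{\hbar}\sfA_\opgam$ contribution, with the three pairs corresponding exactly to the two summands of $\sfA_\opgam$.

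The hard part will be the middle term $\hd\Tr{\indic_{0<H<\lambda}\bigl(\opp\cdot\opgam\opp\n{\opp}^4+\n{\opp}^4\opp\cdot\opgam\opp-\n{\opp}^4\opgam\n{\opp}^2\bigr)}$. I would rewrite the integrand using $\opp\cdot\opgam\opp=\opgam\n{\opp}^2+\com{\opp,\opgam}\cdot\opp$, producing $\opgam\n{\opp}^6+\com{\opp,\opgam}\cdot\opp\n{\opp}^4+\n{\opp}^4\com{\opp,\opgam}\cdot\opp$. The $\opgam\n{\opp}^6$ piece vanishes under the trace because of the spectral disjointness $\indic_{0<H<\lambda}\opgam=0$. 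Each remaining term is bounded by Cauchy--Schwarz in $\L^2$:
\[
  \hd\bigl|\Tr{\indic_{0<H<\lambda}\com{\opp_j,\opgam}\opp_j\n{\opp}^4}\bigr|
  \;\leq\; \Nrm{\com{\opp_j,\opgam}}{\L^2}\,\Nrm{\opp_j\n{\opp}^4\indic_{0<H<\lambda}}{\L^2}.
\]
The first factor is $O(\sqrt\hbar)$ by Proposition~\ref{prop:linear:comm_HS}. The second is also $O(\sqrt\hbar)$: indeed $\opp_j^2\n{\opp}^8\leq\n{\opp}^{10}$ combined with $\n{\opp}^{10}\indic_{0<H<\lambda}\leq(\lambda+\Nrm{V_-}{L^\infty})^5\indic_{0<H<\lambda}$ on the spectral range of $H$, together with $\hd\Tr{\indic_{[0,\lambda]}(H)}\leq\cC_1(\lambda+\hbar)$ from Lemma~\ref{lem:local_estimate}, gives the desired scaling. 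Without this commutator rewriting, a naive bound on $\opp\cdot\opgam\opp\n{\opp}^4$ only extracts one factor of $\sqrt\hbar$ rather than the two required. The intermediate operator norms $\Nrm{\opp^k\opgam}{\L^\infty}$ for $k\leq 5$ appearing in the constants are all dominated by $\sfB_\opgam=\Nrm{\weight{\opp}^5\opgam}{\L^\infty}$, which yields the $\tfrac{\cC_3}{\hbar}\sfB_\opgam$ contribution after setting $\lambda=\hbar$.

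For the additional inequality I would use duality. Since $\partial_j\vr_\opgam=\vr_{(\Dhx\opgam)_j}$, for any $\varphi\in L^2(\R^3;\R^3)$ one has $|\intd\nabla\vr_\opgam\cdot\varphi|=\hd|\Tr{\Dhx\opgam\cdot\varphi}|$. Inserting $\weight{\opp}^2\weight{\opp}^{-2}=\id$ and applying Hölder gives $\hd|\Tr{(\Dhx\opgam)_j\varphi_j}|\leq\Nrm{(\Dhx\opgam)_j\weight{\opp}^2}{\L^2}\Nrm{\weight{\opp}^{-2}\varphi_j}{\L^2}$. A direct computation from the diagonal of the Weyl kernel of $\weight{\opp}^{-4}$ evaluates the second factor in terms of $\Nrm{\weight{x}^{-2}}{L^2(\Rd)}\,\Nrm{\varphi_j}{L^2}$ with the stated semiclassical prefactor, while $\weight{\opp}^2\leq 1+\n{\opp}^2$ bounds the first factor by $\Nrm{\Dhx\opgam\n{\opp}^2}{\L^2}+\Nrm{\Dhx\opgam}{\L^2}$. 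Summing over $j$ with Cauchy--Schwarz and taking the supremum over unit-norm $\varphi$ concludes the proof.
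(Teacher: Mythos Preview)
Your treatment of the resolvent terms and of the duality argument for $\Nrm{\nabla\vr_\opgam}{L^2}$ is essentially the paper's proof. The gap is in the middle term.

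You assert that a ``naive'' bound on $\Tr{\indic_{0<H<\lambda}\,\opp\cdot\opgam\,\opp\,\n{\opp}^4}$ only yields one factor of $\sqrt\hbar$; this is the source of the trouble, and it is false. Each of the three summands in the middle term carries an explicit $\opgam$. Writing $\opgam=\opgam^2$ and applying H\"older with the $\L^\infty$--$\L^1$ split gives directly
\[
  \bigl|\Tr{\indic_{0<H<\hbar}\,\opp\cdot\opgam\,\opp\,\n{\opp}^4}\bigr|
  \le \Nrm{\opp\,\opgam}{\L^\infty}\,\Nrm{\opgam\,\opp\,\n{\opp}^4}{\L^\infty}\,\Tr{\indic_{0<H<\hbar}},
\]
and similarly for the other two. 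The operator norms are controlled by $\sfB_\opgam=\sNrm{\weight{\opp}^5\opgam}{\L^\infty}$, while $\hd\Tr{\indic_{0<H<\hbar}}\le 2\,\cC_1\,\hbar$ by Lemma~\ref{lem:local_estimate}. After the overall factor $\hbar^{-2}$ this is exactly the $\tfrac{\cC_3}{\hbar}\sfB_\opgam$ contribution. This is what the paper does, in one line, and your commutator rewriting is not needed.

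Worse, your detour introduces a genuine error: the operator inequality $\n{\opp}^{10}\,\indic_{0<H<\lambda}\le(\lambda+\Nrm{V_-}{L^\infty})^5\,\indic_{0<H<\lambda}$ is false. Only $\indic\,\n{\opp}^2\,\indic\le(\lambda+\Nrm{V_-}{L^\infty})\,\indic$ holds, because $\n{\opp}^2=H-V$ and $H$ commutes with $\indic$; for higher powers $\n{\opp}^{2k}$ with $k\ge 2$ this commutation fails and the inequality does not follow. To bound $\Nrm{\opp_j\n{\opp}^4\indic_{0<H<\hbar}}{\L^2}$ along your route you would instead need $\Nrm{\n{\opp}^5\indic_{H\le\hbar}}{\L^\infty}$, which is a moment of a \emph{different} spectral projector than $\opgam$ and is not the $\sfB_\opgam$ in the statement. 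Drop the rewriting and use the direct H\"older bound above.
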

 
	\begin{proof}
		Taking $A = \nabla$, $m = \n{\opp}^4$ and $\lambda = \hbar$ in Proposition~\ref{prop:comm_HS_weight} gives
		\begin{equation*}
			\Nrm{\Dhx\opgam \n{\opp}^2}{\L^2}^2 =
			\hd \Tr{\n{\com{\nabla,\opgam}}^2 \n{\opp}^4} \leq \sfA_\opgam \Tr{\(\hbar - H\)^{-2} \opgam} + 3 \, \sfB_\opgam \, \hbar^{-2} \, \Tr{\indic_{0< H< \hbar}}
		\end{equation*}
		where $\sfA_\opgam$ and $\sfB_\opgam$ are defined in the statement. Therefore, by Lemma~\ref{lem:local_estimate} and Lemma~\ref{lem:trace}
		\begin{equation*}
			\Nrm{\Dhx\opgam \n{\opp}^2}{\L^2}^2 \leq \(3 \,\cC_1 + \cC_2\)
			\frac{\sfA_\opgam + \sfB_\opgam}{\hbar} \, .
		\end{equation*}
		To get the second equation we use the fact that $\hd \,\Dhx\op(x,x) = \nabla\rho$ and the Cauchy--Schwarz inequality for the trace to get for any $\varphi\in C^\infty_c(\Rd)$ 
		with $\Nrm{\varphi}{L^2} =1 $
		\begin{equation*}
			\n{\intd \nabla\rho\, \varphi} = \hd \Tr{\Dhx\opgam\, \varphi} 
			\leq \Nrm{\Dhx\opgam \weight{\opp}^2}{\L^2} \Nrm{\weight{\opp}^{-2} \varphi}{\L^2} . 
		\end{equation*}
		Finally, we compute $\Nrm{\weight{\opp}^{-2} \varphi}{\L^2} \leq \Nrm{\weight{x}^{-2}}{L^2(\Rd)} $ and use the triangle inequality.
	\end{proof}
 
	The previous lemma is sufficient to analyze the region $\n{x}\leq \hbar$. To complete the proof of the $L^2$ regularity of $\vr_\opgam$, it remains to look at the cases where $\n{x}\geq \hbar$. Recall our notation $\sfT_x\opgam = \tau_x\,\opgam\,\tau_{-x}$.
 
	\begin{lem}\label{lem:comm5}
		Let $\opgam = \indic_{H\leq 0}$ and $H = -\hbar^2 \Delta + V$ with $d \geq 3$. Let $\beta \geq 0$. Then, for any $x_0 \in\Rd$ such that $\n{x_0}\in(\hbar, 1)$,
		\begin{equation*}
			\hd\Tr{\n{\sfT_{x_0}\opgam-\opgam}^2 \n{\opp}^4} \leq \cC_4 
			\(\sfA_\opgam ' + \sfB_\opgam '\) \n{x_0}
		\end{equation*}
		where $\cC_4 = 2\,\cC_1 + e^{2 \beta}\; \cC_2$, and 
		\begin{align*}
			\sfA_\opgam' &= \(\frac{1}{\hbar}\Nrm{\com{\n{\opp}^4,V}\opgam}{\L^\infty} + \Nrm{e^{-\beta \n{x}}\,\nabla V}{L^\infty}\Nrm{e^{\beta \n{x}} \n{\opp}^4 \opgam}{\L^\infty}\)
			\Nrm{e^{-\beta \n{x}}\,\nabla V}{L^\infty} \Nrm{e^{\beta \n{x}}\,\opgam}{\L^\infty}
			\\
			\sfB_\opgam' &= 2 \Nrm{\n{\opp}^4 \opgam}{\L^\infty} .
		\end{align*}
	\end{lem}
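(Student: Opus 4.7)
The approach is to invoke Proposition~\ref{prop:comm_HS_weight} (the weighted Hilbert--Schmidt estimate) with the unitary $A = \tau_{x_0}$, weight $m = \n{\opp}^4$, and threshold $\lambda = \n{x_0}$; since $\n{x_0} \geq \hbar$, both the resolvent estimate of Lemma~\ref{lem:trace} and the local eigenvalue bound of Lemma~\ref{lem:local_estimate} are applicable. As a preliminary reduction, since $\tau_{x_0}$ is a pure position translation it commutes with $\opp$, and so $\sfT_{x_0}\opgam - \opgam = [\tau_{x_0},\opgam]\,\tau_{x_0}^*$ together with $\tau_{x_0}^*\n{\opp}^4\tau_{x_0} = \n{\opp}^4$. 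Combined with cyclicity of the trace and unitarity of $\tau_{x_0}$, this yields
\begin{equation*}
    \hd\Tr{\n{\sfT_{x_0}\opgam - \opgam}^2\n{\opp}^4} = \hd\Tr{\n{[\tau_{x_0},\opgam]}^2\n{\opp}^4},
\end{equation*}
so that Proposition~\ref{prop:comm_HS_weight} applies directly to the right-hand side.

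Two simplifications then occur. Since $\n{\tau_{x_0}}^2 = \id$, the pair $(\opb_3,\opc_3)=(-m,\id)$ vanishes via $[\id,H]=0$, and likewise $[\tau_{x_0}^*,\n{\opp}^4]=0$ kills the boundary term $\Tr{[A^*,m]\opgam A}$. The $\indic_{0<H<\lambda}$ remainder contains three operators whose operator norms, after using that $\tau_{x_0}$-conjugation preserves $\n{\opp}^4\opgam$, reduce to $\sNrm{\n{\opp}^4\opgam}{\L^\infty}$; multiplied by the eigenvalue count $\hd\Tr{\indic_{0<H<\lambda}} \leq \cC_1\n{x_0}$, this produces the contribution $2\cC_1\sfB_\opgam'\n{x_0}/2$ of the correct form.

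The resolvent terms for $\ii=1,2$ require the weighted Agmon input. Expanding the commutators with $H=-\hbar^2\Delta+V$ and using that $-\hbar^2\Delta$ commutes with both $\tau_{x_0}$ and $\n{\opp}^4$, one finds expressions such as $[\tau_{x_0}^*,H]\opgam = (V(\cdot+x_0)-V)\tau_{x_0}^*\opgam$ and
\begin{equation*}
    [\tau_{x_0}\n{\opp}^4,H]\opgam = \tau_{x_0}\,[\n{\opp}^4,V]\opgam + \tau_{x_0}\,(V - V(\cdot+x_0))\,\n{\opp}^4\opgam.
\end{equation*}
A first-order Taylor expansion together with the hypothesis $e^{-\beta\n{x}}\nabla V\in L^\infty$ gives $\n{V(x\pm x_0)-V(x)} \leq \n{x_0}\, e^{\beta(\n{x}+\n{x_0})}\sNrm{e^{-\beta\n{x}}\nabla V}{L^\infty}$, which combined with $\sNrm{e^{\beta\n{x}}\opgam}{\L^\infty}$ (Proposition~\ref{prop:Agmon}) and $\sNrm{e^{\beta\n{x}}\n{\opp}^4\opgam}{\L^\infty}$ (via Lemma~\ref{lem:agmon_gradients} and iteration) yields operator norms bounded by $\n{x_0}\,e^{\beta}$ times the relevant weighted factors, after using $\n{x_0}\leq 1$ to absorb $e^{\beta\n{x_0}}$.

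The product of the two $\L^\infty$-norm estimates, multiplied by the resolvent bound $\Tr{(\lambda-H)^{-2}\opgam} \leq \cC_2/\n{x_0}$ from Lemma~\ref{lem:trace} with $\lambda=\n{x_0}$, yields a net prefactor $e^{2\beta}\n{x_0}$ and reproduces the summand $E_1 E_2\,\n{x_0}$ of $\cC_2 e^{2\beta}\sfA_\opgam'\n{x_0}$. The main subtlety is that the crossed piece $\tau_{x_0}[\n{\opp}^4,V]\opgam$ carries no $\n{x_0}$ prefactor; to absorb it into $\sfA_\opgam'$ I would use the lower bound $\n{x_0}/\hbar \geq 1$ to insert an artificial factor, producing precisely the $\frac{1}{\hbar}\sNrm{[\n{\opp}^4,V]\opgam}{\L^\infty}$ term of $\sfA_\opgam'$. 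The main obstacle is this bookkeeping: ensuring that every term generated by the Leibniz expansion of $[\tau_{x_0}\n{\opp}^4,H]$ fits into the two-summand structure of $\sfA_\opgam'$ with coefficient no worse than $e^{2\beta}\cC_2$, which forces the careful use of weighted Agmon estimates for $\opgam$ and $\n{\opp}^4\opgam$ and the $\n{x_0}/\hbar$ trade-off throughout.
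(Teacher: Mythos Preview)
Your proposal is correct and follows essentially the same route as the paper's own proof: both apply Proposition~\ref{prop:comm_HS_weight} with $A=\tau_{x_0}$, $m=\n{\opp}^4$, $\lambda=\n{x_0}$, exploit the vanishing of the $\ii=3$ term and of $\com{A^*,m}$, bound the $\indic_{0<H<\lambda}$ piece via Lemma~\ref{lem:local_estimate}, and control the resolvent terms through the Leibniz expansion of $\com{\tau_{x_0}\n{\opp}^4,V}$ combined with the weighted Taylor estimate for $V(\cdot\pm x_0)-V$ and the insertion $1\leq \n{x_0}/\hbar$. Your identification of that last trade-off as the main bookkeeping point is exactly what the paper does.
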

 
	\begin{proof}
		Let $x_0\in\Rd$ be such that $\n{x_0}\in [\hbar,1]$. Then, taking $A = \tau_{x_0} = e^{i\,x_0\cdot \opp/\hbar}$, $m = \n{\opp}^4$ and $\lambda = \n{x_0} \leq 1 $ in Proposition~\ref{prop:comm_HS}, and then using the lemmas~\ref{lem:trace} and ~\ref{lem:local_estimate} for $\lambda = \n{x_0}\geq \hbar$ 
		\begin{multline*}
			\hd\Tr{\n{\sfT_{x_0}\opgam-\opgam}^2 \n{\opp}^4}
			\\
			\leq 2\,\cC _1 \Nrm{\n{\opp}^4 \opgam}{\L^\infty} + \frac{\cC_2}{\n{x_0}^2} \Nrm{\(\sfT_{x_0}V-V\)\opgam}{\L^\infty} \Nrm{\com{\tau_{x_0}\n{\opp}^4,V}\opgam}{\L^\infty}.
		\end{multline*}
		We now estimate the term $\n{x_0}^{-2}$ by extracting a factor $\n{x_0}$ from each norm. Notice that by the Leibniz rule for commutators, the fact that $\tau_{x_0}$ commutes with $\n{\opp}^4$ and the unitarity of $\tau_{x_0}$, it follows that
		\begin{equation*}
			\Nrm{\com{\tau_{x_0}\n{\opp}^4,V}\opgam}{\L^\infty} \leq \frac{\n{x_0}}{\hbar} \Nrm{\com{\n{\opp}^4,V}\opgam}{\L^\infty} + \Nrm{\(\sfT_{x_0}V-V\) \tau_{x_0} \n{\opp}^4 \opgam}{\L^\infty} 
		\end{equation*}
		where we used $1 \leq\frac{\n{x_0}}{\hbar}$. 
		On the other hand, it follows from a first order Taylor formula that
		\begin{equation*}
			\n{\sfT_{x_0}V-V} \leq \Nrm{e^{-\beta \n{x}}\nabla V}{L^\infty} \n{x_0} \int_0^1 e^{\beta\n{\(1-\theta\)x+\theta x_0}} \d \theta \leq \Nrm{e^{-\beta \n{x}}\nabla V}{L^\infty} \n{x_0} e^{\beta \n{x_0}}.
		\end{equation*}
		From this we deduce that 
		\begin{align*}
			\Nrm{\(\sfT_{x_0}V-V\) \tau_{x_0} \n{\opp}^4 \opgam}{\L^\infty} &= \Nrm{\tau_{x_0}\(V-\sfT_{-x_0}V\) \n{\opp}^4 \opgam}{\L^\infty}
			\\
			&\leq \n{x_0} e^{\beta \n{x_0}} \Nrm{e^{-\beta \n{x}}\nabla V}{L^\infty} \Nrm{e^{\beta \n{x}} \n{\opp}^4 \opgam}{\L^\infty} ,
		\end{align*}
		and similarly
		\begin{equation*}
			\Nrm{\(\sfT_{x_0}V-V\)\opgam}{\L^\infty} \leq \n{x_0} e^{\beta \n{x_0}} \Nrm{e^{-\beta \n{x}}\nabla V}{L^\infty} \Nrm{e^{\beta \n{x}}\opgam}{\L^\infty} . 
		\end{equation*}
		This finishes the proof after we gather all estimates. 
	\end{proof}
	
	The multiple terms appearing in the previous Lemmas are estimated as follows. 

	\begin{lem}\label{lemma:alt}
		Let $\opgam = \indic_{H \leq 0}$ with $H = - \hbar^2 + V(x)$ in $d \geq 1$. Assume $V$ satisfies \eqref{hyp:V4}. Then, for any $\beta>0$ there exists $C>0$ such that
		\begin{align}\label{eq:p2}
			\Nrm{\com{\n{\opp}^2, V} \opgam}{\L^\infty} 
			&\leq C \hbar \Nrm{e^{\beta \n{x}} \weight{\opp} \opgam}{\L^\infty} 
			\\\label{eq:p4}
			\Nrm{\com{\n{\opp}^4, V}\opgam}{\L^\infty} 
			&\leq C \hbar \Nrm{e^{\beta \n{x}} \weight{\opp}^3 \opgam}{\L^\infty}
			\\\label{eq:p5} 
			\Nrm{\com{\n{\opp}^4 \opp,V}\opgam}{\L^\infty} 
			&\leq C \hbar \Nrm{e^{\beta \n{x}} \weight{\opp}^4 \opgam}{\L^\infty}
		\end{align}
		Additionally, 
		\begin{equation*}
			\Nrm{e^{\beta \n{x}} \n{\opp}^4 \opgam}{\L^\infty}
			\leq C \Nrm{e^{2 \beta \n{x}} \weight{\opp}^2 \opgam}{\L^\infty}.
		\end{equation*}
	\end{lem}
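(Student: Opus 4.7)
The common thread is that hypothesis~\eqref{hyp:V4} is tuned precisely so that $\hbar^{(k-1)_+}\nabla^k V$ is $O(e^{\beta\n{x}})$ pointwise for every $0\leq k\leq 5$, and hence every nested commutator of $V$ with $\n{\opp}^2$ is genuinely of order $\hbar$ in the operator sense. For \eqref{eq:p2}, \eqref{eq:p4} and \eqref{eq:p5}, I would iterate the Leibniz rule for commutators with $V$ to push each $\n{\opp}^2$ past $V$, producing a finite sum of terms of the form $\hbar^k\nabla^k V\cdot\opp^{\ell-k}\opgam$ with $1\leq k\leq\ell\in\{2,4,5\}$. Each such term, after the factorisation $\nabla^kV = e^{\beta\n{x}}(e^{-\beta\n{x}}\nabla^kV)$ and application of \eqref{hyp:V4}, carries a coefficient of exact size $\hbar\cdot e^{\beta\n{x}}$, and the remaining powers of $\opp$ reassemble into $\weight{\opp}^{\ell-1}\opgam$. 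Concretely, the identity $\com{\n{\opp}^2,V} = -2i\hbar\,\nabla V\cdot\opp - \hbar^2\Delta V$ delivers \eqref{eq:p2} immediately; the Leibniz expansion $\com{\n{\opp}^4,V} = \n{\opp}^2\com{\n{\opp}^2,V} + \com{\n{\opp}^2,V}\n{\opp}^2$ together with one further round of commutation through $\nabla V$ and $\Delta V$ gives \eqref{eq:p4}; and \eqref{eq:p5} follows from $\com{\n{\opp}^4\opp,V} = \n{\opp}^4\com{\opp,V} + \com{\n{\opp}^4,V}\opp$ combined with \eqref{eq:p4}.

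For the final inequality, the key algebraic step is to expand $\n{\opp}^4 = (H-V)^2$ and use $\com{H,\opgam}=0$ together with the identity $V\opgam = \opgam H - \n{\opp}^2\opgam$ to derive
\begin{equation*}
	\n{\opp}^4\opgam \;=\; H^2\opgam - V\opgam\,H - V\n{\opp}^2\opgam - \com{\n{\opp}^2,V}\opgam,
\end{equation*}
in which every remaining $V$ appears as a single multiplication acting to the left of a bounded object. The spectral bound $\Nrm{\opgam H}{\L^\infty}\leq \Nrm{V_-}{L^\infty}$ controls $H^2\opgam = (H\opgam)^2$; the pointwise bound $V\leq C\,e^{\beta\n{x}}$ from \eqref{hyp:V4} converts each remaining $V$ into an extra factor $e^{\beta\n{x}}$ on the right, and this is exactly the mechanism that forces the weight to double from $\beta$ to $2\beta$ in the statement. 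Writing $\n{\opp}^2\opgam = \weight{\opp}^2\opgam - \opgam$ in the third term, and using the weighted analogue of \eqref{eq:p2} (which is obtained by carrying an extra $e^{\beta\n{x}}$ through the same proof) in the fourth, I expect to arrive at a bound of the form
\begin{equation*}
	\Nrm{e^{\beta\n{x}}\n{\opp}^4\opgam}{\L^\infty} \;\leq\; C\Nrm{e^{2\beta\n{x}}\weight{\opp}^2\opgam}{\L^\infty} + C\Nrm{e^{2\beta\n{x}}\opgam}{\L^\infty}.
\end{equation*}

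The hard part will then be dominating the second term by the first uniformly in $\hbar$, i.e.\ showing $\Nrm{e^{2\beta\n{x}}\opgam}{\L^\infty} \leq C\Nrm{e^{2\beta\n{x}}\weight{\opp}^2\opgam}{\L^\infty}$. My plan is to write $e^{2\beta\n{x}}\opgam = L^{-1}\cdot e^{2\beta\n{x}}\weight{\opp}^2\opgam$ with the conjugated operator $L := e^{2\beta\n{x}}\weight{\opp}^2\,e^{-2\beta\n{x}}$, and to expand $L = \weight{\opp}^2 + 4i\beta\hbar\,(x/\n{x})\cdot\opp + 2\beta(d-1)\hbar^2/\n{x} - 4\beta^2\hbar^2$. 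Using $\Nrm{\opp\weight{\opp}^{-2}}{\L^\infty}\leq 1$ and the Hardy-type bound $\Nrm{\n{x}^{-1}\weight{\opp}^{-1}}{\L^\infty}\lesssim \hbar^{-1}$ valid in $d=3$, each summand of $L - \weight{\opp}^2$ is controlled by $O(\beta\hbar)$ times $\weight{\opp}^2$, so that $L\weight{\opp}^{-2} = \id + O(\beta\hbar)$ is invertible by a Neumann series for $\beta\hbar$ small enough, uniformly in $\hbar$. The same conjugation argument dominates $\Nrm{e^{2\beta\n{x}}\weight{\opp}\opgam}{\L^\infty}$ by $C\Nrm{e^{2\beta\n{x}}\weight{\opp}^2\opgam}{\L^\infty}$, which absorbs the commutator contribution produced by the weighted version of \eqref{eq:p2} and closes the argument.
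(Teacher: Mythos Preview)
Your approach to \eqref{eq:p2}--\eqref{eq:p5} via iterated Leibniz expansion of $\com{\n{\opp}^{2k},V}$ is exactly what the paper does; the paper packages it through the recursion $\com{\n{\opp}^4,A} = \com{\n{\opp}^2,\com{\n{\opp}^2,A}} + 2\com{\n{\opp}^2,A}\n{\opp}^2$ and then sets $A=V$, which is your argument verbatim. For the final inequality the paper likewise writes $\n{\opp}^4\opgam = \n{\opp}^2\opgam\,H - V\n{\opp}^2\opgam - 2i\hbar\,\nabla V\cdot\opp\,\opgam - \hbar^2\Delta V\,\opgam$ (equivalent to your display after regrouping $\n{\opp}^2\opgam\,H = H^2\opgam - V\opgam\,H$), multiplies by $e^{\beta\n{x}}$, and then simply asserts ``the desired bound follows from \eqref{hyp:V4}'' without further comment.

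The domination step you flag as ``hard'' --- bounding $\Nrm{e^{2\beta\n{x}}\opgam}{\L^\infty}$ and $\Nrm{e^{2\beta\n{x}}\opp\,\opgam}{\L^\infty}$ by $\Nrm{e^{2\beta\n{x}}\weight{\opp}^2\opgam}{\L^\infty}$ --- is genuinely not addressed in the paper's proof, and your conjugation/Neumann argument (valid for $\beta\hbar$ small, and needing Hardy for the $\n{x}^{-1}$ term, hence $d\geq 3$) is a correct way to close it. In practice the paper can afford to be loose here: in the only application (Proposition~\ref{prop:regu_rho_L2}, $d=3$) all three weighted norms $\Nrm{e^{2\beta\n{x}}\opgam}{\L^\infty}$, $\Nrm{e^{2\beta\n{x}}\opp\,\opgam}{\L^\infty}$, $\Nrm{e^{2\beta\n{x}}\n{\opp}^2\opgam}{\L^\infty}$ are separately bounded uniformly in $\hbar$ by Agmon estimates (Proposition~\ref{prop:Agmon} and Lemmas~\ref{lem:agmon_gradients}, \ref{lem:agmon2}), so whether the right-hand side reads $\Nrm{e^{2\beta\n{x}}\weight{\opp}^2\opgam}{\L^\infty}$ or the sum of the three is immaterial. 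Your argument makes the lemma hold as literally stated; the paper's shortcut is adequate for its purposes but leaves the inequality, as written, not fully justified.
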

 
	\begin{proof}
		For notational simplicity we write $\nrm{\cdot} := \Nrm{\cdot}{\L^\infty}$. Additionally, for an operator $A$ we let $\nabla A := \com{\nabla, A}$ and $\Delta A := \com{\nabla, \cdot \com{\nabla, A}}$. First, we compute 
		\begin{equation}\label{eq:rec1}
			\nrm{\com{\n{\opp}^2, A} \op} \leq \hbar^2 \nrm{\Delta A \, \op} + 2\, \hbar \nrm{\(\nabla A \cdot \opp\) \op} ,
		\end{equation}
		and \eqref{eq:p2} follows with $A = V$ and $\op = \opgam$, and using Hypothesis~\eqref{hyp:V4}. Next, a similar computation shows
		\begin{equation}\label{eq:rec2}
			\nrm{\com{\n{\opp}^4, A} \op} \leq \nrm{\com{\n{\opp}^2, \com{\opp^2, A}} \op} + 2 \nrm{\com{\n{\opp}^2, A} \n{\opp}^2 \op} .
		\end{equation}
		For the first term in~\eqref{eq:rec2}, we use Inequality~\eqref{eq:rec1} with $\com{\n{\opp}^2, A}$ instead of $A$ so that 
		\begin{equation*}
			\nrm{\com{\n{\opp}^2, \com{\opp^2, A}} \op} \leq \hbar^2 \nrm{\com{\n{\opp}^2, \Delta A}\op} + 2\, \hbar \nrm{\com{\n{\opp}^2, \nabla A \cdot \opp} \op}
		\end{equation*}
		where we used $\Delta \com{\n{\opp}^2, B} = \com{\n{\opp}^2, \Delta B}$ and $\nabla \com{\n{\opp}^2, B} = \com{\n{\opp}^2, \nabla B}$ for any operator $B$. Using again~\eqref{eq:rec1} we find for the two terms 
		\begin{align*}
			\nrm{\com{\n{\opp}^2, \Delta A}\op} &\leq \hbar^2 \nrm{\Delta^2 A \, \op} + 2\, \hbar 
			\nrm{\nabla \Delta A \cdot \opp\, \op}
			\\ 
			\nrm{\com{\n{\opp}^2, \nabla A \cdot \opp}\op} &\leq \hbar^2 \nrm{\(\nabla\Delta A \cdot \opp\) \op} + 2 \hbar \nrm{\(\nabla \(\nabla A \cdot\opp \) \cdot \opp\) \op} . 
		\end{align*}
		For the second term in \eqref{eq:rec2}, we use Inequality~\eqref{eq:rec1} with $\n{\opp}^2\op$ instead of $\op$. Putting all together we obtain 
		\begin{align}\nonumber 
			\nrm{{\n{\opp}^4, A} \op} &\leq \hbar^2 
			\(\hbar^2 \nrm{\Delta^2 A \, \op} + 2\, \hbar \nrm{\(\nabla \Delta A \cdot \opp\) \op}\)
			\\\label{eq:rec3} 
			&\qquad + 2\, \hbar \(\hbar^2 \nrm{\(\nabla\Delta A \cdot \opp\) \op} + 2\, \hbar \nrm{\(\nabla \(\nabla A \cdot\opp \) \cdot \opp\) \op} \)
			\\\nonumber
			&\qquad + 2 \(\hbar^2 \nrm{\Delta A \n{\opp}^2 \op} + 2\, \hbar \nrm{\(\nabla A \cdot \opp\) \n{\opp}^2 \op}\) .
		\end{align}
		The bound~\eqref{eq:p4} follows by taking $A = V$, $\op = \opgam$ and using~\eqref{hyp:V4}. As for the other inequality, 
		\begin{equation*}
			\nrm{\com{\opp \n{\opp}^4, A} \op} \leq \hbar \nrm{\com{\n{\opp}^4, \nabla A}\op} + 2		\nrm{\com{\n{\opp}^4, A} \opp \,\op} . 
		\end{equation*}
		so that~\eqref{eq:p5} follows by an appropriate application of~\eqref{eq:rec3}. Finally, for the last bound, we write
		\begin{equation*}
			\n{\opp}^4 \opgam =\n{\opp}^2 \(H - V\) \opgam = \n{\opp}^2 \opgam H - V \n{\opp}^2\opgam - 2\, \hbar \,\nabla V\cdot \opp\, \opgam - \Delta V \opgam\, , 
		\end{equation*}
		from which we conclude
		\begin{equation*}
			\nrm{e^{\beta \n{x}} \n{\opp}^4 \opgam} \leq \nrm{e^{\beta \n{x}} \n{\opp}^2 \opgam} \nrm{\opgam H} + \nrm{e^{\beta \n{x}} V \n{\opp}^2 \opgam} + 2 \, \hbar \nrm{e^{\beta\n{x}} \,\nabla V \cdot \opp\, \opgam}
			+ \nrm{e^{\beta \n{x}}\, \Delta V \opgam}
		\end{equation*}
		and the desired bound then follows by using \eqref{hyp:V4}. 
	\end{proof}

	Finally, we turn to the proof of the main result of this subsection.

	\begin{proof}[Proof of Proposition~\ref{prop:regu_rho_L2}]
		We consider $d =3$, and separate the proof into three cases. First, when $\n{x}\leq \hbar$, we use Lemma~\ref{lem:comm_HS_weight_Dx}. In order to control the constants $\sfA _\opgam$ and $\sfB_\opgam$ it suffices to show that
		\begin{equation*}
			\sNrm{\weight{\opp}^5 \opgam}{\L^\infty} \, ,
			\quad 
			\frac{1}{\hbar} \Nrm{\com{\opp \n{\opp}^4,V}\opgam}{\L^\infty} , 
			\quad 
			\frac{1}{\hbar} \Nrm{\com{\n{\opp}^2,V}\opgam}{\L^\infty} ,
			\quad 
			\frac{1}{\hbar} \Nrm{\com{\n{\opp}^4,V}\opgam}{\L^\infty}
		\end{equation*}
		are uniformly bounded in $\hbar$. The first quantity is controlled thanks to
		Lemma~\ref{lem:weight_6}. The other three are controlled thanks to Lemma~\ref{lemma:alt} in terms of $\Nrm{e^{2 \beta \n{x}} \weight{\opp}^2 \op}{\L^\infty}$. The latter is bounded thanks to Agmon's estimates in Lemma~\ref{lem:agmon2}. Secondly, for $\hbar \leq \n{x} \leq 1$, we use Lemma~\ref{lem:comm5}. In this case, the control of the constants $\sfA'_\opgam$ and $\sfB'_\opgam$ is dealt with analogously. Finally, when $\n{x}\geq 1$, then we just notice that
		\begin{equation*}
			\Nrm{T_x\vr_\opgam-\vr_\opgam}{L^2} \leq 2 \Nrm{\vr_\opgam}{L^2} \leq 2 \Nrm{\vr_\opgam}{L^\infty}^{1/2} \Nrm{\vr_\opgam}{L^1}^{1/2}
		\end{equation*}
		and we know by Inequality~\eqref{eq:CLR} and Lemma~\ref{lem:Linfty} that these quantities are bounded independently of $\hbar$. The bound in $B^{1/2}_{2,\infty}$ then follows by definition of Besov spaces and the $H^s$ bound by the continuous embedding $B^{1/2}_{2,\infty} \subset H^s$ for $s<1/2$.
	\end{proof}

\section{Interacting case: Hartree theory}

	In this section, we consider the case of interacting particles, that is, as described in introduction, the potential is given by
	\begin{equation*}
		V_\op := K * \vr_\op + U
	\end{equation*}
	with $K(x) = \kappa \n{x}^{-a}$ with $\kappa>0$ and $a\in[0,1]$ and $U : \Rd\to\R$ some external potential and which we assume satisfies Hypothesis~\eqref{hyp:U}. We study the minimizers $\opgam := \opHF$ of the Hartree energy functional $\scE_\op$ given in Equation~\eqref{eq:Hartree_enregy}. They satisfy the equation (see e.g.~\cite{nguyen_weyl_2024})
	\begin{equation}\label{mini}
		\opgam = \indic_{H_\opgam < 0} + \opq 
		\quad \text{ with } \quad
		H_\opgam := -\hbar^2\Delta + V_\opgam
	\end{equation}
	where $ 0 \leq \opq \leq 1 $ is a self-adjoint operator such that $\ran(\opq) \subset \ker(H_{\opgam})$. In particular, 
	\begin{equation}\label{min:1}
		\opgam \leq \indic_{H_{\opgam \leq 0}}\,. 
	\end{equation}
	One of the main ingredients in the proof of the commutator estimates will be to prove that the potential $V_\opgam$ satisfies the assumptions explained of the linear part, i.e. some estimates on its regularity and growth uniform in $\hbar$. This is the goal of the next subsection.

\subsection{A priori estimates}

	Let us start with recalling some well-known estimates on the convolution $K * \rho$ by the singular kernel $K$.
	\begin{lem}\label{lem:conv}
		Let $d=3$, and denote $K(x) = \kappa \n{x}^{-a}$ for $\kappa \in \R$ and $a \in (0,1]$. Let $\rho\in L^1\cap L^\infty$. Then for all $\delta>0 $ and $\alpha \in \(0,1\)$
		\begin{align*}
			\Nrm{K * \rho}{L^\infty}
			&\leq \n{\kappa}\(\delta^{-a} \Nrm{\rho}{L^1} + C_a\, \delta^{d-a} \Nrm{\rho}{L^\infty}\)
			\\
			\Nrm{\nabla K * \rho}{L^\infty} 
			&\leq C_{\theta} \n{\kappa} \Nrm{\rho}{L^\infty}^\theta \Nrm{\rho}{L^1}^{1-\theta}
			\\ 
			\Nrm{\nabla K * \rho}{C^{0,\alpha}}
			&\leq C \n{\kappa} \(\Nrm{\rho}{L^1} + \Nrm{\rho}{L^\infty}\)
		\end{align*}
		with $C_a = \tfrac{4\,\pi}{3-a}$, $\theta = \frac{a+1}{3}$ and $C_{\theta} = \frac{a}{1-\theta}\, (\frac{4\,\pi}{3\,\theta})^\theta$. Additionally, in the case when $a<1$ there is a constant $C = C(a)>0$ such that 
		\begin{equation*}
			\Nrm{\nabla^2 K * \rho}{L^\infty} \leq C \n{\kappa} \(\Nrm{\rho}{L^1} + \Nrm{\rho}{L^\infty}\).
		\end{equation*}
		Moreover, in the case when $a=1$, there is $C>0$ such that 
		\begin{equation*}
			\Nrm{\nabla^2K * \rho}{L^\infty} \leq C \(1+ \Nrm{\rho}{L^1} + \Nrm{\rho}{L^\infty} \ln(1+\Nrm{\nabla\rho}{L^\infty})\).
		\end{equation*}
	\end{lem}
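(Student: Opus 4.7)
The plan is to obtain each of the five bounds by splitting the convolution into a near-field contribution (controlled by $\Nrm{\rho}{L^\infty}$) and a far-field contribution (controlled by $\Nrm{\rho}{L^1}$), and, when needed, optimizing the cut-off radius. For the first bound, splitting $K*\rho$ at radius $\delta$ and using that in dimension three $\int_{\n{y}<\delta}\n{y}^{-a}\,\d y = \tfrac{4\pi}{3-a}\,\delta^{3-a} = C_a\,\delta^{d-a}$ gives an inner contribution $\n{\kappa}\,C_a\,\delta^{d-a}\Nrm{\rho}{L^\infty}$ while the outer part is dominated by $\n{\kappa}\,\delta^{-a}\Nrm{\rho}{L^1}$. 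For the gradient estimate, the same split with $\n{\nabla K(y)} = \n{\kappa}\,a\,\n{y}^{-(a+1)}$ (still integrable near zero in $d=3$ since $a+1<3$) produces a bound of the form $A\,\delta^{2-a} + B\,\delta^{-(a+1)}$ with $A \sim \n{\kappa}\Nrm{\rho}{L^\infty}$ and $B \sim \n{\kappa}\Nrm{\rho}{L^1}$; balancing by $\delta^3 = B/A$ gives an interpolation inequality in which each term equals $A^{(a+1)/3}B^{(2-a)/3}$, hence $\theta = (a+1)/3$, and the explicit constant $C_\theta$ falls out of the direct computation.

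For the Hölder seminorm of $\nabla K*\rho$, I would fix $h \in \R^3$ with $\n{h}\leq 1$ and decompose the difference integrand into $\n{y} \leq 2\n{h}$ and $\n{y} > 2\n{h}$. On the near set, both $\nabla K(y+h)$ and $\nabla K(y)$ are bounded in absolute value by a constant times $\n{y}^{-(a+1)}$ (up to translation) and integrated against $\Nrm{\rho}{L^\infty}$, yielding a contribution $C\Nrm{\rho}{L^\infty}\n{h}^{2-a}$. On the far set, the mean-value theorem together with $\n{y+\theta h} \geq \n{y}/2$ gives $\n{\nabla K(y+h)-\nabla K(y)} \leq C\n{h}\n{y}^{-(a+2)}$; splitting this part at the fixed radius $1$ handles the intermediate scale via $\Nrm{\rho}{L^\infty}$ and the tail via $\Nrm{\rho}{L^1}$. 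Since $2-a \geq 1 \geq \alpha$ for every $\alpha \in (0,1)$ and $a \in (0,1]$, this proves the bound for $\n{h}\leq 1$; for $\n{h} \geq 1$ the difference is dominated by $2\Nrm{\nabla K*\rho}{L^\infty}$, already controlled by the preceding step.

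In the subcritical regime $a < 1$, the Hessian kernel satisfies $\n{\nabla^2 K(y)} \leq C\n{y}^{-(a+2)}$ which remains integrable on $B_1$ since $a+2<3$, so the same near/far decomposition at the fixed radius $1$ gives directly $\Nrm{\nabla^2 K*\rho}{L^\infty} \leq C\n{\kappa}(\Nrm{\rho}{L^\infty}+\Nrm{\rho}{L^1})$ with no optimization required.

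The main obstacle is the critical Coulomb case $a=1$, where $\nabla^2 K$ is a Calder\'on--Zygmund kernel which fails to be locally integrable but has vanishing mean over spheres. I would use a principal-value representation together with a three-scale truncation: for $\delta \in (0,1)$,
\begin{align*}
\nabla^2 K*\rho(x) &= \mathrm{p.v.}\int_{\n{y}<\delta} \nabla^2 K(y)\(\rho(x-y)-\rho(x)\)\d y\\
&\quad + \int_{\delta\leq\n{y}<1} \nabla^2 K(y)\,\rho(x-y)\,\d y + \int_{\n{y}\geq 1} \nabla^2 K(y)\,\rho(x-y)\,\d y + c_0\,\rho(x),
\end{align*}
where the vanishing spherical mean of $\nabla^2 K$ allows subtracting $\rho(x)$ in the inner integrand and produces the boundary term $c_0\,\rho(x)$. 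Taylor's theorem then bounds the first piece by $C\delta\Nrm{\nabla\rho}{L^\infty}$, the second by $C\Nrm{\rho}{L^\infty}\ln(1/\delta)$ (the precise source of the logarithm, coming from $\int_\delta^1 r^{-1}\,\d r$ in polar coordinates), the third by $C\Nrm{\rho}{L^1}$, and the last by $\Nrm{\rho}{L^\infty}$. Choosing $\delta = (1+\Nrm{\nabla\rho}{L^\infty})^{-1}$ balances the first two terms and delivers the stated bound with the $\ln(1+\Nrm{\nabla\rho}{L^\infty})$ factor.
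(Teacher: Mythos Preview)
Your proposal is correct and follows essentially the same approach as the paper: a near/far decomposition of the convolution integral controlled by $\Nrm{\rho}{L^\infty}$ and $\Nrm{\rho}{L^1}$ respectively, with optimization over the cut-off in the gradient bound, and in the critical Coulomb case the use of $\Nrm{\nabla\rho}{L^\infty}$ near the singularity together with the cancellation structure of the kernel to extract the logarithm. The paper's proof is much terser---it only sketches the splitting for the first inequality, declares the derivative estimates ``identical,'' and cites the literature (log-Lipschitz estimates and the Beale--Kato--Majda-type inequality in \cite{beale_remarks_1984, kato_well-posedness_1986}) for the Coulomb parts---whereas you spell out the principal-value representation and the three-scale truncation explicitly, but the underlying argument is the same.
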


	\begin{proof}
		Observe that for all $a \in (0,1]$ we have $\n{x}^{-a} \in L^1_{\loc} (\R^3)$ and we split
		\begin{equation*}
			K* \rho(x) = 
			\int_{\n{x - y} \leq \delta} \frac{\rho(y)}{\n{x -y}^a} \d y 
			+ 
			\int_{\n{x - y} \geq \delta} \frac{\rho(y)}{\n{x -y}^a} \d y \, . 
		\end{equation*}
		The first integral can be estimated using $\Nrm{\rho}{L^\infty}$ and the $\d y$ integral is of order $\delta^{d-a}$. For the second integral we bound $\n{x -y}^{-a} \leq \delta^{-a}$ and use the $\Nrm{\rho}{L^1}$ norm. The estimates for the derivatives are identical.

		In the case of the Coulomb potential, it is well-known that $\nabla K* \rho$ is log-Lipschitz (see e.g. \cite{lions_propagation_1991, iacobelli_enhanced_2024}) and so in particular H\"older continuous. The last inequality is a variant of the inequality used in \cite{beale_remarks_1984} (a proof can be found in \cite{kato_well-posedness_1986}) and can be proved again by splitting the integrals and using $\Nrm{\nabla \rho}{L^\infty}$ when $\n{x-y}$ is small.
	\end{proof}

	As a consequence, we obtain the $L^p$ estimates which depend only on the external trap.
	\begin{prop}\label{prop:Lp}
		Let $d=3$ and let $\opgam$ be as in \eqref{mini}. Then, for all $p \in [1,\infty]$ there exists $C = C_{\beta, R, a ,p}>0$ independent of $\hbar$ such that
		\begin{equation*}
			\Nrm{\vr_\opgam}{L^p} \leq C \(1 + \Nrm{U_-}{L^\infty} + \int_{\R^3} U_-^{3/2}\).
		\end{equation*}
	\end{prop}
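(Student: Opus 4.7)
The plan is to exploit the variational identity~\eqref{mini}, which in particular gives $\opgam \leq \indic_{H_\opgam \leq 0}$ and hence $\vr_\opgam \leq \vr_{\indic_{H_\opgam \leq 0}}$ pointwise. Combined with repulsivity $K\geq 0$, this yields $(V_\opgam)_- \leq U_-$, so that all $L^p$ estimates on the nonlinear density reduce to linear estimates on the spectral projection $\indic_{H_\opgam \leq 0}$ whose potential has negative part controlled by $U_-$ alone. The CLR inequality~\eqref{eq:CLR} then immediately gives $\Nrm{\vr_\opgam}{L^1} \leq \sfL_{0,3}\int U_-^{3/2}$, and Lemma~\ref{lem:LT} applied to $\indic_{H_\opgam \leq 0}$ (using $\Nrm{U_-}{L^{5/2}}^{5/2} \leq \Nrm{U_-}{L^\infty}\int U_-^{3/2}$) gives $\Nrm{\vr_\opgam}{L^{5/3}}^{5/3} \leq C\(1+\Nrm{U_-}{L^\infty}\)\int U_-^{3/2}$.

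From these two Lebesgue bounds I would extract a uniform estimate on the mean-field potential. Splitting the convolution at radius $\delta>0$ and applying H\"older's inequality with exponents $(5/2, 5/3)$,
\begin{equation*}
   (K*\vr_\opgam)(x) \leq \kappa\, \delta^{-a}\,\Nrm{\vr_\opgam}{L^1} + \kappa\,\Nrm{\n{\cdot}^{-a}\indic_{B_\delta}}{L^{5/2}}\Nrm{\vr_\opgam}{L^{5/3}},
\end{equation*}
where the $L^{5/2}$ factor is finite since $a\leq 1$ gives $5a/2<3$. Taking $\delta=1$ yields $\Nrm{K*\vr_\opgam}{L^\infty}\leq M_0$ uniformly in $\hbar$, with $M_0$ depending only on $\Nrm{U_-}{L^\infty}$ and $\int U_-^{3/2}$. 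Consequently, the classical bulk $\{V_\opgam \leq 1\}$ is contained in the \emph{compact} set $\Omega_*:=\{U\leq 1+M_0\}$ (compact thanks to $U\to\infty$), and on $\Omega_*$ one has $V_\opgam \leq 1 + 2M_0$.

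For the $L^\infty$ bound I would then apply Lemma~\ref{lem:Linfty} to $\indic_{H_\opgam \leq 0}$, which reduces the task to bounding $\Nrm{V_\opgam\,\indic_{H_\opgam\leq 0}}{\L^\infty}$. Splitting $V_\opgam = V_\opgam\indic_{\Omega_*} + V_\opgam\indic_{\Omega_*^c}$, the first piece contributes at most $1+2M_0$ in operator norm, while the second is bounded by
\begin{equation*}
   \Nrm{V_\opgam\,\indic_{\Omega_*^c}\,e^{-\beta d_1}}{L^\infty}\,\Nrm{e^{\beta d_1}\,\indic_{H_\opgam\leq 0}}{\L^\infty},
\end{equation*}
with $d_1=\Dist{\cdot,\{V_\opgam\leq 1\}}$. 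Choosing $\beta=\beta_0/\hbar$ with $\beta_0\leq 1/4$, Agmon's inequality~\eqref{eq:smallness_out_of_bulk_3} controls the second factor uniformly in $\hbar$. For the first factor, Hypothesis~\eqref{hyp:U} implies that $U$ grows at most like $e^{\beta\n{x}}$ (by integrating the bound on $\nabla^2U$ twice), and outside a large enough ball $d_1(x)\geq \frac{1}{2}\n{x}$; hence $V_\opgam \,e^{-\beta_0 d_1/\hbar}$ is bounded on $\Omega_*^c$ provided $\hbar\leq \hbar_0(\beta)$. The finite remaining range $\hbar\in[\hbar_0,1]$ is absorbed into the constant $C_{\beta,R,a,p}$. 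Finally, the intermediate $L^p$ bounds follow by H\"older interpolation between $L^1$ and $L^\infty$.

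I expect the $L^\infty$ step to be the main obstacle: the self-consistent potential $V_\opgam$ is globally unbounded and one must simultaneously match its growth (inherited from $U$ via $V_\opgam\leq U+M_0$) against the exponential Agmon decay of $\indic_{H_\opgam\leq 0}$ away from the classical bulk $\Omega_*$. It is precisely the repulsivity of $K$ together with the uniform $L^\infty$ control on $K*\vr_\opgam$ obtained in the second step that makes $\Omega_*$ compact and independent of $\hbar$, and thereby permits the Agmon balancing to go through.
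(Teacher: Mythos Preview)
Your strategy is sound and differs from the paper's in one substantive way: to control $\Nrm{K*\vr_\opgam}{L^\infty}$ you first extract a uniform $L^{5/3}$ bound on $\vr_\opgam$ via the Lieb--Thirring inequality~\eqref{eq:Lieb-Thirring} and Lemma~\ref{lem:LT}, and then bound the convolution directly. The paper instead invokes Lemma~\ref{lem:conv} to write $\Nrm{K*\vr_\opgam}{L^\infty}\leq\kappa\bigl(\delta^{-a}\Nrm{\vr_\opgam}{L^1}+C_a\,\delta^{3-a}\Nrm{\vr_\opgam}{L^\infty}\bigr)$ and chooses $\delta$ small to absorb the $L^\infty$ term into the left-hand side of the Lemma~\ref{lem:Linfty} inequality. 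Your route avoids this self-referential bootstrap at the cost of the extra Lieb--Thirring step; the paper also splits $V_\opgam=U+K*\vr_\opgam$ directly rather than geometrically via $\Omega_*$, which streamlines the Agmon argument.

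There is, however, a concrete slip in your $L^\infty$ step. Taking $\beta=\beta_0/\hbar$ and citing~\eqref{eq:smallness_out_of_bulk_3} does \emph{not} control $\Nrm{e^{\beta d_1}\indic_{H_\opgam\leq 0}}{\L^\infty}$ uniformly in $\hbar$: that inequality reads $\Nrm{e^{\beta d_1}\opgam}{\L^\infty}\leq e^{\beta}+C_d\,e^{-1/(8\hbar)}(\dots)^{1/2}$, and with your choice the first term is $e^{\beta_0/\hbar}$, which diverges as $\hbar\to 0$. The remedy is to use a \emph{fixed} $\beta$ (slightly larger than the growth exponent from~\eqref{hyp:U}); then~\eqref{eq:smallness_out_of_bulk_3} gives a genuinely uniform bound, and the companion factor $\Nrm{V_\opgam\,e^{-\beta d_1}}{L^\infty(\Omega_*^c)}$ is still finite because $U$ grows only exponentially and $d_1(x)\geq\n{x}-R$. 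This is precisely the balancing the paper performs, writing $\Nrm{U\,\opgam}{\L^\infty}\leq\Nrm{U\,e^{-\beta\n{x}}}{L^\infty}\Nrm{e^{\beta\n{x}}\opgam}{\L^\infty}$ with the $\beta$ from~\eqref{hyp:U}. With this correction (and the same caveat about the residual range $\hbar>1/(4\beta)$, handled as in the proof of Theorem~\ref{thm:commutator:linear}), your argument goes through.
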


	\begin{proof}
		Since $K\geq 0$, it follows that $V_\opgam \geq U$, and so $\(V_\opgam\)_- \leq U_-$. In particular, Inequality~\eqref{min:1}, Agmon's estimates and Lemma~\ref{lem:Linfty} give the exponential decay
		\begin{equation*}
			\Nrm{e^{\beta \n{x}}\, \opgam}{\L^\infty} \leq C_{d,\beta} \(1+ \Nrm{U_-}{L^\infty} + \int_{\R^3} U_-^{3/2}\)
		\end{equation*}
		where we also used the $L^1$ estimate~\eqref{eq:CLR}. Observe now that the right-hand side is independent of $\hbar$ and $\vr_\opgam$. As a consequence, we can now bootstrap the $L^\infty$ estimates for $\vr_\opgam$. More precisely, from Lemma~\ref{lem:Linfty} and Lemma~\ref{lem:conv} 
		\begin{align*} 
			\Nrm{\vr_\opgam}{L^\infty}
			&\leq C \(1 + \Nrm{U_-}{L^\infty} + \Nrm{U\, \opgam}{\L^\infty} + \Nrm{K* \vr_\opgam}{L^\infty}\)
			\\
			&\leq C \(1 +\Nrm{U_-}{L^\infty} + \Nrm{U e^{- \beta \n{x}}}{L^\infty} \Nrm{e^{\beta \n{x}}\, \opgam}{\L^\infty} + \delta^{3-a}\Nrm{\vr_\opgam}{L^\infty} + \delta^{-a} \Nrm{\vr_\opgam}{L^1}\) . 
		\end{align*}
		Therefore we choose $\delta = \delta(a)>0$ small enough and use again Inequality~\eqref{eq:CLR} to conclude 
		\begin{equation*}
			\Nrm{\vr_\opgam}{L^\infty} \leq C_{d, \beta, R, a} \(1 + \Nrm{U_-}{L^\infty} + \intd U_-^{3/2} \) .
		\end{equation*}
		The estimates for the $L^1$ norm come from~\eqref{eq:CLR} with $\kappa\geq 0$. It then suffices to interpolate. 
	\end{proof}

\subsubsection{Commutator estimates}
	
	Thanks to the uniform $C^{1,\alpha}$ estimates of Lemma~\ref{lem:conv} and~\cite[Theorem~1.5]{mikkelsen_sharp_2023} we can find $\eps_0 >0$ small enough so that 
	\begin{equation}\label{eq:soren3}
		\n{\hd \Tr{\indic_{H_\opgam \leq E}} - \iint_{\n{\xi}^2 + V_\opgam(x)\leq E} \d x \d \xi} \leq \cC _0 \, \hbar
	\end{equation}
	for all $E \in [- \eps_0, \eps_0]$, and all $\hbar \in (0,1)$. In particular, the local eigenvalue estimate contained in Lemma~\ref{lem:local_estimate} implies 
	\begin{equation}\label{eq:soren4}
		\hd \tr \indic_{[a,b]}(H_\opgam) \leq \cC_1 \(\n{b-a} + \hbar \) , \qquad - \eps_0 \leq a \leq b \leq \eps_0 \, .
	\end{equation}
	Thanks to $K \geq 0$, the constant $\cC_1>0$ depends only on $\cC_0>0$, the external trap $U(x)$, and the parameter $\eps_0$. We are now ready for the proof of Theorem~\ref{thm:commutator_nonlinear}.

	\begin{proof}[Proof of Theorem~\ref{thm:commutator_nonlinear}]
		Arguing as in the proof of Theorem~\ref{thm:commutator:linear}, it suffices to look at the case $\hbar \leq h_\beta := \min(1, 1 /8\beta)$. For the proof, let us take $\opgam$ as in \eqref{mini} and write
		\begin{equation*}
			\opgam = \widetilde \opgam + \opq \, , 
			\quad \text{ with } \quad
			\widetilde \opgam = \indic_{H_\opgam < 0} \, .
		\end{equation*}
		First, let us remove the error term $\opq$. Consider $A = A^*$ a self-adjoint operator. Then, since $0 \leq \opq \leq \id$ and $\opq = \indic_{H_{\opgam}=0}\, \opq \leq \indic_{H_{\opgam} \leq0}$ and Inequality~\eqref{eq:soren4}, we get 
		\begin{equation*}
			\Nrm{\com{A , \opq}}{\L^1} \leq 2 \Nrm{\opq\, A}{\L^1} \leq \hd\tr{\indic_{H_{\opgam}=0}} \Nrm{\opq\, A}{\L^\infty} \leq \cC_1 \Nrm{A \,\indic_{H_{\opgam} \leq 0}}{\L^\infty} \hbar \,. 
		\end{equation*}
		Arguing as in the proof of Theorem~\ref{thm:commutator:linear}, $\Nrm{x \, \indic_{H_{\opgam} \leq 0}}{\L^\infty}$ and $\Nrm{\opp \, \indic_{H_{\opgam} \leq 0}}{\L^\infty}$ are bounded uniformly in $\hbar$. Thus, there exists a constant $C$ independent of $\hbar$ such that
		\begin{equation*}
			\hd \Tr{\n{\com{x, \opq}}} \leq C\,\hbar \quad \text{ and } \quad \hd \Tr{\n{\com{\opp, \opq}}} \leq C\,\hbar \, .
		\end{equation*}
		Thanks to H\"older's inequality for Schatten norms, $\Nrm{\com{A , \opq}}{\L^p} \leq 2^{1 / p '}\Nrm{A \, \opq}{\L^\infty}^{1/p'} \Nrm{\com{A, \opq}}{\L^1}^{1/p}$ and we can automatically control the $p>1$ norms. Thus, it suffices to analyze $\widetilde \opgam$ and, for notational simplicity, we drop the tilde $\opgam := \widetilde \opgam$ throughout the proof. Let us also note that the same argument implies that all estimates contained in Section~\ref{sec:com} hold for either $\indic_{H \leq 0}$ and $\indic_{H <0}$.
		
		Secondly, we look at the Hilbert--Schmidt estimates. A straightforward application of Proposition~\ref{prop:linear:comm_HS} implies that there is $C > 0 $ such that for all 
		$\hbar \in (0, h_\beta]$
		\begin{align*}
			\hd \Tr{\n{\com{x ,\opgam}}^2} &\leq C \,h \(\Nrm{x\,\opgam}{\L^\infty}^2 + \Nrm{\opp\, \opgam}{\L^\infty}^2\)
			\\ 
			\hd \Tr{\n{\com{\opp ,\opgam}}^2} &\leq C \, h \(\Nrm{\opp\,\opgam}{\L^\infty}^2 + \Nrm{\nabla V_\opgam \opgam}{\L^\infty}^2\) .
		\end{align*}
		Similarly as in the proof of Theorem~\ref{thm:commutator:linear}, all terms on the right-hand side can be bounded using either energy estimates, or Agmon's estimates. Note that here we have thanks to Lemma~\ref{lem:conv}
		\begin{equation}\label{eq:derivative}
			\Nrm{\nabla V_\opgam\, \opgam}{\L^\infty} \leq C_{\beta} \Nrm{\opgam}{\L^\infty} + C_\beta\Nrm{e^{\beta \n{x}}\, \opgam}{\L^\infty}
		\end{equation}
		which is controlled with Agmon's estimates, for all $0 \leq a \leq 1$.

		Thirdly, for the trace-class estimates we apply Proposition~\ref{prop:linear:commTR} for $\mu = \Nrm{U_-}{L^\infty} + 1$ to find that there is $C > 0$ such that for all $ \hbar \in (0, h_\beta]$ 
		\begin{align}\label{eq:commx}
			\hd \tr{\n{\com{x,\opgam}}} &\leq C\, \hbar \(\Nrm{x\,\opgam}{\L^\infty} + \Nrm{\nabla V_\opgam\, \opgam}{\L^\infty}\)
			\\\nonumber
			\hd \tr{\n{\com{\opp ,\opgam}}} &\leq C \, \hbar \(\Nrm{\opp \,\opgam}{\L^\infty} + \Nrm{\nabla^2 V_\opgam(x) \opgam}{\L^\infty} + \Nrm{\nabla^2 V_\opgam (x) \cdot \opp \, \opgam}{\L^\infty}\) . 
		\end{align} 
		Let $0 \leq a < 1$. Then, the estimate for the first derivative~\eqref{eq:derivative} also holds for two derivatives. We repeat the same argument as before and conclude that most terms are uniformly bounded in $\hbar$. The only exception is the following mixed term which, thanks to hypotheses~\eqref{hyp:K} and \eqref{hyp:U}, and Lemma~\ref{lem:conv}, satisfies the bound 
		\begin{equation*}
			\Nrm{\nabla^2 V_\opgam(x) \cdot \opp \, \opgam}{\L^\infty}\leq C_{\kappa, a}\Nrm{\opp\, \opgam}{\L^\infty} + C_\beta\Nrm{e^{\beta \n{x}} \opp\, \opgam}{\L^\infty} .
		\end{equation*}
		It suffices to estimate $\sNrm{e^{\beta \n{x}}\, \opp\, \opgam}{\L^\infty}$, to
		which we use the generalized Agmon's estimates from Lemma~\ref{lem:agmon_gradients}.
		
		Consider now $a =1$. First, we observe that the argument for the position estimates in~\eqref{eq:commx} is unchanged, since it only requires one bounded derivative. On the other hand, for the momentum we use the second trace-class bounds in Proposition~\ref{prop:comm_frac} and Lemma~\ref{lem:frac} to compute for $\mu = 1 + \Nrm{U_-}{L^\infty}$ and any $\lambda \geq \hbar $
		\begin{equation*}
			\hd \tr{\n{\com{\opp ,\opgam}}} \leq C \(\lambda \Nrm{\opp \opgam}{\L^\infty} + \frac{\hbar}{\lambda} \(\Nrm{\nabla^2 V_\opgam (x) \cdot \opp \, \opgam}{\L^\infty} + \Nrm{\nabla^2 V_\opgam(x) \opgam}{\L^\infty}
			\)\) .
		\end{equation*}
		We split $V_\opgam = U + K * \vr_\opgam$ and obtain
		\begin{align*}
			\Nrm{\nabla^2 V_\opgam(x) \opgam}{\L^\infty} &\leq \Nrm{e^{-\beta \n{x}} \nabla^2 U}{L^\infty} \Nrm{e^{\beta \n{x}} \opgam}{\L^\infty} + \Nrm{\nabla^2 K * \vr_{\opgam}}{L^\infty} \Nrm{\opgam}{\L^\infty}
			\\
			\Nrm{\nabla^2 V_\opgam(x) \cdot \opp \opgam}{\L^\infty} &\leq \Nrm{e^{- \beta \n{x}} \nabla^2 U}{L^\infty} \Nrm{e^{\beta \n{x}} \opp \opgam}{\L^\infty} + \Nrm{\nabla^2 K * \vr_{\opgam}}{L^\infty} \Nrm{\opp\, \opgam}{\L^\infty} . 
		\end{align*}
		By assumption and thanks to Agmon's estimates, the $U$-dependent terms are uniformly bounded in $\hbar$. On the other hand, thanks to Lemma~\ref{lem:conv} and the $L^p$ estimates for the density $\vr_\opgam$ we get
		\begin{equation*}
			\Nrm{\nabla^2K * \vr_\opgam}{L^\infty} \leq C \ln(1 + \Nrm{\nabla \vr_\opgam}{L^\infty}) \, . 
		\end{equation*}
		Thanks to Lemma~\ref{lem:rho_inf} we also have the $W^{1,\infty}$ bound
		\begin{equation*}
			\Nrm{\nabla\vr_\opgam}{L^\infty} \leq \frac{C_d}{\hbar} \, \Big(1+ \Nrm{\n{\opp}^4\opgam}{\L^\infty}^{5/4}\Big) \, .
		\end{equation*}
		Moreover, arguing similarly as above, Lemma~\ref{lem:weight_4} and Lemma~\ref{lem:conv} yields that $\Nrm{\n{\opp}^4\opgam}{\L^\infty}$ is bounded uniformly in $\hbar$. Putting the last five bounds together, we get, for an appropriate constant $C>0$ independent of $\hbar$
		\begin{equation*}
			\hd \tr{\n{\com{\opp ,\opgam}}} \leq C \, \Big(\lambda + \frac{\hbar}{\lambda} \n{\ln \hbar}\Big) \, . 
		\end{equation*}
		This finishes the proof by taking $\lambda = \hbar \n{\ln \hbar}^{1/2} \geq \hbar$.
	\end{proof}
	
	\subsection{Weyl's law}
	
	Let us now turn to the proof of the local Weyl law in the non-linear setting. Recall the notation
	\begin{equation*}
		\scE_f = \intdd (\n{\xi}^2+ U (x)) + \frac{1}{2} \intd \vr_f\(K*\vr_f\)
	\end{equation*}
	for any function $f \in L^1$ for which $\Eps_f < \infty$. Recall as well $\vr_f(x) = \intd f(x,\xi) \d \xi$, $\widetilde f = G_\eps * f $ and $\widetilde \op = G_\eps \star \op$. As in the linear case, we also introduce the following classical and quantum errors
	\begin{align*} 
		Q(g) &:= \intdd \n{\cH_g} \n{\indic_{\cH_g \leq 0} - g}
		\\ 
		Q_\opgam(\op) &:= \hd \Tr{\n{H_\opgam} \n{\op -\opgam}^2 + \n{H_\opgam} (\op - \op^2)}
	\end{align*}
	for any classical function $ 0 \leq g \leq 1 $, and any operator $0 \leq \op \leq \id$. First, we establish the following general bound in analogy to the linear case. 

	\begin{prop}\label{prop:energy:nonlinear}
		Let $ 0 \leq f \leq 1 $ and $ 0 \leq \opgam \leq \id $ be minimizers of $\scE_f$ and $\scE_\opgam$, respectively, in $d \geq3$. Let $p = 1 + \frac{2}{d}$, $c_d = 1 / \omega^{2/d}$ and for any function $0 \leq g \leq 1$ denote 
		\begin{equation*}
			\cH_g(x,\xi) = \n{\xi}^2- c_d \,\vr_{g}^{p-1} \, .
		\end{equation*}
		Then, we have that
		\begin{equation*}
			Q_\opgam(\tildop_f) + Q(\tilde{f}_\opgam) + \frac{c_d}{p'} \intd \n{\vr_{\tildop}^{p/2} -\vr_f^{p/2}}^2 + \intd U _+ \vr_{\tildop} \leq \scE_{\tilde \opgam} - \scE_\opgam + \scE_{\tilde f} - \scE_f \, . 
		\end{equation*}
	\end{prop}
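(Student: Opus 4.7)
\medskip

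\noindent\textbf{Proof plan for Proposition~\ref{prop:energy:nonlinear}.}  The strategy is to mimic Corollary~\ref{cor:energy} but with the two reference points being the Hartree and Thomas--Fermi minimizers, adding a quadratic expansion of the nonlinear interaction around each minimizer. I expect two bounds, one on the quantum side and one on the classical side, which when summed will reproduce the claimed inequality.

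\medskip

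\noindent\emph{Quantum side.}  I would first establish a variational identity adapted to $\opgam$: for every $0\leq\op\leq\id$,
\begin{equation*}
	Q_\opgam(\op)=\hd\Tr\bigl(H_\opgam(\op-\opgam)\bigr).
\end{equation*}
The only subtlety compared to Lemma~\ref{prop:minmax} is that $\opgam=\indic_{H_\opgam<0}+\opq$ is not a projection.  However, since $\ran\opq\subset\ker H_\opgam$, we have $|H_\opgam|\,\opq=0$, which kills all unwanted $\opq^2$ contributions; in particular $\Tr(|H_\opgam|\opgam^2)=\Tr(|H_\opgam|\opgam)=-\Tr(H_\opgam\indic_{H_\opgam<0})$, and the algebraic computation of the proof of Lemma~\ref{prop:minmax} goes through unchanged. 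Next, I would expand the Hartree energy around $\opgam$: using that the interaction term is quadratic in $\vr_\op$, a direct bilinear manipulation gives
\begin{equation*}
	\scE_\op-\scE_\opgam=\hd\Tr\bigl(H_\opgam(\op-\opgam)\bigr)+\tfrac{1}{2}D(\vr_\op-\vr_\opgam), \qquad D(\rho)=\intd\rho\,(K*\rho).
\end{equation*}
Since $K(x)=\kappa\n{x}^{-a}$ with $\kappa\geq 0$ is of positive type, $D(\vr_\op-\vr_\opgam)\geq 0$. Taking $\op=\tildop_f=\op_{\tilde f}$, and using the identities $\scE_{\op_g}=\scE_g$ and $\vr_{\op_g}=\vr_g$, this gives
\begin{equation*}
	Q_\opgam(\tildop_f)\leq \scE_{\tilde f}-\scE_\opgam.
\end{equation*}

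\medskip

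\noindent\emph{Classical side.}  By exactly the same bilinear expansion,
\begin{equation*}
	\scE_g-\scE_f=\intdd\bigl(\n{\xi}^2+V_f\bigr)(g-f)+\tfrac{1}{2}D(\vr_g-\vr_f),
\end{equation*}
with $V_f=U+K*\vr_f$. Since $f=\indic_{\n{\xi}^2+V_f\leq 0}$, I can apply Lemma~\ref{lem:energy_convexity} with the potential $V$ replaced by $V_f$; after using the convexity inequality~\eqref{eq:Dp_rho}, the nonnegativity of $D$, and the pointwise bound $(V_f)_+\geq U_+$ (which follows from $K*\vr_f\geq 0$), the result is
\begin{equation*}
	\scE_g-\scE_f\geq Q(g)+\intd U_+\,\vr_g+\frac{c_d}{p'}\intd\bigl|\vr_g^{p/2}-\vr_f^{p/2}\bigr|^2.
\end{equation*}
Specializing to $g=\tilde f_\opgam$, for which $\vr_g=\vr_{\tildop}$ and $\scE_g=\scE_{\tildop}$, yields the classical analogue of the quantum bound above.

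\medskip

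\noindent\emph{Conclusion and main obstacle.}  Summing the quantum and classical bounds produces exactly the stated inequality. The main point of care is the variational identity in the quantum step: because $\opgam$ generally has a nontrivial kernel part $\opq$, the standard argument for a spectral projection has to be recovered by checking that $|H_\opgam|\opq=0$ removes all the terms that would otherwise obstruct the identification $Q_\opgam(\op)=\hd\Tr(H_\opgam(\op-\opgam))$. Everything else amounts to a bilinear expansion of $\scE$ plus the positivity of $D$, both of which are essentially algebraic.
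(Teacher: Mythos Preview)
Your proposal is correct and follows essentially the same route as the paper: a variational identity plus bilinear expansion of $\scE$ around each minimizer, combined with the positivity of the interaction form $D$ and the convexity inequality~\eqref{eq:Dp_rho}. You are in fact more explicit than the paper on two points: you spell out the bilinear identity $\scE_\op-\scE_\opgam=\hd\Tr(H_\opgam(\op-\opgam))+\tfrac12 D(\vr_\op-\vr_\opgam)$ on the quantum side (the paper compresses this into a citation of Lemma~\ref{prop:minmax}), and you correctly flag and resolve the $\opq$ issue via $|H_\opgam|\opq=0$, which the paper's proof does not address.
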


	\begin{proof}
		We split the proof in two parts. Let us recall first that for any $\op$ the quantitative variational principle in Lemma~\ref{prop:minmax} gives
		\begin{equation}\label{eq:energy0}
			Q_\opgam(\op) \leq \scE_\op - \scE_\opgam \, .
		\end{equation}
		Secondly, using $\vr_f = \omega_d \(U + K*\vr_f\)_-^{d/2}$ and considering now an arbitrary function $0 \leq g \leq 1$, one obtains
		\begin{align*}
			\scE_g - \scE_f &- \frac{1}{2} \intd \(\vr_g - \vr_f\) K* \(\vr_g-\vr_f\)
			\\
			&= \intdd \n{\xi}^2 \(g -f\)\d x\d\xi + \intd \(U+K*\vr_f\) \(\vr_g - \vr_f\)
			\\
			&= \intdd \n{\xi}^2 \(g -f\)\d x\d\xi + \intd \(U+K*\vr_f\)_+ \vr_g - c_d \intd \vr_f^\frac{2}{d} \(\vr_g - \vr_f\).
		\end{align*}
		Therefore, using the fact that $\intdd f\n{\xi}^2\d x\d\xi = \frac{c_d}{p} \intd \vr_f^p$ and $\widehat{K}\geq 0$, it gives
		\begin{equation}\label{eq:energy}
			\scE_g - \scE_f \geq
			\intd U _+ \vr_g + 
			\intdd \(\n{\xi}^2- c_d\,\vr_g^{p-1}\) g + \frac{c_d}{p} \intd p \,\vr_g^p - \vr_f^p - p\,\vr_f^{p-1} \(\vr_g - \vr_f\)
		\end{equation}
		where we also used $\(U + K*\vr_f\)_+ \vr_g \geq U _+ \vr_g$. Arguing as in the proof of Lemma~\ref{lem:energies} we use the "quantitative bathtub principle" as in Lemma~\ref{prop:minmax} to get
		\begin{equation*}
			\intdd \cH_g\, g = 
				\intdd \n{\cH_g} \n{g - \indic_{\cH_g \leq 0}} - \frac{c_d}{p'} \intd \vr_g^p \, . 
		\end{equation*}
		Combined with \eqref{eq:energy} and using $1-1/p' = 1/p$ we find 
		\begin{equation}\label{eq:energy2}
			\scE_g - \scE_f \geq \intd U _+ \vr_g + \intdd \n{\cH_g} \n{g - \indic_{\cH_g \leq 0}} + \frac{c_d}{p} \intd \vr_g^p -\vr_f^p - p\,\vr_f^{p-1} \(\vr_g - \vr_f\) .
		\end{equation}
		We finish the proof by taking $\op = \op_{\tilde f}$, $g = f_{\tilde \opgam}$, putting together \eqref{eq:energy0} and \eqref{eq:energy2}, and employing Young's inequality. 
	\end{proof}
	
	In order to connect our analysis back to the linear case we make the following observation. Since $\vr_{\tilde{f}} = g_\eps * \vr_f$, $\widehat{K} \geq 0$ and $0\leq \widehat{g}_\eps^2 \leq 1$, it follows from Plancherel formula that
	\begin{equation*}
		\intd \vr_{\tilde f} \,K*\vr_{\tilde f} = \intd \widehat{K}\,\widehat{g}_{\eps}^2 \n{\widehat{\rho}_f}^2 \leq \intd \vr_f \,K*\vr_f
	\end{equation*}
	and similarly for $\vr_\opgam$. Therefore, arguing as in the linear case, one finds 
	\begin{align*}
		\scE_{\tilde f} - \scE_f &\leq \frac{d \,h^2}{8\pi\, \eps} M_f + \intd \(g_\eps * U - U\) \vr_f
		\\
		\scE_{\tildop} - \scE_\opgam &\leq \frac{d \,h^2}{8\pi\, \eps} M_\opgam + \intd \(g_\eps * U - U\) \vr_\opgam \, ,
	\end{align*}
	with $M_f = \intdd f$ and $M_\opgam = \hd\Tr{\opgam}$. The expression on the right-hand side was analyzed in detail in Lemma~\ref{lem:energies} for $V = U \in W^{2,q}(\Omega_1)$ where $\Omega_1 = \{x : \Dist{x , \{U \leq 0 \}} \leq 1\}$, which led to Proposition~\ref{prop:energy2:linear}. We record here the analogous result.

	\begin{lem}\label{lemma:weyl:NL}
		Under the same conditions of Proposition~\ref{prop:energy:nonlinear}
		\begin{equation*}
			Q_\opgam(\tildop_f) + Q(\tilde{f}_\opgam) + \frac{c_d}{p'} \intd \n{\vr_{\tildop}^{p/2} -\vr_f^{p/2}}^2 
			\leq \sM \,\frac{h^2}{\eps} + \D' \eps \, . 
		\end{equation*}
		Here, we employ the following notations for any $\beta \geq 0$ 
		\begin{align*} 
			\sM &= \frac{d}{8\pi} \(M_\opgam + M_f\)
			\\
			\D' &= \tfrac{d}{\pi} \Nrm{U}{W^{2,1}(\Omega_1)}
			\(\Nrm{\vr_f}{L^\infty} + \Nrm{\vr_\opgam}{L^{\infty}}\) + C_{d , \beta} \Nrm{e^{- \beta \n{x}} \nabla U_+}{L^\infty} \Nrm{e^{\beta \n{x}} \nabla \vr_f}{L^1}
		\end{align*}
		where $C_{d, \beta} = \frac{1}{2} \intd \n{x}^2 e^{-\pi \n{x}^2 + \beta \n{x}} \d x$.
	\end{lem}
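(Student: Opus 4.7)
The plan is to combine Proposition~\ref{prop:energy:nonlinear} with the energy estimates already derived in the linear setting (Lemma~\ref{lem:energies}), exploiting the positivity of $\widehat K$ to eliminate the interaction contribution from the energy difference, and relying on a decisive cancellation with the $\intd U_+\vr_{\tildop}$ term kept on the left-hand side.

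First, I would apply Proposition~\ref{prop:energy:nonlinear} and keep the nonnegative summand $\intd U_+\vr_{\tildop}$ on the left (it will reappear later). The problem reduces to bounding $\scE_{\tilde\opgam} - \scE_\opgam + \scE_{\tilde f} - \scE_f$ from above by $\sM\,h^2/\eps + \D'\,\eps + \intd U_+\vr_{\tildop}$. Expanding each energy difference as in the computation just before the statement of the lemma, the kinetic contribution produces exactly $\frac{d\,h^2}{8\pi\eps}(M_\opgam + M_f) = \sM\,h^2/\eps$. The interaction contribution is nonpositive: using $\vr_{\tilde f} = g_\eps * \vr_f$ and Plancherel, $\intd \vr_{\tilde f}\,K*\vr_{\tilde f} = \intd \widehat K\,\widehat{g_\eps}^{\,2}\,|\widehat{\vr_f}|^2 \leq \intd \widehat K\,|\widehat{\vr_f}|^2 = \intd \vr_f\,K*\vr_f$, since $\widehat K\geq 0$ (as $K(x) = \kappa\n{x}^{-a}$ with $a\in[0,1]$, $\kappa\geq 0$) and $0\leq \widehat{g_\eps}\leq 1$ (it is a Gaussian); the same bound holds for $\vr_\opgam$. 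So the interactions drop and we are left with $\scE_{\tilde\cdot} - \scE_\cdot \leq \frac{d\,h^2}{8\pi\eps}\,M_\cdot + \intd(g_\eps * U - U)\,\vr_\cdot$ for both $\cdot = f$ and $\cdot = \opgam$.

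Next, I would handle the external-potential contribution $\intd(g_\eps * U - U)(\vr_f + \vr_\opgam)$ by repeating the $W^{2,1}$ argument in the proof of Lemma~\ref{lem:energies}: with a cutoff $\chi$ equal to $1$ on $\Omega_0 = \{U\leq 0\}$ and supported in $\Omega_1$, a second-order Taylor expansion of $\chi\,U$ gives $\intd(g_\eps * U - U)\rho \leq \frac{d\eps}{\pi}\Nrm{U}{W^{2,1}(\Omega_1)}\Nrm{\rho}{L^\infty} + \intd (g_\eps * U_+)\,\rho$, applied with $\rho = \vr_f + \vr_\opgam$. This produces the first summand of $\D'\eps$.

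Finally, the residual term $\intd(g_\eps * U_+)(\vr_f + \vr_\opgam)$ splits into two pieces. Using the evenness of $g_\eps$ and the identity $\vr_{\tildop} = g_\eps * \vr_\opgam$, the $\vr_\opgam$ piece equals $\intd U_+\,\vr_{\tildop}$, which is exactly the quantity retained on the left in the first step, and therefore cancels cleanly. The remaining $\vr_f$ piece is estimated by the last bound in Lemma~\ref{lem:energies}, yielding $C_{d,\beta}\,\eps\Nrm{e^{-\beta\n{x}}\nabla U_+}{L^\infty}\Nrm{e^{\beta\n{x}}\nabla\vr_f}{L^1}$, which matches the second summand of $\D'\eps$. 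Collecting the pieces gives the announced inequality. The only non-routine ingredient is the cancellation using $\intd U_+\vr_{\tildop}$, which is indispensable since no exponentially weighted $W^{1,1}$ control on $\vr_\opgam$ is available uniformly in $\hbar$; absent this, one would need such an estimate for $\vr_\opgam$, which is precisely what fails in the Coulomb case $a=1$.
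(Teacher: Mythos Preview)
Your proposal is correct and follows essentially the same approach as the paper: the paper's argument (given in the paragraph preceding the lemma) is precisely to apply Proposition~\ref{prop:energy:nonlinear}, discard the interaction contribution via Plancherel and $\widehat K\geq 0$, and then invoke the $W^{2,1}$ case of Lemma~\ref{lem:energies} with $V=U$. Your explicit identification of the cancellation $\intd (g_\eps * U_+)\,\vr_\opgam = \intd U_+\,\vr_{\tildop}$ is exactly the mechanism by which the $\vr_\opgam$-part of the $U_+$ remainder disappears (the same cancellation is implicit in the linear Proposition~\ref{prop:energy2:linear}, where $\mathsf D$ also only carries the $\vr_f$ contribution).
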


	With these bounds, we can replicate the proof in the linear case and obtain the proof of Theorem~\ref{thm:weyl:nonlinear}.

	\begin{proof}[Proof of Theorem~\ref{thm:weyl:nonlinear}]
		For the proof, denote for simplicity $f := \fTF$ and $\opgam := \opHF$ the minimizers of the classical and quantum energies, respectively. We let $K$ and $U$ satisfy hypotheses \eqref{hyp:K} and \eqref{hyp:U}, respectively, with parameter $\beta>0$. Without loss of generality, we may assume $\hbar \leq 1 / 8 \beta $. Otherwise we can proceed as in the proof of Theorem~\ref{thm:commutator:linear} and modify the value of the overall constant.
		
		First, we prove the estimates for the densities. Starting from Lemma~\ref{lemma:weyl:NL} we follow the proof of Proposition~\ref{prop:linear_local_Weyl_law}. We find that there exists $C >0$ such that 
		\begin{align}\label{eq:dens:2}
			\Nrm{\vr_f-\vr_\opgam}{L^2} &\leq C \, h^{1/3} \(1 + \Nrm{\vr_\opgam}{\dot{B}^{1/2}_{2,\infty}}\)^{2/3} \(1 + \sL_2 \, \sM^{1/2} + \sL_2\, \D'^{1/2} h^{1/3}\)
			\\ \nonumber 
			\Nrm{\vr_f-\vr_\opgam}{L^1} &\leq C \, h^{1/2}\(1 +\Nrm{\nabla \vr_\opgam}{L^1}^{1/2}+ \D'^{1/2}\) \(1 + \sL_1 + \sM^{1/2}\)
		\end{align}
		where $\sL_1$ and $\sL_2$ are defined as in \eqref{eq:L1} and \eqref{eq:L2}. All quantities $\sM$, $\D'$, $\sL_1$ and $\sL_2$ can be bounded uniformly in $\hbar$ thanks to either $L^p$ or Agmon's estimates. The desired bound then follows by the different estimates on $\Nrm{\nabla \vr_\opgam}{L^1}$ that arise in the cases $0 \leq a < 1$ and $a=1$. See e.g. Remark~\ref{rem:rho}.
		
		For the convergence of states, start from Lemma~\ref{lemma:weyl:NL} and follow the proof Proposition~\ref{prop:linear_local_Weyl_law2}. We obtain, in terms of the Husimi measure $m_\opgam$, that for some constant $C > 0$
		\begin{align*} 
			\Nrm{f - m_\opgam}{L^1} &\leq h^{1/2} \(\(\mathsf L_1 + \sC_1\) \(\sM + \D'\) + \sC_2\)
			\\ 
			\Nrm{f - f_\opgam}{L^2} &\leq C \, h^{1/4} \Nrm{f}{\dot B_{2,\infty}^{1/2}} + \Nrm{f - m_\opgam}{L^1}^{\frac{1}{2}}
			\\
			\Nrm{\op_f-\opgam}{\L^1} &\leq C \, h^{1/2} \(\Nrm{\Dh\opgam}{\L^1} + \Nrm{\nabla f}{\cM} \) + \Nrm{f - m_\opgam}{L^1}
		\end{align*}
		where $\sC_1$ and $\sC_2$ are defined in \eqref{C1} and \eqref{C2} in terms of $\opgam$ and $U(x)$. 
		They are uniformly bounded in $\hbar$ thanks to $L^p$ estimates. 		Arguing as in the linear case, 
		we know that 
		$f \in \dot B_{2, \infty}^{1/2}(\R^6)$ and $\nabla f \in \cM(\R^6)$.
		Thus, we now use the estimates for the quantum gradient $\Nrm{\Dh \opgam}{\L^1}$ that follow from Theorem~\ref{thm:commutator_nonlinear}.
	
		Finally, we note that the potential 
		\begin{equation*}
			V_{\opgam} = U (x) + K * \vr_{\opgam} 
		\end{equation*}
		verifies Hypothesis~\eqref{hyp:V4}. To see this, first, thanks to Lemma~\ref{lem:weight_4}, we obtain that $\sNrm{\n{\opp}^4 \opgam}{\L^\infty}$ is bounded uniformly in $\hbar$, thanks to Lemma~\ref{lem:conv}, and Hypothesis~\eqref{hyp:V4}. Therefore, thanks to Lemma~\ref{lem:rho_inf} we obtain
		\begin{equation*}
			\Nrm{\nabla \vr_{\opgam}}{L^\infty} \leq C / \hbar 
			\quad \text{ and } \quad \Nrm{\nabla^2 \vr_{\opgam}}{L^\infty} \leq C / \hbar^2 \, . 
		\end{equation*}
		Thus, similarly, we can now use Lemma~\ref{lem:weight_6} to obtain
		\begin{equation*}
			\Nrm{\n{\opp}^6 \opgam}{\L^\infty} \leq C \, . 
		\end{equation*} 
		We conclude that all the moments in Lemma~\ref{lem:rho_inf} are uniformly bounded in $\hbar$. Let us set $W_\opgam = \n{x}^{-1} * \vr_{\opgam}$. Thanks to well-known elliptic-type estimates satisfied by the Coulomb potential, recorded in Lemma~\ref{lem:conv}, we obtain
		\begin{equation*}
			\Nrm{W_\opgam}{L^\infty} \leq C \, , 
			\quad 
			\Nrm{\nabla W_\opgam}{L^\infty} \leq C \, , 
			\quad 
			\Nrm{\nabla^2 W_\opgam}{L^\infty} \leq C 
			\(1 + \n{\ln\hbar}\) , 
		\end{equation*}
		in view of $\ln \Nrm{\nabla \vr_{\opgam}}{L^\infty}
		\leq C \(1 + \n{\ln \hbar}\) $. Similarly, for higher derivatives 
		\begin{equation*}
			\nabla^n W_\opgam = \nabla^2(\n{x}^{-1} * \nabla^{n -2} \vr_{\opgam}) \, . 
		\end{equation*}
		and so for $n = 5$
		\begin{equation*}
			\n{\nabla^5 W_\opgam} \leq C \(1 + \Nrm{\nabla^3 \vr_{\opgam}}{L^1} + \Nrm{\nabla^3 \vr_{\opgam}}{L^\infty} \(1 + \ln(1 + \Nrm{\nabla^3 \vr_{\opgam}}{L^\infty})\)\) \leq C\,\frac{1 + \n{\ln\hbar}}{\hbar^3} \, . 
		\end{equation*}
		By assumption, the external trap $U(x)$ verifies the same estimate for $ n = 2$ and $n =5$. We conclude the same is true for $V_{\opgam}$. Finally, it now suffices to interpolate between $2 \leq n \leq 5$. Therefore, we use Proposition~\ref{prop:regu_rho_L2} to conclude that $\Nrm{\vr_{\opgam}}{\dot B_{2 , \infty}^{1/2}}$ is uniformly bounded in $\hbar$. We plug this estimate back in~\eqref{eq:dens:2}. This gives the desired estimate and finishes the proof.
	\end{proof}

\appendix
\section{Another proof of the Hilbert--Schmidt commutator estimate}
\label{appendix}

	We give here another proof of a Hilbert--Schmidt commutator estimate which does not use the decomposition of $\opgam$ under eigenspaces.

	\begin{prop}
		Let $\opgam := \indic_{H\leq 0}$ and $A$ be an operator such that $A\,\opgam$ and $A^*\,\opgam$ are trace class. Then for any $\lambda>0$,
		\begin{equation*}
			\Tr{\n{\com{A,\opgam}}^2}^\frac{1}{2} \leq \frac{B + \sqrt{B^2 + 4 \, D^2}}{2} \leq B + D,
		\end{equation*}
		with
		\begin{align*}
			B &= \Tr{\n{\indic_{H \geq \lambda} \com{H,A^*}}^2 \(H-\lambda\)^{-2} \opgam}^\frac{1}{2} + \Tr{\n{\indic_{H \geq \lambda} \com{H,A}}^2 \(H-\lambda\)^{-2} \opgam}^\frac{1}{2}
			\\
			D &= \Tr{\(\n{\opgam\, A}^2 + \n{\opgam\, A^*}^2\) \indic_{0< H < \lambda}}^\frac{1}{2}.
		\end{align*}
		In the particular case when $A$ is a normal operator, then the same inequality holds with
		\begin{align*}
			B &= 2\,\Tr{\n{\indic_{H \geq \lambda} \com{H,A}}^2 \(H-\lambda\)^{-2} \opgam}^\frac{1}{2}
			\\
			D &= 2\,\Tr{\n{\opgam\, A}^2\indic_{0< H < \lambda}}^\frac{1}{2}.
		\end{align*}
	\end{prop}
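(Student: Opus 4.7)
The plan is to start from the algebraic identity
\begin{equation*}
\Tr{\n{\com{A,\opgam}}^2} = \Nrm{(1-\opgam) A \opgam}{2}^2 + \Nrm{(1-\opgam) A^* \opgam}{2}^2 \, ,
\end{equation*}
obtained by expanding $\n{\com{A,\opgam}}^2 = \com{A,\opgam}^* \com{A,\opgam}$, using $\opgam^2 = \opgam$, cyclicity of the trace, and the adjoint identity $(\opgam A (1-\opgam))^* = (1-\opgam) A^* \opgam$. I then split $1 - \opgam = \indic_{0 < H < \lambda} + P$ with $P := \indic_{H \geq \lambda}$, which produces an orthogonal Hilbert--Schmidt decomposition of each term. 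The two ``low-frequency'' pieces $\Nrm{\indic_{0<H<\lambda} A \opgam}{2}^2$ and $\Nrm{\indic_{0<H<\lambda} A^* \opgam}{2}^2$ are rewritten via cyclicity as $\Tr{\n{\opgam A^*}^2\indic_{0<H<\lambda}}$ and $\Tr{\n{\opgam A}^2\indic_{0<H<\lambda}}$, and together they form $D^2$.

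For each ``high-frequency'' piece, I introduce $X := P A^* \opgam$ (respectively $P A \opgam$). Since $P$ and $\opgam$ both commute with $H$, one obtains the commutator identity $\com{H, X} = P \com{H, A^*} \opgam =: Y$, which I recast as the Sylvester-type equation
\begin{equation*}
(H-\lambda) P \, X + X \, (\lambda - H) \opgam = Y \, ,
\end{equation*}
whose coefficient operators are non-negative on their respective spectral ranges, with $(\lambda - H)\opgam \geq \lambda \opgam > 0$. I view this as $\mathsf{L} X = Y$ for the superoperator $\mathsf{L} = \mathsf{A}_L + \mathsf{B}_R$ on Hilbert--Schmidt space, with $\mathsf{A}_L$ left-multiplication by $(H-\lambda) P$ and $\mathsf{B}_R$ right-multiplication by $(\lambda - H)\opgam$. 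Both superoperators are self-adjoint, non-negative, and commute (as functions of the single operator $H$), while $\mathsf{B}_R \geq \lambda > 0$ on the relevant Hilbert--Schmidt subspace. Hence $\mathsf{L} \geq \mathsf{B}_R$ and operator monotonicity of $t \mapsto t^{-2}$ on commuting positive operators gives $\mathsf{L}^{-2} \leq \mathsf{B}_R^{-2}$, yielding
\begin{equation*}
\Nrm{X}{2}^2 = \Nrm{\mathsf{L}^{-1} Y}{2}^2 \leq \Nrm{\mathsf{B}_R^{-1} Y}{2}^2 = \Tr{Y^* Y \, (H-\lambda)^{-2} \opgam} \, .
\end{equation*}
Expanding $Y^* Y$ reproduces the first square-root term in $B$; the twin argument with $A$ in place of $A^*$ bounds $\Nrm{P A \opgam}{2}^2$ by the second term of $B$.

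Assembling the four contributions gives $N^2 \leq B_1^2 + B_2^2 + D^2 \leq B^2 + D^2$, from which the stated bound $N \leq (B + \sqrt{B^2 + 4 D^2})/2 \leq B + D$ follows from the elementary inequality $\sqrt{B^2+D^2} \leq (B + \sqrt{B^2+4D^2})/2$. For the normal case, the identity $\com{A, A^*} = 0$ combined with the adjoint-invariance of the Hilbert--Schmidt norm forces $\Nrm{(1-\opgam) A \opgam}{2} = \Nrm{(1-\opgam) A^* \opgam}{2}$: indeed, expanding the squared difference yields $\Tr{\com{A^*, A}\opgam}$ plus a term involving $\Tr{A^*\opgam A\opgam - A\opgam A^*\opgam}$, the first vanishing by normality and the second because $\Nrm{\opgam A\opgam}{2}^2 = \Nrm{\opgam A^*\opgam}{2}^2$. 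Thus the two initial terms coincide, collapsing $B$ and $D$ to the single expressions with the stated factor of $2$. The main technical difficulty lies in cleanly justifying the superoperator monotonicity $\mathsf{L}^{-2} \leq \mathsf{B}_R^{-2}$ and the well-posedness of the Sylvester inverse on the Hilbert--Schmidt subspace $\{M : P M \opgam = M\}$; this relies on the trace-class hypothesis on $A \opgam$ and $A^* \opgam$ to guarantee that $Y$ lies in a suitable ideal where $\mathsf{B}_R^{-1}$ acts meaningfully.
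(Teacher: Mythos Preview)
Your main argument is correct but takes a genuinely different route from the paper. The paper inserts $\opgam = \opgam(H-\lambda)^{-1}(H-\lambda)$ into the high-frequency term $\Tr{A^*\opgam A\,\indic_{H\geq\lambda}}$, drops a term of definite sign, and then Cauchy--Schwarzes against $\com{A,\opgam}$ itself; this produces the \emph{quadratic} inequality $N^2 \leq NB + D^2$ (with $N = \Tr{\n{\com{A,\opgam}}^2}^{1/2}$), whose solution is exactly $(B+\sqrt{B^2+4D^2})/2$. Your Sylvester/superoperator inversion instead bounds $\|PA^*\opgam\|_2$ \emph{directly}, with no recursive appearance of $N$, and yields the stronger estimate $N^2 \leq B_1^2+B_2^2+D^2 \leq B^2+D^2$, from which the stated bound follows by the elementary inequality you quote. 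So your argument is in fact sharper; the paper's trick is more elementary (no unbounded-superoperator domain issues to check) but gives a slightly weaker conclusion.

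Your treatment of the normal case has a gap. You correctly show $\|(1-\opgam)A\opgam\|_2 = \|(1-\opgam)A^*\opgam\|_2$ for normal $A$, but this equality of \emph{totals} does not force the low-frequency and high-frequency pieces to coincide separately. Concretely, take $H$ with eigenvectors $e_1,e_2,e_3$ at eigenvalues $-1,\tfrac12,2$, set $\lambda=1$, and let $A$ be the cyclic permutation $e_1\mapsto e_2\mapsto e_3\mapsto e_1$ (unitary, hence normal): then $\Tr{\n{\opgam A}^2\indic_{0<H<\lambda}} = 0$ while $\Tr{\n{\opgam A^*}^2\indic_{0<H<\lambda}} = 1$, and likewise the two high-frequency pieces differ. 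So you cannot ``collapse $B$ and $D$ to the single expressions with factor $2$'' by the reasoning you give. What does work is to use $N^2 = 2\|(1-\opgam)A\opgam\|_2^2$ and bound this \emph{one} copy via your Sylvester argument, obtaining $N^2 \leq 2\Tr{\n{\opgam A^*}^2\indic_{0<H<\lambda}} + 2\Tr{\n{P\com{H,A}}^2(H-\lambda)^{-2}\opgam}$; note, however, that the $D$-piece then carries $A^*$ rather than $A$, so matching the precise form stated requires either a relabeling or an additional symmetry argument.
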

	
	\begin{proof}
		Using the cyclicity of the trace and the fact that $\opgam^2=\opgam$ gives
		\begin{equation*}
			\Tr{\n{\com{A,\opgam}}^2} = \Tr{\(\id - \opgam\)A^*\,\opgam\, A} + \Tr{\(\id - \opgam\)A\,\opgam\, A^*} =: I_1 + I_2 \, .
		\end{equation*}
		Let us look at $I_1$. It holds
		\begin{equation*}
			I_1 = \Tr{A^*\,\opgam\,A\, \indic_{0< H < \lambda}} + \Tr{A^* \, \opgam\, A \, \indic_{H \geq \lambda}} .
		\end{equation*}
		Since $\lambda>0$, $\opgam \(H-\lambda\)^{-1}$ is well-defined by functional calculus and so
		\begin{multline*}
			\Tr{A^* \, \opgam\, A \, \indic_{H \geq \lambda}} = \Tr{A^*\,\opgam \(H-\lambda\)^{-1} \(H-\lambda'\) A\,\indic_{H \geq \lambda}}
			\\
			= \Tr{A^*\,\opgam \(H-\lambda\)^{-1} \com{H,A}\indic_{H \geq \lambda}} + \Tr{A^*\,\opgam \(H-\lambda\)^{-1} A\, \(H-\lambda\) \indic_{H \geq \lambda}} .
		\end{multline*}
		Since $\opgam \(H-\lambda\)^{-1} \leq 0$ and $\(H-\lambda\) \indic_{H \geq \lambda} \geq 0$, the last term of the above equation is nonpositive, which leads to
		\begin{equation*}
			\Tr{A^* \, \opgam\, A \, \indic_{H \geq \lambda}} \leq \Tr{A^*\,\opgam \(H-\lambda\)^{-1} \com{H,A} \indic_{H \geq \lambda}} .
		\end{equation*}
		Moreover, since $\opgam= \opgam^2$ and $\opgam\, \indic_{H \geq \lambda} = 0$, one observes that $\indic_{H \geq \lambda} \,A^*\, \opgam = \indic_{H \geq \lambda} \com{A^*, \opgam}\opgam$, and so
		\begin{align*}
			\Tr{A^* \, \opgam\, A \, \indic_{H \geq \lambda}} &\leq \Tr{\com{A^*, \opgam}\opgam \(H-\lambda\)^{-1} \com{H,A} \indic_{H \geq \lambda}}
			\\
			&\leq \Tr{\n{\com{\opgam,A}}^2}^\frac{1}{2} \Tr{\n{\indic_{H \geq \lambda} \com{H,A^*}}^2 \(H-\lambda\)^{-2} \opgam}^\frac{1}{2},
		\end{align*}
		where the last inequality follows from the Cauchy--Schwarz inequality and the cyclicity of the trace. The same inequality holds for $I_2$ up to replacing $A$ by $A^*$, which yields finally
		\begin{equation*}
			\Tr{\n{\com{A,\opgam}}^2} \leq \Tr{\n{\com{\opgam,A}}^2}^\frac{1}{2} B + D\, .
		\end{equation*}
		Solving the above inequality finishes the proof.
	\end{proof}

\bigskip

\paragraph{\textbf{Acknowledgments}.} E.C. gratefully acknowledges support from NSF under grants No. DMS-2009549 and DMS-2052789 through Nata\v sa Pavlovi\'c.


\renewcommand{\bibname}{\centerline{Bibliography}}
\bibliographystyle{abbrv} 
\bibliography{Vlasov}

\begin{thebibliography}{10}

\bibitem{agmon_spectral_1975}
S.~Agmon.
\newblock Spectral {{Properties}} of {{Schr{\"o}dinger Operators}} and
  {{Scattering Theory}}.
\newblock {\em Annali della Scuola Normale Superiore di Pisa. Classe di
  Scienze. Serie IV}, 2(2):151--218, 1975.

\bibitem{agmon_lectures_1982}
S.~Agmon.
\newblock {\em Lectures on {{Exponential Decay}} of {{Solutions}} of
  {{Second-Order Elliptic Equations}}: {{Bounds}} on {{Eigenfunctions}} of
  {{N-Body Schrodinger Operations}}}.
\newblock Number~29 in Princeton {{Legacy Library}}, {{Mathematical Notes}}.
  Princeton University Press, 1982.

\bibitem{aleksandrov_operator_2016}
A.~B. Aleksandrov and V.~V. Peller.
\newblock Operator {{Lipschitz}} functions.
\newblock {\em Russian Mathematical Surveys}, 71(4):605--702, Aug. 2016.

\bibitem{balazs_quantum_1973}
N.~L. Balazs and G.~G. Zipfel.
\newblock Quantum {{Oscillations}} in the {{Semiclassical Fermion}}
  {$\mu$}-{{Space Density}}.
\newblock {\em Annals of Physics}, 77(1):139--156, May 1973.

\bibitem{beale_remarks_1984}
J.~T. Beale, T.~Kato, and A.~J. Majda.
\newblock Remarks on the {{Breakdown}} of {{Smooth Solutions}} for the
  {$3$}-{$D$} {{Euler Equations}}.
\newblock {\em Communications in Mathematical Physics}, 94(1):61--66, Mar.
  1984.

\bibitem{benedikter_effective_2022}
N.~Benedikter.
\newblock Effective {{Dynamics}} of {{Interacting Fermions}} from
  {{Semiclassical Theory}} to the {{Random Phase Approximation}}.
\newblock {\em Journal of Mathematical Physics}, 63(8):081101, Aug. 2022.

\bibitem{benedikter_mean-field_2016-1}
N.~Benedikter, V.~Jak{\v s}i{\'c}, M.~Porta, C.~Saffirio, and B.~Schlein.
\newblock Mean-{{Field Evolution}} of {{Fermionic Mixed States}}.
\newblock {\em Communications on Pure and Applied Mathematics},
  69(12):2250--2303, Dec. 2016.

\bibitem{benedikter_hartree_2016}
N.~Benedikter, M.~Porta, C.~Saffirio, and B.~Schlein.
\newblock From the {{Hartree Dynamics}} to the {{Vlasov Equation}}.
\newblock {\em Archive for Rational Mechanics and Analysis}, 221(1):273--334,
  July 2016.

\bibitem{benedikter_mean-field_2014-1}
N.~Benedikter, M.~Porta, and B.~Schlein.
\newblock Mean-field {{Dynamics}} of {{Fermions}} with {{Relativistic
  Dispersion}}.
\newblock {\em Journal of Mathematical Physics}, 55(2):021901, Feb. 2014.

\bibitem{benedikter_mean-field_2014}
N.~Benedikter, M.~Porta, and B.~Schlein.
\newblock Mean-field {{Evolution}} of {{Fermionic Systems}}.
\newblock {\em Communications in Mathematical Physics}, 331(3):1087--1131, Nov.
  2014.

\bibitem{birman_stieltjes_1967}
M.~S. Birman and M.~Z. Solomyak.
\newblock Stieltjes {{Double-Integral Operators}}.
\newblock In M.~S. Birman, editor, {\em Spectral {{Theory}} and {{Wave
  Processes}}}, pages 25--54. Springer US, Boston, MA, 1967.

\bibitem{cardenas_norm_2024}
E.~C{\'a}rdenas.
\newblock Norm {{Convergence}} of {{Confined Fermionic Systems}} at {{Zero
  Temperature}}.
\newblock {\em Letters in Mathematical Physics}, 114(2):38, Mar. 2024.

\bibitem{cardenas_effective_2023}
E.~C{\'a}rdenas, J.~K. Miller, and N.~Pavlovi{\'c}.
\newblock On the {{Effective Dynamics}} of {{Bose-Fermi Mixtures}}.
\newblock {\em arXiv:2309.04638}, pages 1--55, Sept. 2023.

\bibitem{chazarain_spectre_1980}
J.~Chazarain.
\newblock Spectre d'un hamiltonien quantique et mecanique classique.
\newblock {\em Communications in Partial Differential Equations},
  5(6):595--644, Jan. 1980.

\bibitem{chong_semiclassical_2024}
J.~J. Chong, L.~Lafleche, and C.~Saffirio.
\newblock Semiclassical {{Limit}} of the {{Bogoliubov-de Gennes Equation}}.
\newblock {\em arXiv:2403.15880}, pages 1--29, Mar. 2024.

\bibitem{cwikel_weak_1977}
M.~Cwikel.
\newblock Weak {{Type Estimates}} for {{Singular Values}} and the {{Number}} of
  {{Bound States}} of {{Schrodinger Operators}}.
\newblock {\em Annals of Mathematics. Second Series}, 106(1):93--100, July
  1977.

\bibitem{daletskii_integration_1965}
Y.~L. Daletskii and S.~G. Krein.
\newblock {Integration and differentiation of functions of Hermitian operators
  and applications to the theory of perturbations}.
\newblock {\em AMS Translations (2)}, 47(1-30):10--1090, 1965.

\bibitem{dean_universal_2015}
D.~S. Dean, P.~L. Doussal, S.~N. Majumdar, and G.~Schehr.
\newblock Universal {{Ground-State Properties}} of {{Free Fermions}} in a
  d-{{Dimensional Trap}}.
\newblock {\em Europhysics Letters}, 112(6):60001, Dec. 2015.

\bibitem{dean_wigner_2018}
D.~S. Dean, P.~Le~Doussal, S.~N. Majumdar, and G.~Schehr.
\newblock Wigner {{Function}} of {{Noninteracting Trapped Fermions}}.
\newblock {\em Physical Review A}, 97(6):063614, June 2018.

\bibitem{deleporte_universality_2024}
A.~Deleporte and G.~Lambert.
\newblock Universality for {{Free Fermions}} and the {{Local Weyl Law}} for
  {{Semiclassical Schr{\"o}dinger Operators}}.
\newblock {\em Journal of the European Mathematical Society}, online
  first:1--98, May 2024.

\bibitem{fournais_semi-classical_2018}
S.~Fournais, M.~Lewin, and J.~P. Solovej.
\newblock The {{Semi-classical Limit}} of {{Large Fermionic Systems}}.
\newblock {\em Calculus of Variations and Partial Differential Equations},
  57(4):105, Aug. 2018.

\bibitem{fournais_semi-classical_2020}
S.~Fournais and P.~S. Madsen.
\newblock Semi-classical {{Limit}} of {{Confined Fermionic Systems}} in
  {{Homogeneous Magnetic Fields}}.
\newblock {\em Annales Henri Poincar{\'e}}, 21(5):1401--1449, May 2020.

\bibitem{fournais_optimal_2020}
S.~Fournais and S.~Mikkelsen.
\newblock An {{Optimal Semiclassical Bound}} on {{Commutators}} of {{Spectral
  Projections}} with {{Position}} and {{Momentum Operators}}.
\newblock {\em Letters in Mathematical Physics}, 110(12):3343--3373, Dec. 2020.

\bibitem{frank_weyls_2023}
R.~L. Frank.
\newblock Weyl's {{Law}} under {{Minimal Assumptions}}.
\newblock In M.~Brown, F.~Gesztesy, P.~Kurasov, A.~Laptev, B.~Simon, G.~Stolz,
  and I.~Wood, editors, {\em From {{Complex Analysis}} to {{Operator Theory}}:
  {{A Panorama}}: {{In Memory}} of {{Sergey Naboko}}}, pages 549--572. Springer
  International Publishing, Cham, 2023.

\bibitem{fresta_effective_2023}
L.~Fresta, M.~Porta, and B.~Schlein.
\newblock Effective {{Dynamics}} of {{Extended Fermi Gases}} in the
  {{High-Density Regime}}.
\newblock {\em Communications in Mathematical Physics}, 401(2):1701--1751, July
  2023.

\bibitem{girardot_semiclassical_2021}
T.~Girardot and N.~Rougerie.
\newblock Semiclassical {{Limit}} for {{Almost Fermionic Anyons}}.
\newblock {\em Communications in Mathematical Physics}, 387(1):427--480, Oct.
  2021.

\bibitem{golse_mean_2016}
F.~Golse, C.~Mouhot, and T.~Paul.
\newblock On the {{Mean Field}} and {{Classical Limits}} of {{Quantum
  Mechanics}}.
\newblock {\em Communications in Mathematical Physics}, 343(1):165--205, Apr.
  2016.

\bibitem{golse_schrodinger_2017}
F.~Golse and T.~Paul.
\newblock The {{Schr{\"o}dinger Equation}} in the {{Mean-Field}} and
  {{Semiclassical Regime}}.
\newblock {\em Archive for Rational Mechanics and Analysis}, 223(1):57--94,
  Jan. 2017.

\bibitem{gottschling_convergence_2018}
N.~Gottschling and P.~T. Nam.
\newblock Convergence of {{Levy}}--{{Lieb}} to {{Thomas}}--{{Fermi Density
  Functional}}.
\newblock {\em Calculus of Variations and Partial Differential Equations},
  57(6):146, Dec. 2018.

\bibitem{groenewold_principles_1946}
H.~J. Groenewold.
\newblock On the principles of elementary quantum mechanics.
\newblock {\em Physica}, 12(7):405--460, Oct. 1946.

\bibitem{heinz_beitrage_1951}
E.~Heinz.
\newblock {Beitr{\"a}ge zur St{\"o}rungstheorie der Spektralzerlegung.}
\newblock {\em Mathematische Annalen}, 123:415--438, 1951.

\bibitem{hormander_asymptotic_1979}
L.~H{\"o}rmander.
\newblock On the {{Asymptotic Distribution}} of the {{Eigenvalues}} of
  {{Pseudodifferential Operators}} in {$\mathbf{R}^n$}.
\newblock {\em Arkiv f{\"o}r Matematik}, 17(1--2):297--313, Dec. 1979.

\bibitem{iacobelli_enhanced_2024}
M.~Iacobelli and L.~Lafleche.
\newblock Enhanced {{Stability}} in {{Quantum Optimal Transport
  Pseudometrics}}: {{From Hartree}} to {{Vlasov}}--{{Poisson}}.
\newblock {\em Journal of Statistical Physics}, 191(12):157, Nov. 2024.

\bibitem{ivrii_semiclassical_1992}
V.~Ivrii.
\newblock Semiclassical {{Spectral Asymptotics}}.
\newblock {\em Ast{\'e}risque}, 207(5):7--33, 1992.

\bibitem{kato_well-posedness_1986}
T.~Kato and G.~Ponce.
\newblock Well-posedness of the {{Euler}} and {{Navier}}--{{Stokes}} equations
  in the {{Lebesque}} spaces {$L^p_s(\mathbb{R}^2)$}.
\newblock {\em Revista Matem{\'a}tica Iberoamericana}, 2(1--2):73--88, 1986.

\bibitem{lafleche_propagation_2019}
L.~Lafleche.
\newblock Propagation of {{Moments}} and {{Semiclassical Limit}} from
  {{Hartree}} to {{Vlasov Equation}}.
\newblock {\em Journal of Statistical Physics}, 177(1):20--60, Oct. 2019.

\bibitem{lafleche_quantum_2023}
L.~Lafleche.
\newblock Quantum {{Optimal Transport}} and {{Weak Topologies}}.
\newblock {\em arXiv:2306.12944}, pages 1--25, June 2023.

\bibitem{lafleche_quantum_2024}
L.~Lafleche.
\newblock On {{Quantum Sobolev Inequalities}}.
\newblock {\em Journal of Functional Analysis}, 286(10):110400, May 2024.

\bibitem{lafleche_optimal_2024}
L.~Lafleche.
\newblock Optimal {{Semiclassical Regularity}} of {{Projection Operators}} and
  {{Strong Weyl Law}}.
\newblock {\em Journal of Mathematical Physics}, 65(5):052104, May 2024.

\bibitem{lewin_semi-classical_2019}
M.~Lewin, P.~S. Madsen, and A.~Triay.
\newblock Semi-classical {{Limit}} of {{Large Fermionic Systems}} at {{Positive
  Temperature}}.
\newblock {\em Journal of Mathematical Physics}, 60(9):091901, Sept. 2019.

\bibitem{lieb_number_1980}
E.~H. Lieb.
\newblock The {{Number}} of {{Bound States}} of {{One-body Schrodinger
  Operators}} and the {{Weyl Problem}}.
\newblock In {\em Geometry of the {{Laplace}} Operator ({{Proc}}. {{Sympos}}.
  {{Pure Math}}., {{Univ}}. {{Hawaii}}, {{Honolulu}}, {{Hawaii}}, 1979)},
  volume~36 of {\em Proc. {{Sympos}}. {{Pure Math}}.}, pages 241--252. Amer.
  Math. Soc., Providence, RI, 1980.

\bibitem{lieb_analysis_2001}
E.~H. Lieb and M.~Loss.
\newblock {\em Analysis}, volume~14 of {\em Graduate {{Studies}} in
  {{Mathematics}}}.
\newblock American Mathematical Society, Providence, RI, 2nd edition, 2001.

\bibitem{lieb_thomasfermi_1977}
E.~H. Lieb and B.~Simon.
\newblock The {{Thomas}}--{{Fermi}} theory of atoms, molecules and solids.
\newblock {\em Advances in Mathematics}, 23:22--116, 1977.

\bibitem{lieb_inequalities_1976}
E.~H. Lieb and W.~E. Thirring.
\newblock Inequalities for the {{Moments}} of the {{Eigenvalues}} of the
  {{Schr{\"o}dinger Hamiltonian}} and their {{Relation}} to {{Sobolev
  Inequalities}}.
\newblock {\em Studies in Mathematical Physics, Essays in Honor of Valentine
  Bargmann}, pages 269--303, 1976.

\bibitem{lions_propagation_1991}
P.-L. Lions and B.~Perthame.
\newblock Propagation of {{Moments}} and {{Regularity}} for the 3-{{Dimensional
  Vlasov}}--{{Poisson System}}.
\newblock {\em Inventiones Mathematicae}, 105(2):415--430, 1991.

\bibitem{marcantoni_dynamics_2024}
S.~Marcantoni, M.~Porta, and J.~Sabin.
\newblock Dynamics of {{Mean-Field Fermi Systems}} with {{Nonzero Pairing}}.
\newblock {\em Annales Henri Poincar{\'e}}, Online First:1--54, Sept. 2024.

\bibitem{mikkelsen_sharp_2023}
S.~Mikkelsen.
\newblock Sharp {{Semiclassical Spectral Asymptotics}} for {{Schr{\"o}dinger
  Operators}} with {{Non-Smooth Potentials}}.
\newblock {\em arXiv:2309.12015}, pages 1--19, Sept. 2023.

\bibitem{nguyen_weyl_2024}
N.~N. Nguyen.
\newblock Weyl {{Laws}} for {{Interacting Particles}}.
\newblock {\em Journal of Mathematical Physics}, 65(11):113503, Nov. 2024.

\bibitem{potapov_operator-lipschitz_2011}
D.~Potapov and F.~Sukochev.
\newblock Operator-{{Lipschitz}} functions in {{Schatten}}--von {{Neumann}}
  classes.
\newblock {\em Acta Mathematica}, 207(2):375--389, 2011.

\bibitem{reed_analysis_1978}
M.~C. Reed and B.~Simon.
\newblock {\em Analysis of {{Operators}}}, volume~4 of {\em Methods of {{Modern
  Mathematical Physics}}}.
\newblock Academic Press, New York; London, 1 edition edition, May 1978.

\bibitem{robert_comportement_1981}
D.~Robert and B.~Helffer.
\newblock {Comportement semi-classique du spectre des hamiltoniens quantiques
  elliptiques}.
\newblock {\em Annales de l'institut Fourier}, 31(3):169--223, 1981.

\bibitem{rozenblum_distribution_1972}
G.~V. Rozenblum.
\newblock The {{Distribution}} of the {{Discrete Spectrum}} for {{Singular
  Differential Operators}}.
\newblock {\em Doklady Akademii Nauk SSSR}, 202(1):1012--1015, 1972.

\bibitem{shubin_pseudodifferential_2001}
M.~A. Shubin.
\newblock {\em Pseudodifferential {{Operators}} and {{Spectral Theory}}}.
\newblock Springer, Berlin, Heidelberg, 2001.

\bibitem{sickel_regularity_2021}
W.~Sickel.
\newblock On the {{Regularity}} of {{Characteristic Functions}}.
\newblock In {\em Anomalies in {{Partial Differential Equations}}}, volume~43
  of {\em Springer {{INdAM Series}}}, pages 395--441, Cham, 2021. Springer
  International Publishing.

\bibitem{sobolev_quasi-classical_1995}
A.~V. Sobolev.
\newblock Quasi-{{Classical Asymptotics}} of {{Local Riesz Means}} for the
  {{Schr{\"o}dinger Operator}} in a {{Moderate Magnetic Field}}.
\newblock {\em Annales de l'Institut Henri Poincar{\'e}. Physique
  Th{\'e}orique}, 62(4):325--360, 1995.

\bibitem{villani_topics_2003}
C.~Villani.
\newblock {\em Topics in {{Optimal Transportation}}}, volume~58 of {\em
  Graduate {{Studies}} in {{Mathematics}}}.
\newblock American Mathematical Society, Providence, RI, Mar. 2003.

\bibitem{werner_quantum_1984}
R.~F. Werner.
\newblock Quantum {{Harmonic Analysis}} on {{Phase Space}}.
\newblock {\em Journal of Mathematical Physics}, 25(5):1404--1411, May 1984.

\end{thebibliography}

\end{document}